\keywords{Cartesian difference categories, Cartesian differential categories, Change actions, Differential lambda-calculus, Difference lambda-calculus}
\let\th@plain\relax
\renewcommand{\epsilon}[0]{\varepsilon}
\newcommand{\LHS}[0]{\text{LHS}}
\newcommand{\RHS}[0]{\text{RHS}}
\newcommand{\NN}[0]{\mathbb{N}}
\newcommand{\D}[0]{\text{D}}
\newcommand{\brak}[1]{\left[ #1 \right]}
\newcommand{\pa}[1]{\left( #1 \right)}
\newcommand{\ra}[0]{\rightarrow}
\newcommand{\da}[0]{\downarrow}
\newcommand{\Ra}[0]{\Rightarrow}
\newcommand{\defeq}[0]{\coloneqq}
\newif\ifdraft
\renewcommand{\red}[1]{{\color{red} {#1}}}
\renewcommand{\todo}[1]{
  \red{[TODO: #1]}
}
\newcommand{\todo}[1]{
  \red{[TODO: #1]}
}
\newcommand{\cat}[1]{\mathbf{#1}}
\newcommand{\homset}[3]{{#1}\left[{#2}, {#3}\right]}
\newcommand{\arr}[1]{\mathbf{#1}}
\newcommand{\A}[1]{\arr{#1}}
\newcommand{\Id}[1]{\arr{id}_{#1}}
\newcommand{\terminal}[0]{\mathbf{1}}
\newcommand{\pair}[2]{\left\langle {#1}, {#2} \right\rangle}
\newcommand{\four}[4]{\pair{\pair{#1}{#2}}{\pair{#3}{#4}}}
\newsavebox{\@brx}
\newcommand{\llangle}[1][]{\savebox{\@brx}{\(\m@th{#1\langle}\)}%
  \mathopen{\copy\@brx\kern-0.5\wd\@brx\usebox{\@brx}}}
\newcommand{\rrangle}[1][]{\savebox{\@brx}{\(\m@th{#1\rangle}\)}%
  \mathclose{\copy\@brx\kern-0.5\wd\@brx\usebox{\@brx}}}
\newcommand{\ul}[1]{\underline{#1}}
\newcommand{\alter}[0]{\ |\ }
\newcommand{\reducesTo}[0]{\leadsto}
\newcommand{\parTo}[0]{
  \mathrel{\mathrlap{\raisebox{2pt}{$\reducesTo$}}
    \raisebox{-2pt}{$\reducesTo$}}}
\newcommand{\subst}[3]{{#1}\left[{#3}/{#2}\right]}
\newcommand{\diffS}[3]{\frac{\partial {#1}}{\partial {#2}}\pa{{#3}}}
\DeclarePairedDelimiter{\ceil}{\lceil}{\rceil}
\newcommand{\ts}[0]{\vdash}
\newenvironment{bprooftree}
  {\leavevmode\hbox\bgroup}
  {\DisplayProof\egroup}
\newcommand{\bpt}[1]{
  \begin{bprooftree}
    #1
  \end{bprooftree}
}
\newcommand{\sem}[1]{{\left\llbracket {#1} \right\rrbracket}}
\DeclareMathOperator*{\boxsum}{\boxplus}
\DeclareMathOperator*{\cansum}{\boxplus}
\newcommand{\continued}[1][30pt]{&\hspace{#1}}
\newcommand{\dd}[2][]{\partial#1\left[{#2}\right]}
\newcommand{\CdCAx}[1]{{{\bf [C$\partial$.{#1}]}}}
\newtheorem{conjecture}{Conjecture}[section]
\definecolor{indigo}{HTML}{332288} 
\definecolor{cyan}{HTML}{88ccee}
\definecolor{teal}{HTML}{44aa99}   
\definecolor{green}{HTML}{117733}  
\definecolor{olive}{HTML}{999933}
\definecolor{sand}{HTML}{ddcc77}   
\definecolor{rose}{HTML}{cc6677}
\definecolor{wine}{HTML}{882255}   
\definecolor{purple}{HTML}{aa4499} 
\definecolor{Orchid}{rgb}{0.6, 0.2, 0.8}
\lstdefinelanguage{Coq}{
morekeywords=[1]{Section, Module, End, Require, Import, Export,
  Variable, Variables, Parameter, Parameters, Axiom, Hypothesis,
  Hypotheses, Notation, Local, Tactic, Reserved, Scope, Open, Close,
  Bind, Delimit, Definition, Let, Ltac, Fixpoint, CoFixpoint,
  Morphism, Relation, Implicit, Arguments, Unset, Contextual,
  Strict, Prenex, Implicits, Inductive, CoInductive, Record,
  Structure, Canonical, Coercion, Context, Class, Global, Instance,
  Program, Infix, Theorem, Lemma, Corollary, Proposition, Fact,
  Remark, Example, Proof, Goal, Save, Qed, Defined, Hint, Resolve,
  Optimize,
  Rewrite, View, Search, Show, Print, Printing, All, Eval, Check,
  Projections, inside, outside, Def},
%
morekeywords=[2]{forall, exists, exists2, fun, fix, cofix, struct, match,
  lazymatch, context, with, end, as, in, return, let, if, is, then, else, for, of,
  nosimpl, when},
%
morekeywords=[3]{Type, Prop, Set, true, false, option, nat, list},
%
morekeywords=[4]{pose, set, move, case, elim, apply, clear, hnf,
  intro, intros, generalize, rename, pattern, after, destruct,
  induction, using, refine, inversion, injection, rewrite, congr,
  unlock, compute, ring, field, fourier, replace, fold, unfold,
  change, cutrewrite, simpl, have, suff, wlog, suffices, without,
  loss, nat_norm, assert, cut, trivial, revert, bool_congr, nat_congr,
  symmetry, transitivity, auto, eauto, eapply, split, left, right, autorewrite,
  by, done, exact, reflexivity, tauto, romega, omega,
  assumption, solve, contradiction, discriminate},
morekeywords=[5]{},
morekeywords=[6]{Var, App, Zero, Add, E, Dif, D},
%
morecomment=[s]{(*}{*)},
%
showstringspaces=false,
%
morestring=[b]",
morestring=[d],
%
tabsize=3,
%
extendedchars=false,
%
sensitive=true,
%
breaklines=false,
%
basicstyle=\footnotesize\ttfamily,
%
captionpos=b,
%
keepspaces=true,
%
identifierstyle={\ttfamily\color{black}},
keywordstyle=[1]{\ttfamily\color{purple}},
keywordstyle=[2]{\ttfamily\color{OliveGreen}},
keywordstyle=[3]{\ttfamily\color{ProcessBlue}},
keywordstyle=[4]{\ttfamily\color{RoyalBlue}},
keywordstyle=[5]{\ttfamily\color{Maroon}},
keywordstyle=[6]{\ttfamily\color{Bittersweet}},
stringstyle=\footnotesize\ttfamily,
commentstyle={\ttfamily\color{OliveGreen}},
%
literate=
    {\\forall}{{\color{dkgreen}{$\forall\;$}}}1
    {\\exists}{{$\exists\;$}}1
    {<-}{{$\leftarrow\;$}}1
    {=>}{{$\Rightarrow$}}1
    {==}{{\code{==}\;}}1
    {==>}{{\code{==>}\;}}1
    {->}{{$\rightarrow$}}1
    {<->}{{$\leftrightarrow\;$}}1
    {<==}{{$\leq\;$}}1
    {\#}{{$^\star$}}1
    {\\o}{{$\circ\;$}}1
    {\@}{{{\color{Bittersweet}$\cdot$}}}1
    {\/\\}{{$\wedge\;$}}1
    {\\\/}{{$\vee\;$}}1
    {++}{{\code{++}}}1
    {TILDE}{{\ }}1
    {\@\@}{{$@$}}1
    {\\mapsto}{{$\mapsto\;$}}1
    {\\hline}{{\rule{\linewidth}{0.5pt}}}1
    {LAM}{{{\color{Bittersweet}\textLambda}}}1
    {EPSILON}{{{\color{Bittersweet}\textepsilon}}}1
    {CDOT}{{{\color{Bittersweet}$\cdot$}}}1
    {+}{{{\color{Bittersweet}+}}}1
    {0}{{{\color{Bittersweet}0}}}1
    {TILDE}{{$\sim$}}1
    {\{}{{{\color{Bittersweet}\{}}}1
    {\}}{{{\color{Bittersweet}\}}}}1
    {|}{{{\color{Bittersweet}|}}}1
}[keywords,comments,strings]
\def\coqe{\lstinline[language=Coq, basicstyle=\footnotesize\ttfamily]}
\begin{document}

\title[The Difference \texorpdfstring{$\lambda$}{lambda}-calculus]{The Difference \texorpdfstring{$\lambda$}{lambda}-calculus:\texorpdfstring{\\}{}
A Language for Difference Categories}

\titlecomment{This is an extended version of a paper appearing in
the 5th International Conference on Formal Structures for Computation and Deduction}

\author[M. Alvarez-Picallo]{Mario Alvarez-Picallo}
\address{Department of Computer Science, University of Oxford}
\curraddr{Edinburgh Research Center, Huawei Technologies Research \& Development}
\email{mario.alvarez-picallo@cs.ox.ac.uk}

\author[C.-H. L. Ong]{C.-H. Luke Ong}
\address{Department of Computer Science, University of Oxford}
\email{luke.ong@cs.ox.ac.uk}

\maketitle

\begin{abstract}
  Cartesian difference categories are a recent generalisation of Cartesian
  differential categories which introduce a notion of ``infinitesimal'' arrows
  satisfying an analogue of the Kock-Lawvere axiom, with the axioms of a
  Cartesian differential category being satisfied only ``up to an infinitesimal
  perturbation''. In this work, we construct a simply-typed calculus in the
  spirit of the differential $\lambda$-calculus equipped with syntactic
  ``infinitesimals'' and show how its models correspond to difference
  $\lambda$-categories, a family of Cartesian difference categories equipped
  with suitably well-behaved exponentials.
\end{abstract}

\section{Introduction}

A recent series of works introduced the concept of change actions and
differential maps between them \cite{alvarez2019fixing,alvarez2019change} in 
order to study settings equipped with derivative-like operations. Although the
motivating example was the eminently practical field of incremental computation,
similar structures appear in more abstract settings such as the calculus of 
finite differences and Cartesian differential categories. 

Of particular interest are Cartesian difference categories
\cite{alvarez2020cartesian}, a well-behaved class of change action models
\cite{alvarez2019change} that are much closer to the strong axioms of a
Cartesian differential category \cite{blute2009cartesian} while remaining
general enough for interpreting discrete calculus. A Cartesian difference
category is a left additive category equipped with an ``infinitesimal
extension'', an operation that sends an arrow $f$ to an arrow $\epsilon(f)$
which should be understood as $f$ being multiplied by an ``infinitesimal''
element -- infinitesimal in the sense that it verifies the Kock-Lawvere axiom
from synthetic differential geometry (we refer the reader to 
\cite{lavendhomme2013basic} for an introduction to SDG).

The interest of Cartesian differential categories is in part motivated by the
fact that they provide models for the differential $\lambda$-calculus
\cite{ehrhard2008differential,ehrhard2018introduction}, which extends the
$\lambda$-calculus with linear combinations of terms and an operator that
differentiates arbitrary $\lambda$-abstractions. The claim that differentiation
in the differential $\lambda$-calculus corresponds to the standard, ``analytic''
notion is then justified by its interpretation in (a well-behaved class of)
Cartesian differential categories
\cite{bucciarelli2010categorical,manzonetto2012categorical}.

It is reasonable to ask, then, whether there is a similar calculus that captures
the behavior of derivatives in difference categories -- especially since, as it
has been shown, these subsume differential categories. The issue is far from
trivial, as many of the properties of the differential $\lambda$-calculus
crucially hinge on derivatives being linear. Through this work we provide an
affirmative answer to this question by defining untyped and simply-typed
variants for a simple calculus which extends the differential $\lambda$-calculus
with a notion of derivative more suitable to the Cartesian difference setting.

\section{Cartesian Difference Categories}

The theory of Cartesian difference categories is developed and discussed at
length in \cite{alvarez2020cartesian,alvarez2020extended}, but we present here 
the main definitions and results which we will use throughout the paper,
referring the reader to \cite{alvarez2020extended} for the proofs.

\begin{defi}
  A \emph{Cartesian left additive category} (\cite[Definition
  1.1.1.]{blute2009cartesian}) $\cat{C}$ is a Cartesian category where every
  hom-set $\homset{\cat{C}}{A}{B}$ is endowed with the structure of a
  commutative monoid $(\homset{\cat{C}}{A}{B}, +, 0)$ such that $0 \circ f = 0$,
  $(f + g) \circ h = (f \circ h) + (g \circ h)$ and $\pair{f_1}{f_2} +
  \pair{g_1}{g_2} = \pair{f_1 + g_1}{f_2 + g_2}$.

  An \emph{infinitesimal extension} (\cite[Definition
  8]{alvarez2020cartesian}) in a Cartesian left additive category $\cat{C}$ is
  a choice of a monoid homomorphism $\epsilon : \homset{\cat{C}}{A}{B} \to
  \homset{\cat{C}}{A}{B}$ for every hom-set in $\cat{C}$. That is, $\varepsilon
  (f + g) = \epsilon(f) + \epsilon(g)$ and $\epsilon(0) = 0$. Furthermore, we
  require that $\epsilon$ be compatible with the Cartesian structure, in the
  sense that $
    \epsilon(\pair{f}{g}) = \pair{\epsilon(f)}{\epsilon(g)}
  $.
\end{defi}  

\begin{defi}
  A \emph{Cartesian difference category} (\cite[Definition
  9]{alvarez2020cartesian}) is a Cartesian left additive category
  with an infinitesimal extension $\varepsilon$ which is equipped with a
  \emph{difference combinator} $\dd{-}$ of the form: 
  \begin{gather*}
   \frac{f : A \to B}{\dd{f}: A \times A \to B}
  \end{gather*}
  satisfying the following coherence conditions (writing
  $\dd[^2]{f}$ for $\dd{\dd{f}}$):
  \begin{enumerate}[\CdCAx{\arabic*}\quad,ref={\CdCAx{\arabic*}},align=left]
  \setcounter{enumi}{-1}
    \item $f \circ (x + \varepsilon(u)) = f \circ x + \varepsilon\left( \dd{f}
    \circ \langle x, u \rangle \right)$ \label{CdC0}

    \item $\dd{f+g} = \dd{f} + \dd{g}$, $\dd{0} = 0$, and $\dd{\varepsilon(f)} =
    \varepsilon(\dd{f})$ \label{CdC1}

    \item $\dd{f} \circ \langle x, u + v \rangle = \dd{f} \circ \langle x, u
    \rangle + \dd{f} \circ \langle x + \varepsilon(u), v \rangle$ and $\dd{f}
    \circ \langle x, 0 \rangle = 0$\label{CdC2}

    \item $\dd{\Id{A}} = \pi_2$ and $\dd{\pi_1} = \pi_1 \circ \pi_2$ and $\dd{\pi_2} =
    \pi_2 \circ \pi_2$\label{CdC3}

    \item $\dd{\langle f, g \rangle} = \langle \dd{f}, \dd{g} \rangle$ and $\dd{!_A}
    = !_{A \times A}$ \label{CdC4}

    \item $\dd{g \circ f} = \dd{g} \circ \langle f \circ \pi_1, \dd{f} \rangle$
    \label{CdC5}

    \item $\dd[^2]{f} \circ \left \langle \langle  x, u \rangle,
    \langle 0, v \rangle \right \rangle= \dd{f} \circ  \langle x +
    \varepsilon(u), v \rangle$\label{CdC6}

    \item  $ \dd[^2]{f} \circ \left \langle \langle x, u \rangle,
    \langle v, 0 \rangle \right \rangle=  \dd[^2]{f} \circ \left
    \langle \langle x, v \rangle, \langle u, 0 \rangle \right \rangle$
    \label{CdC7}
  \end{enumerate}
\end{defi}
As noted in \cite{alvarez2020cartesian}, the axioms in a Cartesian differential
category (see e.g.~\CdCAx{1--7} in \ \cite{blute2009cartesian}) correspond to
the analogous axioms of the Cartesian difference operator, modulo certain
``infinitesimal'' terms, i.e.~terms of the form $\epsilon(f)$. We state here the
following two properties, whose proofs can be found in
\cite{alvarez2020extended}.
\begin{lem}
  \label{lem:d-epsilon}
  Given any map $f : A \to B$ in a Cartesian difference category $\cat{C}$, its
  derivative $\dd{f}$ satisfies the following equations:

  \begin{enumerate}[\roman{enumi}.,ref={\thelem.\roman{enumi}}]
    \item $\dd{f} \circ \pair{x}{\epsilon(u)} = \epsilon(\dd{f}) \circ
    \pair{x}{u}$
    \label{lem:d-epsilon-i}
    \item $\epsilon(\dd[^2]{f}) \circ \four{x}{u}{v}{0}
    = \epsilon^2(\dd[^2]{f}) \circ \four{x}{u}{v}{0}$
    \label{lem:d-epsilon-iii}
  \end{enumerate}
\end{lem}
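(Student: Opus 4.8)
The plan is to prove part~(i) by a short direct computation and then bootstrap to part~(ii).

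For~(i): I would instantiate \CdCAx{0} at the map $\dd{f}$ (so the roles of $f$ and $\dd{f}$ there are played by $\dd{f}$ and $\dd[^2]{f}$), taking the base point to be $\pair{x}{u}$ and the infinitesimal perturbation to be $\pair{0}{v}$. Since sums and the action of $\varepsilon$ are computed componentwise on products, this gives $\dd{f}\circ\pair{x}{u+\varepsilon(v)} = \dd{f}\circ\pair{x}{u} + \varepsilon\!\left(\dd[^2]{f}\circ\four{x}{u}{0}{v}\right)$; rewriting the second‑derivative term on the right via \CdCAx{6} turns this into $\dd{f}\circ\pair{x}{u+\varepsilon(v)} = \dd{f}\circ\pair{x}{u} + \varepsilon\!\left(\dd{f}\circ\pair{x+\varepsilon(u)}{v}\right)$. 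Specialising to $u=0$ and using $\dd{f}\circ\pair{x}{0}=0$ (from \CdCAx{2}) together with $\varepsilon(0)=0$ collapses everything to $\dd{f}\circ\pair{x}{\varepsilon(v)} = \varepsilon\!\left(\dd{f}\circ\pair{x}{v}\right)$, which is~(i) once $\varepsilon$ is pulled out of the precomposition.

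For~(ii): the first step is to apply~(i) — both to $\dd{f}$ and to $\varepsilon(\dd{f})=\dd{\varepsilon(f)}$, which is again literally of the form $\dd{(-)}$ by \CdCAx{1} — in order to move $\varepsilon$ freely between the outside of $\dd[^2]{f}$ and its second argument. Since $\varepsilon\pair{v}{0}=\pair{\varepsilon(v)}{0}$, this yields $\varepsilon(\dd[^2]{f})\circ\four{x}{u}{v}{0} = \dd[^2]{f}\circ\four{x}{u}{\varepsilon(v)}{0}$ and, iterating, $\varepsilon^2(\dd[^2]{f})\circ\four{x}{u}{v}{0} = \dd[^2]{f}\circ\four{x}{u}{\varepsilon^2(v)}{0}$, so that~(ii) reduces to the single equality $\dd[^2]{f}\circ\four{x}{u}{\varepsilon(v)}{0} = \dd[^2]{f}\circ\four{x}{u}{\varepsilon^2(v)}{0}$. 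Using the symmetry \CdCAx{7} to interchange the two inner entries, this in turn reads $\dd[^2]{f}\circ\four{x}{\varepsilon(v)}{u}{0} = \dd[^2]{f}\circ\four{x}{\varepsilon^2(v)}{u}{0}$, i.e. the second derivative should not tell an $\varepsilon$‑infinitesimal perturbation of the base point apart from an $\varepsilon^2$‑infinitesimal one, provided the last direction‑coordinate vanishes. I would establish this by unfolding the second derivative back to a first‑order term at a shifted point via \CdCAx{6} and then comparing, using \CdCAx{0} and the additivity of $\dd{f}$ from \CdCAx{2} to control the remaining terms.

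The hard part is exactly this last equality — the bookkeeping of the powers of $\varepsilon$. The rewrites that come for free from \CdCAx{1}, \CdCAx{2} and part~(i) are ``symmetric'' enough that they hold in any model regardless of whether $\varepsilon$ is idempotent, so they cannot on their own separate $\varepsilon$ from $\varepsilon^2$; concretely, every naive manipulation produces an identity of the form $\dd{f}\circ\pair{x}{u} + \varepsilon(R) = \dd{f}\circ\pair{x}{u} + \varepsilon^2(R)$ and then wants to cancel the common summand $\dd{f}\circ\pair{x}{u}$, which is illegitimate since the hom‑monoids need not be cancellative. The argument must therefore genuinely exploit the asymmetric shift ``$x+\varepsilon(u)$'' occurring in \CdCAx{6}, in concert with \CdCAx{7} (and, quite possibly, with \CdCAx{6} applied one level higher, to $\dd{f}$) — this is where I expect the only genuinely delicate step to lie.
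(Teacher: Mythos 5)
First, note that the paper itself does not prove this lemma: both items are explicitly deferred to \cite{alvarez2020extended}, so there is no in-paper argument to compare against and your proposal has to stand on its own. Your part~(i) does: instantiating \CdCAx{0} at $\dd{f}$, rewriting the second-derivative term via \CdCAx{6}, and then specialising the base perturbation to $0$ using \CdCAx{2} and $\varepsilon(0)=0$ is a complete derivation. The only caveat is the closing step ``once $\varepsilon$ is pulled out of the precomposition'': you are using $\varepsilon(g)\circ h=\varepsilon(g\circ h)$, which the paper's summary of the definition of infinitesimal extension omits (it lists only the monoid-homomorphism and pairing conditions) but which is part of the definition in the cited source and is used silently throughout the paper, so this is acceptable.

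Part~(ii) contains a genuine gap, and you have diagnosed it yourself without closing it. Your reduction of~(ii) to $\dd[^2]{f}\circ\four{x}{u}{\varepsilon(v)}{0}=\dd[^2]{f}\circ\four{x}{u}{\varepsilon^2(v)}{0}$ (via part~(i) applied to $\dd{f}$ and to $\dd{\varepsilon(f)}$, using \CdCAx{1}) is valid, but the proposed method for finishing --- ``unfolding the second derivative back to a first-order term at a shifted point via \CdCAx{6} and then comparing, using \CdCAx{0} and \CdCAx{2} to control the remaining terms'' --- runs squarely into the obstruction you yourself describe. Concretely, \CdCAx{0} applied to $\dd{f}$ yields $\dd{f}\circ\pair{x+\varepsilon(v)}{u}=\dd{f}\circ\pair{x}{u}+\varepsilon(\dd[^2]{f})\circ\four{x}{u}{v}{0}$, and substituting $\varepsilon(v)$ for $v$ yields the same identity with $\varepsilon^2$ on the right; so both sides of~(ii) only ever appear as the ``$\varepsilon$-parts'' of first-order expressions sharing the non-cancellable summand $\dd{f}\circ\pair{x}{u}$. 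Meanwhile every degree-changing-free rewrite at your disposal --- \CdCAx{7}, part~(i) applied to $\dd{f}$, and the chain-rule identity expressing $\varepsilon(\dd[^2]{f})$ as $\dd[^2]{f}$ precomposed with $\Id{}\times\varepsilon(\Id{})$ in each factor --- merely relocates the $\varepsilon$ among the slots of $\four{x}{u}{v}{0}$ while preserving the total number of $\varepsilon$'s, so no combination of them can identify an $\varepsilon$ with an $\varepsilon^2$. What is required is an identity that changes the $\varepsilon$-degree outside of a sum, and the proposal exhibits none; naming the crux (``this is where I expect the only genuinely delicate step to lie'') is not the same as supplying it. Since~(ii) is the entire content of the lemma beyond~(i) --- and is precisely what the paper later needs to justify the ``duplication of infinitesimals'' rule $\varepsilon^2\,\D(\D(s)\cdot t)\cdot e\diffEq\varepsilon\,\D(\D(s)\cdot t)\cdot e$ --- the proof attempt is incomplete.
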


\section{Difference \texorpdfstring{$\lambda$}{Lambda}-Categories}

In order to give a semantics for the differential $\lambda$-calculus, it does
not suffice to ask for a Cartesian differential category equipped with
exponentials -- the exponential structure has to be compatible with both the
additive and the differential structure, in the sense of
\cite[Definition~4.4]{bucciarelli2010categorical}. For difference categories we
will require an identical equation, together with a condition requiring
higher-order functions to respect the infinitesimal extension.

\begin{defi}
  \label{def:difference-lambda-category}
  We remind the reader that a Cartesian left additive category is
  \emph{Cartesian closed left additive}
  (\cite[Definition~4.2]{bucciarelli2010categorical}) whenever it is Cartesian
  closed and satisfies $\Lambda(f + g) = \Lambda(f) + \Lambda(g), \Lambda(0) = 0$.

  A Cartesian difference category $\cat{C}$ is a \emph{difference
  $\lambda$-category} if it Cartesian closed left additive and satisfies the
  following additional axioms:
  \begin{enumerate}[{\bf[$\partial\lambda$.\arabic*]},ref={\bf[$\partial\lambda$.{\arabic*}]},align=left]
    \item $\dd{\Lambda(f)} = \Lambda\pa{\dd{f}
      \circ \pair{\pa{\pi_1 \times \Id{}}}{\pa{\pi_2 \times 0}}
    }
    $
    \label{L1}
    \item $\Lambda(\epsilon(f)) = \varepsilon\pa{\Lambda(f)}$
    \label{L2}
  \end{enumerate}

  Equivalently, let $\A{sw}$ denote the map
  $
    \pair{\pair{\pi_{11}}{\pi_2}}{\pi_{21}}
    : (A \times B) \times C \to (A \times C) \times B
  $.
  Then the condition \ref{L1} can be written in terms of $\A{sw}$ as:
  \[
    \dd{\Lambda(f)} \defeq \Lambda\pa{
      \dd{f}
      \circ \pa{\Id{} \times \pair{\Id{}}{0}}
      \circ \A{sw}
    }
    \]
\end{defi}

Axiom \ref{L1} is identical to its differential analogue \cite[Definition
4.4]{bucciarelli2010categorical}, and it follows the same broad intuition. Given
a map $f : A \times B \to C$, we usually understand the composite $\dd{f} \circ
(\Id{A \times B} \times (\Id{A} \times 0_B)) : (A \times B) \times A \to C$ as a
partial derivative of $f$ with respect to its first argument. Hence, just as it
was with differential $\lambda$-categories, axiom~\ref{L1} states that the
derivative of a curried function is precisely the derivative of the uncurried
function with respect to its first argument.

\begin{exa}
  Let $\cat{C}$ be a differential $\lambda$-category. Then the trivial Cartesian
  difference category obtained by setting $\epsilon(f) = 0$ (as in
  \cite[Proposition 1]{alvarez2020cartesian}) is a difference
  $\lambda$-category. Furthermore, the Kleisli category $\cat{C}_\mathsf{T}$
  induced by its tangent bundle monad (as in \cite[Proposition
  6]{alvarez2020cartesian}) is also a difference $\lambda$-category.
\end{exa}

\begin{exa}
  The category $\overline{\cat{{Ab}}}$ (\cite[Section
  5.2]{alvarez2020cartesian}), which has Abelian groups as objects and arbitrary
  functions between their carrier sets as morphisms, is a difference
  $\lambda$-category with infinitesimal extension $\epsilon(f) = f$ and
  difference combinator $\dd{f}(x, u) = f(x + u) - f(x)$. Given
  groups $G, H$, the exponential $G \Ra H$ is the set of (set-theoretic)
  functions from $G$ into $H$, endowed with the group structure of $H$ lifted
  pointwise (that is, $(f + g)(x) = f(x) + g(x)$). Evidently the exponential
  respects the monoidal structure and the infinitesimal extension. We check that
  it also verifies axiom \ref{L1}:
  \begin{align*}
    \dd{\Lambda(f)}(x, u)(y) &= \Lambda(f)(x + u)(y) - \Lambda(f)(x)(y)
    \\
    &= f(x + u, y) - f(x, y)
    \\
    &= \Lambda(\dd{f} \circ (\Id{} \times \pair{\Id{}}{0}) \circ \A{sw})(x, u)(y)
  \end{align*}
\end{exa}

A central property of differential $\lambda$-categories is a deep correspondence
between differentiation and the evaluation map. As one would expect, the partial
derivative of the evaluation map gives one a first-class derivative operator
(see, for example, \cite[Lemma~4.5]{bucciarelli2010categorical}, which provides
an interpretation for the differential substitution operator in the differential
$\lambda$-calculus). This property still holds in difference categories,
although its formulation is somewhat more involved.

\begin{lem}
  \label{lem:lambda-d-ev}
  For any $\cat{C}$-morphisms $\Lambda(f) : A \to (B \Ra C), e : A \to B$, the
  following identities hold:
  \begin{enumerate}[\roman{enumi}.,ref={\thelem.\roman{enumi}}]
    \item $
    \dd{\A{ev} \circ \pair{\Lambda(f)}{e}}
    = \A{ev} \circ \pair{\dd{\Lambda(f)}}{e \circ \pi_1}
    +\dd{f} \circ \pair{\pair{\pi_1 + \epsilon(\pi_2)}{e \circ
    \pi_1}}{\pair{0}{\dd{e}}}$
    \item $
      \dd{\A{ev} \circ \pair{\Lambda(f)}{e}}
      = 
      \A{ev} \circ \pair{\dd{\Lambda(f)}}{e \circ \pi_1 + \epsilon(\dd{e})}
      +\dd{f} \circ \pair{\pair{\pi_1}{e \circ \pi_1}}{\pair{0}{\dd{e}}}
    $
  \end{enumerate}
\end{lem}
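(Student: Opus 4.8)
The plan is to reduce the left-hand side to a derivative of the \emph{uncurried} map $f$ and then split the relevant ``variation'' into two summands using the bilinearity axiom \CdCAx{2}, with the two admissible orders of splitting yielding the two different identities.

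First I would observe that, by the universal property of currying, $\A{ev} \circ \pair{\Lambda(f)}{e} = \A{ev} \circ (\Lambda(f) \times \Id{B}) \circ \pair{\Id{A}}{e} = f \circ \pair{\Id{A}}{e}$. Applying the chain rule \CdCAx{5} together with \CdCAx{3} and \CdCAx{4} (which give $\dd{\pair{\Id{A}}{e}} = \pair{\pi_2}{\dd{e}}$ and $\pair{\Id{A}}{e} \circ \pi_1 = \pair{\pi_1}{e \circ \pi_1}$) then yields
\[
  \dd{\A{ev} \circ \pair{\Lambda(f)}{e}}
  = \dd{f} \circ \pair{\pair{\pi_1}{e \circ \pi_1}}{\pair{\pi_2}{\dd{e}}}.
\]

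Next I would use the additivity of pairing to write the second component as $\pair{\pi_2}{\dd{e}} = \pair{\pi_2}{0} + \pair{0}{\dd{e}}$ and feed this into \CdCAx{2}. Splitting in this order gives
\[
  \dd{f} \circ \pair{\pair{\pi_1}{e \circ \pi_1}}{\pair{\pi_2}{\dd{e}}}
  = \dd{f} \circ \pair{\pair{\pi_1}{e \circ \pi_1}}{\pair{\pi_2}{0}}
  + \dd{f} \circ \pair{\pair{\pi_1}{e \circ \pi_1} + \epsilon\pair{\pi_2}{0}}{\pair{0}{\dd{e}}},
\]
and, using $\epsilon(0) = 0$ and compatibility of $\epsilon$ with pairing, the perturbed base point $\pair{\pi_1}{e \circ \pi_1} + \epsilon\pair{\pi_2}{0}$ simplifies to $\pair{\pi_1 + \epsilon(\pi_2)}{e \circ \pi_1}$, producing the second summand of identity (i). Splitting instead as $\pair{\pi_2}{\dd{e}} = \pair{0}{\dd{e}} + \pair{\pi_2}{0}$ and applying \CdCAx{2} the other way round produces $\dd{f} \circ \pair{\pair{\pi_1}{e \circ \pi_1}}{\pair{0}{\dd{e}}}$ plus a term whose base point is $\pair{\pi_1}{e \circ \pi_1 + \epsilon(\dd{e})}$, which is the second summand of identity (ii).

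It then remains to identify the ``partial derivative in the first argument'' term: I would show that for any $g : A \times A \to B$,
\[
  \dd{f} \circ \pair{\pair{\pi_1}{g}}{\pair{\pi_2}{0}}
  = \A{ev} \circ \pair{\dd{\Lambda(f)}}{g},
\]
which follows by unfolding $\dd{\Lambda(f)}$ via \DlCAx{1}, using the identity $\A{ev} \circ \pair{\Lambda(k)}{g} = k \circ \pair{\Id{}}{g}$, and simplifying the resulting composite of projections and $\A{sw}$. Applying this with $g = e \circ \pi_1$ for (i) and $g = e \circ \pi_1 + \epsilon(\dd{e})$ for (ii) completes the two proofs. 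The only real subtlety is that \CdCAx{2} is \emph{not} symmetric in its two increment arguments: choosing between the two orders of the split is precisely what distinguishes the two statements, and it is the point at which the infinitesimal-perturbation terms (absent in the differential-category analogue) are forced to appear; everything else is careful but routine bookkeeping with products and projections.
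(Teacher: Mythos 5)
Your proof is correct: uncurrying $\A{ev} \circ \pair{\Lambda(f)}{e}$ to $f \circ \pair{\Id{A}}{e}$, applying \CdCAx{5}, splitting the increment $\pair{\pi_2}{\dd{e}}$ in the two possible orders via the non-symmetric axiom \CdCAx{2}, and identifying the $\pair{\pi_2}{0}$-term with $\A{ev} \circ \pair{\dd{\Lambda(f)}}{g}$ via \ref{L1} is exactly the intended argument, and each step checks out (including the bookkeeping with $\epsilon(\pair{\pi_2}{0}) = \pair{\epsilon(\pi_2)}{0}$ and $\epsilon(\pair{0}{\dd{e}}) = \pair{0}{\epsilon(\dd{e})}$). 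The paper itself states this lemma without proof, deferring to the cited companion work, so there is no in-paper proof to compare against; your observation that the choice of splitting order in \CdCAx{2} is what produces the two variants is precisely the right take-away.
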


As is the case in differential $\lambda$-categories, we can define a
``differential substitution'' operator on the semantic side. This operator
is akin to post-composition with a partial derivative, and can be defined
as follows.

\begin{defi}
  \label{def:star-composition}
  Given morphisms $s : A \times B \ra C, u : A \ra B$, we define their
  \textbf{differential composition} $s \star u : A \times B \ra C$ by:
  \[
    s \star u \defeq \dd{s} \circ \pair{\Id{A \times B}}{\pair{0_A}{u \circ \pi_1}}
  \]
\end{defi}

We should understand the morphism $s \star u$ as the partial derivative of $s$
in its second argument, pre-composed with the morphism $u$. This operator verifies
the following useful properties (proofs of which can be found in \cite[Chapter 6.4]{alvarez2020change}).

\begin{lem}
  \label{lem:lambda-star-ev}
  Let $f : A \times B \times C \to D, g : A \to B, g' : A \times B \to B, e : A
  \times B \to C$ be arbitrary $\cat{C}$-morphisms. Then the following
  identities hold:
  \begin{enumerate}[\roman{enumi}.,ref={\thelem.\roman{enumi}}]
    \item {\small$\pa{\A{ev} \circ \pair{\Lambda(f)}{e}} \star g
      = \A{ev} \circ \pair{
        \Lambda(f \star (e \star g))
      }{e} 
      + \A{ev} 
      \circ \pair{
        \Lambda(f) \star g
      }{
        e \circ \pair{\pi_1}{\pi_2 + \epsilon(g) \circ \pi_1}
      }
    $}
    \item {\small $
      \Lambda (f \star e) \star g =
      \Lambda \Big[
        \Lambda^-(\Lambda(f) \star g) \star (e \circ (\Id{} + \pair{0}{\varepsilon(g)})))
      + \varepsilon (f \star e) \star (e \star g)
      + (f \star (e \star g))
      \Big]
      $}
    \item 
    {\small $\Lambda(f \star e) \circ \pair{\pi_1}{g'} 
      = \Lambda(\Lambda^-(\Lambda(f) \circ \pair{\pi_1}{g'})
      \star (e \circ \pair{\pi_1}{g'}))$}
  \end{enumerate}
\end{lem}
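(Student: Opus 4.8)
The plan is to reduce each of the three identities to a calculation with the difference‑combinator axioms \ref{CdC0}--\ref{CdC7}, the $\lambda$-category axiom \ref{L1}, and Lemmas \ref{lem:d-epsilon} and \ref{lem:lambda-d-ev}; for \textbf{(ii)} and \textbf{(iii)} the calculation is carried out after uncurrying via the bijection $\Lambda \colon \homset{\cat C}{X \times C}{D} \cong \homset{\cat C}{X}{C \Ra D}$. Throughout we use the cartesian-closed identities $\A{ev} \circ \pair{\Lambda(h)}{k} = h \circ \pair{\Id{}}{k}$ and $\Lambda^{-1}(\Lambda(h) \circ p) = h \circ (p \times \Id{})$, together with the defining equation $s \star u = \dd{s} \circ \pair{\Id{}}{\pair{0}{u \circ \pi_1}}$ of Definition \ref{def:star-composition}.

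For \textbf{(i)}, unfold the outermost $\star$, so that the left-hand side is $\dd{\A{ev} \circ \pair{\Lambda(f)}{e}} \circ \pair{\Id{}}{\pair{0}{g \circ \pi_1}}$. Rewrite the derivative using Lemma \ref{lem:lambda-d-ev} — the second of the two forms stated there is the convenient one — and distribute the post-composition over the resulting sum by cartesian left additivity. In the first summand the projections simplify, one recognises $\dd{\Lambda(f)} \circ \pair{\Id{}}{\pair{0}{g \circ \pi_1}} = \Lambda(f) \star g$, and the scalar argument becomes $e + \epsilon(e \star g)$, which by \ref{CdC0} (and the basic arithmetic of $\epsilon$) is exactly $e \circ \pair{\pi_1}{\pi_2 + \epsilon(g) \circ \pi_1}$; this yields the second summand on the claimed right-hand side. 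In the second summand, pushing the pairings through leaves $\dd{f} \circ \pair{\pair{\Id{}}{e}}{\pair{0}{e \star g}}$, which the cartesian-closed identity and the definition of $\star$ rewrite as $\A{ev} \circ \pair{\Lambda(f \star (e \star g))}{e}$, the first summand. No genuine obstacle arises; this is a direct calculation.

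For \textbf{(iii)}, apply $\Lambda^{-1}$ to both sides; by the cartesian-closed identity above it suffices to prove $(f \star e) \circ (\pair{\pi_1}{g'} \times \Id{}) = \pa{f \circ (\pair{\pi_1}{g'} \times \Id{})} \star (e \circ \pair{\pi_1}{g'})$. In fact this holds with $\pair{\pi_1}{g'}$ replaced by an arbitrary $h \colon A \times B \to A \times B$: unfolding both sides with Definition \ref{def:star-composition} and expanding $\dd{(f \circ (h \times \Id{}))}$ by the chain rule \ref{CdC5}, the identity reduces to the claim $\dd{(h \times \Id{})} \circ \pair{\Id{}}{\pair{0}{e \circ h \circ \pi_1}} = \pair{0}{e \circ h \circ \pi_1}$; computing $\dd{(h \times \Id{})}$ from \ref{CdC3}, \ref{CdC4} and \ref{CdC5}, its first component is $\dd{h} \circ \pair{\pi_1}{0}$, which vanishes by \ref{CdC2}. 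This is again short and routine.

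\textbf{(ii)} is where the work lies. Applying $\Lambda^{-1}$ and \ref{L1} to the left-hand side reduces $\Lambda^{-1}(\Lambda(f \star e) \star g)$ to $\dd{(f \star e)}$ post-composed with a fixed map assembled from $\A{sw}$, $\pair{0}{g \circ \pi_1}$ and projections; since $f \star e = \dd{f} \circ \pair{\Id{}}{\pair{0}{e \circ \pi_1}}$, the chain rule \ref{CdC5} expresses $\dd{(f \star e)}$ in terms of the second derivative $\dd[^2]{f}$ and of $\dd{e}$. Likewise each of the three summands on the right unfolds — the first using Definition \ref{def:star-composition} together with \ref{L1} (it carries the higher-order correction $\Lambda(f) \star g$), the second and third using Definition \ref{def:star-composition} and \ref{CdC5}, plus \ref{CdC1} to move the $\epsilon$ past the derivative — into $\dd[^2]{f}$ (respectively $\dd{f}$, respectively $\epsilon$ of such) composed with explicit tuples of $\Id{}$, $0$, $e$, $\dd{e}$ and $g$. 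The two sides are then reconciled by: \ref{CdC2} to split the single second-derivative tuple on the left into a first-coordinate part and a shifted-second-coordinate part — this is the source of the three summands — the second-derivative axioms \ref{CdC6} and \ref{CdC7} to collapse and symmetrise the $\dd[^2]{f}$ terms (turning some into $\dd{f}$ terms), and Lemma \ref{lem:d-epsilon}, in both its forms, to absorb the mismatch between $\epsilon$ and $\epsilon^2$ that \ref{CdC6}/\ref{CdC7} introduce. The main obstacle is purely bookkeeping: threading the nested pairings and the swap map $\A{sw}$ correctly through the chain rule, and pinpointing exactly which instances of \ref{CdC6}, \ref{CdC7} and Lemma \ref{lem:d-epsilon} are needed so that the expansion collapses to precisely the three displayed terms — neither more nor fewer.
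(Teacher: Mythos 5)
First, note that the paper itself does not prove this lemma: it points to \cite[Chapter 6.4]{alvarez2020change} for the proofs, so there is no in-paper argument to compare yours against and I can only judge the proposal on its own terms. For parts (i) and (iii) your calculations are correct and essentially complete: in (i) the second form of Lemma~\ref{lem:lambda-d-ev} together with \ref{CdC0} does produce exactly the two displayed summands, and in (iii) the reduction to $\dd{h\times\Id{}}\circ\pair{\Id{}}{\pair{0}{e\circ h\circ\pi_1}} = \pair{0}{e\circ h\circ\pi_1}$, with the first component killed by \ref{CdC2}, is right and does hold for arbitrary $h$.

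Part (ii), however, is where the entire content of the lemma lives, and there you have supplied a strategy rather than a proof. The strategy is the right one: uncurry with \ref{L1}, expand $\dd{f\star e}$ by \ref{CdC5}, split the resulting increment $\langle\langle u,0\rangle,\langle 0,w\rangle\rangle$ (where $u = \pair{0}{g\circ\pi_1}$ and $w = e\star g$) by \ref{CdC2}, and reconcile with \ref{CdC6}, \ref{CdC7} and Lemma~\ref{lem:d-epsilon}; the three summands do emerge from that split, with $f\star(e\star g)$ falling out of the \ref{CdC6} collapse. But ``purely bookkeeping'' undersells the residual step. After matching the third summand one is left having to separate a sum $\langle u, e\rangle = \langle u,0\rangle + \langle 0,e\rangle$ sitting inside the \emph{base point} of $\dd[^2]{f}$, not inside its increment, and \ref{CdC2} does not license that: you need a further round of \ref{CdC7} to swap the offending component into increment position, Lemma~\ref{lem:d-epsilon-i} to trade the ambient $\epsilon$ for an $\epsilon$ on the increment, and \ref{CdC0} (which you invoke for (i) but omit from the list for (ii)) to peel the base-point perturbation off as a third-derivative correction, before Lemma~\ref{lem:d-epsilon-iii} can absorb the resulting $\epsilon$-versus-$\epsilon^2$ mismatch. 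Until that chain is written out and shown to terminate in exactly the displayed right-hand side, (ii) is unproven. I did confirm the identity holds in $\overline{\cat{Ab}}$, so the statement itself is not in doubt; only the completeness of your derivation is.
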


\newcommand{\wfParTo}[0]{\mathrel{\ul\parTo}}

\newcommand{\permEq}[0]{\sim_+}
\newcommand{\diffEq}[0]{\sim_\epsilon}
\newcommand{\diffEta}[0]{\sim_{\epsilon\eta}}
\newcommand{\canEq}[0]{\sim_\textbf{can}}
\newcommand{\wfReducesTo}[0]{\mathrel{\ul{\reducesTo}}}
\newcommand{\wfTs}[0]{\mathrel{\ul{\ts}}}
\newcommand{\ap}[2]{\textbf{ap}(#1, #2)}
\newcommand{\pri}[1]{\textbf{pri}(#1)}
\renewcommand{\tan}[1]{\textbf{tan}(#1)}
\newcommand{\can}[1]{\textbf{can}\left(#1\right)}
\newcommand{\reg}[1]{\textbf{reg}\left(#1\right)}

\section{An Untyped Calculus of Differences}
\label{sec:untyped}

We proceed in a manner similar to Vaux \cite{vaux2009algebraic} in his treatment
of the algebraic $\lambda$-calculus; that is, we will first define a set of
``unrestricted'' terms $\Lambda_\epsilon$ which we will later consider up to an
equivalence relation arising from the theory of difference categories.

\begin{defi} The set $\Lambda_\epsilon$ of \textbf{unrestricted terms} of
  the $\lambda_\epsilon$-calculus is given by the following inductive
  definition (assuming, as is usual, a countably infinite set of distinct variables 
  $x, y, z, \ldots$):
  \[
    \begin{array}{rccl}
      \text{Terms: } &s, t, e& \defeq & x
                                        \alter \lambda x . t
                                        \alter (s\ t)
                                        \alter \D(s) \cdot t
                                        \alter \epsilon t
                                        \alter s + t 
                                        \alter 0
    \end{array}
  \]
\end{defi}

Since the only binder in the $\lambda_\epsilon$-calculus
is the usual $\lambda$-abstraction, the definition of free and bound variables
is straightforward.

\begin{defi}
  The set of \textbf{free variables} $\text{FV}(t)$ of a term $t \in
  \Lambda_\epsilon$ is defined by induction on the structure of $t$ as follows:
  \[
    \begin{array}{rcl}
      \text{FV}(x) &\defeq& \lbrace x \rbrace\\
      \text{FV}(\lambda x . t) &\defeq& \text{FV}(t) \setminus \lbrace x \rbrace\\
      \text{FV}(s\ t) &\defeq& \text{FV}(s) \cup \text{FV}(t)\\
      \text{FV}(\D(s)\cdot t) &\defeq& \text{FV}(s) \cup \text{FV}(t)\\
      \text{FV}(\epsilon t) &\defeq& \text{FV}(t)\\
      \text{FV}(s + t) &\defeq& \text{FV}(s) \cup \text{FV}(t)\\
      \text{FV}(0) &\defeq& \emptyset\\
    \end{array}
  \]

  As usual, a variable $x$ is \textbf{free} in $t$ whenever $x \in \text{FV}(t)$.
  An occurrence of a variable $x$ in some term $t$ is said to be \textbf{bound}
  whenever it appears in some subterm $t'$ of $t$ with $x \not\in \text{FV}(t)$.
  Two terms are said to be $\alpha$-equivalent if they are identical up to a
  renaming of all their bound variables.
\end{defi}

In what follows, we will speak of terms only up to $\alpha$-equivalence. That
is, we consider the terms $\lambda x . x$ and $\lambda y . y$ to be identical
for all intents and purposes. Since this means we can rename bound variables
freely, we will assume by convention that all bound variables appearing in any
term $t \in \Lambda_\epsilon$ are different from its free variables.

\subsection{Differential Equivalence}

Further to $\alpha$-equivalence, we introduce here the notion of differential
equivalence of terms. The role of this relation is, as in
\cite{vaux2006lambda}, to enforce that the elementary algebraic properties of
sums and actions are preserved. For example, we wish to treat the terms $\lambda
x . (0 + \epsilon (s + t))$ and $(\lambda x . \epsilon t) + (\lambda x .
\epsilon s)$ as if they were equivalent (as it will be the case in the models).
This equivalence relation also has the role of ensuring that the axioms of a
Cartesian difference category are verified, especially regularity of
derivatives.

\begin{defi} A binary relation $\sim{} \subseteq \Lambda_\epsilon \times
  \Lambda_\epsilon$ is \textbf{contextual} whenever it satisfies the conditions
  in Figure~\ref{fig:contextual-relation} below.
  \begin{figure}[ht]
    \[
      \begin{array}{ccc}
        t \sim t' & \Rightarrow & \lambda x . t \sim \lambda x . t'\\
        t \sim t' & \Rightarrow & \epsilon t \sim \epsilon t'\\
        s \sim s' \wedge t \sim t' & \Rightarrow & s\ t \sim s'\ t'\\
        s \sim s' \wedge t \sim t' & \Rightarrow & \D(s)\cdot t \sim \D(s')\cdot t'\\
        s \sim s' \wedge t \sim t' & \Rightarrow & s + t \sim s' + t'\\
      \end{array}
    \]
    \caption{Contextuality on unrestricted $\Lambda_\epsilon$-terms}
    \label{fig:contextual-relation}
  \end{figure}
\end{defi}

\begin{lem}
  \label{lem:contextual-subterm}
  Whenever $\sim$ is contextual, if $t$ is a subterm of $s$ and $t \sim t'$ then
  $s \sim s'$, where $s'$ is the term resulting from substituting the occurrence
  of $t$ in $s$ for $t'$.
\end{lem}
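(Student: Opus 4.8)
The plan is to argue by structural induction on $s$, tracking the distinguished occurrence of the subterm $t$. In the base case that occurrence is the root, so $s = t$ and $s'$ is by definition $t'$; hence $s \sim s'$ is precisely the hypothesis $t \sim t'$. In the inductive step the occurrence of $t$ lies properly inside $s$, so $s$ is built by one of the constructors of $\Lambda_\epsilon$ from immediate subterms one of which, say $r$, contains the distinguished occurrence of $t$; writing $r'$ for $r$ with that occurrence replaced by $t'$, the induction hypothesis yields $r \sim r'$, and $s'$ is $s$ with $r$ replaced by $r'$.

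It then remains to walk through the constructors, discharging each by the matching clause of Figure~\ref{fig:contextual-relation}: if $s = \lambda x.\,r$ or $s = \epsilon\,r$ we apply the first, respectively second, rule to $r \sim r'$; if $s = r + e$, $s = e + r$, $s = (r\ e)$, $s = (e\ r)$, $s = \D(r)\cdot e$ or $s = \D(e)\cdot r$ we apply the rule for sums, applications, respectively differential applications. In the branching cases the relevant clause also requires the untouched sibling $e$ to satisfy $e \sim e$, so we implicitly use reflexivity of $\sim$ there. The abstraction case is unproblematic: the corresponding clause imposes no condition on the bound variable $x$, so our convention of keeping bound and free variables disjoint (and working up to $\alpha$-equivalence) causes no interference.

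I do not expect a genuine obstacle here — the argument is pure bookkeeping over the seven term formers — so the only point worth flagging is that the statement tacitly assumes $\sim$ is reflexive, which is needed to relate the sibling subterms in the application, differential-application and sum cases to themselves; this holds for every relation to which we apply the lemma, as these are all equivalence relations. Should one wish to avoid even this mild appeal, it suffices to note that the relations $\permEq$, $\diffEq$ and their refinements are reflexive by construction, so the hypothesis ``$\sim$ is contextual'' may be read throughout as ``$\sim$ is a reflexive contextual relation'' without loss for our purposes.
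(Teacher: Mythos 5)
Your proof is correct and is exactly the structural induction the paper leaves implicit (the lemma is stated without proof, being regarded as immediate bookkeeping over the term formers). Your observation that the binary clauses of Figure~\ref{fig:contextual-relation} force an appeal to reflexivity of $\sim$ for the untouched sibling is a genuine and worthwhile point --- the lemma as literally stated fails for a non-reflexive contextual relation --- and your remark that every relation to which the lemma is applied ($\diffEq$, $\permEq$, and their refinements) is an equivalence relation, hence reflexive, correctly discharges it.
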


\begin{defi} 
  \label{defn:differential-equivalence}
  \textbf{Differential equivalence} $\diffEq \subseteq \Lambda_\epsilon \times
  \Lambda_\epsilon$ is the least equivalence
  relation which is contextual and contains the relation $\diffEq^1$ below
  Figure~\ref{fig:differential-equivalence} below.
  \begin{figure}[ht]
  \[
    \begin{array}{rcl}
      (s + t) + e           &\diffEq^1& s + (t + e)\\
      s + 0                 &\diffEq^1& s\\
      s + t                 &\diffEq^1& t + s\\
      \epsilon 0            &\diffEq^1& 0\\
      \epsilon (s + t)      &\diffEq^1& \epsilon s + \epsilon t
    \end{array}
    \begin{array}{rcl}
      \lambda x . 0          &\diffEq^1& 0\\
      \lambda x . (s + t)    &\diffEq^1& (\lambda x . s) + (\lambda x . t)\\
      \lambda x . \epsilon t &\diffEq^1& \epsilon (\lambda x . t)\\
      0\ s                   &\diffEq^1& 0\\
      (s + t)\ e             &\diffEq^1& (s\ e) + (t\ e)\\
      (\epsilon s)\ t        &\diffEq^1& \epsilon (s\ t)
    \end{array}
  \]
  \[
    \begin{array}{rcl}
      \D(0) \cdot e          &\diffEq^1& 0\\
      \D(s + t) \cdot e      &\diffEq^1& (\D(s) \cdot e) + (\D(t) \cdot e)\\
      \D(\epsilon t) \cdot e &\diffEq^1& \epsilon (\D(t) \cdot e)
    \end{array}\]\[
    \begin{array}{rcl}
      \D(s) \cdot 0          &\diffEq^1& 0\\
      \D(s) \cdot (t + e)    &\diffEq^1& 
      \D(s) \cdot t
      + \D(s) \cdot e
      + \epsilon (\D(\D(s) \cdot t) \cdot e)\\
      \D(s) \cdot (\epsilon t) &\diffEq^1& \epsilon(\D(s) \cdot t)\\
      \D(\D(s) \cdot t) \cdot e &\diffEq^1& \D(\D(s) \cdot e) \cdot t\\
      \epsilon^2 \D(\D(s) \cdot t) \cdot e &\diffEq^1& \epsilon \D(\D(s) \cdot t) \cdot e\\
      s\ (t + \epsilon e) &\diffEq^1& (s\ t) + \epsilon((\D(s) \cdot e)\ t)
    \end{array}
  \]
  \caption{Differential equivalence on unrestricted $\Lambda_\epsilon$-terms}
  \label{fig:differential-equivalence}
  \end{figure}
\end{defi}

The above conditions can be separated in a number of conceptually distinct
groups corresponding to their purpose. These are as follows:
\begin{itemize}
  \item The first block of equations simply states that $+, 0$ define a
    commutative monoid and that $\epsilon$ defines a monoid homomorphism.
  \item The second block of equations amounts to stating that the monoid and
    infinitesimal extension structure on functions is pointwise.
  \item The third block of equations implies (and is equivalent to) stating that
    addition and infinitesimal extension are ``linear'', in the sense that they
    are equal to their own derivatives (that is, $\dd{+} = + \circ \pi_2$ and
    $\dd{\epsilon}= \epsilon$).
  \item The fourth block of equations states structural properties of the
    derivative, such as the derivative conditions and the commutativity of second
    derivatives. Similar equations are also present in the differential
    $\lambda$-calculus, where they merely state that the derivative is additive
    (as opposed to regular).
\end{itemize}

Most of these equations correspond directly to properties of Cartesian
difference categories, with the only exception being the requirement that $\D(s)
\cdot (\epsilon t) \diffEq \epsilon (\D(s) \cdot t)$ and the ``duplication
of infinitesimals'' in $\epsilon \D (\D(s) \cdot t) \cdot e = \epsilon^2
\D(\D(s) \cdot t) \cdot e$. To understand the logic of the first equation, consider
that the term $\D(\lambda x . s) \cdot \epsilon t$ should roughly correspond to
$\dd{\sem{s}}(x, \epsilon \sem{t})$ which expands to:

\begin{align*}
  \dd{\sem{s}}(x, \epsilon {\sem{t}}) &= \dd{\sem{s}}(x, 0 + \epsilon {\sem{t}})\\
  &= \dd{\sem{s}}(x, 0) + \epsilon \dd[^2]{\sem{s}}(x, 0, 0, {\sem{t}})\\
  &= \epsilon \dd[^2]{\sem{s}}(x, 0, 0, {\sem{t}})
\end{align*}
In the particular setting of Cartesian difference categories, the last term
boils down to $\epsilon \dd{\sem{s}}(x, \sem{t})$ by axiom~\ref{CdC6}, hence the
equality $\D(s) \cdot \epsilon t \diffEq \epsilon (\D(s) \cdot t)$ is justified,
allowing infinitesimal extensions to be ``pulled out'' of differential
application. The duplication of infinitesimals is simply a syntactic version of
Lemma~\ref{lem:d-epsilon-iii}.

\begin{defi}
  The set $\lambda_\epsilon$ of \textbf{well-formed terms}, or simply
  \textbf{terms}, of the $\lambda_\epsilon$-calculus is defined as the quotient
  set $\lambda_\epsilon \defeq \Lambda_\epsilon/\diffEq$. Whenever $t$ is an
  unrestricted term, we write $\ul{t}$ to refer to the well-formed term
  represented by $t$, that is to say, the $\diffEq$-equivalence class of $t$. 
\end{defi}

The notion of differential equivalence allows us to ensure that our calculus
reflects the laws of the underlying models, but has the unintended
consequence that our $\lambda_\epsilon$-terms are equivalence classes, rather
than purely syntactic objects. We will proceed by defining a notion of canonical
form of a term and a canonicalization algorithm which explicitly constructs
the canonical form of any given term, thus proving that $\diffEq$ is decidable.

\newcommand{\canonical}[0]{\text{C}_\epsilon}
\newcommand{\base}[0]{\text{B}_\epsilon}
\newcommand{\additive}[0]{\base^*}
\newcommand{\positive}[0]{\base^+}
\newcommand{\posCanonical}[0]{\canonical^+}
\renewcommand{\b}[0]{\textbf{b}}

\begin{defi}
  We define the sets $\base \subset \positive \subset \additive \subset
  \posCanonical \subset \canonical (\subset \Lambda_\epsilon)$ of \textbf{basic},
  \textbf{positive}, \textbf{additive}, \textbf{positive canonical} and
  \textbf{canonical} terms according to the following grammar:
\[\def\arraystretch{1.3}
  \begin{array}{rccl}
    \text{Basic terms: } &s^\b, t^\b, e^\b \in \base& \defeq & x
                                                  \alter \lambda x . t^\b
                                                  \alter (s^\b\ t^*)
                                                  \alter \D(s^\b) \cdot t^\b\\
    \text{Positive terms: } &s^+, t^+, e^+ \in \positive& \defeq & s^\b \alter s^\b + (t^+)\\
    \text{Additive terms: } &s^*, t^*, e^* \in \additive& \defeq & 0 \alter s^+\\
    \text{Positive canonical terms: } &S^+, T^+ \in 
      \canonical&\defeq& \epsilon^k s^\b \alter \epsilon^k s^\b + (S^+) \\
    \text{Canonical terms: } &S, T \in \canonical&\defeq& 0 \alter S^+
  \end{array}
\] We will sometimes abuse the notation and write $\ul{t^*}$ or $\ul{ t^\b }$ to
  denote well-formed terms whose canonical form is an additive or basic term
  respectively.
\end{defi}

The above definition is somewhat technical, so a more informal description of
canonical forms should be helpful. We observe that all the rules of differential
equivalence can be oriented from left to right to obtain a rewrite system -
intuitively, this rewrite system operates by pulling all the instances of
addition and infinitesimal extension to the outermost layers of a term. Since
every syntactic construct is additive except for application, basic terms may only
contain additive terms as the arguments to a function application. 

As infinitesimal extensions are themselves additive, we also want to disallow
terms such as $\epsilon (s + t)$, instead factoring out the extension
into $\epsilon s + \epsilon t)$. A general canonical term $T \in \canonical$ then has
the form:
    \[
  T = \epsilon^{k_1} t^\b_1 + (\epsilon^{k_2}t_2 + (\ldots + \epsilon^{k_n} t^\b_n) \ldots )
\]
That is to say, a canonical term is similar to a polynomial with coefficients in
the set of basic terms and a variable $\epsilon$ (but note that canonical terms
are always written in their ``fully distributed'' form, that is, we write
$\epsilon s + (\epsilon t + \epsilon^2 e)$ rather than $\epsilon ((s + t) + \epsilon
e)$).

We will freely abuse the notation and write $\sum_{i=1}^n \epsilon^{k_i} t^\b_i$ to
denote a general canonical term, as this form is easier to manipulate in many
cases. In particular, the canonical term $0$ is precisely the sum of zero terms.

\begin{defi}
  Given unrestricted terms $s, t$, we define their \textbf{canonical sum} $s
  \cansum t$ by induction as follows:
  \begin{align*}
    0 \cansum t &\defeq t\\
    s \cansum 0 &\defeq s\\
    (s + s') \cansum t &\defeq s + (s' \cansum t)
  \end{align*}
\end{defi}

\begin{lem}
  The canonical sum $S \cansum T$ of any two canonical terms is a canonical
  term. Furthermore $\cansum$ is associative and has $0$ as an identity
  element. When $S_i$ are canonical terms, we will write $\boxsum_{i=1}^n S_i$
  for the canonical term $S_1 \cansum (S_2 \cansum \ldots S_n)\ldots)$.
\end{lem}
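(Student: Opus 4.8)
The plan is to recognise a canonical term as a finite, right-associated sum of ``monomials'' of the form $\epsilon^k s^\b$ (with $s^\b \in \base$), the canonical term $0$ playing the role of the empty such sum; under this reading $\cansum$ is nothing more than concatenation of monomial sequences, so the three assertions amount to the familiar facts that list concatenation is associative, is unital with respect to the empty list, and preserves list-hood. Concretely, I would prove each claim by structural induction on the first argument, the recursive clause $(s + s') \cansum t \defeq s + (s' \cansum t)$ being well-founded since $s'$ is a proper subterm of $s + s'$.

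For closure, take $S, T \in \canonical$ and induct on $S$. If $S = 0$ then $S \cansum T = T \in \canonical$ by the first clause. If $S$ is a bare monomial $\epsilon^k s^\b$, then $S \cansum T$ is $S$ when $T = 0$ (second clause) and $\epsilon^k s^\b + (T)$ when $T \ne 0$; the latter is canonical because a nonzero canonical $T$ is a positive canonical term, so $\epsilon^k s^\b + (T)$ instantiates the production $\epsilon^k s^\b + (S^+)$. If $S = \epsilon^k s^\b + (S')$ with $S'$ a positive canonical term, the recursive clause gives $S \cansum T = \epsilon^k s^\b + (S' \cansum T)$; by the induction hypothesis $S' \cansum T$ is canonical, and it is nonzero, since appending anything after the nonempty sequence $S'$ leaves it nonempty, so $\epsilon^k s^\b + (S' \cansum T)$ is again positive canonical. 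The one delicate point is that the output must never take the non-canonical shape $\epsilon^k s^\b + 0$; this is exactly what the clause $s \cansum 0 \defeq s$ prevents, by taking precedence over the recursive clause whenever the second argument vanishes.

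Unitality is immediate, being the first two defining clauses. For associativity, fix $T, U$ and induct on $S$ to establish $(S \cansum T) \cansum U = S \cansum (T \cansum U)$; for $S = 0$ both sides equal $T \cansum U$, and for $S = \epsilon^k s^\b + (S')$ positive canonical, the sum $S \cansum T = \epsilon^k s^\b + (S' \cansum T)$ is nonzero, so two applications of the recursive clause rewrite the left side to $\epsilon^k s^\b + \pa{(S' \cansum T) \cansum U}$ and the right side to $\epsilon^k s^\b + \pa{S' \cansum (T \cansum U)}$, and these agree by the induction hypothesis on the smaller term $S'$; the degenerate cases ($S$ a bare monomial, or $T = 0$, or $U = 0$) are dispatched directly with the first two clauses. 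Well-definedness and canonicity of $\boxsum_{i=1}^n S_i$ then follow from $n - 1$ applications of closure, with associativity making the bracketing irrelevant. The argument is wholly routine; the only real work is the bookkeeping needed to confirm that no spurious $0$ summand is ever introduced and that the recursion terminates.
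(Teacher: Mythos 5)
Your proof is correct, and it is as routine as you say; the paper in fact states this lemma without any proof at all, so there is nothing to compare against beyond the authors' implicit judgement that the structural induction is immediate. Your case analysis on the shape of $S$ (zero, bare monomial $\epsilon^k s^\b$, or $\epsilon^k s^\b + (S^+)$) together with the observation that a nonzero canonical term is positive canonical, so that prepending a monomial to it stays within the grammar, is exactly the bookkeeping required.

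One point you handle that deserves to be made explicit: the paper's definition of $\cansum$ has only the three clauses $0 \cansum t = t$, $s \cansum 0 = s$ and $(s + s') \cansum t = s + (s' \cansum t)$, which literally leaves $\epsilon^k s^\b \cansum T$ undefined when $T \neq 0$ and the left argument is a bare monomial rather than a sum. You silently supply the clause $\epsilon^k s^\b \cansum T = \epsilon^k s^\b + (T)$, which is the only completion under which the lemma (and the later uses of $\cansum$ in the canonicalization algorithm) can hold, so this is the right reading; but since your closure and associativity arguments both lean on this case, it would be worth stating up front that you are completing the definition rather than quoting it. The remaining details --- the overlap between the second and third clauses when $T = 0$ being harmless because both yield the same term, and the nonzero-ness of $S' \cansum T$ needed to keep the result positive canonical --- are all correctly attended to.
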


\begin{defi} Given an unrestricted $\lambda_\epsilon$-term $t \in
  \Lambda_\epsilon$, we define its \textbf{canonical form} $\can{t}$ by structural
  induction on $t$ as follows:
  \begin{itemize}
    \item $\can{0} \defeq 0$
    \item $\can{x} \defeq x$
    \item $\can{s + t} \defeq \can{s} \cansum \can{t}$
    \item $\can{\epsilon t} \defeq \epsilon^* \can{t}$, where 
      \[
        \epsilon^* T \defeq \begin{cases}
          \sum_{i=1}^n \epsilon^*\epsilon^{k_i}t^\b_i
          &\text{ if $T = \sum_{i=1}^n \epsilon^{k_i}t^\b_i$}\\
          T 
          &\text{ if $T = \epsilon \D(\D(e) \cdot u) \cdot v$}\\
          \epsilon T &\text{ otherwise}
        \end{cases}
      \]
    \item If 
      $\can{t} = \sum_{i=1}^{n}\epsilon^{k_i}t^\b_{i}$ 
      then
      \[ \can{\lambda x.t} \defeq \sum_{i=1}^n \epsilon^{k_i} (\lambda^* x.t^\b_i) \]
    \item If
      $\can{s} = \sum_{i=1}^n\epsilon^{k_i}s^\b_i$
      and
      $\can{t} = T$
      then
      \[
        \can{\D(s) \cdot t} \defeq \boxsum_{i=1}^n ((\epsilon^*)^{k_i} \reg{s^\b_i, T})
      \]
      where the \textbf{regularization} $\reg{s, T}$ is
      defined by structural induction on $T$:
      \begin{align*}
        \reg{s, 0} &\defeq 0\\
        \reg{s, \epsilon^{k} t^{\b} + T'} &\defeq
        \brak{\pa{\epsilon^*}^{k} \D\pa{s}\cdot t^\b}
        \cansum \brak{
          \reg{s, T'}
        }\\
        \continued \cansum \brak{
          \pa{\epsilon^*}^{k+1} \D^*\pa{\reg{s, T'}} \cdot t^\b
        }
      \end{align*}
      and $\D^*$ denotes the extension of $\D$ by additivity in its first
      argument, that is to say:
      \begin{align*}
        \D^*\pa{\sum_{i=1}^n \epsilon^{k_i} s^\b_i} \cdot t^\b
        &\defeq \sum_{i=1}^n \epsilon^{k_i} \pa{s^\b_i\ t^\b}
      \end{align*}
      Observe that, whenever $S$ is canonical and $t^\b$ is basic, the term
      $\D^*(S) \cdot t^\b$ is also canonical. Therefore, by induction, the
      regularization $\reg{s^\b, T}$ is indeed a canonical term term, since
      canonicity is preserved by $\epsilon^*, \cansum$.
    \item
      If $\can{s} = \sum_{i=1}^n \epsilon^{k_i} s^\b_i$ and $\can{t} = T$, then
      \begin{align*}
        \can{s\ t} &\defeq \brak{\sum_{i=1}^n \epsilon^{k_i} \pa{s^\b_i \ \textbf{pri}(T)}}
        \cansum
        \brak{
          \epsilon^* \pa{
            \sum_{i=1}^n \textbf{ap}(\reg{s^\b_i, \textbf{tan}(T)}, \textbf{pri}(T))
          }
        }
      \end{align*}
      where the primal $\textbf{pri}$ and tangent $\textbf{tan}$ components of a
      canonical term $T$ correspond respectively to the basic terms with zero and
      non-zero $\epsilon$ coefficients, and $\textbf{ap}$ is the additive
      extension of application.
      \[\def\arraystretch{1.3}
      \begin{array}{rclrcl}
        \textbf{pri}\pa{0} &\defeq& 0 
        &\textbf{tan}\pa{0} &\defeq& 0\\
        \textbf{pri}\pa{\epsilon^{k+1}t^\b + T'} &\defeq& \textbf{pri}\pa{T'}
                                               &\textbf{tan}\pa{\epsilon^{k+1}t^\b + T'} &\defeq& \epsilon^{k+1} t^\b + \textbf{tan}\pa{T'}\\
        \textbf{pri}\pa{\epsilon^{0}t^\b + T'} &\defeq& t^\b + \textbf{pri}\pa{T'}
        &\textbf{tan}\pa{\epsilon^{0}t^\b + T'} &\defeq& \textbf{tan}\pa{T'}\\
      \end{array}
      \]
      \[\def\arraystretch{1.3}
      \begin{array}{rcl}
        \textbf{ap}\pa{\sum_{i=1}^n \epsilon^{k_i} s^\b_i, t^\b} &\defeq&
        \sum_{i=1}^n \epsilon^{k_i} (s^\b_i\ t^\b)
      \end{array}
      \]
  \end{itemize}
\end{defi}

\begin{exa}
  The canonicalization algorithm is mostly straightforward, with only the cases
  for differential and standard application being of any interest. Since this is
  the part of the system where we diverge most from the differential
  $\lambda$-calculus, it is worth examining an example of regularization.
  Consider the following unrestricted term on free variables $u, x, y, z$:
  \[
    t \defeq \D (u) \cdot (x + y + \epsilon z)
  \]
  Applying the algorithm above, we compute its canonical form to be:
  \begin{align*}
    &\hspace{-10pt}\can{\D(u) \cdot (x + y + \epsilon z)}\\
    &= \reg{u, x + y + \epsilon z}\\
    &= 
      (\D(u) \cdot x) 
      \boxplus 
      \reg{u, y + \epsilon z} 
      \boxplus 
      (\epsilon^* \D^*(\reg{u, y + \epsilon z}) \cdot x )\displaybreak[0]\\
    &= 
      (\D(u) \cdot x) \boxplus
      \Big[ (\D(u) \cdot y) \boxplus \reg{u, \epsilon z} \boxplus \epsilon^* (\D^*(\reg{u, \epsilon z}) \cdot y) \Big]
      \\
      \continued \boxplus 
      (\epsilon^* \D^*(\reg{u, y + \epsilon z}) \cdot x )\displaybreak[0]\\
    &= (\D(u) \cdot x) \boxplus 
      \Big[ (\D(u) \cdot y) \boxplus (\epsilon \D(u) \cdot z) \boxplus \epsilon^*(\D^*
      (\epsilon \D(u) \cdot z) \cdot y))\Big]\\
      \continued
      \boxplus (\epsilon^* \D^*(\reg{u, y + \epsilon z}) \cdot x)\displaybreak[0]\\
    &= (\D(u) \cdot x)) \boxplus \Big[ \D(u) \cdot y + \epsilon (\D(u) \cdot z) + \epsilon(\D
      (\D(u) \cdot z) \cdot y) \Big]\\
      \continued
      \boxplus (\epsilon^* \D^*(\reg{u, y + \epsilon z}) \cdot x)
    \displaybreak[0]\\
    &= \Big[ \D(u) \cdot x + \D(u) \cdot y + \epsilon \pa{ \D(u) \cdot z } + \epsilon\pa{\D
      (\D(u) \cdot z) \cdot y)} \Big]\\
      \continued \boxplus \pa{ 
        \epsilon \pa{
          \D(\D(u) \cdot y) \cdot x
          + \epsilon \pa{
            \D(\D(u) \cdot z) \cdot x
            + \epsilon \pa{
              \D(\D(\D(u) \cdot z) \cdot y) \cdot x
            }
          }
        }
      }
    \displaybreak[0]\\
    &= \D(u) \cdot y + \D(u) \cdot x\\
    \continued + \epsilon \Big[
      \D(u) \cdot z + \D( \D(u) \cdot y) \cdot x\\
      \continued[60pt] + \epsilon (
        \D(\D(u) \cdot z) \cdot y + {}
        \D(\D(u) \cdot z) \cdot x +
        \epsilon \left( \D(\D(\D(u) \cdot z) \cdot y ) \cdot x\right))\Big]
  \end{align*}
  This is precisely the result we would expect from fully unfolding the expression 
  $\dd{u}(a, x + y + \epsilon(z))$ in a Cartesian difference category and repeatedly
  applying regularity of the derivative!
\end{exa}

\begin{thm}
  \label{thm:canonical-form-differential}
  Every unrestricted $\lambda_\epsilon$-term is differentially equivalent to its
  canonical form. That is to say, for all $t \in \Lambda_\epsilon$, we have $t
  \diffEq \can{t}$.
\end{thm}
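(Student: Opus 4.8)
The plan is to prove $t \diffEq \can{t}$ by structural induction on the unrestricted term $t$. The base cases $t = 0$, $t = x$ are immediate since $\can{0} = 0$ and $\can{x} = x$ are literally syntactically equal to $t$. For the sum case $t = s + t'$, the induction hypotheses give $s \diffEq \can{s}$ and $t' \diffEq \can{t'}$; since $\diffEq$ is contextual, $s + t' \diffEq \can{s} + \can{t'}$, and then we are left to verify that $\can{s} + \can{t'} \diffEq \can{s} \cansum \can{t'}$, i.e.\ that canonical sum agrees with syntactic addition up to differential equivalence. This last fact follows by an easy induction on the structure of $\can{s}$ using only the monoid laws (associativity, commutativity, unit) of $+$ in $\diffEq^1$; in fact it is best to isolate it as a standalone lemma, proved once for arbitrary terms $s, t$ (not just canonical ones): $s \cansum t \diffEq s + t$. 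Similarly one should record the auxiliary facts that $\epsilon^* T \diffEq \epsilon T$ and $\D^*(S) \cdot t^\b \diffEq \D(S) \cdot t^\b$ and $\boxsum$ agrees with iterated $+$, each by a short induction using the relevant block of $\diffEq^1$ rules (the monoid-homomorphism rules for $\epsilon$, the additivity rules for $\D(-)\cdot e$, and — crucially for $\epsilon^*$ — the duplication rule $\epsilon^2\D(\D(s)\cdot t)\cdot e \diffEq \epsilon\D(\D(s)\cdot t)\cdot e$ which is exactly what the second branch in the definition of $\epsilon^*$ encodes).

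For $t = \epsilon t'$: by IH $t' \diffEq \can{t'}$, so by contextuality $\epsilon t' \diffEq \epsilon \can{t'}$, and then one needs $\epsilon \can{t'} \diffEq \epsilon^* \can{t'}$, which is the auxiliary lemma on $\epsilon^*$ just mentioned. For $t = \lambda x . t'$: writing $\can{t'} = \sum_i \epsilon^{k_i} t^\b_i$, the IH plus contextuality give $\lambda x . t' \diffEq \lambda x . \sum_i \epsilon^{k_i} t^\b_i$, and then repeated application of the rules $\lambda x . (s + t) \diffEq^1 (\lambda x . s) + (\lambda x . t)$, $\lambda x . \epsilon t \diffEq^1 \epsilon(\lambda x . t)$, and $\lambda x . 0 \diffEq^1 0$ pushes the abstraction inward to yield $\sum_i \epsilon^{k_i}(\lambda^* x . t^\b_i) = \can{\lambda x . t'}$ (where $\lambda^*$ is the additivity-extended abstraction, so one also needs that $\lambda^*$ agrees with $\lambda$ up to $\diffEq$ — another trivial sub-induction). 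The two genuinely hard cases are differential application $\D(s) \cdot t'$ and ordinary application $s\, t'$.

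For $\D(s) \cdot t'$, after invoking the IHs to replace $s$ by $\sum_i \epsilon^{k_i} s^\b_i$ and $t'$ by $T = \can{t'}$ (using contextuality), the core lemma to establish is $\D(s^\b) \cdot T \diffEq \reg{s^\b, T}$ for basic $s^\b$ and canonical $T$, proved by induction on the structure of $T$. The inductive step unfolds $T = \epsilon^k t^\b + T'$ and must match the definition of regularization against the $\diffEq^1$ rule $\D(s) \cdot (t + e) \diffEq^1 \D(s)\cdot t + \D(s)\cdot e + \epsilon(\D(\D(s)\cdot t)\cdot e)$ together with $\D(s)\cdot \epsilon t \diffEq^1 \epsilon(\D(s)\cdot t)$ (to handle the $\epsilon^k$ prefix, peeling it off to produce the $(\epsilon^*)^k$ factors) and the commutativity rule $\D(\D(s)\cdot t)\cdot e \diffEq^1 \D(\D(s)\cdot e)\cdot t$ (to reconcile the order in which arguments are consumed). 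One then lifts from basic $s^\b$ to a general canonical sum $\sum_i \epsilon^{k_i} s^\b_i$ using $\D(s+t)\cdot e \diffEq^1 \D(s)\cdot e + \D(t)\cdot e$, $\D(0)\cdot e \diffEq^1 0$, and $\D(\epsilon t)\cdot e \diffEq^1 \epsilon(\D(t)\cdot e)$, which accounts for the outer $\boxsum_i (\epsilon^*)^{k_i}(-)$ in the definition of $\can{\D(s)\cdot t}$.

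For ordinary application $s\, t'$, after the IH replacements the crucial rule in play is $s\,(t + \epsilon e) \diffEq^1 (s\, t) + \epsilon((\D(s)\cdot e)\, t)$, which is precisely what motivates splitting $T = \can{t'}$ into its primal part $\pri{T}$ (the $\epsilon^0$-summands) and its tangent part $\tan{T}$ (the $\epsilon^{\geq 1}$-summands): writing $T \diffEq \pri T + \tan T$ and $\tan T = \epsilon(\ldots)$, one rewrites $s\, T$ as $s\,\pri T + \epsilon((\D(s)\cdot(\text{tangent content}))\,\pri T)$, then recognises the inner differential application as an instance of regularization (invoking the core regularization lemma from the previous case) and the result as $[\sum_i \epsilon^{k_i}(s^\b_i\,\pri T)] \cansum [\epsilon^*(\sum_i \ap(\reg{s^\b_i,\tan T},\pri T))]$, i.e.\ $\can{s\, t'}$; again one first handles basic $s^\b$ using the $\epsilon$-out-of-application and $0$-application and sum-application rules, then lifts to canonical sums. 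The main obstacle will be this pair of application cases: keeping the bookkeeping of $\epsilon$-powers straight while threading the regularization lemma through, and checking that every rewrite step is licensed by an explicit $\diffEq^1$ rule or by contextuality — in particular the subtle interplay between the $\epsilon^*$ "duplication" branch and the fact that regularization produces terms of the form $\epsilon\D(\D(e)\cdot u)\cdot v$ exactly where that branch is needed to keep things canonical. I would therefore structure the proof as a sequence of auxiliary lemmas ($s\cansum t \diffEq s+t$; $\epsilon^* T \diffEq \epsilon T$; $\D^* / \lambda^* / \ap$ agree with their un-starred versions up to $\diffEq$; $\D(s^\b)\cdot T \diffEq \reg{s^\b,T}$) culminating in the main induction, so that each application case reduces to assembling these pieces via contextuality and the monoid laws.
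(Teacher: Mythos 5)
Your proposal is correct and follows essentially the same route as the paper: a structural induction on $t$ whose only nontrivial cases are the two applications, supported by exactly the auxiliary lemmas the paper isolates ($s \cansum t \diffEq s + t$, $\epsilon^* T \diffEq \epsilon T$, agreement of the starred operators with their unstarred versions, and the key regularization lemma $\D(s^\b)\cdot T \diffEq \reg{s^\b, T}$ proved by induction on $T$), with the ordinary-application case resolved via the primal/tangent split and the rule $s\ (t+\epsilon e) \diffEq^1 (s\ t) + \epsilon((\D(s)\cdot e)\ t)$. Your accounting of which $\diffEq^1$ rules license each step is, if anything, more explicit than the paper's.
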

\begin{proof}
  The proof proceeds by straightforward induction on $t$. We explicitly prove
  the case for the canonicalization of differential and standard application, as
  these are the only nontrivial cases. For this we will make use of the
  following results:

  \begin{lem}
    \label{lem:utils-differential}
    Given canonical terms $S, T$, we have:
    \begin{align*}
      \epsilon^* S &\diffEq \epsilon S\\
      S \cansum T &\diffEq S + T\\
      \D^* (S) \cdot T &\diffEq \D(S) \cdot T\\
      \textbf{ap}\pa{S, T} &\diffEq S\ T
    \end{align*}
  \end{lem}
  \begin{proof}
    All four results follow by straightforward structural induction on $S$.
  \end{proof}
  \begin{lem}
    \label{lem:reg-differential}
    Given a basic term $s^\b$ and a canonical term $T$ the differential
    application $\D(s^\b) \cdot T$ is differentially equivalent to its
    regularized version $\reg{s^\b, T}$.
  \end{lem}
  \begin{proof}
    The proof follows by induction on the structure of $T$.
    \begin{itemize}
      \item When $T = 0$ we have
        $\reg{s^\b, 0} = 0 \diffEq \D(s^*) \cdot 0$
      \item When $T = \epsilon^k t^\b + T'$ we have:
      \begin{align*}
        \reg{s^\b, \epsilon^k t^\b + T'} &= \brak{(\epsilon^*)^k \D(s^\b) \cdot t^\b} \cansum \brak{\reg{s^\b, T'}}\\
        \continued \cansum \brak{(\epsilon^*)^{k+1} \D^*(\reg{s^\b, T'}) \cdot t^\b}\\
        &\diffEq \epsilon^k \D(s^\b) \cdot t^\b + \D(s^\b) \cdot T' + \epsilon^{k+1} \D(\D(s^\b) \cdot T') \cdot t^\b\qedhere
      \end{align*}
    \end{itemize}
    \renewcommand{\qedsymbol}[0]{}
  \end{proof}

  Going back to the proof of Theorem~\ref{thm:canonical-form-differential}, the
  case for differential application is obtained as a straightforward corollary
  of Lemma~\ref{lem:reg-differential}. For conventional application, consider
  terms $s, t$, and note that if $\can{t} = T$ then $t \diffEq \textbf{pri}(T) +
  \epsilon \textbf{tan}(T)$. Then, for any basic term $s^\b$, we obtain:
  \begin{align*}
    \can{s^\b\ t} &= \pa{s^\b\ \textbf{pri}(T)} 
    \cansum \brak{
      \epsilon^* \textbf{ap}\pa{\reg{s^\b, \textbf{tan}(T)}, \textbf{pri}(T)}
    }\\
    &\diffEq \pa{s^\b\ \textbf{pri}(T)} 
    + \epsilon \brak{(\D(s^\b) \cdot \textbf{tan}(T))\ \textbf{pri}(T)}\\
    &\diffEq s^\b\ \brak{
      \textbf{pri}(T) + \epsilon \textbf{tan}(T)
    }\\
    &\diffEq s^\b\ t\qedhere
  \end{align*}
\end{proof}

Our canonicalization algorithm is a result of orienting the equations in
Figure~\ref{fig:differential-equivalence}. Note, however, that while most of
these equivalences have a ``natural'' orientation to them, two of them are
entirely symmetrical: those being commutativity of the sum and the derivative.
Barring the imposition of some arbitrary total ordering on terms which would
allow us to prefer the term $x + y$ over $y + x$ (or vice versa), we settle for
our canonical forms to be unique ``up to'' these commutativity conditions.

\begin{defi}
  \label{defn:permutative-equivalence}
  \textbf{Permutative equivalence} $\permEq{} \subseteq \Lambda_\epsilon \times
  \Lambda_\epsilon$ is the least equivalence relation which is contextual and
  satisfies the properties in Figure~\ref{fig:permutative-equivalence} below.
  \begin{figure}[ht]
    \[
      \begin{array}{rcl}
        s + (t + e) &\permEq{} & (s + t) + e\\
        s + t &\permEq{}& t + s\\
        \D(\D(s) \cdot t) \cdot e &\permEq{}& \D(\D(s) \cdot e) \cdot t
      \end{array}
    \]
    \caption{Permutative equivalence on unrestricted $\Lambda_\epsilon$-terms}
    \label{fig:permutative-equivalence}
  \end{figure}
\end{defi}

We need to include associativity in the definition of permutative
equality, as otherwise the canonical term $x + (y + z)$ would not be
permutatively equivalent to $y + (x + z)$.

\begin{thm}
  \label{thm:canonicity}
  Given unrestricted terms $s, t \in \Lambda_\epsilon$, they are differentially
  equivalent if and only if their canonical forms are permutatively equivalent.
  More succinctly, $s \diffEq t$ if and only if $\can{s} \permEq \can{t}$
\end{thm}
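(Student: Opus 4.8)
The plan is to prove both implications of the biconditional $s \diffEq t \iff \can{s} \permEq \can{t}$ separately, using Theorem~\ref{thm:canonical-form-differential} as the bridge between a term and its canonical form. For the right-to-left direction, suppose $\can{s} \permEq \can{t}$. Since $\permEq$ is generated by associativity and commutativity of $+$ together with the symmetry $\D(\D(s)\cdot t)\cdot e \permEq \D(\D(s)\cdot e)\cdot t$, and all three of these generating equations are instances of $\diffEq^1$, it follows that $\permEq\ \subseteq\ \diffEq$ (both are contextual, and $\diffEq$ is the least contextual equivalence containing $\diffEq^1$). Hence $\can{s} \diffEq \can{t}$. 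Combining this with $s \diffEq \can{s}$ and $t \diffEq \can{t}$ from Theorem~\ref{thm:canonical-form-differential} and transitivity yields $s \diffEq t$.

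The left-to-right direction is the substance of the theorem and I expect it to be the main obstacle. The strategy is to show that the map $s \mapsto \can{s}$ descends to the quotient by $\diffEq$ when the target is taken up to $\permEq$; equivalently, that $\can{-}$ is $\permEq$-invariant on $\diffEq$-equivalence classes. Since $\diffEq$ is the least contextual equivalence relation containing $\diffEq^1$, it suffices to check (i) that $\can{s} \permEq \can{t}$ whenever $s \diffEq^1 t$ is one of the generating rewrites of Figure~\ref{fig:differential-equivalence}, and (ii) that the relation $\{(s,t) : \can{s} \permEq \can{t}\}$ is contextual and an equivalence relation. Point (ii) for equivalence is immediate since $\permEq$ is an equivalence; contextuality requires checking, for each term former, that $\can{s}\permEq\can{s'}$ implies $\can{C[s]}\permEq\can{C[s']}$, which follows by inspecting the inductive clauses of $\can{-}$ and verifying that each building block ($\cansum$, $\epsilon^*$, $\lambda^* x.{-}$, $\reg{-,-}$, $\D^*$, $\pri$, $\tan$, $\ap$) sends $\permEq$-related inputs to $\permEq$-related outputs — essentially a congruence lemma for the auxiliary operations, proved by mutual structural induction.

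For point (i), one must go through each of the roughly twenty equations of Figure~\ref{fig:differential-equivalence} and verify that the two sides have $\permEq$-equal canonical forms. The monoid and homomorphism equations (first two blocks) reduce to associativity/commutativity/identity properties of $\cansum$ together with the fact that $\epsilon^*$ distributes over $\cansum$ and commutes with $\lambda^*$ and application — these are the routine book-keeping lemmas about the flattened polynomial-like representation. The genuinely delicate cases are the last block: the additive expansion $\D(s)\cdot(t+e) \diffEq^1 \D(s)\cdot t + \D(s)\cdot e + \epsilon(\D(\D(s)\cdot t)\cdot e)$ must be matched against the recursive definition of $\reg{-,-}$ (this is essentially what the worked example preceding the theorem is rehearsing), the symmetry $\D(\D(s)\cdot t)\cdot e \diffEq^1 \D(\D(s)\cdot e)\cdot t$ must be absorbed by the $\D(\D(s)\cdot t)\cdot e \permEq \D(\D(s)\cdot e)\cdot t$ clause of permutative equivalence — this is precisely why that clause is in $\permEq$ rather than being orientable — and the infinitesimal-duplication rule $\epsilon^2\D(\D(s)\cdot t)\cdot e \diffEq^1 \epsilon\D(\D(s)\cdot t)\cdot e$ must be matched by the second case in the definition of $\epsilon^*$. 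Finally the rule $s\ (t+\epsilon e) \diffEq^1 (s\ t) + \epsilon((\D(s)\cdot e)\ t)$ must be checked against the $\can{s\ t}$ clause built from $\pri$, $\tan$, $\reg$, and $\ap$. Throughout, Lemma~\ref{lem:utils-differential} and Lemma~\ref{lem:reg-differential} let us reason about these auxiliary operations up to $\diffEq$, but here we need the sharper statement that the canonical forms coincide up to $\permEq$, so one needs $\permEq$-refinements of those lemmas; establishing those refinements — in particular a lemma asserting $\reg{s^\b,T}$ is well-defined up to $\permEq$ in $T$ and interacts correctly with $\cansum$ and $\epsilon^*$ — is where the real work lies.
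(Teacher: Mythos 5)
Your proposal is correct and follows essentially the same route as the paper: the reverse direction via $\permEq\ \subseteq\ \diffEq$ together with Theorem~\ref{thm:canonical-form-differential}, and the forward direction by showing that the relation $s \canEq t \iff \can{s} \permEq \can{t}$ is a contextual equivalence containing the generators $\diffEq^1$ (the paper packages this as Lemma~\ref{lem:can-contextual} plus a chain decomposition of any $\diffEq$-derivation in Lemma~\ref{lem:can-permutative}, which is the same leastness argument). The book-keeping you flag as the real work — checking $\can{\text{LHS}} \permEq \can{\text{RHS}}$ for each generating equation and the $\permEq$-congruence of $\cansum$, $\epsilon^*$, $\reg{-,-}$, etc. — is exactly what the paper asserts but does not spell out either.
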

\begin{proof}
  As an immediate consequence of Theorem~\ref{thm:canonical-form-differential},
  we know that $s \diffEq t$ if and only if $\can{s} \diffEq \can{t}$. The
  desired result will then follow from the fact that differential equivalence is
  precisely equivalent to permutative equivalence on canonical terms.
  
  Before proving this, we remark that $\canEq$ is reflexive, transitive and
  symmetric, which follows immediately from its definition and reflexivity,
  transitivity and symmetry of $\permEq$. We also prove the following two
  helpful results:

  \begin{lem}
    \label{lem:can-contextual}
    For any unrestricted terms $s, t, e$, the following equalities hold:
    \begin{align*}
      \can{\can{s} + \can{t}} &= \can{s + t}\\
      \can{\epsilon\can{x}} &= \can{\epsilon(x)}\\
      \can{\can{s}\ \can{t}} &= \can{s\ t}\\
      \can{\D(\can{s}) \cdot \can{t}} &= \can{\D(s) \cdot t}
    \end{align*}
    As a consequence, the relation $\canEq$ is contextual.
  \end{lem}
  \begin{proof}
    Follows immediately from the definition of $\can{}$ and observing that it
    only depends on the canonicalization of the subterms of the outermost
    syntactic form.
  \end{proof}

  \begin{lem}
    \label{lem:can-permutative}
    Whenever $S, T$ are canonical terms, $S \diffEq T$ if and only if $S \permEq T$.
  \end{lem}
  \begin{proof}
    Since $\diffEq$ is the least reflexive, transitive, symmetric and
    contextual relation that verifies the conditions in
    Figure~\ref{fig:differential-equivalence}, it follows that whenever $S
    \diffEq T$ it must be the case that $S = S_1 \diffEq^\star S_2 \diffEq^\star
    \ldots \diffEq^\star S_n = T$, where $\diffEq^\star$ denotes the contextual,
    symmetric (but not transitive) closure of $\diffEq^1$.

    But for each such condition, we have $\can{\text{LHS}} =
    \can{\text{RHS}}$, except for the two commutativity conditions, and in all
    cases we have $\can{\text{LHS}} \permEq \can{\text{RHS}}$.
    Since $\canEq$ is contextual, we have that $S = S_1 \canEq S_2 \canEq \ldots
    \canEq S_n = T$, and thus by transitivity we obtain $S \canEq T$.
  \end{proof}

  Theorem~\ref{thm:canonicity} is then an immediate consequence of
  Lemma~\ref{lem:can-permutative} and
  Theorem~\ref{thm:canonical-form-differential}.
\end{proof}

\begin{cor}
  Differential equivalence of $\Lambda_\epsilon$-terms is decidable.
\end{cor}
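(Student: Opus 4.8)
The plan is to reduce to Theorem~\ref{thm:canonicity}, which states that $s \diffEq t$ holds precisely when $\can{s} \permEq \can{t}$. It therefore suffices to establish two things: first, that the canonicalization function $\can{-} : \Lambda_\epsilon \to \canonical$ is computable, and second, that permutative equivalence $\permEq$ is decidable (it is enough to decide it on canonical terms, but the argument works verbatim on all of $\Lambda_\epsilon$). Granting these, the decision procedure for $\diffEq$ on an input pair $s, t$ is simply to compute $\can{s}$ and $\can{t}$ and then test whether they are permutatively equivalent.

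Computability of $\can{-}$ is immediate by inspection of its definition. It is specified by structural recursion on the input term, and every auxiliary operation it invokes --- the canonical sum $\cansum$ (hence the iterated sum $\boxsum$), the maps $\epsilon^*$, $\lambda^*$ and $\D^*$, the regularization $\reg{-,-}$, and the primal, tangent and additive-application operators used in the clause for $\can{s\ t}$ --- is itself given by structural recursion (the recursion defining $\reg{-,-}$ being on its canonical-term argument, which has already been computed), hence primitive recursive. Thus $\can{-}$ is a total computable function, and by Theorem~\ref{thm:canonical-form-differential} (together with the remarks accompanying its definition) its output is always a canonical term.

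For the decidability of $\permEq$, the key observation is that each of its three generating rules in Figure~\ref{fig:permutative-equivalence} --- associativity of $+$, commutativity of $+$, and the derivative swap $\D(\D(s) \cdot t) \cdot e \permEq \D(\D(s) \cdot e) \cdot t$ --- relates two terms built from exactly the same multiset of symbols, and in particular mentioning exactly the same variables; closing under the congruence rules of Figure~\ref{fig:contextual-relation} preserves both of these properties. Consequently, for a fixed term $u$, every term $\permEq$-equivalent to $u$ has the same size as $u$ and involves only the finitely many variables occurring in $u$, and (choosing bound-variable names from a fixed finite pool) there are only finitely many such terms up to $\alpha$-equivalence. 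Hence the $\permEq$-equivalence class of $u$ is a finite set, and it can be computed by a terminating saturation: start from $\{u\}$, repeatedly adjoin every term obtained by applying a single generating rewrite --- in either direction and at any position --- to a term already in the set, and halt when no new ($\alpha$-equivalence class of) term appears; termination is guaranteed because the set never escapes the finite bound just described, and the ``no new term'' test uses decidability of $\alpha$-equivalence. Then $S \permEq T$ iff $T$ belongs to the computed class of $S$, which is decidable.

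The only step demanding any care is this last finiteness argument: one must verify that the congruence closure of the three generating rules is both size-preserving and variable-preserving, which is exactly what makes the saturation terminate; everything else is routine bookkeeping. (One could alternatively sidestep the saturation by fixing arbitrary total orders on variables and on the shapes of basic subterms and defining a fully sorted normal form for $\permEq$, but the finiteness argument is the more economical route to a mere decidability claim.) Combining the computability of $\can{-}$ with the decidability of $\permEq$, and invoking Theorem~\ref{thm:canonicity}, yields the corollary.
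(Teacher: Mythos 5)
Your proposal is correct and follows essentially the same route as the paper's proof: reduce to Theorem~\ref{thm:canonicity}, observe that canonicalization is a total computable (well-founded recursive) function, and decide $\permEq$ by enumerating the finite $\permEq$-equivalence class of a term. You merely supply more detail than the paper does on why that class is finite (the generating rules are size- and variable-preserving), which the paper leaves as "can be enumerated easily."
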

\begin{proof}
  Permutative equivalence of two terms is decidable, since the set of
  $\permEq$-equivalence classes of any term is finite and can be enumerated
  easily. Canonicalization is also decidable, since it is defined as a clearly
  well-founded recursion.
\end{proof}

\begin{cor}
  The set $\lambda_\epsilon$ of well-formed terms corresponds precisely to the
  set of canonical terms up to permutative equivalence $\canonical / \permEq$.
\end{cor}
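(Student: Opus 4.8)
The plan is to exhibit an explicit bijection between $\lambda_\epsilon = \Lambda_\epsilon/\diffEq$ and $\canonical/\permEq$, using canonicalisation in one direction and the inclusion of canonical terms into $\Lambda_\epsilon$ in the other. First I would define a map $\Phi : \lambda_\epsilon \to \canonical/\permEq$ sending the class $\ul{t}$ to the $\permEq$-class of $\can{t}$. This is well defined: $\can{t}$ is a canonical term by construction, and if $\ul{s} = \ul{t}$, i.e.\ $s \diffEq t$, then Theorem~\ref{thm:canonicity} gives $\can{s} \permEq \can{t}$, so the two images coincide.

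Next I would define $\Psi : \canonical/\permEq \to \lambda_\epsilon$ in the opposite direction. Every canonical term is in particular an element of $\Lambda_\epsilon$, so it suffices to send the $\permEq$-class of $T$ to $\ul{T}$. For well-definedness one observes that $\permEq{} \subseteq {}\diffEq$: the three generating rules of $\permEq$ (associativity and commutativity of $+$, and commutativity of the iterated derivative $\D(\D(s)\cdot t)\cdot e$) all appear among the generators $\diffEq^1$ of $\diffEq$, and $\diffEq$ is itself a contextual equivalence relation, hence contains the least such relation $\permEq$. Therefore $S \permEq T$ implies $S \diffEq T$, i.e.\ $\ul{S} = \ul{T}$, so $\Psi$ is well defined.

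Finally I would check that $\Phi$ and $\Psi$ are mutually inverse. For $\Psi \circ \Phi = \mathrm{id}$: by Theorem~\ref{thm:canonical-form-differential} we have $t \diffEq \can{t}$, so $\ul{\can{t}} = \ul{t}$. For $\Phi \circ \Psi = \mathrm{id}$: given a canonical term $T$, the composite $\Phi(\Psi([T]_{\permEq}))$ is the $\permEq$-class of $\can{T}$; again $T \diffEq \can{T}$ by Theorem~\ref{thm:canonical-form-differential}, and since both $T$ and $\can{T}$ are canonical, Lemma~\ref{lem:can-permutative} upgrades this to $T \permEq \can{T}$, whence $[\can{T}]_{\permEq} = [T]_{\permEq}$. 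This establishes the bijection. I do not anticipate a genuine obstacle, since the statement is essentially bookkeeping on top of Theorems~\ref{thm:canonical-form-differential} and~\ref{thm:canonicity}; the only point requiring a moment's care is the backward direction, where it is cleaner to route through Lemma~\ref{lem:can-permutative} (deducing $T \permEq \can{T}$ from $T \diffEq \can{T}$ together with canonicity of both sides) rather than attempting to prove directly that canonicalisation fixes canonical terms.
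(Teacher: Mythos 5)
Your proof is correct and takes essentially the route the paper intends: the corollary is stated there without proof, as immediate bookkeeping on top of Theorems~\ref{thm:canonical-form-differential} and~\ref{thm:canonicity} together with Lemma~\ref{lem:can-permutative}, and your explicit pair of maps $\Phi$, $\Psi$ is exactly that bookkeeping made precise. The one point genuinely requiring care --- deducing $T \permEq \can{T}$ from $T \diffEq \can{T}$ via Lemma~\ref{lem:can-permutative} rather than asserting that canonicalisation literally fixes canonical terms, and checking $\permEq{} \subseteq {}\diffEq$ by comparing generators --- is handled correctly.
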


\subsection{Substitution}

As is usual, our calculus features two different kinds of application: standard
function application, represented as $(s\ t)$; and differential application,
represented as $\D(s) \cdot t$. These two give rise to two different notions of
substitution. The first is, of course, the usual capture-avoiding substitution.
The second, differential substitution, is similar to the equivalent notion in
the differential $\lambda$-calculus, as it arises from the same chain rule that
is verified in both Cartesian differential categories and change action models.

\begin{defi}
  Given terms $s, t \in \Lambda_\epsilon$ and a variable $x$, the
  \textbf{capture-avoiding substitution} of $s$ for $x$ in $t$ (which we write
  as $\subst{t}{x}{s}$) is defined by induction on the structure of $t$ as in
  Figure~\ref{fig:substitution} below.
  \begin{figure}[ht]
    \begin{center}
      \[
      \begin{array}{cccl}
        \subst{x}{x}{s} &\defeq& s\\
        \subst{y}{x}{s} &\defeq& y &\text{ if $x \neq y$}\\
        \subst{(\lambda y . t)}{x}{s} &\defeq& \lambda y . (\subst{t}{x}{s}) 
                        &\text{ if $y \not\in \text{FV}(s)$}\\
        \subst{(t\ e)}{x}{s} &\defeq& (\subst{t}{x}{s})(\subst{e}{x}{s})\\
        \subst{(\D(t) \cdot e)}{x}{s} &\defeq& \D(\subst{t}{x}{s}) \cdot (\subst{e}{x}{s}) \\
        \subst{(\epsilon t)}{x}{s} &\defeq& \epsilon (\subst{t}{x}{s})\\
        \subst{(t + e)}{x}{s} &\defeq& (\subst{t}{x}{s}) + (\subst{e}{x}{s})\\
        \subst{0}{x}{s} &\defeq& 0
      \end{array}
      \]
      \caption{Capture-avoiding substitution in $\lambda_\epsilon$}
      \label{fig:substitution}
    \end{center}
  \end{figure}
\end{defi}

\begin{prop}
  \label{prop:substitution-differential}
  Capture-avoiding substitution respects differential equivalence. That is to
  say, whenever $s \diffEq s'$ and $t \diffEq t'$, it is the case that
  $\subst{t}{x}{s} \diffEq \subst{t'}{x}{s'}$.
\end{prop}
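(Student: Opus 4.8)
The plan is to prove the statement in two stages, reflecting the two hypotheses $s \diffEq s'$ and $t \diffEq t'$, and to handle each by reducing to the generating relation $\diffEq^1$ together with contextuality. First I would establish the special case where $s' = s$ and $t' = t$ is replaced by an arbitrary term differentially equivalent to $t$; that is, I would show that $t \diffEq t'$ implies $\subst{t}{x}{s} \diffEq \subst{t'}{x}{s}$. Since $\diffEq$ is the least contextual equivalence relation containing $\diffEq^1$, it suffices to check this when $t \diffEq^1 t'$ is one of the axioms in Figure~\ref{fig:differential-equivalence}, and then to propagate through contexts. For the axiom case, I would verify that substitution commutes with each generating equation: applying $\subst{-}{x}{s}$ to both sides of, say, $\D(s_1) \cdot (t_1 + e_1) \diffEq^1 \D(s_1)\cdot t_1 + \D(s_1)\cdot e_1 + \epsilon(\D(\D(s_1)\cdot t_1)\cdot e_1)$ yields another instance of the same axiom (with substituted subterms), because substitution is defined to commute homomorphically with every term constructor ($\lambda$, application, $\D(-)\cdot-$, $\epsilon$, $+$, $0$) per Figure~\ref{fig:substitution}. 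The contextuality step is then immediate by induction on the term structure, using Lemma~\ref{lem:contextual-subterm}: if $t \diffEq t'$ because a subterm was rewritten, the corresponding subterm of $\subst{t}{x}{s}$ is rewritten in the same way.

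Second, I would prove the orthogonal special case: $s \diffEq s'$ implies $\subst{t}{x}{s} \diffEq \subst{t}{x}{s'}$, for a fixed $t$. This goes by structural induction on $t$. The base cases are $\subst{x}{x}{s} = s \diffEq s' = \subst{x}{x}{s'}$ and $\subst{y}{x}{s} = y = \subst{y}{x}{s'}$ for $y \neq x$ and $\subst{0}{x}{s} = 0$; the inductive cases for $\lambda y.t_0$, $(t_0\ t_1)$, $\D(t_0)\cdot t_1$, $\epsilon t_0$, and $t_0 + t_1$ all follow by applying the induction hypothesis to the immediate subterms and then invoking the contextuality of $\diffEq$ (for the $\lambda$ case we use the standing convention that bound variables avoid free variables, so no capture issues arise and the substitution pushes inside the binder).

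Finally, I would combine the two special cases by transitivity: given $s \diffEq s'$ and $t \diffEq t'$, we have $\subst{t}{x}{s} \diffEq \subst{t'}{x}{s} \diffEq \subst{t'}{x}{s'}$, where the first step is the first special case (with $s$ held fixed) and the second is the second special case (with $t'$ held fixed). I expect the main obstacle to be the bookkeeping in the first stage: one must check that each of the roughly twenty generating equations of $\diffEq^1$ is stable under substitution, and although each individual check is routine — substitution distributes over all constructors, so both sides of each axiom transform into the substituted instance of the same axiom — there are several equations (notably the ``duplication of infinitesimals'' equation $\epsilon^2 \D(\D(s_1)\cdot t_1)\cdot e_1 \diffEq^1 \epsilon \D(\D(s_1)\cdot t_1)\cdot e_1$ and the chain-rule-flavoured equation $s_1\ (t_1 + \epsilon e_1) \diffEq^1 (s_1\ t_1) + \epsilon((\D(s_1)\cdot e_1)\ t_1)$) where one should confirm that no variable $x$ can be accidentally captured or duplicated in a way that breaks the correspondence; the convention on bound variables makes this safe, but it is the point that needs care. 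Everything else is a direct induction.
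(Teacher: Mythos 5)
Your proposal is correct and follows essentially the same route as the paper: the paper likewise reduces to showing that a single generating step $t \diffEq^1 t'$ is preserved under $\subst{-}{x}{s}$ (by structural induction / case analysis on the axioms), and then obtains the full statement — including the $s \diffEq s'$ direction, which you spell out as a separate induction — from transitivity and contextuality of $\diffEq$. Your version is just a more explicit writeup of the same argument.
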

\begin{proof}
  It suffices to show that $\subst{t}{x}{s} \diffEq^1 \subst{t'}{x}{s}$ (the
  full result will then follow from transitivity and contextuality of
  $\diffEq$), which can be proven by straightforward structural induction on $t$.
\end{proof}

\begin{defi}
  Given terms $s, t \in \Lambda_\epsilon$ and a variable $x$ \emph{which is not
  free in $s$}\footnote{One should emphasise this constraint. Differential
  substitution appears in the reduction of $\D(\lambda x . t) \cdot s$ into
  $\lambda x . \diffS{t}{x}{s}$, and so if $x$ were free in $s$ it would become
  bound by the enclosing $\lambda$-abstraction.}, the \textbf{differential
  substitution} of $s$ for $x$ in $t$, which we write as $\diffS{t}{x}{s}$, is
  defined by induction on the structure of $t$ as in
  Figure~\ref{fig:differential-substitution}.
  \begin{figure}[t]
    \begin{center}
    \[\def\arraystretch{1.7}
      \begin{array}{cccl}
        \diffS{x}{x}{s} &\defeq& s\\
        \diffS{y}{x}{s} &\defeq& 0 &\text{ if $x \neq y$}\\
        \diffS{(\lambda y . t)}{x}{s} &\defeq& \lambda y . \pa{\diffS{t}{x}{s}}
                        &\text{ if $y \not\in \text{FV}(t)$}\\
        \diffS{(t\ e)}{x}{s} &\defeq& \brak{\D(t) \cdot \pa{\diffS{e}{x}{s}}\ e} +
                                      \brak{\diffS{t}{x}{s}\ \pa{\subst{e}{x}{(x
                                      + \epsilon s)}}}\\
        \diffS{(\D(t) \cdot e)}{x}{s} &\defeq& 
          \D(t) \cdot \pa{\diffS{e}{x}{s}} 
          + \D\pa{\diffS{t}{x}{s}} \cdot (\subst{e}{x}{(x + \epsilon s)})\\
        &&
          {}+ \epsilon \D(\D(t) \cdot e ) \cdot \pa{\diffS{e}{x}{s}}\\
        \diffS{(\epsilon t)}{x}{s} &\defeq& \epsilon \pa{\diffS{t}{x}{s}}\\
        \diffS{(t + e)}{x}{s} &\defeq& \pa{\diffS{t}{x}{s}} + \pa{\diffS{e}{x}{s}}\\
        \diffS{0}{x}{s} &\defeq& 0
      \end{array}
    \]
    \caption{Differential substitution in $\lambda_\epsilon$}
    \label{fig:differential-substitution}
    \end{center}
  \end{figure}
  We write 
  \[ 
    \frac{\partial^k t}{\partial (x_1, \ldots, x_k)}(u_1, \ldots, u_k)
   \]
  to denote the sequence of nested differential substitutions
  \[ 
    (\partial (\ldots ((\partial t / \partial x_1)(u_1)) \ldots) / \partial x_k)(u_k)
  \]
\end{defi}

Most of the cases of differential substitution are identical to those in the
differential $\lambda$-calculus (compare the above with the rules in e.g. \cite{ehrhard2003differential}). There are, however, a
number of notable differences which stem from our more general setting. First,
we must point out that this definition in fact coincides exactly with the
original notion of differential substitution in e.g
\cite{ehrhard2003differential}, provided that one assumes the identity $\epsilon
t = 0$ for all terms. This reflects the fact that every Cartesian differential
category is in fact a Cartesian difference category with trivial infinitesimal
extension.

All the differences in this definition stem from the failure of derivatives to
be additive in the setting of Cartesian difference categories. Consider the case
for $\diffS{\D (t) \cdot s}{x}{e}$, and remember that the ``essence'' of a
derivative in our setting lies in the derivative condition, that is to say, if
$t(x)$ is a term with a free variable $x$, we seek our notion of differential
substitution to satisfy a condition akin to Taylor's formula:
\[
  t(x + \epsilon y) \diffEq t(x) + \epsilon \diffS{t}{x}{y}
\]

When the term $t$ is a differential application, and assuming the above
``Taylor's formula'' holds for all of its subterms (which we will show later),
this leads us to the following informal argument:
\begin{align*}
  \D(t(x + \epsilon y)) \cdot (s(x + \epsilon y))
  &\diffEq \D\pa{t(x) + \epsilon \diffS{t}{x}{y}}  \cdot (s(x + \epsilon y))
  \displaybreak[0]\\
  &\diffEq \D(t(x)) \cdot (s(x + \epsilon y)) 
    + \epsilon \D\pa{\diffS{t}{x}{y}} \cdot (s (x + \epsilon y))
  \displaybreak[0]\\
  &\diffEq \D(t(x)) \cdot (s(x))\\
    &\qquad + \epsilon \D(t(x)) \cdot \pa{\diffS{s}{x}{y}}\\
    &\qquad + \epsilon \D\pa{\diffS{t}{x}{y}} \cdot (s (x + \epsilon y))\\
    &\qquad + \epsilon^2\D(\D(t(x)) \cdot (s(x))) \cdot \pa{\diffS{s}{x}{y}}
\end{align*}

From this calculation, the differential substitution for this case arises
naturally as it results from factoring out the $\epsilon$ and noticing that
the resulting expression has precisely the correct shape to be Taylor's formula
for the case of differential application. The case for standard application can
be derived similarly, although the involved terms are simpler. Differential
substitution verifies some useful properties, which we state below (mechanised
proofs are available in \cite[Appendix A]{alvarez2020change}, although the details 
are more cumbersome than enlightening).

\begin{prop}
  \label{prop:differential-substitution-differential}
  Differential substitution respects differential equivalence. That is to
  say, whenever $s \diffEq s'$ and $t \diffEq t'$, it is the case that
  $\diffS{t}{x}{s} \diffEq \diffS{t'}{x}{s'}$.
\end{prop}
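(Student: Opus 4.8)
The plan is to reduce the statement, exactly as in Proposition~\ref{prop:substitution-differential}, to the ``one-step'' version of the claim. First I would observe that differential substitution in each argument separately suffices: if I can show that $s \diffEq s'$ implies $\diffS{t}{x}{s} \diffEq \diffS{t}{x}{s'}$, and that $t \diffEq^1 t'$ implies $\diffS{t}{x}{s} \diffEq \diffS{t'}{x}{s}$, then the general result follows by contextuality and transitivity of $\diffEq$, composing the two halves through the intermediate term $\diffS{t}{x}{s'}$. The first half (stability under change of $s$) is an easy structural induction on $t$: in the base case $\diffS{x}{x}{s} = s \diffEq s' = \diffS{x}{x}{s'}$, and every other clause of Figure~\ref{fig:differential-substitution} builds the result from the results on subterms using only contextual constructors (sum, $\epsilon$, application, differential application, $\lambda$-abstraction) — together with ordinary capture-avoiding substitution $\subst{e}{x}{(x+\epsilon s)}$, which is handled by Proposition~\ref{prop:substitution-differential} once we note $x+\epsilon s \diffEq x + \epsilon s'$.

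The substantive half is: whenever $t \diffEq^1 t'$ is one of the generating rules in Figure~\ref{fig:differential-equivalence}, the two differential substitutions $\diffS{t}{x}{s}$ and $\diffS{t'}{x}{s}$ are differentially equivalent. I would dispatch this rule by rule. The purely additive/infinitesimal rules (the first two blocks, plus $\D(0)\cdot e$, $\D(s+t)\cdot e$, $\D(\epsilon t)\cdot e$) are routine: differential substitution commutes with $+$, $0$, $\epsilon$ definitionally, and commutes with $\D(-)\cdot(-)$ in a way that lets the same $\diffEq^1$ rule (or a short chain of them) be applied to the result. The genuinely delicate cases are the last block — the ``Leibniz rule'' $\D(s)\cdot(t+e) \diffEq^1 \D(s)\cdot t + \D(s)\cdot e + \epsilon(\D(\D(s)\cdot t)\cdot e)$, the symmetry $\D(\D(s)\cdot t)\cdot e \diffEq^1 \D(\D(s)\cdot e)\cdot t$, the idempotency $\epsilon^2 \D(\D(s)\cdot t)\cdot e \diffEq^1 \epsilon\,\D(\D(s)\cdot t)\cdot e$, and the Taylor rule $s\,(t+\epsilon e)\diffEq^1 (s\,t)+\epsilon((\D(s)\cdot e)\,t)$ — because the differential substitution clause for $\D(t)\cdot e$ (and for $(t\,e)$) is itself three-termed and involves the substitution $\subst{e}{x}{(x+\epsilon s)}$. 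For each of these I would expand $\diffS{-}{x}{s}$ on both sides fully, collect the resulting canonical-form-like expressions, and check equality using the $\diffEq$-axioms; here one wants to invoke the already-established machinery — Theorem~\ref{thm:canonical-form-differential} and especially Theorem~\ref{thm:canonicity}, reducing each verification to a permutative-equivalence check on canonical forms — rather than hand-manipulating raw terms.

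The main obstacle is precisely the Leibniz and Taylor cases: expanding $\diffS{\D(s)\cdot(t+e)}{x}{s_0}$ produces a large sum in which cross-terms of the form $\epsilon\,\D(\D(s)\cdot e')\cdot(\diffS{t}{x}{s_0})$ appear, and one must match these against the expansion coming from $\diffS{\D(s)\cdot t + \D(s)\cdot e + \epsilon(\D(\D(s)\cdot t)\cdot e)}{x}{s_0}$; the bookkeeping is exactly the ``repeated regularity'' computation illustrated in the worked example after the canonicalization definition, and the $\epsilon^2 = \epsilon$ idempotency rule is what makes the higher cross-terms collapse correctly. I expect this to be cumbersome rather than conceptually hard, which is why I would lean on canonicalization: it is stated in \cite[Appendix A]{alvarez2020change} that these verifications have been mechanised, and the honest summary is that the proof is a long but mechanical case analysis. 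So the final write-up is: establish stability in the first argument by structural induction on $t$; establish $t\diffEq^1 t' \Rightarrow \diffS{t}{x}{s}\diffEq\diffS{t'}{x}{s}$ by cases on the $\diffEq^1$-rule, routine for all but the four differential/Taylor rules, and for those reduce to a canonical-form computation via Theorem~\ref{thm:canonicity}; then conclude by contextuality and transitivity.
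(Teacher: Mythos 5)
The paper gives no proof of this proposition beyond a pointer to the mechanised Coq development (\texttt{Lemma dsubst\_diff} in the cited appendix), so there is no in-paper argument to compare yours against; what you propose is a plausible reconstruction of what that mechanised proof must do. Your decomposition --- stability in the substituted term $s$ by structural induction on $t$, then a case analysis showing that each generating rule $t \diffEq^1 t'$ is preserved by $\diffS{-}{x}{s}$, then closure under contextuality and transitivity through the intermediate term $\diffS{t}{x}{s'}$ --- is the standard and correct shape. You also correctly identify that the only real work is in the four rules of the last block (the Leibniz rule, symmetry of second derivatives, $\epsilon$-idempotency, and the Taylor rule), where the three-term clause for $\diffS{(\D(t) \cdot e)}{x}{s}$ and the embedded substitution $\subst{e}{x}{(x+\epsilon s)}$ generate large sums of cross-terms; the remaining rules are indeed routine because every clause of differential substitution is built from contextual constructors together with ordinary substitution, which respects $\diffEq$ by Proposition~\ref{prop:substitution-differential}.

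Two caveats. First, your suggestion to discharge the hard cases via Theorem~\ref{thm:canonicity} needs a word of care: canonicalization is an algorithm on concrete terms, while the generating rules of $\diffEq^1$ are schematic, so you cannot literally compute $\can{\cdot}$ of both sides and compare them under $\permEq$; you would have to argue about the shape of the canonical forms parametrically in the canonical forms of the metavariable instances, as the paper does in the proof of Theorem~\ref{thm:reduction-canonical}, or else manipulate the raw terms directly with the $\diffEq$-axioms. Second, the four delicate cases are where all the content lives, and your proposal only gestures at them; that is acceptable for a sketch and consistent with the paper's own decision to delegate to a machine-checked proof, but it means the bookkeeping that makes the cross-terms match --- in particular the role of the rule $\epsilon^2 \D(\D(s)\cdot t)\cdot e \diffEq^1 \epsilon\,\D(\D(s)\cdot t)\cdot e$ in collapsing the higher-order terms --- has not actually been verified here.
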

\begin{proof}
  See \cite[\coqe{Lemma dsubst_diff}]{alvarez2020change}
\end{proof}
\begin{prop}
  \label{prop:differential-substitution-closed}
  Whenever $x$ is not free in $t$, then $\diffS{t}{x}{u} \diffEq 0$.
\end{prop}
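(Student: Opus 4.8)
The plan is to prove the claim by induction on the structure of the term $t$, as always identifying $\alpha$-equivalent terms and assuming — via the standing convention on bound variables — that before computing any differential substitution the bound variables of $t$ have been renamed to avoid both $x$ and the free variables of $u$. The observation that drives the induction is that the hypothesis $x \notin \text{FV}(t)$ passes to every immediate subterm: for $t = s\ e$, $t = \D(s)\cdot e$, $t = s + e$ and $t = \epsilon s$ this is immediate; and for $t = \lambda y . s$, after renaming we have $y \neq x$, so $x \notin \text{FV}(\lambda y . s) = \text{FV}(s)\setminus\{y\}$ forces $x \notin \text{FV}(s)$. Thus the induction hypothesis is always available for the subterms that appear on the right-hand sides of Figure~\ref{fig:differential-substitution}. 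The base cases are immediate: $\diffS{0}{x}{u} = 0$ and $\diffS{y}{x}{u} = 0$ for $y \neq x$ hold directly from the defining equations, while the clause $\diffS{x}{x}{u} = u$ never applies since $x \notin \text{FV}(t)$.

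For the inductive step, each defining clause of differential substitution rebuilds the result out of the differential substitutions into the subterms — all of which are $\diffEq 0$ by induction — wrapped in purely additive syntax, and this collapses to $0$ by contextuality of $\diffEq$ (via Lemma~\ref{lem:contextual-subterm}) together with the ``zero laws'' of Figure~\ref{fig:differential-equivalence} ($\lambda x . 0 \diffEq^1 0$; $\epsilon 0 \diffEq^1 0$; $0\ s \diffEq^1 0$; $\D(0)\cdot e \diffEq^1 0$; $\D(s)\cdot 0 \diffEq^1 0$; $s + 0 \diffEq^1 s$). Explicitly: for $t = \epsilon s$ we get $\epsilon\,\diffS{s}{x}{u} \diffEq \epsilon 0 \diffEq 0$; for $t = s + e$ we get $\diffS{s}{x}{u} + \diffS{e}{x}{u} \diffEq 0 + 0 \diffEq 0$; for $t = \lambda y . s$ we get $\lambda y . \diffS{s}{x}{u} \diffEq \lambda y . 0 \diffEq 0$; for $t = s\ e$ the first summand $\D(s)\cdot(\diffS{e}{x}{u})\ e \diffEq \D(s)\cdot 0\ e \diffEq 0$ and the second summand $\diffS{s}{x}{u}\ (\subst{e}{x}{(x + \epsilon u)}) \diffEq 0\ (\subst{e}{x}{(x + \epsilon u)}) \diffEq 0$, so the sum is $\diffEq 0$; and for $t = \D(s)\cdot e$ the three summands reduce via $\D(s)\cdot 0 \diffEq 0$, via $\D(\diffS{s}{x}{u})\cdot(\ldots) \diffEq \D(0)\cdot(\ldots) \diffEq 0$, and via $\epsilon\,\D(\D(s)\cdot e)\cdot(\diffS{e}{x}{u}) \diffEq \epsilon\,\D(\D(s)\cdot e)\cdot 0 \diffEq \epsilon 0 \diffEq 0$. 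Notably, nowhere do we need the standard fact that capture-avoiding substitution acts as the identity on a term in which $x$ is not free: in the application and differential-application clauses the subterm $\subst{e}{x}{(x + \epsilon u)}$ always sits under an application or a $\D(-)\cdot-$ whose other component is already $\diffEq 0$, so it is annihilated regardless of its precise shape.

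The only point deserving care — and it is minor — is the $\lambda$-abstraction case, where one must first $\alpha$-rename so that the bound variable $y$ is distinct from $x$ and from the free variables of $u$, both to meet the side condition on the defining clause and to recover $x \notin \text{FV}(s)$; this is exactly what the standing convention on bound variables affords, and the induction then applies to the body, a structurally smaller term. I do not anticipate any genuine obstacle: the proposition is in essence a bookkeeping exercise, each case amounting to reading off the matching zero-law from Figure~\ref{fig:differential-equivalence} for the corresponding clause of Figure~\ref{fig:differential-substitution}.
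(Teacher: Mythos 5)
Your proof is correct: the structural induction on $t$ goes through exactly as you describe, with each clause of Figure~\ref{fig:differential-substitution} collapsing to $0$ via contextuality and the zero-laws of Figure~\ref{fig:differential-equivalence}, and your observation that the substituted subterms $\subst{e}{x}{(x+\epsilon u)}$ are annihilated by an adjacent $\diffEq 0$ component is accurate. The paper itself gives no textual argument here --- it defers to the mechanised Coq lemma \coqe{dsubst_empty} --- and that mechanisation is precisely this structural induction, so your proposal matches the intended proof.
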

\begin{proof}
  See \cite[\coqe{Lemma dsubst_empty}]{alvarez2020change}
\end{proof}
\begin{prop}
  \label{prop:differential-substitution-commute}
  Whenever $x$ is not free in $u, v$, then:
  \[
    \frac{\partial^2{t}}{\partial{x^2}}(u, v)
    \diffEq
    \frac{\partial^2 t}{\partial x^2}(v, u)
  \]
\end{prop}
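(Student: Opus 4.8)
The plan is to prove the identity by structural induction on $t$, taking advantage of the fact that both sides are only required to agree up to $\diffEq$, so that all the equations of Figure~\ref{fig:differential-equivalence} — in particular the symmetry of the second derivative $\D(\D(s)\cdot t)\cdot e \diffEq \D(\D(s)\cdot e)\cdot t$ and the infinitesimal-duplication law $\epsilon^2\D(\D(s)\cdot t)\cdot e \diffEq \epsilon\D(\D(s)\cdot t)\cdot e$ — are available throughout. The base cases are immediate: for $t = 0$ or $t = y \neq x$ both iterated substitutions are literally $0$, and for $t = x$ we have $\diffS{x}{x}{u} = u$ with $x \notin \text{FV}(u)$, so Proposition~\ref{prop:differential-substitution-closed} gives $\diffS{u}{x}{v} \diffEq 0$ and both sides vanish. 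For $t = s + e$, $t = \epsilon s$ and $t = \lambda y.s$ the defining clause of $\diffS{-}{x}{-}$ commutes with the constructor, so $\frac{\partial^2{t}}{\partial{x^2}}(u,v)$ is obtained from $\frac{\partial^2{s}}{\partial{x^2}}(u,v)$ (and $\frac{\partial^2{e}}{\partial{x^2}}(u,v)$) by applying the same constructor, and the result follows from the induction hypothesis and contextuality of $\diffEq$. Note that after the first differential substitution $x$ may still occur free — for instance, in the application clause the subterm $e$ is copied verbatim — so the nested substitution is genuinely at play, and it is precisely the hypotheses $x\notin\text{FV}(u)$ and $x\notin\text{FV}(v)$ that make the statement well-posed.

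The two substantive cases are $t = (p\ e)$ and $t = \D(p)\cdot e$. In each, one expands $\frac{\partial^2{t}}{\partial{x^2}}(u,v)$ by applying the relevant clause of Figure~\ref{fig:differential-substitution} twice, obtaining a sum of several (differential) applications in which $u$, $v$, the doubly-substituted subterms $\frac{\partial^2{p}}{\partial{x^2}}(u,v)$ and $\frac{\partial^2{e}}{\partial{x^2}}(u,v)$, and the affine substitutions $\subst{e}{x}{(x+\epsilon u)}$, $\subst{e}{x}{(x+\epsilon v)}$ all occur; one does likewise for $\frac{\partial^2{t}}{\partial{x^2}}(v,u)$ and then reconciles the two sums modulo $\diffEq$. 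Four ingredients perform the reconciliation: (i) the induction hypothesis, to swap $u\leftrightarrow v$ inside the doubly-differentiated subterms; (ii) the symmetric-derivative equation, to interchange the two directional arguments created by nesting the differential-application clause; (iii) the infinitesimal-duplication equation, needed because combining two infinitesimal extensions produces $\epsilon^2$-weighted second-derivative terms on one side that must be identified with their $\epsilon^1$-weighted counterparts on the other, commutativity of $+$ then letting them be rearranged into place; and (iv) symmetry of the iterated affine substitution, $\subst{(\subst{e}{x}{(x+\epsilon u)})}{x}{(x+\epsilon v)} \diffEq \subst{(\subst{e}{x}{(x+\epsilon v)})}{x}{(x+\epsilon u)}$, which holds because composing the two substitutions yields $\subst{e}{x}{((x+\epsilon u)+\epsilon v)}$ and commutativity/associativity of $+$ are among the $\diffEq$-equations, so Proposition~\ref{prop:substitution-differential} applies.

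I expect the main obstacle to be an auxiliary ``chain rule for affine substitution'': in the two application cases one is forced to reconcile terms of the form $\diffS{(\subst{e}{x}{(x+\epsilon u)})}{x}{v}$ produced by one expansion with terms of the form $\subst{(\diffS{e}{x}{v})}{x}{(x+\epsilon u)}$ produced by the other, and this requires a lemma stating that $\diffS{(\subst{t}{x}{(x+\epsilon w)})}{x}{v} \diffEq \subst{(\diffS{t}{x}{v})}{x}{(x+\epsilon w)} + \epsilon(\ldots)$, the correction term being expressible through a second-order differential substitution. This lemma is itself proved by a structural induction on $t$ of the same flavour, again leaning on the symmetric-derivative and duplication equations, and it is where the bulk of the bookkeeping sits. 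Once it is in hand, the residual work in the $(p\ e)$ and $\D(p)\cdot e$ cases is a lengthy but entirely mechanical term-by-term comparison — precisely the content that is discharged in the mechanised development of \cite[Appendix~A]{alvarez2020change}, to which the detailed verification can be deferred.
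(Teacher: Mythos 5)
Your plan is sound, and it is worth saying up front that the paper itself offers no textual proof here: Proposition~\ref{prop:differential-substitution-commute} is discharged by a bare citation to the mechanised lemma \coqe{dsubst_commute} in the accompanying Coq development. Your outline is a credible reconstruction of what that formal proof must do: structural induction on $t$, trivial base and constructor cases, and a reconciliation in the $(p\ e)$ and $\D(p)\cdot e$ cases driven by exactly the two $\diffEq$-rules you single out (symmetry of second derivatives and duplication of infinitesimals), together with commutation of the two affine substitutions $\subst{-}{x}{(x+\epsilon u)}$ and $\subst{-}{x}{(x+\epsilon v)}$. One refinement: the auxiliary ``chain rule'' you anticipate as the main obstacle does not need to be developed from scratch. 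Because $x$ is not free in $u$ or $v$, renaming the substituted variable and applying Lemma~\ref{lem:differential-standard-substitution} (together with Proposition~\ref{prop:differential-substitution-closed}) yields $\diffS{\pa{\subst{e}{x}{(x+\epsilon u)}}}{x}{v} \diffEq \subst{\pa{\diffS{e}{x}{v}}}{x}{(x+\epsilon u)}$ with the correction term vanishing outright, which shortens the bookkeeping in both application cases. With that observation your argument goes through, and deferring the residual term-by-term comparison to the mechanised development is exactly what the paper does.
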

\begin{proof}
  See \cite[\coqe{Lemma dsubst_commute}]{alvarez2020change}
\end{proof}

As we have previously mentioned, the rationale behind our specific definition of
differential substitution is that it should verify some sort of ``Taylor's
formula'' (or rather, Kock-Lawvere formula), in the following sense:
\begin{thm}
  \label{thm:taylor}
  For any unrestricted terms $s, t, e$ and any variable $x$ which does not
  appear free in $e$, we have
  \[
    \subst{s}{x}{t + \epsilon e} \diffEq \subst{s}{x}{t} 
    + \epsilon\pa{\subst{\pa{\diffS{s}{x}{e}}}{x}{t}}
  \]
  We will often refer to the right-hand side of the above equivalence as the
  \textbf{Taylor expansion} of the corresponding term in the left-hand side.
\end{thm}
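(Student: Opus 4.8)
The plan is to prove the identity by structural induction on the term $s$, with the variable $x$ and the terms $t, e$ (with $x \notin \mathrm{FV}(e)$) as parameters. The base cases are immediate: if $s = x$ then the left-hand side is $t + \epsilon e$, and on the right $\subst{x}{x}{t} = t$ while $\diffS{x}{x}{e} = e$ so the right-hand side is $t + \epsilon(\subst{e}{x}{t}) = t + \epsilon e$ since $x \notin \mathrm{FV}(e)$; if $s = y \neq x$, both sides are $y$ because $\diffS{y}{x}{e} = 0$; if $s = 0$ both sides are $0$. The cases $s = s_1 + s_2$, $s = \epsilon s_1$, and $s = \lambda y . s_1$ follow by pushing the substitutions through the outermost constructor using Figures~\ref{fig:substitution} and~\ref{fig:differential-substitution}, applying the induction hypothesis to the immediate subterms, and then using contextuality of $\diffEq$ together with the corresponding clause of differential equivalence (e.g.\ $\epsilon(a + b) \diffEq \epsilon a + \epsilon b$, $\lambda y.(a + b) \diffEq (\lambda y . a) + (\lambda y . b)$, $\lambda y . \epsilon a \diffEq \epsilon(\lambda y . a)$) to reassemble the Taylor expansion at the head; for the $\lambda$-case one also uses the standing convention that bound variables avoid the free variables of $t, e$.

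The two genuinely substantive cases are application and differential application. For $s = (s_1\ s_2)$, I would start from $\subst{(s_1\ s_2)}{x}{t + \epsilon e} = (\subst{s_1}{x}{t+\epsilon e})(\subst{s_2}{x}{t+\epsilon e})$, apply the induction hypothesis to both $s_1$ and $s_2$ to rewrite each factor as a ``primal plus $\epsilon$-tangent'' sum, and then expand the resulting application $\big(\subst{s_1}{x}{t} + \epsilon\,\subst{(\diffS{s_1}{x}{e})}{x}{t}\big)\big(\subst{s_2}{x}{t} + \epsilon\,\subst{(\diffS{s_2}{x}{e})}{x}{t}\big)$ using the $\diffEq$-rules for sums and $\epsilon$ in function position — crucially $(a + b)\,c \diffEq (a\ c) + (b\ c)$, $(\epsilon a)\,c \diffEq \epsilon(a\ c)$ — and the key ``Taylor'' rule $a\,(b + \epsilon c) \diffEq (a\ b) + \epsilon((\D(a)\cdot c)\ b)$. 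Collecting the $\epsilon^0$ term gives $(\subst{s_1}{x}{t})(\subst{s_2}{x}{t}) = \subst{(s_1\ s_2)}{x}{t}$, and the $\epsilon^1$ terms must be matched against $\subst{(\diffS{(s_1\ s_2)}{x}{e})}{x}{t}$; unfolding the definition of $\diffS{(s_1\ s_2)}{x}{e} = (\D(s_1)\cdot(\diffS{s_2}{x}{e})\ s_2) + (\diffS{s_1}{x}{e}\ \subst{s_2}{x}{x + \epsilon e})$ and then applying $\subst{\cdot}{x}{t}$ (using that substitution commutes suitably and $\subst{(\subst{s_2}{x}{x+\epsilon e})}{x}{t} = \subst{s_2}{x}{t + \epsilon e} \diffEq \subst{s_2}{x}{t} + \epsilon(\ldots)$ again by the induction hypothesis) shows the two sides agree modulo $\diffEq$, absorbing any $\epsilon^2$ residue into the infinitesimal layer. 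The differential-application case $s = \D(s_1)\cdot s_2$ proceeds along exactly the lines of the informal ``Taylor's formula'' computation displayed just before the theorem statement: apply the induction hypothesis to $s_1$ and $s_2$, expand $\D(a + \epsilon b)\cdot(c + \epsilon d)$ using $\diffEq$-rules $\D(a+b)\cdot c \diffEq \D(a)\cdot c + \D(b)\cdot c$, $\D(\epsilon a)\cdot c \diffEq \epsilon(\D(a)\cdot c)$, $\D(a)\cdot(\epsilon c)\diffEq \epsilon(\D(a)\cdot c)$, and the regularity rule $\D(a)\cdot(c+d) \diffEq \D(a)\cdot c + \D(a)\cdot d + \epsilon(\D(\D(a)\cdot c)\cdot d)$, then factor out one $\epsilon$ from the tangent part and recognise it as $\subst{(\diffS{(\D(s_1)\cdot s_2)}{x}{e})}{x}{t}$ by unfolding the three-summand definition of $\diffS{(\D(t)\cdot e')}{x}{s}$.

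The main obstacle I anticipate is bookkeeping in the two application cases: after expanding, one obtains a sum of several basic terms of mixed $\epsilon$-degree, and to match it against the definition of $\diffS{\cdot}{x}{e}$ one must carefully track which substitutions have been applied, repeatedly invoke Proposition~\ref{prop:substitution-differential} (so that $\diffEq$-steps performed under a substitution are legal), use that $\subst{\cdot}{x}{(x + \epsilon e)}$ composed with $\subst{\cdot}{x}{t}$ is $\subst{\cdot}{x}{(t + \epsilon e)}$, and fold the ``duplication of infinitesimals'' equation $\epsilon^2 \D(\D(s)\cdot t)\cdot e \diffEq \epsilon\,\D(\D(s)\cdot t)\cdot e$ to reconcile leftover $\epsilon^2$-terms with the single-$\epsilon$ shape demanded by the right-hand side. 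None of these steps is conceptually hard — each is a direct consequence of a clause in Figure~\ref{fig:differential-equivalence} and the definitions of the two substitutions — but keeping the algebra organised is the real work, which is exactly why a mechanised development (as cited for the neighbouring propositions) is the natural way to certify the details.
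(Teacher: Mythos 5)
Your proposal is correct and follows essentially the same route as the paper: structural induction on $s$, with the base and additive/$\epsilon$/$\lambda$ cases dispatched by contextuality, and the application and differential-application cases handled by applying the induction hypothesis to the subterms, expanding via the $\diffEq$-rules (in particular $s\,(t+\epsilon e)\diffEq (s\ t)+\epsilon((\D(s)\cdot e)\ t)$ and regularity of $\D$), and matching the resulting $\epsilon$-layers against the definition of differential substitution, using $\subst{(\subst{u}{x}{x+\epsilon e})}{x}{t}=\subst{u}{x}{t+\epsilon e}$ to close the loop. The only cosmetic difference is that in the application case the paper keeps the term $\pa{\diffS{s'}{x}{e}}$ applied to $\subst{s''}{x}{t+\epsilon e}$ intact rather than expanding it further, so no $\epsilon^2$ residue actually needs absorbing there; otherwise your plan coincides with the paper's proof.
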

\begin{proof}
  \newcommand{\lhs}[1]{\subst{#1}{x}{t + \epsilon e}}
  \newcommand{\diff}[1]{\subst{\pa{\diffS{#1}{x}{e}}}{x}{t}}
  The proof follows by induction on $s$ and some involved calculations. A mechanised
  version can be found in \cite[\coqe{Theorem Taylor}]{alvarez2020change}.
  \begin{itemize}
    \item When $s = x$ we have $\diffS{x}{x}{e} = e$ and so:
      \begin{align*}
        \lhs{x} = t + \epsilon e = \subst{x}{x}{t} + \epsilon \diff{x}
      \end{align*}
    \item When $s = y \neq x$ we have $\diffS{y}{x}{e} = 0$ and so:
      \begin{align*}
        \lhs{y} = y \diffEq y + 0 = \subst{y}{x}{y} + \epsilon \diff{y}
      \end{align*}
    \item When $s = 0$ we have $\LHS \diffEq 0 \diffEq \RHS$.
    \item When $s = s' + s''$ we have:
    \begin{align*}
      &\hspace{-20pt}\lhs{(s' + s'')} = \subst{s'}{x}{t + \epsilon e} + \subst{s''}{x}{t + \epsilon e}\\
      &\diffEq \pa{\subst{s'}{x}{t} + \epsilon\pa{\diff{s'}}}
        + \pa{\subst{s''}{x}{t} + \epsilon\pa{\diff{s''}}}\\
      &\diffEq \pa{\subst{s'}{x}{t} + \subst{s''}{x}{t}}
        + \pa{\diff{s'} + \diff{s''}}\\
      &= \subst{(s' + s'')}{x}{t} + \diff{(s' + s'')}
    \end{align*}
    \item When $s = \epsilon s'$ we have:
    \begin{align*}
      &\hspace{-30pt}\lhs{(\epsilon s')} = \epsilon \lhs{s'}\\
      &\diffEq \epsilon\pa{ \subst{s'}{x}{t} + \epsilon \pa{\diff{s'}} } \\
      &\diffEq \epsilon\pa{\subst{s'}{x}{t}} + \epsilon\pa{\epsilon \pa{\diff{s'}} } \\
      &\diffEq \subst{(\epsilon s')}{x}{t} + \epsilon \pa{\diff{(\epsilon s')}}
    \end{align*}
    \item When $s = s'\ s''$ we have:
    \begin{align*}
      &\hspace{-10pt}\lhs{(s'\ s'')} = (\lhs{s'})\ (\lhs{s''})\\
      &\diffEq \brak{\subst{s'}{x}{t} + \epsilon\pa{\diff{s'}}}\ 
        \brak{\subst{s''}{x}{t} + \epsilon\pa{\diff{s''}}}\displaybreak[0]\\
      &\diffEq \brak{\subst{s'}{x}{t}\ \pa{\subst{s''}{x}{t} + \epsilon\pa{\diff{s''}}}}\\
      \continued + \epsilon\brak{\pa{\diff{s'}}\ \pa{\subst{s''}{x}{t} + \epsilon\pa{\diff{s''}}} }\displaybreak[0]\\
      &\diffEq \pa{\subst{s'}{x}{t}\ (\subst{s''}{x}{t})} \\
      \continued + \epsilon\brak{\pa{\D(\subst{s'}{x}{t}) \cdot \pa{\diff{s''}}}
      \ (\subst{s''}{x}{t})}\\
      \continued + \epsilon\brak{\pa{\diff{s'}}\ \pa{\subst{s''}{x}{t + \epsilon e}}}\displaybreak[0]\\
      &= \subst{(s'\ s'')}{x}{t} \\
      \continued + \subst{\epsilon\brak{\pa{\D(s') \cdot
      \pa{\diffS{s''}{x}{e}}}\ s''}}{x}{t}\\
      \continued + \subst{\epsilon\brak{\pa{\diffS{s'}{x}{e}}\ \pa{\subst{s''}{x}{(x + \epsilon e)}}}}{x}{t}\displaybreak[0]\\
      &\diffEq \subst{(s'\ s'')}{x}{t} + \epsilon \pa{\diff{(s' + s'')}}
    \end{align*}
    \item When $s = \D(s') \cdot s''$ we have:
      \begin{align*}
        &\hspace{-10pt}{\lhs{(\D(s') \cdot s'')} = \D(\lhs{s'}) \cdot (\lhs{s''})}\\
        & \diffEq \D\pa{\subst{s'}{x}{t} + \epsilon\pa{\diff{s'}}} \cdot (\lhs{s''})\displaybreak[0]\\
        & \diffEq \dd{(\subst{s'}{x}{t})} \cdot (\lhs{s''}) + \epsilon
        \pa{\D\pa{\diff{s'}}\cdot (\lhs{s''})}\displaybreak[0]\\
        & \diffEq \dd{(\subst{s'}{x}{t})} \cdot \pa{\subst{s''}{x}{t} + \epsilon\diff{s''}}\\
        \continued + \epsilon\pa{\D\pa{\diff{s'}}\cdot (\lhs{s''})}\displaybreak[0]\\
        & \diffEq \dd{(\subst{s'}{x}{t})} \cdot \pa{\subst{s''}{x}{t}}
        + \epsilon\pa{\dd{(\subst{s'}{x}{t})} \cdot \pa{\diff{s''}}}\\
        \continued + \epsilon^2 \pa{\D(\dd{(\subst{s'}{x}{t})} \cdot \pa{\subst{s''}{x}{t}}) \cdot \pa{\diff{s''}}} \\
        \continued + \epsilon\pa{\D\pa{\diff{s'}}\cdot (\lhs{s''})}\displaybreak[0]\\
        & \diffEq \subst{\pa{\D(s') \cdot s''}}{x}{t}
        + \subst{\epsilon\pa{\dd{(s')} \cdot \pa{\diffS{s''}{x}{e}}}}{x}{t}\\
        \continued + \epsilon^2 \subst{\pa{\D(\dd{(s')} \cdot s'') \cdot \pa{\diffS{s''}{x}{e}}}}{x}{t} \\
        \continued + \epsilon\subst{\pa{\D\pa{\diffS{s'}{x}{e}}\cdot (\subst{s''}{x}{x + \epsilon e})}}{x}{t}\displaybreak[0]\\
        &\diffEq \subst{\pa{\D(s') \cdot s''}}{x}{t} + \epsilon
        \pa{\diff{(\dd{(s')} \cdot s'')}}
        \qedhere
      \end{align*}
  \end{itemize}
\end{proof}

The following lemmas relate differential substitution and standard
substitution, and will be of much use later.

\begin{lem}
  \label{lem:standard-differential-substitution}
  Whenever $x, y$ are (distinct) variables then for any unrestricted terms $t,
  u, v$ where $x$ is not free in $v$ we have:
  \[
    \subst{\pa{\diffS{t}{x}{u}}}{y}{v}
    = \diffS{\subst{t}{y}{v}}{x}{\subst{u}{y}{v}}
  \]
\end{lem}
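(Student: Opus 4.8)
The plan is to proceed by structural induction on $t$, mirroring the corresponding commutation lemma in the differential $\lambda$-calculus but carrying along the extra $\epsilon$-terms that appear in the definition of $\diffS{-}{-}{-}$ (Figure~\ref{fig:differential-substitution}). Besides the induction hypothesis, the only external ingredient is the ordinary \emph{substitution lemma} for capture-avoiding substitution, i.e.\ that $\subst{\subst{e}{x}{w}}{y}{v} = \subst{\subst{e}{y}{v}}{x}{\subst{w}{y}{v}}$ whenever $x \neq y$ and $x \notin \text{FV}(v)$; this is exactly where the hypothesis ``$x$ not free in $v$'' gets used. Throughout we work up to $\alpha$-equivalence, renaming any bound variable we meet so that it is fresh for $x, y, u, v$, and we keep in mind the standing side condition that $x \notin \text{FV}(u)$ (needed for $\diffS{t}{x}{u}$ to be defined), which together with $x \notin \text{FV}(v)$ gives $x \notin \text{FV}(\subst{u}{y}{v})$, so the right-hand side is also well-formed.

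First I would clear the base and purely algebraic cases. For $t = x$ both sides reduce to $\subst{u}{y}{v}$; for a variable $t = z$ with $z \neq x$ and $z \neq y$ both sides are $0$; for $t = y$ the left-hand side is $\subst{0}{y}{v} = 0$ while the right-hand side is $\diffS{v}{x}{\subst{u}{y}{v}}$, which vanishes by Proposition~\ref{prop:differential-substitution-closed} since $x \notin \text{FV}(v)$. The cases $t = 0$, $t = t_1 + t_2$, and $t = \epsilon\,t'$ are immediate: both substitutions commute with the (additive, $\epsilon$-compatible) outermost constructor, and the induction hypothesis finishes the job. For $t = \lambda z.\,t'$, after renaming $z$ fresh so that both differential substitutions commute past the abstraction, the two sides become $\lambda z.\,\subst{(\diffS{t'}{x}{u})}{y}{v}$ and $\lambda z.\,\diffS{\subst{t'}{y}{v}}{x}{\subst{u}{y}{v}}$, which agree by the induction hypothesis.

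The substance is in the two application cases. For $t = t_1\,t_2$ I would expand $\diffS{(t_1\,t_2)}{x}{u}$ by Figure~\ref{fig:differential-substitution}, apply $\subst{-}{y}{v}$ to each of the two summands and push it inward; the subterms of the form $\subst{(\diffS{t_i}{x}{u})}{y}{v}$ get rewritten by the induction hypothesis, and the single occurrence $\subst{(\subst{t_2}{x}{x + \epsilon u})}{y}{v}$ gets rewritten by the substitution lemma together with $\subst{(x + \epsilon u)}{y}{v} = x + \epsilon\,\subst{u}{y}{v}$ (valid since $x \neq y$ and $x \notin \text{FV}(v)$). Reassembling the pieces reproduces exactly the definitional expansion of $\diffS{(\subst{t_1}{y}{v}\ \subst{t_2}{y}{v})}{x}{\subst{u}{y}{v}}$, i.e.\ the right-hand side. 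The differential-application case $t = \D(t_1)\cdot t_2$ is identical in spirit but has three summands — including the $\epsilon\,\D(\D(t_1)\cdot t_2)\cdot(\diffS{t_2}{x}{u})$ term — requiring the induction hypothesis on each $\diffS{t_i}{x}{u}$ occurrence and the substitution lemma on the lone $\subst{(\subst{t_2}{x}{x + \epsilon u})}{y}{v}$. I expect the bookkeeping in these two cases — keeping the nested substitutions straight and verifying that substituting $x + \epsilon u$ for $x$ introduces no capture — to be the main obstacle, although it is entirely routine once the substitution lemma is in hand; everything else is mechanical.
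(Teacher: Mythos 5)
Your proof is correct, and in fact supplies an argument where the paper gives none: the paper's ``proof'' of this lemma is a bare citation to the Coq lemma \coqe{replace_dsubst} in the mechanised development, so the structural induction you outline is exactly the argument one would expect that mechanisation to contain. Your identification of the two load-bearing ingredients is right: the classical substitution lemma (which consumes the hypothesis $x \notin \text{FV}(v)$ via the rewriting $\subst{\subst{t_2}{x}{x+\epsilon u}}{y}{v} = \subst{\subst{t_2}{y}{v}}{x}{x + \epsilon\,\subst{u}{y}{v}}$) and Proposition~\ref{prop:differential-substitution-closed} for the case $t = y$. One small point worth flagging: in that case the left-hand side is syntactically $0$ while the right-hand side $\diffS{v}{x}{\subst{u}{y}{v}}$ is only \emph{differentially equivalent} to $0$ (take $v = \lambda z.z$, whose differential substitution is $\lambda z.0$), so the identity as you prove it — and, arguably, as it can only be true — holds up to $\diffEq$ rather than on the nose; the paper's use of ``$=$'' should be read as equality of well-formed terms. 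This is how the lemma is actually used downstream (e.g.\ in the discussion following Theorem~\ref{thm:taylor}), so nothing is lost. The remaining cases are the routine bookkeeping you describe, and your observation that $x \notin \text{FV}(u)$ and $x \notin \text{FV}(v)$ together make $\diffS{-}{x}{\subst{u}{y}{v}}$ well-formed is a detail the paper's statement silently relies on.
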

\begin{proof}
  See \cite[\coqe{Lemma replace_dsubst}]{alvarez2020change}.
\end{proof}

\begin{lem}
  \label{lem:differential-standard-substitution}
  Whenever $x, y$ are (distinct) variables, with $y$ not free in either $u, v$,
  we have:
  \[
    \diffS{\subst{t}{y}{v}}{x}{u} \diffEq 
    \subst{\pa{\diffS{t}{x}{u}}}{y}{(\subst{v}{x}{x + \epsilon u})}
    + \subst{\pa{\diffS{t}{y}{\diffS{v}{x}{u}}}}{y}{v}
  \]
\end{lem}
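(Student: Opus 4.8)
The plan is to argue by structural induction on $t$, keeping the variables $x \ne y$ and the terms $u, v$ fixed throughout and $\alpha$-renaming bound variables freely so that they avoid $x$, $y$ and the free variables of $u$ and $v$. It is convenient to abbreviate the two ingredients occurring on the right-hand side as $w_1 \defeq \subst{v}{x}{x + \epsilon u}$ and $w_2 \defeq \diffS{v}{x}{u}$; since $y$ is free in neither $u$ nor $v$, it is free in neither $w_1$ nor $w_2$, so every differential substitution appearing below is legal. The variable cases are immediate from the definitions and the freshness hypotheses: for $t = x$ both sides reduce to $u$ (using $\subst{u}{y}{w_1} = u$), for $t = y$ both sides reduce to $\diffS{v}{x}{u}$ (using $\subst{w_2}{y}{v} = w_2$), for any other variable both sides vanish, and $t = 0$ is trivial. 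The additive cases $t = s + s'$, $t = \epsilon s$ and $t = \lambda z.s$ are likewise routine: in each, $\diffS{-}{x}{u}$ and $\subst{-}{y}{v}$ commute with the top-level constructor, which moreover distributes over $+$ by the $\diffEq^1$ rules of Figure~\ref{fig:differential-equivalence}, so the claim collapses to the induction hypothesis on the immediate subterm(s).

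The heart of the proof is the application case $t = \pa{s\ e}$. I would start by unfolding $\diffS{\subst{\pa{s\ e}}{y}{v}}{x}{u} = \diffS{\pa{\subst{s}{y}{v}\ \subst{e}{y}{v}}}{x}{u}$ using the differential-substitution clause for application, obtaining one summand built from $\diffS{\subst{e}{y}{v}}{x}{u}$ (inside a differential application) and one built from $\diffS{\subst{s}{y}{v}}{x}{u}$ (applied to $\subst{\subst{e}{y}{v}}{x}{x + \epsilon u}$). To each inner differential substitution I would apply the induction hypothesis, replacing $\diffS{\subst{s}{y}{v}}{x}{u}$ by $\subst{\diffS{s}{x}{u}}{y}{w_1} + \subst{\diffS{s}{y}{w_2}}{y}{v}$ and similarly for $e$. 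Next I would rewrite the perturbed argument using the standard substitution lemma $\subst{\subst{e}{y}{v}}{x}{x+\epsilon u} = \subst{\subst{e}{x}{x+\epsilon u}}{y}{w_1}$ (legal since $x \ne y$ and $y \notin \text{FV}(x + \epsilon u)$) and then Theorem~\ref{thm:taylor} in the shape $\subst{M}{x}{x+\epsilon u} \diffEq M + \epsilon\,\diffS{M}{x}{u}$ to linearise it. The remaining work is to distribute the resulting sums through ordinary and differential application via the $\diffEq^1$ rules — with the proviso that $\D(a)\cdot(b + c) \diffEq \D(a)\cdot b + \D(a)\cdot c + \epsilon\pa{\D(\D(a)\cdot b)\cdot c}$ generates extra infinitesimal cross-terms — and to reassemble the pieces, using Lemma~\ref{lem:standard-differential-substitution} to pull the substitutions $\subst{-}{y}{w_1}$ and $\subst{-}{y}{v}$ outward past the differential substitutions, until what remains is visibly the expansion of $\subst{\diffS{\pa{s\ e}}{x}{u}}{y}{w_1} + \subst{\diffS{\pa{s\ e}}{y}{w_2}}{y}{v}$.

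The differential-application case $t = \D(s)\cdot e$ follows the same recipe but is more laborious, since $\diffS{\D(s)\cdot e}{x}{u}$ already contains three summands (one carrying an $\epsilon$). One again invokes the induction hypothesis on $\diffS{\subst{s}{y}{v}}{x}{u}$ and $\diffS{\subst{e}{y}{v}}{x}{u}$, linearises every perturbed substitution by Theorem~\ref{thm:taylor}, and collects terms using the $\diffEq^1$ rules governing $\D(-)\cdot(-)$; the terms of order $\epsilon^2$ appearing on the two sides are then reconciled by the ``duplication of infinitesimals'' rule $\epsilon^2\,\D(\D(a)\cdot b)\cdot c \diffEq \epsilon\,\D(\D(a)\cdot b)\cdot c$.

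The main obstacle is the combinatorial bulk of these last two cases: because $\D(-)\cdot(-)$ is not additive, every distribution step spawns infinitesimal correction terms, and one must verify that the corrections produced while expanding the left-hand side match, term by term up to $\diffEq$, those produced by the perturbed substitution $\subst{v}{x}{x+\epsilon u}$ and the nested differential substitution $\diffS{v}{x}{u}$ on the right. This is exactly the kind of bookkeeping best delegated to the accompanying formalisation, where a fully machine-checked proof is available \cite{alvarez2020change}.
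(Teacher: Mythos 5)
The paper states Lemma~\ref{lem:differential-standard-substitution} with no proof at all --- unlike its companion Lemma~\ref{lem:standard-differential-substitution}, it is not even annotated with a pointer to the mechanised development --- so there is no argument of the paper's to compare yours against; the evident intent is that, like the surrounding results, it is discharged only in the Coq formalisation of \cite{alvarez2020change}. Your structural induction is the natural strategy, and your base, additive, $\epsilon$- and $\lambda$-cases are correct (for $t = x$ one needs $\subst{u}{y}{w_1} = u$ with $w_1 = \subst{v}{x}{x+\epsilon u}$, which your freshness conventions supply; for $t = y$ one needs $y \notin \text{FV}(w_2)$ with $w_2 = \diffS{v}{x}{u}$, likewise).

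The two application cases, however, are where the entire content of the lemma lives, and there you have described a plan rather than executed it. To make the difficulty concrete: for $t = s\ e$, the left-hand side produces the summand $\subst{\pa{\diffS{s}{y}{w_2}}}{y}{v}$ applied to the argument $\subst{\subst{e}{y}{v}}{x}{x+\epsilon u}$, in which \emph{every} free $x$ of $e$ has been perturbed, whereas the corresponding summand of $\subst{\pa{\diffS{(s\ e)}{y}{w_2}}}{y}{v}$ applies the same function to $\subst{e}{y}{v + \epsilon w_2} \diffEq \subst{e}{y}{w_1}$, in which only the occurrences of $y$ have been touched. Reconciling the two requires Taylor-expanding $\subst{\subst{e}{y}{v}}{x}{x+\epsilon u}$ via Theorem~\ref{thm:taylor}, invoking the rule $s\ (t+\epsilon e) \diffEq (s\ t) + \epsilon((\D(s)\cdot e)\ t)$, and then checking that the resulting infinitesimal corrections cancel exactly against the cross-terms spawned on the other side when $\D(-)\cdot(-)$ is distributed over the sums delivered by the induction hypothesis (each such distribution producing an extra $\epsilon\pa{\D(\D(a)\cdot b)\cdot c}$). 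Your write-up asserts that this reassembly succeeds but does not exhibit it, and this matching of $\epsilon$-corrections is precisely what distinguishes the difference calculus from the differential one. The proposal is therefore best described as a correct and well-chosen proof skeleton whose decisive verification is still outstanding --- though in fairness, deferring that verification to the mechanised proof is also all the paper itself does.
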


One consequence of this ``syntactic Taylor's formula'' is that derivatives in
the difference $\lambda$-calculus can be computed by a sort of quasi-AD
algorithm: given an expression of the form $\lambda x . t$, its derivative at
point $s$ along $u$ can be computed by reducing the differential application
$\pa{\D(\lambda x . t) \cdot (u)}\ s$ which, as we shall see later, reduces (by
definition) to $\subst{\pa{\diffS{t}{x}{u}}}{x}{s}$. Alternatively, we can
simply evaluate $(\lambda x . t)\ (s + \epsilon(u))$ to compute $\subst{t}{x}{s
+ \epsilon(u)}$ which, by Theorem~\ref{thm:taylor} and
Lemma~\ref{lem:standard-differential-substitution}, is equivalent to
$\subst{t}{x}{s} + \epsilon(\subst{\diffS{t}{x}{u}}{x}{s})$. In an appropriate
setting (i.e. one where subtraction of terms is allowed and $\epsilon$ admits an
inverse) the derivative can then be extracted from this result by extracting the
term under the $\epsilon$. This process is remarkably similar to forward-mode
automatic differentiation, where derivatives are computed by adding
``perturbations'' to the program input.

Theorem~\ref{thm:taylor} also allows us to unpack all of the substitutions in
the definition of differential substitution. For example, the term
$\diffS{(t\ e)}{x}{u}$ can be expanded to:
\begin{align*}
  \diffS{(t\ e)}{x}{u}&= \brak{\D(t)\cdot \pa{\diffS{e}{x}{u}}\ e} 
  + \brak{\diffS{t}{x}{u}\ \pa{\subst{e}{x}{(x + \epsilon u)}}}
  \displaybreak[0]\\
  &\diffEq \brak{\D(t) \cdot \pa{\diffS{e}{x}{u}}\ e}
  + \brak{
    \diffS{t}{x}{u}\ \pa{
      e + \epsilon \diffS{e}{x}{u}
    }
  }
  \displaybreak[0]\\
  &\diffEq \brak{\D(t) \cdot \pa{\diffS{e}{x}{u}}\ e}
  + \brak{\diffS{t}{x}{u}\ e}
  + \epsilon \brak{\D\pa{\diffS{t}{x}{u}}\cdot \diffS{e}{x}{u}\ e} 
\end{align*}
We can generalise this procedure to arbitrary sequences of differential
substitutions, although the terms involved are too complex to give a simple
account.

\begin{lem}
  \label{lem:differential-substitution-dapp}
  For any basic terms $s^\b, t^\b$, variables $x_1, \ldots, x_n$ and basic terms
  $u^\b_1, \ldots, u^\b_n$ such that none of the $x_i$ appear free in the $u_i$,
  the differential substitution 
  \[ \frac{\partial^k
     (\D(s^\b)\cdot t^\b)}{\partial (x_1, \ldots, x_n)}
     (u^\b_1, \ldots, u^\b_n)
  \]
  is differentially equivalent to a sum of terms of the form
  \[ 
    \epsilon^z \D^l (v) \cdot (w_1, \ldots, w_l)
  \]
  where $v$ is of the form \[
    \frac{\partial^p t^\b}{\partial (x^{(t)}_1, \ldots,
    x^{(t)}_p)}(u^{(t)}_1, \ldots, u^{(t)}_p) 
  \] and every $w_j$ is of the form
    \[\frac{\partial^{q_j} e^\b}{\partial (x^{(w_j)}_1, \ldots,
    x^{(w_j)}_{q_j})}(u^{(w_j)}_1, \ldots, u^{(w_j)}_{q_j}) \] where $1 \leq l
    \leq 2^n$ and each pair of sequences $x^{(t)}_i, u^{(t)}_i$ corresponds to a
    reordering of some subsequence of the $x_i, u^\b_i$.
\end{lem}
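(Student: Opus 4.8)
The plan is to prove the statement by induction on $n$, the length of the sequence of differential substitutions, leveraging the syntactic unpacking of differential substitution that Theorem~\ref{thm:taylor} makes available. For the base case $n = 0$ there is nothing to prove: the term $\D(s^\b) \cdot t^\b$ is already of the required form with $l = 1$, $z = 0$, $p = 0$, $q_1 = 0$, $v = t^\b$ and $w_1 = e^\b$ — wait, one must be careful here: when $n=0$ the term is literally $\D(s^\b)\cdot t^\b$ so $v$ plays the role of $s^\b$ and we set $p=0$; I would rephrase to take $v = $ the (empty-substitution) iterate of $s^\b$ and $w_1 = t^\b$ with $q_1 = 0$, and note $l = 1 \leq 2^0 = 1$. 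For the inductive step, suppose the result holds for sequences of length $n$, and consider one more substitution $\partial(-)/\partial x_{n+1}(u^\b_{n+1})$ applied to a term $T$ which, by the inductive hypothesis, is differentially equivalent to a sum of terms of the advertised shape $\epsilon^z \D^l(v) \cdot (w_1, \ldots, w_l)$.

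The key computation is to push $\partial(-)/\partial x_{n+1}(u^\b_{n+1})$ through such a summand. By Proposition~\ref{prop:differential-substitution-differential} differential substitution respects $\diffEq$, so it suffices to apply it termwise to the sum; and since differential substitution distributes over $+$, over $\epsilon(-)$, and commutes with $\D(-)\cdot(-)$ up to the explicit formula in Figure~\ref{fig:differential-substitution}, the result on a single summand $\epsilon^z \D^l(v)\cdot(w_1,\ldots,w_l)$ is a finite sum. Concretely, iterating the rule $\diffS{(\D(t)\cdot e)}{x}{s} = \D(t)\cdot(\diffS{e}{x}{s}) + \D(\diffS{t}{x}{s})\cdot(\subst{e}{x}{x+\epsilon s}) + \epsilon \D(\D(t)\cdot e)\cdot(\diffS{e}{x}{s})$ through $l$ nested arguments produces a sum of at most $2 \cdot 2^l$-ish terms, each of which is again of the form $\epsilon^{z'} \D^{l'}(v')\cdot(w'_1,\ldots,w'_{l'})$ with $l' \leq 2^{n+1}$: differentiating $v$ bumps the primal-argument count $p$ by one (the new $x^{(t)}, u^{(t)}$ entries); differentiating one of the $w_j$'s bumps that $q_j$; and the $\epsilon \D(\D(\cdots)\cdot\cdots)\cdot(\cdots)$ term raises the $\D$-arity by one while raising $z$ by one. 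The substitution $\subst{(-)}{x_{n+1}}{x_{n+1}+\epsilon u^\b_{n+1}}$ that appears in the formula should be eliminated by invoking Theorem~\ref{thm:taylor}, which replaces it by $(-) + \epsilon(\partial(-)/\partial x_{n+1}(u^\b_{n+1}))$, so it only contributes more summands of the same admissible shape (with $z$ raised by one in the second batch), together with Lemma~\ref{lem:standard-differential-substitution} to keep the inner substitutions in the form of iterated differential substitutions acting on $t^\b$ or $e^\b$. One also needs the bookkeeping observation that the sequences $x^{(t)}_i, u^{(t)}_i$ (resp.\ $x^{(w_j)}_i, u^{(w_j)}_i$) remain reorderings of subsequences of the original $x_i, u^\b_i$: this is immediate since each substitution step either prepends $x_{n+1}, u^\b_{n+1}$ to one of the existing sequences or leaves the summand untouched, and $x_{n+1}$ is not free in any $u^\b_i$ by hypothesis (so Lemma~\ref{lem:standard-differential-substitution} applies and no capture occurs).

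The arithmetic of the bound $l \leq 2^n$ is the part requiring the most care: each summand at stage $n$ has $\D$-arity $l \leq 2^n$, and the worst case when differentiating is the last clause of the $\D(-)\cdot(-)$ rule, which turns a $\D^l$ into a $\D^{l+1}$ — but applied to a nesting of $l$ arguments one gets, in the very worst case, arity roughly doubling, so that $l' \leq 2 l \leq 2^{n+1}$; I would make this precise by tracking, for each summand, the multiset of "differentiation events" and showing it has size at most $2^{n}$, using that the three-term expansion replaces one argument slot by at most two argument slots plus one new head-differentiation. I expect this counting argument — verifying that the $\D$-arity genuinely stays within $2^n$ rather than growing faster — to be the main obstacle, since the recursion branches in a way that is easy to overestimate; the rest is a routine, if tedious, structural induction that mirrors the analogous (simpler) statement for the differential $\lambda$-calculus, now decorated with $\epsilon$-powers. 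A fully mechanised version, if desired, would follow the pattern of the Coq development referenced for the neighbouring lemmas.
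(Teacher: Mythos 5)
Your proposal is correct and follows essentially the same route as the paper, whose entire proof reads ``Straightforward induction on $k$ and applying Theorem~\ref{thm:taylor}'': you perform the same induction on the number of substitutions and use the syntactic Taylor formula to eliminate the $x \mapsto x + \epsilon u$ substitutions arising from the differential-substitution clauses. Your additional care over the $2^n$ bookkeeping is a reasonable elaboration of what the paper leaves implicit, not a divergence in method.
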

\begin{proof}
  Straightforward induction on $k$ and applying Theorem~\ref{thm:taylor}.
\end{proof}

\begin{lem}
  \label{lem:differential-substitution-app}
  For any basic term $s^\b$ and additive term $t^*$, variables $x_1, \ldots, x_n$
  and basic terms $u^\b_1, \ldots, u^\b_n$ such that none of the $x_i$ appear
  free in the $u^\b_i$, the differential substitution
  \[ \frac{\partial^k
     (s^\b\ t^*)}{\partial (x_1, \ldots, x_n)}
     (u^\b_1, \ldots, u^\b_n)
  \]
  is differentially equivalent to a sum of terms of the form
  \[ 
    \epsilon^z \D^l (v) \cdot (w_1, \ldots, w_l)
  \]
  where $v$ is of the form \[
    \frac{\partial^p t^\b}{\partial (x^{(t)}_1, \ldots,
    x^{(t)}_p)}(u^{(t)}_1, \ldots, u^{(t)}_p) 
  \] and every $w_j$ is of the form
    \[\frac{\partial^{q_j} e^\b}{\partial (x^{(w_j)}_1, \ldots,
    x^{(w_j)}_{q_j})}(u^{(w_j)}_1, \ldots, u^{(w_j)}_{q_j}) \] where $1 \leq l
    \leq 2^n$ and each pair of sequences $x^{(t)}_i, u^{(t)}_i$ corresponds to a
    reordering of some subsequence of the $x_i, u^\b_i$.
\end{lem}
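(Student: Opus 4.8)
The plan is to prove the claim by induction on $k$, proceeding in tandem with Lemma~\ref{lem:differential-substitution-dapp}: the point is that a single differential substitution turns an ordinary head application into a combination of ordinary and differential applications, so that from the second step onwards the analysis is essentially the purely-differential one already carried out in that lemma.

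First I would record the one-step unfolding. By the clause of Figure~\ref{fig:differential-substitution} for $\diffS{(t\ e)}{x}{s}$, by Theorem~\ref{thm:taylor} applied to rewrite $\subst{t^*}{x_1}{(x_1 + \epsilon u^\b_1)}$ as $t^* + \epsilon\bigl(\diffS{t^*}{x_1}{u^\b_1}\bigr)$, and by the rule $s\ (t + \epsilon e) \diffEq (s\ t) + \epsilon((\D(s)\cdot e)\ t)$ of Figure~\ref{fig:differential-equivalence}, one obtains
\[
  \diffS{(s^\b\ t^*)}{x_1}{u^\b_1}
  \;\diffEq\; \bigl(\D(s^\b) \cdot \diffS{t^*}{x_1}{u^\b_1}\bigr)\ t^*
  \;+\; \bigl(\diffS{s^\b}{x_1}{u^\b_1}\bigr)\ t^*
  \;+\; \epsilon\bigl(\bigl(\D(\diffS{s^\b}{x_1}{u^\b_1}) \cdot \diffS{t^*}{x_1}{u^\b_1}\bigr)\ t^*\bigr).
\]
A side induction on the structure of the basic term $s^\b$ then disposes of the middle summand $\bigl(\diffS{s^\b}{x_1}{u^\b_1}\bigr)\ t^*$: when $s^\b$ is a variable or an abstraction, $\diffS{s^\b}{x_1}{u^\b_1}$ is (a $\diffEq$-sum of) iterated differential substitutions of basic subterms and the summand already has the required shape; when $s^\b$ is itself an application one recurses; and when $s^\b$ is a differential application one invokes Lemma~\ref{lem:differential-substitution-dapp}. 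Pushing $+$, $0$ and $\epsilon$ outwards with the equations of Figure~\ref{fig:differential-equivalence}, this exhibits $\diffS{(s^\b\ t^*)}{x_1}{u^\b_1}$ as a sum of terms $\epsilon^z \D^l(v)\cdot(w_1,\ldots,w_l)$ of the claimed form with $l \le 2$; this settles the base case.

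For the inductive step I would differentially substitute $u^\b_{k+1}$ for $x_{k+1}$ in a generic summand $\epsilon^z \D^l(v)\cdot(w_1,\ldots,w_l)$. Using the definitional clauses of Figure~\ref{fig:differential-substitution} for $\epsilon$, $+$ and differential application, Theorem~\ref{thm:taylor} to clear the resulting patterns $\subst{r}{x_{k+1}}{(x_{k+1}+\epsilon u^\b_{k+1})}$, and Proposition~\ref{prop:differential-substitution-differential} to transport the equivalences already established, each summand unfolds into at most twice as many summands of the same shape: the new substitution is either absorbed into $v$, absorbed into one of the $w_j$, or it spawns a fresh argument $w_{l+1}$ together with one more $\D$-layer --- this last alternative being exactly what doubles the count and produces the bound $l \le 2^n$. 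A final appeal to Proposition~\ref{prop:differential-substitution-commute} then normalises the accumulated substitution sequences so that each pair $x^{(t)}_i,u^{(t)}_i$ and each $x^{(w_j)}_i,u^{(w_j)}_i$ is literally a reordering of a subsequence of the original $x_i,u^\b_i$.

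The hard part will not be any individual rewriting step but the combinatorial bookkeeping: tracking which sub-sequence of the $x_i,u^\b_i$ ends up attached to $v$ and to each $w_j$ across the $n$-fold binary branching, and checking that the exponent $z$ and the arity $l$ stay within the stated bounds. This is precisely the combinatorics already handled --- in the cleaner, purely-differential setting --- in the proof of Lemma~\ref{lem:differential-substitution-dapp}, which the present, ordinary-application case essentially calls upon once its first substitution has turned the head application into differential applications.
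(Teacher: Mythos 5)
Your overall strategy --- induction on the number of substitutions, using the application clause of Figure~\ref{fig:differential-substitution}, Theorem~\ref{thm:taylor} to clear the pattern $\subst{t^*}{x_1}{x_1+\epsilon u^\b_1}$, and the rule $s\ (t+\epsilon e)\diffEq (s\ t)+\epsilon((\D(s)\cdot e)\ t)$ --- is exactly what the paper intends: it offers no proof of this lemma beyond the one-line ``straightforward induction on $k$'' given for Lemma~\ref{lem:differential-substitution-dapp}, and your one-step unfolding reproduces the expansion of $\diffS{(t\ e)}{x}{u}$ displayed in the text just before those two lemmas. Two points need attention, however. First, the ``side induction on the structure of $s^\b$'' is unnecessary and in fact counterproductive: the middle summand $\pa{\diffS{s^\b}{x_1}{u^\b_1}}\ t^*$ is already in the required shape, with $l=0$ and $v=\diffS{s^\b}{x_1}{u^\b_1}$, because the lemma only asks that the substitutions be pushed down to the \emph{immediate} subterms of the application: $v$ must remain an iterated differential substitution of $s^\b$ itself, which is precisely what the strong-normalisation theorem relies on when it applies its induction hypothesis to the sub-derivation typing $s^\b$. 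Unfolding $\diffS{s^\b}{x_1}{u^\b_1}$ further into subterms of $s^\b$ would destroy that invariant. Second, every summand produced by the unfolding carries a trailing application to $t^*$, i.e.\ has the shape $\epsilon^z\pa{\D^l(v)\cdot(w_1,\ldots,w_l)}\ t^*$ rather than the bare $\epsilon^z\,\D^l(v)\cdot(w_1,\ldots,w_l)$ of the statement (the statement elides this, but the strong-normalisation proof uses the version with the trailing $t^*$ made explicit); consequently your inductive step must pass through the \emph{application} clause again at each stage --- regenerating the three-summand pattern with the argument position held fixed at $t^*$ --- and only then hand the head $\D^l(v)\cdot(w_1,\ldots,w_l)$ over to the analysis of Lemma~\ref{lem:differential-substitution-dapp}. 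With those two corrections the bookkeeping you describe (each step at most doubling the arity, whence $l\le 2^n$, and Proposition~\ref{prop:differential-substitution-commute} to reorder the accumulated substitution sequences) goes through.
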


The above results may seem overly weak and arcane, but at its core they make a
very simple statement: if one applies any number of differential substitutions
to the term $\D(s) \cdot t$ (or $s\ t$) and ``cranks the lever'', pushing the
substitutions as far down the term as possible, then all the differential
substitutions in the resulting term are applied to either $s$ or $t$, and their
arguments are a reordering of some subsequence of the arguments to the initial
differential substitution.

\newcommand{\ddd}[3]{\frac{\partial^2 #1}{\partial {#2}^2}(#3)}
\begin{thm}
  \label{thm:differential-substitution-regular}
  Differential substitution is regular, that is, for any unrestricted terms $s,
  u, v$ where $x$ does not appear free in either $u$ or $v$, we have:
  \begin{align*}
    \diffS{s}{x}{0} &\diffEq 0\\
    \diffS{s}{x}{u + v} &\diffEq \diffS{s}{x}{u} 
    + \subst{\pa{\diffS{s}{x}{v}}}{x}{x + \epsilon u}
  \end{align*}
\end{thm}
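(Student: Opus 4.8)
\emph{Proof proposal.} The plan is to prove both identities by structural induction on the unrestricted term $s$ (keeping $u$, $v$ and the hypothesis that $x$ is free in neither fixed throughout). The first identity $\diffS{s}{x}{0} \diffEq 0$ is the easy one: for $s = x$ it holds by definition, for $s = y \neq x$ and $s = 0$ both sides are literally $0$, and for every compound construct the claim follows from the induction hypotheses together with the ``absorbing'' $\diffEq^1$-equations $\lambda x.0 \diffEq 0$, $\epsilon 0 \diffEq 0$, $0\ s \diffEq 0$, $\D(0)\cdot e \diffEq 0$ and $\D(s)\cdot 0 \diffEq 0$ from Figure~\ref{fig:differential-equivalence} --- so that, e.g., in the case $\diffS{(s'\ s'')}{x}{0}$ both summands vanish once the induction hypotheses collapse $\diffS{s'}{x}{0}$ and $\diffS{s''}{x}{0}$.

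For the second identity I again induct on $s$. The base cases are immediate ($s = x$: both sides reduce to $u + v$, using $\subst{v}{x}{x+\epsilon u} = v$ since $x \notin \text{FV}(v)$; $s = y \neq x$ and $s = 0$: both sides are $0$, using the first identity). The ``additive'' constructs $s = s' + s''$, $s = \epsilon s'$ and $s = \lambda y.s'$ inherit the identity directly from the induction hypotheses, since $\diffS{-}{x}{-}$, capture-avoiding substitution, $+$ and $\epsilon$ all commute past these constructs, modulo the commutative-monoid equations for $+$. The genuine content lies in the two application cases.

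For $s = s'\ s''$: I would unfold $\diffS{(s'\ s'')}{x}{u+v}$ by definition, apply the induction hypothesis to $\diffS{s'}{x}{u+v}$ and $\diffS{s''}{x}{u+v}$, and rewrite $\subst{s''}{x}{x+\epsilon(u+v)}$ using $\epsilon(u+v) \diffEq \epsilon u + \epsilon v$ and Theorem~\ref{thm:taylor} with base point $x + \epsilon u$, obtaining $\subst{s''}{x}{x+\epsilon u} + \epsilon\,\subst{(\diffS{s''}{x}{v})}{x}{x+\epsilon u}$. Then I push all sums and all $\epsilon$'s to the outside using the $\diffEq^1$-equations --- linearity of application and of $\D$ in sums, the $\epsilon$-pull-out laws, the regularity law for $\D(s)\cdot(t+e)$, and the derivative condition $s\ (t+\epsilon e) \diffEq s\ t + \epsilon((\D(s)\cdot e)\ t)$. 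Independently I expand the target $\diffS{(s'\ s'')}{x}{u} + \subst{(\diffS{(s'\ s'')}{x}{v})}{x}{x+\epsilon u}$ the same way, using the substitution-composition identity $\subst{(\subst{s''}{x}{x+\epsilon v})}{x}{x+\epsilon u} = \subst{s''}{x}{(x+\epsilon u)+\epsilon v}$ followed by Theorem~\ref{thm:taylor}, and the base-point-$x$ Taylor expansions $\subst{s'}{x}{x+\epsilon u} \diffEq s' + \epsilon\,\diffS{s'}{x}{u}$ (similarly for $s''$). The two fully distributed sides then agree term by term once one invokes the commutativity of second derivatives $\D(\D(s)\cdot t)\cdot e \diffEq \D(\D(s)\cdot e)\cdot t$ to reconcile the mixed second-order summands, and the residual $\epsilon^2$-terms produced by the Taylor expansions match on the nose. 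The case $s = \D(s')\cdot s''$ runs on exactly the same rails but is heavier, since $\diffS{(\D(t)\cdot e)}{x}{s}$ carries the extra summand $\epsilon\,\D(\D(t)\cdot e)\cdot(\diffS{e}{x}{s})$; here one may additionally need the infinitesimal-duplication law $\epsilon^2\,\D(\D(s)\cdot t)\cdot e \diffEq \epsilon\,\D(\D(s)\cdot t)\cdot e$ to absorb surplus $\epsilon$-powers.

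The main obstacle is purely the bookkeeping in these two cases: each side fans out into a sum with $\epsilon$-powers up to $\epsilon^2$, and the match only closes after repeatedly commuting second derivatives and tracking exactly where each Taylor remainder lands. It is worth noting why Theorem~\ref{thm:taylor} alone will not shortcut this: pairing the two Taylor expansions of $\subst{s}{x}{t + \epsilon(u+v)}$ --- directly, and via $(t+\epsilon u)+\epsilon v$ --- yields only $\epsilon\,\subst{(\diffS{s}{x}{u+v})}{x}{t} \diffEq \epsilon\,(\subst{(\diffS{s}{x}{u})}{x}{t} + \subst{(\diffS{s}{x}{v})}{x}{t+\epsilon u})$, and the outer $\epsilon$ cannot be cancelled in $\diffEq$ (for instance $\epsilon\,\D(\D(a)\cdot b)\cdot c \diffEq \epsilon^2\,\D(\D(a)\cdot b)\cdot c$ while the un-$\epsilon$'d terms differ). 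So grinding through the application cases by hand seems unavoidable; a machine-checked version in the style of the other mechanised lemmas would be the safe way to discharge it.
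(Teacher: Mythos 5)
Your proposal takes essentially the same route as the paper: the paper's own proof is exactly a structural induction on $s$ whose application cases reduce to lengthy equational bookkeeping, and it defers those calculations to the accompanying mechanised (Coq) development rather than writing them out. Your identification of the necessary ingredients --- Theorem~\ref{thm:taylor}, the regularity and $\epsilon$-pull-out rules of Figure~\ref{fig:differential-equivalence}, commutativity of second derivatives, and the infinitesimal-duplication law --- is consistent with that, and your closing suggestion that a machine-checked proof is the safe way to discharge the two application cases is precisely what the authors do.
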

\begin{proof}
  Both properties follow by induction on $s$. As the proof involves immense
  amounts of tedious calculations, we refer the reader to
  \cite[\coqe{Theorem Regularity}]{alvarez2020change}.
\end{proof}

\subsection{The Operational Semantics of \texorpdfstring{$\lambda_\epsilon$}{Lambda-Epsilon}}

With the substitution operations we have introduced so far, we can now
proceed to give a small-step operational semantics as a reduction system.

\begin{defi}
  The \textbf{one-step reduction relation}  $\reducesTo{} \subseteq \Lambda_\epsilon
  \times \Lambda_\epsilon$ is the least contextual relation satisfying the
  reduction rules in Figure~\ref{fig:reduction} below.
  \begin{figure}[ht]
    \[\def\arraystretch{1.3}
    \begin{array}{rcl}
      (\lambda x . t)\ s &\reducesTo_\beta& \subst{t}{x}{s}\\
      \D(\lambda x . t) \cdot s &\reducesTo_\partial& \lambda x . \pa{\diffS{t}{x}{s}}\\
    \end{array}
    \]
    \caption{One-step reduction rules for $\lambda_\epsilon$}
    \label{fig:reduction}
  \end{figure}

  We write $\reducesTo^+$ to denote the transitive closure of $\reducesTo$,
  and $\reducesTo^*$ to denote its transitive, reflexive closure.
\end{defi}

While the one-step reduction rules for $\lambda_\epsilon$ may seem identical to
those in the differential $\lambda$-calculus (see \cite{ehrhard2003differential}), 
they are in fact not equivalent, as our notion of differential substitution diverges substantially.

The above one-step reduction is defined as a relation from unrestricted terms to
unrestricted terms, but it is not compatible with differential equivalence. That
is to say, there may be differentially equivalent terms $t \diffEq t'$ such that
$t'$ can be reduced but $t$ cannot. For example, consider the term $(\lambda x
. x + 0)\ 0$, which contains no $\beta$-redexes that can be reduced. This term
is, however, equivalent to $(\lambda x . x)\ 0$, which clearly reduces to $0$. 

We could lift the one-step reduction relation to well-formed terms by setting
$\ul{t} \reducesTo \ul{t'}$ whenever there exist $s, s'$ such that $t \diffEq
s, t' \diffEq s'$ and $s \reducesTo s'$. This is not very satisfactory,
however, as it would make one-step reduction undecidable. Indeed, in order to
check whether $\ul{s} \reducesTo \ul{s'}$ it would be necessary to check
whether $s \reducesTo s'$ for all their (infinitely many) representatives $s,
s'$!

Another problem with this definition lies in the fact that the term $\ul{0}$
(ostensibly a value which should not reduce) can also be written as $\ul{0\ t}$
for any term $t$. Whenever $t$ reduces to $t'$ in one step, then according to
the previous definition so does $\ul{0\ t}$ reduce to $\ul{0\ t'}$, which is
equivalent to $\ul{0}$. Hence zero reduces to itself, rather than being a normal
form!

Fortunately the canonical form of a term $t$ gives us a representative of
$\ul{t}$ which is ``maximally reducible'', that is to say, whenever any
representative of $\ul{t}$ can be reduced, then so can $\can{t}$, possibly in
zero steps.

\begin{thm}
  \label{thm:reduction-canonical}
  Reduction is compatible with canonicalization. That is to say, if $s
  \reducesTo s'$, then $\can{s} \reducesTo^* s''$ for some $s'' \diffEq s'$.
\end{thm}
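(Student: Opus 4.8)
The plan is to argue by induction on the derivation of $s \reducesTo s'$, which is the same as an induction on the structure of $s$ together with a distinguished redex-occurrence. The structural fact we lean on throughout is that every redex occurring inside a canonical term sits inside one of its basic components $t^\b$ (a canonical term being a sum of terms $\epsilon^{k_i} t^\b_i$), so any reduction performed on $\can{s}$ happens inside some $t^\b_i$; moreover, by Lemma~\ref{lem:can-contextual}, $\can{s}$ is assembled from the canonical forms of the immediate subterms of $s$ by the ``administrative'' operators $\cansum$, $\epsilon^*$, $\lambda^*$, $\D^*$, $\reg{-}$, $\textbf{ap}$, $\textbf{pri}$, $\textbf{tan}$, and reductions of basic subterms commute with all of these.

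For the contextual cases — $s = C[t]$ with $t \reducesTo t'$ and $s' = C[t']$, where $C$ is one of the eight obvious one-hole contexts ($\lambda x$-abstraction, $\epsilon$-prefix, either side of an application, either side of a differential application, either side of a sum) — the induction hypothesis supplies $\can{t} \reducesTo^* T$ with $T \diffEq t'$. These reductions act inside the basic components of $\can{t}$, and each such component occurs (possibly several times, e.g.\ inside a regularization) as a subterm of $\can{s}$, so we may replay exactly the same reductions on every occurrence, obtaining $\can{s} \reducesTo^* W$ for a suitable $W$. That $W \diffEq s'$ is then purely equational: one combines the ``pointwise'' rules of $\diffEq$ (such as $\lambda x.(p+q) \diffEq \lambda x.p + \lambda x.q$ and $(\epsilon p)\ q \diffEq \epsilon(p\ q)$), the rule $p\ (q + \epsilon r) \diffEq (p\ q) + \epsilon((\D(p)\cdot r)\ q)$ needed for the application contexts, Lemma~\ref{lem:utils-differential} (to trade $\epsilon^*, \cansum, \D^*, \textbf{ap}$ for $\epsilon, +, \D$ and ordinary application), and Lemma~\ref{lem:reg-differential}, used parametrically in the basic subterm, to trade $\reg{-}$ for $\D(-)\cdot(-)$; the one delicate point is the special clause of $\epsilon^*$ for iterated derivatives, whose harmlessness rests on the rule $\epsilon^2 \D(\D(p)\cdot q)\cdot r \diffEq \epsilon \D(\D(p)\cdot q)\cdot r$.

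For the base cases, take $s = (\lambda x.t)\ u$ and $s' = \subst{t}{x}{u}$. Writing $\can{t} = \sum_i \epsilon^{k_i} t^\b_i$ and $U = \can{u}$, the definition unfolds $\can{s}$ into $\brak{\sum_i \epsilon^{k_i}\pa{(\lambda x.t^\b_i)\ \textbf{pri}(U)}} \cansum \brak{\epsilon^* \sum_i \textbf{ap}\pa{\reg{\lambda x.t^\b_i, \textbf{tan}(U)}, \textbf{pri}(U)}}$. Each $(\lambda x.t^\b_i)\ \textbf{pri}(U)$ is a $\beta$-redex, and each $\reg{\lambda x.t^\b_i, \textbf{tan}(U)}$ is, by construction, a sum of iterated differential applications $\D^l(\lambda x.t^\b_i)\cdot(\ldots)$ whose innermost head is a $\partial$-redex; reducing all of these from the inside out transforms $\can{s}$, in finitely many steps, into a term which — invoking Lemma~\ref{lem:utils-differential}, Lemma~\ref{lem:reg-differential}, the syntactic Taylor expansion of Theorem~\ref{thm:taylor} (to recombine the $\textbf{pri}$ and $\textbf{tan}$ parts of $U$), the interaction Lemmas~\ref{lem:standard-differential-substitution} and~\ref{lem:differential-standard-substitution}, and regularity of differential substitution (Theorem~\ref{thm:differential-substitution-regular}) — one checks is differentially equivalent to $\subst{\can{t}}{x}{U}$, hence to $\subst{t}{x}{u} = s'$ by Proposition~\ref{prop:substitution-differential}. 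The case $s = \D(\lambda x.t)\cdot u$, $s' = \lambda x.\diffS{t}{x}{u}$ is analogous, the role played by Taylor's formula being taken over by the agreement between the recursive definition of $\reg{\lambda x.t^\b_i, U}$ and the regularity law for differential substitution.

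I expect the base case to be the main obstacle: one must verify that the nest of reduced iterated differential applications produced by $\reg{\lambda x.t^\b_i, \textbf{tan}(U)}$, once pushed through the outer application to $\textbf{pri}(U)$, reassembles up to $\diffEq$ into precisely the differential substitution appearing in the Taylor expansion of $\subst{t^\b_i}{x}{U}$ — this is where regularity of differential substitution and the substitution-interaction lemmas carry the argument and where the bookkeeping is heaviest. A secondary, more technical nuisance is making sure that in the contextual cases the replayed reductions lift cleanly through $\cansum$ and $\epsilon^*$ — whose outputs need no longer be canonical once an inner redex has fired — which is exactly why the statement is framed with $\reducesTo^*$ and $\diffEq$ rather than with strict equalities.
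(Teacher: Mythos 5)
Your proposal is correct and follows essentially the same route as the paper: induction on the reduction derivation, contextual cases handled by replaying the inner reductions through the canonicalization operators ($\cansum$, $\epsilon^*$, $\reg{-}$, $\textbf{ap}$, $\textbf{pri}$, $\textbf{tan}$) and then discharging the equivalence via Lemmas~\ref{lem:utils-differential} and~\ref{lem:reg-differential}, and root cases handled by unfolding $\can{s}$ and invoking Taylor's formula together with regularity of differential substitution. The only organizational difference is that the paper strengthens the induction measure (depth of the redex plus the number of non-canonical $\beta$-redexes) so that its $\beta$-case can invoke the induction hypothesis on the differential application $\D(\lambda x.s^\b)\cdot \textbf{tan}(U)$, whereas you inline that computation by reducing the nest of $\partial$-redexes inside the regularization directly, which then requires its own inner induction on the summands of $\textbf{tan}(U)$ --- exactly the induction the paper performs in its $\partial$-case.
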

\begin{proof}
  We prove the theorem by induction on the depth at which reduction happens and
  the number of non-canonical $\beta$-redexes in the term (that is, the number
  of redexes of the form $(\lambda x . s)\ (t + \epsilon u)$). The
  most important cases are the ones where it happens at the outermost level, but
  we explicitly show some of the other cases. Before we do so, however, we state
  the following auxiliary properties:

  \begin{lem}\label{lem:canonical-lambda}
    Whenever $T \diffEq \lambda x . t$ where $T$ is a canonical term and $t$ is
    an unrestricted term, then $T$ is of the form $\sum_{i=1}^n \epsilon^{k_i}
    (\lambda x. t^\b_i)$, and additionally $t \diffEq \sum_{i=1}^n \epsilon^{k_i}
    t^\b_i$.
  \end{lem}
  \begin{lem}\label{lem:reduction-canonical-additive}
    Whenever $s + t \reducesTo^* e$ then $e = s' + t'$ with $s \reducesTo^* s'$
    and $t \reducesTo^* t'$. Whenever $\varepsilon s \reducesTo^* e$ then $e =
    \epsilon s'$ with $s \reducesTo^* s'$. In particular, whenever $\can{t}
    \reducesTo^* t'$ then $t' = \sum_{i=1}^n \epsilon^{k_i} t'_i$, where
    $\can{t} = \sum_{i=1}^n \epsilon^{k_i} t^\b_i$ and $t^\b_i \reducesTo^*
    t'_i$. Note that the $t'_i$ may not be basic terms and thus $t'$ may not be
    canonical.
  \end{lem}
  \begin{lem}\label{lem:reduction-cansum}
    Whenever $s \reducesTo^* s'$ and $t \reducesTo^* t'$, then $s \cansum t
    \reducesTo^* s' \cansum t'$.
  \end{lem}
  \begin{lem}\label{lem:reduction-reg}
    Whenever $s \reducesTo^* s'$ and $\can{t} \reducesTo^* t'$, then $\reg{s,
    \can{t}} \reducesTo^* \reg{s', t'}$.
  \end{lem}
  
  We proceed now to prove one of the cases where reduction happens in a subterm
  of $s$, which illustrates the ideas for the other cases:
  \begin{itemize}
    \item Let $s = \D(t) \cdot u$ and $s' = \D(t') \cdot u$, with $t
    \reducesTo t'$. By Lemma~\ref{lem:can-contextual}, we can write $\can{s}$
    as $\can{\D(\can{t}) \cdot \can{u}}$. Let $\can{t} = \sum_{i=1}^{n}
    \epsilon^{k_i} t^\b_i$. By induction and
    Lemma~\ref{lem:reduction-canonical-additive}, we have $t^\b_i \reducesTo^*
    t''_i$ and $t' \diffEq \sum_{i=1}^{n} \epsilon^{k_i} t''_i$. Applying the
    previous auxiliary lemmas, we obtain:
    \begin{align*}
      \can{s} 
      &= \can{\D(\can{t}) \cdot u}\\
      &= \cansum_{i = 1}^n (\epsilon^*)^{k_i}\reg{t^\b_i, \can{u}}\\
      &\reducesTo^* \cansum_{i = 1}^n (\epsilon^*)^{k_i}\reg{t''_i, \can{u}}\\
      &\diffEq \D\pa{\sum_{i=1}^n \epsilon^{k_i} t''_i} \cdot u\\
      &\diffEq \D(t') \cdot u
    \end{align*}
    \item Let $s = (t\ e)$ and $s' = (t\ e')$, with $e \reducesTo e'$. The
    result follows from the previous auxiliary lemmas and the fact that the
    primal $\textbf{pri}$ and tangent $\textbf{tan}$ components commute with
    reduction.
    \item Every other case is either immediate or follows from similar arguments.
  \end{itemize}
  
  The more involved cases are those when reduction happens at the outermost level of
  $s$. For brevity we will focus on the non-trivial cases where the underlying
  $\lambda$-abstraction involves only basic terms, as the more general cases
  follow by unfolding the $\lambda$-abstraction into a canonical sum and
  applying the primitive cases below to each summand separately.
  \begin{itemize}
    \item Let $s = \D(\lambda x . s^\b) \cdot t$ and $s' = \lambda x .
    \diffS{s^\b}{x}{t}$, and write $T$ for $\can{t}$. The proof proceeds then by
    induction on the number of summands of $T$.
    \begin{itemize}
      \item When $T = 0$ we have $\can{s} = 0$. On the other hand:
      \[s' = \lambda x . \diffS{s^\b}{x}{t} \diffEq \lambda x .
      \diffS{s^\b}{x}{0} \diffEq 0\]
      \item When $T = \epsilon^k t^\b + T'$, we apply the induction hypothesis
      and Lemma~\ref{lem:canonical-lambda} to obtain 
      \[ 
        \can{\D(\lambda x . s^\b) \cdot T'} 
        \reducesTo^*
        \sum_{i=1}^n\epsilon^{k_i}(\lambda x . w_i)
        \diffEq \lambda x . \diffS{s^\b}{x}{T'}
      \]
      Then the canonical form $\can{s}$ reduces as follows:
      \begin{align*}
        &\can{s}\\
        &= 
        (\epsilon^*)^k\D(\lambda x . s^\b) \cdot t^\b\\
        &\quad + \brak{\reg{\D(\lambda x . s^\b) \cdot T'}
        \cansum {\epsilon^* \D^*\pa{\reg{\D(\lambda x . s^\b) \cdot T'}} \cdot t^\b}}
        \displaybreak[0]\\
        &=
        (\epsilon^*)^k\D(\lambda x . s^\b) \cdot t^\b \\
        &\quad + \brak{\can{\D(\lambda x . s^\b) \cdot T'}
        \cansum {\epsilon^* \D^*\pa{\can{\D(\lambda x . s^\b) \cdot T'}} \cdot t^\b}}
        \displaybreak[0]\\
        &\reducesTo^*
        (\epsilon^*)^k \pa{\lambda x . \diffS{s^\b}{x}{t^\b}}\\
        &\quad + \brak{
        \pa{\sum_{i=1}^n \epsilon^{k_i} (\lambda x . w_i)}
        \cansum \epsilon^* \D^* \pa{
          \sum_{i=1}^n \epsilon^{k_i} (\lambda x . w_i)
        }\cdot t^\b}
        \displaybreak[0]\\
        &=
        (\epsilon^*)^k \pa{\lambda x . \diffS{s^\b}{x}{t^\b}}\\
        &\quad + \brak{
        \pa{\sum_{i=1}^n \epsilon^{k_i} (\lambda x . w_i)}
        \cansum \epsilon^* \pa{
          \sum_{i=1}^n \epsilon^{k_i} \D(\lambda x . w_i) \cdot t^\b
        }}
        \displaybreak[0]\\
        &\reducesTo^*
        (\epsilon^*)^k \pa{\lambda x . \diffS{s^\b}{x}{t^\b}}\\
        &\quad + \brak{
        \pa{\sum_{i=1}^n \epsilon^{k_i} (\lambda x . w_i)}
        \cansum \epsilon^* \pa{
          \sum_{i=1}^n \epsilon^{k_i} \lambda x . \diffS{w_i}{x}{t^\b}
        }}
        \displaybreak[0]\\
        &\diffEq \lambda x . \pa{\diffS{s^\b}{x}{\epsilon^k t^\b}}
        + 
        \lambda x . \pa{\diffS{s^\b}{x}{T'}}
        + \epsilon \pa{
          \lambda x . \diffS{\pa{\sum_{i=1}^n \epsilon^{k_i} w_i}}{x}{t^\b}
        }
        \displaybreak[0]\\
        &\diffEq \lambda x . \pa{\diffS{s^\b}{x}{\epsilon^k t^\b}}
        + 
        \lambda x . \pa{\diffS{s^\b}{x}{T'}}
        + \epsilon \pa{
          \lambda x . \diffS{\diffS{s^\b}{x}{T'}}{x}{t^\b}
        }
        \displaybreak[0]\\
        &\diffEq \lambda x . \pa{
          \diffS{s^\b}{x}{\epsilon^k t^\b}
          + \subst{\pa{\diffS{s^\b}{x}{T'}}}{x}{x + \epsilon t^\b}
        }
      \end{align*}
      By Theorem~\ref{thm:differential-substitution-regular}, this last term is
      precisely the term $\lambda x . \diffS{s^\b}{x}{\epsilon^k t^\b + T'}$,
      which is equivalent to $s'$ and thus the proof is concluded.
    \end{itemize}
    \item Let $s = (\lambda x . s^\b)\ t$ and $s' = \subst{s^\b}{x}{t}$, and
    suppose $\can{t} = t^* + \epsilon^* T$ (that is, $\textbf{pri}(\can{t}) = t^*$
    and $\textbf{tan}(\can{t}) = T$, with $t^*$ additive and $T$ canonical).
    \begin{align*}
      \can{s}
      &= (\lambda x . s^\b)\ t^* \cansum 
      \epsilon^*\textbf{ap}\pa{\reg{\lambda x . s^\b, T}, t^*}
    \end{align*}
    Since the term $(\lambda x . s^\b) \cdot T$ contains one less non-canonical
    $\beta$-redex than the term $s$, we apply our induction hypothesis to obtain
    \[
      \reg{\lambda x . s^\b, T} =
      \can{(\lambda x . s^\b) \cdot T }
      \reducesTo^* \sum_{i=1}^n \epsilon^{k_i} (\lambda x . w_i)
      \diffEq \lambda x . \diffS{s^\b}{x}{T}
    \]
    and therefore $\diffS{s^\b}{x}{T} \diffEq \sum_{i=1}^n \epsilon^{k_i} w_i$. 
    With this in mind we continue to reduce the previous equation:
    \begin{align*}
      &(\lambda x . s^\b)\ t^* \cansum 
      \epsilon^*\textbf{ap}\pa{\reg{\lambda x . s^\b, T}, t^*}
      \displaybreak[0]\\
      &\quad \reducesTo^*
      \subst{s^\b}{x}{t^*} 
      \cansum
      \epsilon^*\textbf{ap}\pa{\sum_{i=1}^n \epsilon^{k_i} (\lambda x . w_i), t^*}
      \displaybreak[0]\\
      &\quad = \subst{s^\b}{x}{t^*}
      \cansum \epsilon^*\pa{\sum_{i=1}^n \epsilon^{k_i} (\lambda x . w_i)\ t^*}
      \displaybreak[0]\\
      &\quad \reducesTo^* 
      \subst{s^\b}{x}{t^*}
      \cansum \epsilon^*\pa{\sum_{i=1}^n \epsilon^{k_i} (\subst{w_i}{x}{t^*})}
      \displaybreak[0]\\
      &\quad = 
      \subst{s^\b}{x}{t^*}
      \cansum \epsilon^*\brak{\subst{\pa{\sum_{i=1}^n \epsilon^{k_i} w_i}}{x}{t^*})}
      \displaybreak[0]\\
      &\quad \diffEq
      \subst{s^\b}{x}{t^*}
      + \epsilon\pa{\subst{\diffS{s^\b}{x}{T}}{x}{t^*}}
      \displaybreak[0]\\
      &\quad \diffEq
      \subst{s^\b}{x}{t^* + \epsilon T}
      \displaybreak[0]\\
      &\quad \diffEq \subst{s^\b}{x}{t}\qedhere
    \end{align*}
  \end{itemize}
\end{proof}

The above result then legitimises our proposed ``existential'' definition of
reduction of well-formed terms, as it shows that, in order to reduce a given
term, it suffices to reduce its canonical form. It also gets rid of the
``reducing zero'' problem, as canonical forms do not contain ``spurious''
representations of zero.

\begin{defi}
  Given well-formed terms $\ul{s}, \ul{s'}$, we say that $\ul{s}$
  \textbf{reduces to} $\ul{s'}$ \textbf{in one step}, and write $\ul{s}
  \mathrel{\ul{\reducesTo}} \ul{t}$, whenever $\can{s}
  \reducesTo s''$ and $s'' \diffEq s'$, for some canonical form $\can{s}$ of
  $\ul{s}$.
\end{defi}

\begin{prop}
  \label{prop:red-wf-can}
  Whenever $\ul{s} \wfReducesTo \ul{s'}$ then for any term $\ul{t}$ we have
  $\ul{s + t} \wfReducesTo \ul{s' + t}$. 
  
  If $t=t^*$ is an additive term, then additionally $\ul{s\ t^*} \wfReducesTo^+
  \ul{s'\ t^*}$.
  
  Furthermore, when $t=t^\b$ is a basic term (in particular $t^\b$ is not
  differentially equivalent to zero), we also have $\ul{\D(s) \cdot t}
  \wfReducesTo^+ \ul{\D(s') \cdot t}$.

  Conversely, whenever $s$ is not differentially equivalent to zero and $\ul{t}
  \wfReducesTo \ul{t'}$, then $\ul{s\ t} \wfReducesTo^+ \ul{s\ t'}$ and $\ul{\D(s)
  \cdot t} \wfReducesTo^+ \ul{\D(s) \cdot t'}$.
\end{prop}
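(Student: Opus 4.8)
The plan is to treat all four clauses by a single recipe: locate, inside the canonical form of the relevant compound term, all the copies of the redex contracted by the hypothesised reduction; contract every one of them; and check that the resulting term is differentially equivalent to the stated right-hand side. The tools are Lemma~\ref{lem:can-contextual} (which gives $\can{s+t}=\can{\can{s}+\can{t}}$, $\can{s\ t}=\can{\can{s}\ \can{t}}$, $\can{\D(s)\cdot t}=\can{\D(\can{s})\cdot\can{t}}$, so that we may always work with the canonicalisation formulas applied to already-canonical subterms), the auxiliary reduction lemmas from the proof of Theorem~\ref{thm:reduction-canonical}---Lemmas~\ref{lem:reduction-cansum} and~\ref{lem:reduction-reg}, plus the fact established there that $\textbf{pri}$ and $\textbf{tan}$ commute with reduction---and, for the $\diffEq$-bookkeeping, Lemmas~\ref{lem:utils-differential} and~\ref{lem:reg-differential}. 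Note first that $\ul{s}\wfReducesTo\ul{s'}$ forces $\can{s}\neq0$, so we may write $\can{s}=\sum_{i=1}^{n}\epsilon^{k_i}s^\b_i$ with $n\geq1$; since the outer (sum and $\epsilon$) layers of a canonical term carry no redex, the reduction $\can{s}\reducesTo s''$ contracts a redex inside exactly one summand $s^\b_j$, say $s^\b_j\reducesTo w$, with $s''\diffEq s'$.

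For the sum clause I would argue directly. By Lemma~\ref{lem:can-contextual}, $\can{s+t}=\can{s}\cansum\can{t}$, in which $\cansum$ leaves the left argument untouched except at its tail, so $s^\b_j$ still occurs exactly once; contracting it---one step---yields $s''\cansum\can{t}$, and $s''\cansum\can{t}\diffEq s''+\can{t}\diffEq s'+t$ (as in Lemma~\ref{lem:utils-differential}). Hence $\ul{s+t}\wfReducesTo\ul{s'+t}$. For $\ul{s\ t^*}$, unfolding the canonicalisation of application and using that an additive term carries no infinitesimals, so $\textbf{tan}(\can{t^*})=0$, simplifies $\can{s\ t^*}$ to $\sum_{i=1}^{n}\epsilon^{k_i}\pa{s^\b_i\ \can{t^*}}$, in which $s^\b_j$ again occurs exactly once; contracting it gives a term $\diffEq s''\ \can{t^*}\diffEq s'\ t^*$, so $\ul{s\ t^*}\wfReducesTo^{+}\ul{s'\ t^*}$. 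For $\ul{\D(s)\cdot t^\b}$, unfolding gives $\can{\D(s)\cdot t^\b}=\boxsum_{i=1}^{n}(\epsilon^{*})^{k_i}\reg{s^\b_i,\can{t^\b}}$; here, because $t^\b$ is basic and hence $\can{t^\b}\not\diffEq0$, the regularisation $\reg{s^\b_j,\can{t^\b}}$ genuinely contains $s^\b_j$---in general several times. Contracting all those copies, via Lemma~\ref{lem:reduction-reg}, reaches $\boxsum_{i=1}^{n}(\epsilon^{*})^{k_i}\reg{s''_i,\can{t^\b}}$ (with $s''_i=s^\b_i$ for $i\neq j$ and $s''_j=w$), which by Lemma~\ref{lem:reg-differential} is $\diffEq\D\pa{\sum_i\epsilon^{k_i}s''_i}\cdot t^\b\diffEq\D(s'')\cdot t^\b\diffEq\D(s')\cdot t^\b$; this is at least one step, so $\ul{\D(s)\cdot t^\b}\wfReducesTo^{+}\ul{\D(s')\cdot t^\b}$.

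The converse clauses are symmetric. Assume $s\not\diffEq0$, so $\can{s}=\sum_{i=1}^{n}\epsilon^{k_i}s^\b_i$ with $n\geq1$, and $\can{t}\reducesTo t''\diffEq t'$, contracting a redex inside one basic summand $t^\b_m$ of $\can{t}$. Using Lemma~\ref{lem:can-contextual} to write $\can{s\ t}=\can{\can{s}\ \can{t}}$ and $\can{\D(s)\cdot t}=\can{\D(\can{s})\cdot\can{t}}$, and unfolding the canonicalisation formulas, the summand $t^\b_m$ reappears---together with its redex---inside the primal factors $\pa{s^\b_i\ \textbf{pri}(\can{t})}$ and $\textbf{ap}$-terms, or inside regularisations $\reg{s^\b_i,\textbf{tan}(\can{t})}$ (for $\ul{s\ t}$), or inside $\reg{s^\b_i,\can{t}}$ (for $\ul{\D(s)\cdot t}$), none of them vacuous since $n\geq1$; contracting every such occurrence, with the commutation of $\textbf{pri},\textbf{tan}$ with reduction together with Lemmas~\ref{lem:reduction-cansum} and~\ref{lem:reduction-reg}, produces a term $\diffEq\can{s}\ t''\diffEq s\ t'$ (resp.\ $\diffEq\D(s)\cdot t'$)---the final $\diffEq$-checks parallel the ``reduction in a subterm'' cases for application and differential application already treated in the proof of Theorem~\ref{thm:reduction-canonical}. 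The hypothesis $s\not\diffEq0$ cannot be dropped: otherwise $s\ t\diffEq0\diffEq\D(s)\cdot t$, whose canonical form is $0$ and admits no reduction.

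I expect the only genuine obstacle to lie in the step-count bookkeeping, and above all in the duplication of basic summands by regularisation: $\reg{s^\b_j,T}$ can contain $s^\b_j$ exponentially often, so contracting a single redex in $s^\b_j$ is not enough to reach a term differentially equivalent to $\D(s')\cdot t^\b$---one must contract it in every copy and then reassemble the result through $\cansum$ and $\reg{\cdot,\cdot}$, which is precisely why the differential-application clauses (and, for uniformity of statement, the application clauses) are phrased with $\wfReducesTo^{+}$ rather than $\wfReducesTo$. Everything else is routine structural induction over the canonicalisation algorithm, very close to the corresponding cases already carried out in the proof of Theorem~\ref{thm:reduction-canonical}.
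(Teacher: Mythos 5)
The paper states Proposition~\ref{prop:red-wf-can} without giving any proof, so there is nothing to compare your argument against; I can only assess it on its own terms. Your overall strategy --- locate the residuals of the contracted redex inside the canonical form of the compound term via Lemma~\ref{lem:can-contextual} and the canonicalisation formulas, contract them, and close with the $\diffEq$-bookkeeping of Lemmas~\ref{lem:utils-differential} and~\ref{lem:reg-differential} --- is the natural one and is essentially right. One calibration remark on the forward clauses: they are in fact all \emph{single-step}. Since $t^*$ is additive, $\tan{\can{t^*}}=0$ and the second block of $\can{s\ t^*}$ vanishes; since $t^\b$ is basic, $\can{t^\b}$ is a one-summand canonical term and $\reg{s^\b_j,\can{t^\b}}=\D(s^\b_j)\cdot t^\b$ contains $s^\b_j$ exactly once, not ``in general several times''. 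So each forward clause needs only one contraction, and the duplication you worry about there does not occur; it is only in the converse clauses (redex inside $\can{t}$, which is scattered through $\pri{\can{t}}$ and through the regularisations $\reg{s^\b_i,\cdot}$) that genuinely many residuals arise.

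The genuine gap is in how you discharge those converse clauses. Lemmas~\ref{lem:reduction-cansum} and~\ref{lem:reduction-reg} give you a \emph{raw} multi-step reduction $\can{u}\reducesTo^{+}e$ with $e\diffEq v$, but $\wfReducesTo^{+}$ is the transitive closure of $\wfReducesTo$, and each $\wfReducesTo$ step must by definition start from a canonical form of the current equivalence class. After the first contraction the term is no longer canonical, so to take a second step you must re-canonicalise, and it is not immediate that the remaining residuals survive this (re-canonicalisation can itself duplicate them, e.g.\ when a contractum is a sum that gets redistributed through $\reg{\cdot,\cdot}$), nor that the resulting process terminates. Theorem~\ref{thm:reduction-canonical} does not bridge this by itself: it converts one raw step into a $\reducesTo^{*}$ chain on the canonical form, and chaining such conversions is exactly the problem. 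You need either an explicit lemma of the form ``$\can{u}\reducesTo^{+}e\diffEq v$ implies $\ul{u}\wfReducesTo^{+}\ul{v}$'' (proved, say, by tracking residuals through canonicalisation with a terminating measure), or a restructuring of the converse clauses as an induction on the number of summands of $\can{t}$ in which each inductive step is a legitimate single $\wfReducesTo$ step followed by the induction hypothesis. Your closing observation that the hypothesis $s\not\diffEq 0$ is indispensable is correct and worth keeping.
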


The wording of the above definition specifies that a well-formed
term reduces to another whenever \emph{any} of its canonical forms reduces.
As we have shown before, canonical forms are in fact only unique up to
commutativity of addition and derivatives. Addition is not problematic, since
it respects reduction; that is to say, if a sum $s + t$ reduces to $s' + t'$
then its permutation $t + s$ also reduces to $t' + s'$. Symmetry of
derivatives raises a more significant issue: consider the following diagram,
which does \emph{not} commute:
\begin{center}
\begin{tikzpicture}
  \node (a) at (0, 0) {$\D (\D (\lambda x . t) \cdot u) \cdot v$};
  \node (b) at (5, 0) {$\D (\D (\lambda x . t) \cdot v) \cdot u$};
  \node (a') at (0, -2) {$\D \pa{\lambda x . \diffS{t}{x}{u}} \cdot v$};
  \node (b') at (5, -2) {$\D \pa{\lambda x . \diffS{t}{x}{v}} \cdot u$};
  \draw[white] (a) edge node[black]{$\diffEq$} (b);
  \draw[white] (a) edge node[rotate=-90,black]{$\reducesTo$} (a');
  \draw[white] (b) edge node[rotate=-90,black]{$\reducesTo$} (b');
  \draw[white] (a') edge node[black]{$\not\diffEq$} (b');
  \end{tikzpicture}
\end{center}
One-step reduction is still computable, since the set of canonical forms of
any given term is finite, but we will have to keep this behaviour in mind when
showing confluence.

A proof of confluence for $\lambda_\epsilon$ will proceed by the standard
Tait/Martin-L\"of method by introducing a notion of parallel reduction on terms.

\begin{defi}
  The \textbf{parallel reduction} relation between (unrestricted) terms is
  defined according to the deduction rules in Figure~\ref{fig:par-reduction}.
  \begin{figure}[ht]
    \begin{center}
    \bpt{
      \AxiomC{}
      \LeftLabel{($\parTo_x$)}
      \UnaryInfC{$x \parTo x$}
    }
    \bpt{
      \AxiomC{}
      \LeftLabel{($\parTo_0$)}
      \UnaryInfC{$0 \parTo 0$}
    }
    \bpt{
      \AxiomC{$t \parTo t'$}
      \LeftLabel{($\parTo_\lambda$)}
      \UnaryInfC{$\lambda x . t \parTo \lambda x . t'$}
    }
    \end{center}
    \begin{center}
    \bpt{
      \AxiomC{$t \parTo t'$}
      \LeftLabel{($\parTo_\varepsilon$)}
      \UnaryInfC{$\epsilon t \parTo \epsilon t'$}
    }
    \bpt{
      \AxiomC{$s \parTo s'$}
      \AxiomC{$t \parTo t'$}
      \LeftLabel{($\parTo_+$)}
      \BinaryInfC{$s + t \parTo s' + t'$}
    }
    \end{center}\begin{center}
    \bpt{
      \AxiomC{$s \parTo s'$}
      \AxiomC{$t \parTo t'$}
      \LeftLabel{($\parTo_{\textbf{ap}}$)}
      \BinaryInfC{$s\ t \parTo s'\ t'$}
    }
    \bpt{
      \AxiomC{$s \parTo s'$}
      \AxiomC{$t \parTo t'$}
      \LeftLabel{($\parTo_\D$)}
      \BinaryInfC{$\D(s)\cdot t \parTo \D(s') \cdot t'$}
    }
    \end{center}
    \begin{center}
    \bpt{
      \AxiomC{$s \parTo \lambda x . s'$}
      \AxiomC{$t \parTo t'$}
      \LeftLabel{($\parTo_\beta$)}
      \BinaryInfC{$s\ t \parTo \subst{s'}{x}{t'}$}
    }
    \bpt{
      \AxiomC{$s \parTo \lambda x . s'$}
      \AxiomC{$t \parTo t'$}
      \LeftLabel{($\parTo_\partial$)}
      \BinaryInfC{$\D(s)\cdot t \parTo \lambda x . \diffS{s'}{x}{t'}$}
    }
    \end{center}
    \caption{Parallel reduction rules for $\lambda_\epsilon$}
    \label{fig:par-reduction}
  \end{figure}

  The parallel reduction relation can be extended to well-formed terms by
  setting $\ul{t} \wfParTo \ul{t'}$ whenever $\can{t} \parTo t''$ with $t''
  \diffEq t'$ for some canonical form of $\ul{t}$.
\end{defi}

\begin{rem}
  Our definition of parallel reduction differs slightly from the usual in the
  rule $(\parTo_\beta)$, which allows reducing a newly-formed
  $\lambda$-abstraction. This is necessary because our calculus contains terms
  of the shape $(\D(\lambda x . s) \cdot u)\ t$, which we need to parallel reduce
  in a single step to $\subst{\pa{\diffS{s}{x}{u}}}{x}{t}$. The original
  presentation of the differential $\lambda$-calculus opted instead for 
  adding an extra parallel reduction rule to allow for the case of reducing an
  abstraction under a differential application. Similarly, our rule
  $(\parTo_\partial)$ allows reducing terms of the form $\D (\D(\lambda x . s) \cdot
  u) \cdot v$ in a single step.
\end{rem}

One convenient property of the parallel reduction relation lies in its relation
to canonical forms. As we saw in Theorem~\ref{thm:reduction-canonical},
canonical forms are ``maximally reducible'', but don't respect the number of
reduction steps. This is no longer the case for parallel reduction: the
process of canonicalization only duplicates regexes ``in parallel'' (that is, by
copying them onto multiple separate summands) or in a ``parallelizable series''
(i.e. a differential application may be regularized into a term of the form 
$\D (\D(\ldots) \cdot u) \cdot v$, which can be entirely reduced in a single
parallel reduction step). 

\begin{thm}
  \label{thm:parallel-reduction-canonical}
  Whenever $s \parTo s'$, then $\can{s} \parTo s''$ for some $s'' \diffEq s'$.
\end{thm}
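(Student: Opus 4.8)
The plan is to prove the statement by induction on the derivation of $s \parTo s'$, closely following the proof of Theorem~\ref{thm:reduction-canonical} but strengthening each occurrence of ``$\reducesTo^*$'' there to a single ``$\parTo$'' step. The base cases $(\parTo_x)$ and $(\parTo_0)$ are immediate, since $\can{x} = x$ and $\can{0} = 0$. For the congruence rules $(\parTo_\lambda)$, $(\parTo_\varepsilon)$, $(\parTo_+)$, $(\parTo_{\textbf{ap}})$ and $(\parTo_\D)$ one rewrites $\can{s}$ by Lemma~\ref{lem:can-contextual} so that it depends only on the canonical forms of the immediate subterms, applies the induction hypothesis to each of them, and reassembles. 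This uses the following parallel-reduction analogues of the auxiliary lemmas in the proof of Theorem~\ref{thm:reduction-canonical}, each proved by the same routine induction: a parallel reduction out of a canonical term $\sum_i \epsilon^{k_i} t^\b_i$ is again of the form $\sum_i \epsilon^{k_i} t'_i$ with $t^\b_i \parTo t'_i$, and parallel reductions out of sums and $\epsilon$-terms decompose componentwise (cf.\ Lemma~\ref{lem:reduction-canonical-additive}); the operations $\cansum$, $\epsilon^*$, $\D^*$, $\textbf{ap}$, $\textbf{pri}$ and $\textbf{tan}$ all commute with $\parTo$ (cf.\ Lemma~\ref{lem:reduction-cansum}); and if $s \parTo s'$ and $\can{t} \parTo t'$ then $\reg{s, \can{t}} \parTo \reg{s', t'}$ (cf.\ Lemma~\ref{lem:reduction-reg}). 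We also use reflexivity of $\parTo$ and Lemma~\ref{lem:canonical-lambda}.

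The crux lies in the rules $(\parTo_\partial)$ and $(\parTo_\beta)$. Take $(\parTo_\partial)$: here $s = \D(s_1) \cdot s_2 \parTo \lambda x . \diffS{s_1'}{x}{s_2'}$ with $s_1 \parTo \lambda x . s_1'$ and $s_2 \parTo s_2'$. By Lemma~\ref{lem:can-contextual} we work with $\can{\D(\can{s_1}) \cdot \can{s_2}}$; applying the induction hypothesis to the two premises, and reducing --- as in Theorem~\ref{thm:reduction-canonical} --- to the representative case in which $\can{s_1}$ is a single syntactic abstraction $\lambda x . s^\b$ (the general case being recovered by distributing over the canonical-sum form of $\can{s_1}$ and over the summands of $\can{s_2}$), the regularization clauses unfold $\can{\D(\lambda x . s^\b) \cdot \can{s_2}}$ into a canonical sum each of whose summands (up to a prefix $(\epsilon^*)^k$) is an iterated differential application $\D(\D(\cdots) \cdot u) \cdot v$ with the abstraction $\lambda x . s^\b$ buried at the centre. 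The decisive observation is that such a nested summand is itself a $\parTo_\partial$-redex which collapses to $\lambda x . \frac{\partial^m s^\b}{\partial (x,\ldots,x)}(u_1,\ldots,u_m)$ in a \emph{single} parallel step: each inner $\D(\lambda x . (\cdots)) \cdot u_i$ parallel-reduces to an abstraction, so the enclosing differential application again fires $(\parTo_\partial)$; meanwhile distinct summands are handled independently by $(\parTo_+)$. It then remains to check, exactly as in Theorem~\ref{thm:reduction-canonical} and using Theorem~\ref{thm:taylor} together with regularity of differential substitution (Theorem~\ref{thm:differential-substitution-regular}), that the canonical sum so obtained is differentially equivalent to $\lambda x . \diffS{s^\b}{x}{\can{s_2}}$, hence to $\lambda x . \diffS{s_1'}{x}{s_2'}$. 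The case $(\parTo_\beta)$ runs in parallel, with $(\parTo_\beta)$ in place of $(\parTo_\partial)$ and the clause for ordinary application --- which splits $\can{s_2}$ into its $\textbf{pri}$ and $\textbf{tan}$ components --- in place of the pure regularization clause; there the nested redexes produced by canonicalization are left-nested application spines $s^\b\,a_1\,\cdots\,a_m$ descending from a copy of $\lambda x . s^\b$, which likewise collapse in one parallel step by iterating $(\parTo_\beta)$.

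The main obstacle is precisely this bookkeeping: one must verify that canonicalization never deposits a residual of the contracted redex in a position that would force two \emph{sequential} parallel steps --- equivalently, that every redex generated during canonicalization either occupies its own summand (absorbed by the congruence rules) or lies on a nested differential-application chain, respectively a left-nested application spine, descending from a copy of the original abstraction (absorbed by iterating $(\parTo_\partial)$, resp.\ $(\parTo_\beta)$, which were introduced for exactly this purpose). Making this precise draws on the structural description of iterated differential substitutions given in Lemmas~\ref{lem:differential-substitution-dapp} and~\ref{lem:differential-substitution-app}. A secondary subtlety is that, since $\parTo$ does \emph{not} respect $\diffEq$, the reduction of the general cases to the representative ones cannot proceed by replacing the function by a differentially equivalent sum of abstractions; instead one first uses the induction hypothesis on the premise $s_1 \parTo \lambda x . s_1'$ to pin down, via Lemma~\ref{lem:canonical-lambda}, the shape of $\can{s_1}$ --- every summand of which either is a syntactic abstraction or is itself a term collapsing to an abstraction in one parallel step --- so that the differential applications (resp.\ applications) built on those summands during canonicalization remain $(\parTo_\partial)$-redexes (resp.\ $(\parTo_\beta)$-redexes).
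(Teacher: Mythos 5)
Your proposal is correct and is essentially the paper's own argument: the paper's proof is a one-line instruction to re-inspect the proof of Theorem~\ref{thm:reduction-canonical} and observe that every reduction sequence introduced there either lives in a separate summand or forms a nested chain of the shape $\D(\D(\cdots)\cdot u)\cdot v$ (resp.\ $(\D(\cdots)\cdot u)\ t$) that the generalised rules $(\parTo_\partial)$ and $(\parTo_\beta)$ were designed to collapse in one step, which is precisely the ``decisive observation'' you spell out. The only nitpick is in the application case: canonicalization produces single applications of differential chains to $\textbf{pri}(T)$ rather than left-nested application spines, so the redex fires one instance of $(\parTo_\beta)$ whose function-premise is derived by iterated $(\parTo_\partial)$ --- but this does not affect the substance of your argument.
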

\begin{proof}
    It suffices to inspect the proof of
    Theorem~\ref{thm:reduction-canonical} and convince oneself that all of the
    reductions introduced by the proof can be lifted into a single instance of
    parallel reduction.
\end{proof}

We also state the following standard properties of parallel reduction, all of which
can be proven by straightforward induction on the term.

\begin{lem}
  Parallel reduction sits between one-step and many-step reduction. That is to
  say: $\reducesTo {}\subseteq {}\parTo {}\subseteq {}\reducesTo^*$, and
  furthermore $\mathrel{\ul{\reducesTo}}{} \subseteq {} \wfParTo {} \subseteq {} \mathrel{
    \ul{\reducesTo}}^*$.
\end{lem}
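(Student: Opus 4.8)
The statement bundles four inclusions: the raw ones $\reducesTo \subseteq \parTo$ and $\parTo \subseteq \reducesTo^*$, and their well-formed analogues $\wfReducesTo \subseteq \wfParTo$ and $\wfParTo \subseteq \wfReducesTo^*$. The plan is to settle the two raw inclusions directly by induction, and then to obtain the well-formed ones by unfolding the definitions of $\wfParTo$ and $\wfReducesTo$ and feeding in Theorem~\ref{thm:reduction-canonical} (and, where convenient, Theorem~\ref{thm:parallel-reduction-canonical}).

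For $\reducesTo \subseteq \parTo$, I would first note that $\parTo$ is reflexive, by an immediate structural induction using only the congruence rules of Figure~\ref{fig:par-reduction} (all rules except $(\parTo_\beta)$ and $(\parTo_\partial)$). The same congruence rules make $\parTo$ contextual in the sense of Figure~\ref{fig:contextual-relation}, and reflexivity together with $(\parTo_\beta)$ and $(\parTo_\partial)$ shows that $\parTo$ contains both base rules of Figure~\ref{fig:reduction}: for instance $\lambda x.t \parTo \lambda x.t$ and $s \parTo s$ yield $(\lambda x.t)\ s \parTo \subst{t}{x}{s}$ via $(\parTo_\beta)$. Since $\reducesTo$ is by definition the least contextual relation containing those base rules, the inclusion follows.

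For $\parTo \subseteq \reducesTo^*$, I would induct on the derivation of $s \parTo s'$. Each congruence-rule case is immediate: the induction hypotheses give one-step reductions on the immediate subterms, and $\reducesTo$ is closed under contexts, so these lift and compose to a many-step reduction of $s$ to $s'$. In the rule $(\parTo_\beta)$ the premises $s \parTo \lambda x.s'$ and $t \parTo t'$ give $s \reducesTo^* \lambda x.s'$ and $t \reducesTo^* t'$ by induction, so $s\ t \reducesTo^* (\lambda x.s')\ t' \reducesTo_\beta \subst{s'}{x}{t'}$; the rule $(\parTo_\partial)$ is handled identically, finishing with one $\reducesTo_\partial$ step to $\lambda x.\diffS{s'}{x}{t'}$.

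It remains to transfer these inclusions to well-formed terms. One direction is trivial: $\ul s \wfReducesTo \ul{s'}$ unfolds to $\can{s} \reducesTo s''$ with $s'' \diffEq s'$ for some canonical form, and $\can{s} \parTo s''$ by $\reducesTo \subseteq \parTo$, which is precisely the clause defining $\ul s \wfParTo \ul{s'}$. The direction $\wfParTo \subseteq \wfReducesTo^*$ is the crux and the only place where real care is needed. Unfolding, $\ul s \wfParTo \ul{s'}$ gives $\can{s} \parTo s''$ with $s'' \diffEq s'$, hence $\can{s} \reducesTo^* s''$ by $\parTo \subseteq \reducesTo^*$; what we must then produce is a \emph{well-formed} many-step reduction $\ul s \wfReducesTo^* \ul{s''} = \ul{s'}$. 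This does not follow by blindly reading off the raw reduction $\can{s} = u_0 \reducesTo u_1 \reducesTo \ldots \reducesTo u_n = s''$ one step at a time, because $\wfReducesTo$ is defined by reducing a \emph{canonical} representative: only the first step, out of $\can{s}$ itself, is immediately of that form, and for the intermediate terms $u_i$ --- which need not be canonical --- one has to re-route through $\can{u_i}$ using Theorem~\ref{thm:reduction-canonical}, gluing the resulting $\wfReducesTo^*$ segments by transitivity. Setting up this ``lifting of raw many-step reduction to well-formed many-step reduction'' cleanly --- so that the induction actually terminates, given that Theorem~\ref{thm:reduction-canonical} may lengthen the reduction on each re-routing --- is the main obstacle; once that auxiliary lemma is available, all four inclusions are in hand.
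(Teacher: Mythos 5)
Your treatment of $\reducesTo{}\subseteq{}\parTo$, of $\parTo{}\subseteq{}\reducesTo^*$, and of $\wfReducesTo{}\subseteq{}\wfParTo$ is correct and complete. The paper offers nothing here beyond the blanket remark that these properties ``can be proven by straightforward induction on the term'', so there is no detailed argument to compare against; your three arguments (minimality of $\reducesTo$ among contextual relations containing the two redex rules, induction on the derivation of $s \parTo s'$, and unfolding the definitions on a canonical representative) are exactly what that remark is gesturing at, made explicit.

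The gap is in the fourth inclusion, $\wfParTo{}\subseteq{}\wfReducesTo^*$, and you flag it yourself without closing it. Having obtained a raw sequence $\can{s} = u_0 \reducesTo u_1 \reducesTo \cdots \reducesTo u_n = s''$ with $s'' \diffEq s'$, you cannot read it off as a $\wfReducesTo^*$ sequence because the $u_i$ for $i \geq 1$ need not be canonical, and the re-routing you propose --- applying Theorem~\ref{thm:reduction-canonical} to each intermediate step --- replaces one raw step by a possibly longer raw sequence with the same defect, so no measure decreases and the auxiliary lemma you defer to is left unproved. The repair is not to post-process the output of Theorem~\ref{thm:reduction-canonical} but to strengthen what is proved by its induction: show, by the very same induction on the depth of the redex and the number of non-canonical $\beta$-redexes, that $s \reducesTo s'$ implies $\ul{s} \wfReducesTo^* \ul{s'}$ directly. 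That measure is well-founded; the congruence cases are discharged by the compatibility of $\wfReducesTo$ with sums, applications and differential applications (Proposition~\ref{prop:red-wf-can}, extended to $\wfReducesTo^*$ and to $\lambda$ and $\epsilon$, with the degenerate cases $s \diffEq 0$ collapsing to $\ul{0}$ on both sides); and in the root-redex cases the contractions constructed in the existing proof are each performed on a subterm of a canonical representative of the current term, hence are legitimate $\wfReducesTo$ steps once one re-canonicalises between steps via Lemma~\ref{lem:can-contextual}, while the inner inductive appeals now return $\wfReducesTo^*$ segments that compose by transitivity. With that strengthened lemma in hand, the fourth inclusion follows from $\parTo{}\subseteq{}\reducesTo^*$ applied to $\can{s} \parTo s''$, lifting the resulting raw sequence one step at a time.
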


\begin{lem}
  The parallel reduction relation is contextual. In particular, every term
  parallel-reduces to itself.
\end{lem}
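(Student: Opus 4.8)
The plan is to dispatch both halves of the statement by entirely elementary means; there is no genuine obstacle here, only bookkeeping.

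For contextuality there is essentially nothing to do. The five closure conditions listed in Figure~\ref{fig:contextual-relation} are, one for one, exactly the inference rules $(\parTo_\lambda)$, $(\parTo_\varepsilon)$, $(\parTo_+)$, $(\parTo_{\textbf{ap}})$ and $(\parTo_\D)$ that appear in the definition of $\parTo$ in Figure~\ref{fig:par-reduction}. Hence $\parTo$ is contextual by construction, and one merely needs to point this out.

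For reflexivity, i.e.\ $t \parTo t$ for every $t \in \Lambda_\epsilon$, I would argue by structural induction on $t$. The base cases $t = x$ and $t = 0$ are precisely the axioms $(\parTo_x)$ and $(\parTo_0)$. In every inductive case $t$ is built from its immediate subterms by one of the syntactic constructors $\lambda x.(-)$, $\epsilon(-)$, $(-)+(-)$, $(-)\ (-)$, $\D(-)\cdot(-)$; applying the induction hypothesis to each subterm yields $t_i \parTo t_i$, and the matching congruence rule ($(\parTo_\lambda)$, $(\parTo_\varepsilon)$, $(\parTo_+)$, $(\parTo_{\textbf{ap}})$, or $(\parTo_\D)$ respectively) then gives $t \parTo t$. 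Note that the redex-contracting rules $(\parTo_\beta)$ and $(\parTo_\partial)$ are never invoked here — they always produce a syntactically different term — so no case split on whether a redex has just been formed is needed.

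Finally, the statement lifts to the well-formed level: since $\can{t} \parTo \can{t}$ by the reflexivity just established and $\can{t} \diffEq t$ by Theorem~\ref{thm:canonical-form-differential}, the defining clause of $\wfParTo$ gives $\ul{t} \wfParTo \ul{t}$, and contextuality of $\wfParTo$ follows from contextuality of $\parTo$ together with Lemma~\ref{lem:can-contextual}. The only point requiring any care is keeping the unrestricted and well-formed levels cleanly separated, which is routine.
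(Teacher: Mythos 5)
Your proof is correct and matches the paper's (unstated, but indicated) approach of a straightforward structural induction: contextuality is immediate from the congruence rules, and reflexivity follows from the axioms for $x$ and $0$ plus those same rules. The only nitpick is your parenthetical claim that $(\parTo_\beta)$ and $(\parTo_\partial)$ \emph{always} produce a syntactically different term (false for $\Omega$-like terms), but this remark is not needed for the argument, since the induction only requires exhibiting \emph{some} derivation of $t \parTo t$.
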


\begin{lem}
  Parallel reduction cannot introduce free variables. That is to say: whenever
  $t \parTo t'$, we have $\text{FV}(t') \subseteq \text{FV}(t)$.
\end{lem}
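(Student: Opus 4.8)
The plan is to prove the statement by induction on the derivation of $t \parTo t'$, i.e.\ by case analysis on the last rule used in Figure~\ref{fig:par-reduction}. The base cases $(\parTo_x)$ and $(\parTo_0)$ are immediate since $\text{FV}(x)=\{x\}$ and $\text{FV}(0)=\emptyset$ are preserved exactly. The congruence cases $(\parTo_\lambda)$, $(\parTo_\varepsilon)$, $(\parTo_+)$, $(\parTo_{\textbf{ap}})$ and $(\parTo_\D)$ all follow directly from the induction hypotheses together with the defining clauses of $\text{FV}$: for $(\parTo_\lambda)$ one has $\text{FV}(\lambda x.t') = \text{FV}(t')\setminus\{x\} \subseteq \text{FV}(t)\setminus\{x\} = \text{FV}(\lambda x.t)$, and in the additive and application cases $\text{FV}$ of the compound term is simply the union of the free-variable sets of its immediate subterms, so monotonicity of $\cup$ suffices.

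The only non-trivial cases are $(\parTo_\beta)$ and $(\parTo_\partial)$, since these are the rules that actually fire a redex and hence involve substitution. For these I would first record (or invoke, as they are standard) the two auxiliary inclusions
\[
  \text{FV}(\subst{t}{x}{s}) \subseteq (\text{FV}(t)\setminus\{x\}) \cup \text{FV}(s),
  \qquad
  \text{FV}\pa{\diffS{t}{x}{s}} \subseteq \text{FV}(t) \cup \text{FV}(s),
\]
the first being the familiar property of capture-avoiding substitution (immediate structural induction on $t$ using Figure~\ref{fig:substitution}), and the second proved by structural induction on $t$ using Figure~\ref{fig:differential-substitution} — note that differential substitution leaves all the other occurrences of $x$ in place, so one cannot subtract $\{x\}$ on the right, but this weaker inclusion is all that is needed. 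Granting these, the case $(\parTo_\beta)$ runs as follows: from $s \parTo \lambda x.s'$ the induction hypothesis gives $\text{FV}(s')\setminus\{x\} \subseteq \text{FV}(s)$, from $t \parTo t'$ it gives $\text{FV}(t')\subseteq\text{FV}(t)$, and hence $\text{FV}(\subst{s'}{x}{t'}) \subseteq (\text{FV}(s')\setminus\{x\})\cup\text{FV}(t') \subseteq \text{FV}(s)\cup\text{FV}(t) = \text{FV}(s\ t)$. The case $(\parTo_\partial)$ is analogous: $\text{FV}(\lambda x.\diffS{s'}{x}{t'}) = \text{FV}\pa{\diffS{s'}{x}{t'}}\setminus\{x\} \subseteq (\text{FV}(s')\cup\text{FV}(t'))\setminus\{x\} \subseteq (\text{FV}(s')\setminus\{x\})\cup\text{FV}(t') \subseteq \text{FV}(s)\cup\text{FV}(t) = \text{FV}(\D(s)\cdot t)$, where we use the $\alpha$-convention that the bound variable $x$ of $\lambda x.s'$ does not occur free in $t$ — hence, by the congruence cases already treated, not in $t'$ either — so that $\diffS{s'}{x}{t'}$ is well-formed.

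There is no serious obstacle here; the lemma is genuinely routine, as the paper's framing suggests. The one point requiring a little attention is the auxiliary free-variable lemma for differential substitution, whose proof is a lengthy but mechanical induction: in particular one must check the clauses for $\diffS{(t\ e)}{x}{s}$ and $\diffS{(\D(t)\cdot e)}{x}{s}$, where the auxiliary substitution $\subst{e}{x}{(x+\epsilon s)}$ could in principle re-introduce $x$, but it does so only when $x$ was already free in $e$ — and hence in the whole term — so the claimed inclusion is never violated.
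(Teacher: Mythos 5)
Your proof is correct and follows exactly the route the paper intends: the paper omits the proof entirely, describing this lemma as one of several "standard properties of parallel reduction" provable "by straightforward induction," and your induction on the derivation with the two auxiliary free-variable inclusions for ordinary and differential substitution is the natural elaboration of that. The points you flag — that $\diffS{t}{x}{s}$ does not delete $x$ (so only the weaker inclusion without $\setminus\{x\}$ holds) and that $\subst{e}{x}{x+\epsilon s}$ can only retain an $x$ that was already free — are precisely the right details to check, and they go through.
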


\begin{lem}
  Whenever $\lambda x . t \parTo u$, it must be the case that $u = \lambda x .
  t'$ and $t \parTo t'$.
\end{lem}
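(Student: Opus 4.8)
The plan is to prove this by inversion on the parallel reduction derivation --- that is, by case analysis on the last rule applied. First I would go through the rules of Figure~\ref{fig:par-reduction} and note that in each of them the left-hand side of the conclusion has a determined outermost syntactic constructor: a variable in $(\parTo_x)$, the constant $0$ in $(\parTo_0)$, a $\lambda$-abstraction in $(\parTo_\lambda)$, an $\epsilon$-term in $(\parTo_\varepsilon)$, a sum in $(\parTo_+)$, an application in $(\parTo_{\textbf{ap}})$ and $(\parTo_\beta)$, and a differential application in $(\parTo_\D)$ and $(\parTo_\partial)$. Consequently, the only rule capable of deriving a statement of the form $\lambda x . t \parTo u$ is $(\parTo_\lambda)$.

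Given this observation, the argument is immediate: a derivation of $\lambda x . t \parTo u$ must end with an instance of $(\parTo_\lambda)$, whose conclusion forces $u = \lambda x . t'$ and whose unique premise is $t \parTo t'$, which is exactly the claim.

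The only subtlety worth flagging --- and it is not a genuine obstacle --- is that the two nonstandard rules $(\parTo_\beta)$ and $(\parTo_\partial)$ can \emph{produce} a $\lambda$-abstraction on the right-hand side of $\parTo$, so one might worry they interfere with the reasoning. They do not: their left-hand sides are an application and a differential application respectively, so neither can conclude $\lambda x . t \parTo u$. Hence the case analysis is trivial and the lemma follows directly from reading off the rule table, with no calculation required.
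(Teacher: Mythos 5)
Your proof is correct and matches the paper's (unstated) intent: the paper merely asserts this as one of several standard properties provable by straightforward induction on the term, and for this particular lemma that induction degenerates to exactly the inversion you perform, since $(\parTo_\lambda)$ is the only rule whose conclusion has a $\lambda$-abstraction on the left. Your remark that $(\parTo_\beta)$ and $(\parTo_\partial)$ only produce abstractions on the right and so cannot interfere is the right subtlety to flag.
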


\begin{lem}
  \label{lem:par-to-substitution}
  Whenever $s \parTo s'$ and $t \parTo t'$ then $\subst{s}{x}{t} \parTo
  \subst{s'}{x}{t'}$, and furthermore there is some $w$ with $\diffS{s}{x}{t}
  \parTo w \diffEq \diffS{s'}{x}{t'}$.
\end{lem}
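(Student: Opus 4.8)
The plan is to prove both statements simultaneously by induction on the structure of $s$, with a case analysis on the last rule used in the derivation $s \parTo s'$; throughout, bound variables are renamed fresh so that the side conditions of differential substitution and of the substitution lemmas below are met.

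For the first statement (standard substitution) this is the familiar Tait---Martin-L\"of argument. In the compatibility cases ($\parTo_x, \parTo_0, \parTo_\lambda, \parTo_\epsilon, \parTo_+, \parTo_{\mathbf{ap}}, \parTo_\D$) one pushes $\subst{-}{x}{t}$ through the outermost constructor and applies the induction hypothesis; in the $\parTo_\beta$ and $\parTo_\partial$ cases one uses the induction hypothesis to reduce the function subterm to a $\lambda$-abstraction (the rules $\parTo_\beta$ and $\parTo_\partial$ permit contracting this freshly-exposed redex in the same parallel step) and then invokes the standard substitution-composition identity $\subst{\subst{u}{y}{v}}{x}{w} = \subst{\subst{u}{x}{w}}{y}{\subst{v}{x}{w}}$.

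For the second statement (differential substitution) the base cases and the cases $\parTo_\lambda, \parTo_\epsilon, \parTo_+$ are immediate, since differential substitution commutes with $\lambda$, $\epsilon$ and $+$, so one only needs the induction hypothesis together with contextuality of $\parTo$ and of $\diffEq$. The cases $\parTo_{\mathbf{ap}}$ and $\parTo_\D$ use both parts of the induction hypothesis at once, because the defining clauses of differential substitution on an application $t\ e$ and on a differential application $\D(t)\cdot e$ combine a differential substitution of each subterm with a \emph{standard} substitution of the argument along $x \mapsto x + \epsilon t$; one reduces each piece with the appropriate part of the hypothesis and closes up using that both kinds of substitution respect $\diffEq$ (Propositions~\ref{prop:substitution-differential} and~\ref{prop:differential-substitution-differential}).

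The main obstacle is the pair of cases $\parTo_\beta$ and $\parTo_\partial$ for the second statement: here $s$ is an application (resp.\ differential application) whose function part $r$ parallel-reduces to an abstraction $\lambda y.r_0$, and one must show $\diffS{s}{x}{t} \parTo w$ for some $w \diffEq \diffS{s'}{x}{t'}$. The strategy is to expand $\diffS{s}{x}{t}$ by its defining clause, reduce the function subterm all the way within the parallel step so that the new abstraction is exposed and contracted by $\parTo_\beta$ (resp.\ $\parTo_\partial$), and rewrite the target $\diffS{s'}{x}{t'}$ with Lemma~\ref{lem:differential-standard-substitution} so that it splits into exactly the two summands produced by the reduction; the remaining bookkeeping is discharged by Lemma~\ref{lem:standard-differential-substitution} (to interchange the standard and differential substitutions that appear), by Theorems~\ref{thm:differential-substitution-regular} and~\ref{thm:taylor} (to recognise the $\epsilon$-weighted terms), and by Propositions~\ref{prop:substitution-differential} and~\ref{prop:differential-substitution-differential} (to carry along the $\diffEq$ supplied by the induction hypothesis). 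The genuinely delicate point---and the reason the conclusion is only up to $\diffEq$---is that when $r$ is itself a compound redex, $\diffS{r}{x}{t}$ is not a single abstraction but a sum of $\epsilon$-weighted abstractions, so one has to track this shape through a nested induction and rely on the $\diffEq$-laws that absorb the spurious subterms this produces ($0\ s \diffEq 0$, $\D(s)\cdot 0 \diffEq 0$, $\lambda x.0 \diffEq 0$) to recover the expected form.
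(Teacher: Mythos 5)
Your proposal is correct and follows essentially the same route as the paper: induction on the derivation of $s \parTo s'$, with the standard-substitution part being the usual Tait--Martin-L\"of argument and the crux of the differential-substitution part being the $\parTo_\beta$/$\parTo_\partial$ cases, where one expands $\diffS{s}{x}{t}$ by its defining clause, contracts the freshly exposed abstraction within the same parallel step, and matches the two resulting summands against $\diffS{s'}{x}{t'}$ rewritten via Lemma~\ref{lem:differential-standard-substitution}. If anything you are slightly more careful than the paper, which silently treats the inductive hypothesis as yielding a literal $\lambda$-abstraction rather than a term only $\diffEq$-equivalent to one; your remark about tracking sums of $\epsilon$-weighted abstractions addresses exactly the detail the paper elides.
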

\begin{proof}
  Both proofs follow by induction on the derivation applied to obtain $s \parTo
  s'$. We explicitly prove some non-trivial cases. First, for standard
  substitution.
  \begin{itemize}
    \item When the last rule applied is $(\parTo_\beta)$, that 
    is: $s = e\ u, s' = \subst{e'}{y}{u'}$, with $e \parTo \lambda y . e', u
    \parTo u'$. By the induction hypothesis, we have $\subst{e}{x}{t} \parTo
    \lambda y . (\subst{e'}{x}{t'})$ and $\subst{u}{x}{t} \parTo \subst{u'}{x}{t'}$, hence:
    \begin{align*}
      \subst{s}{x}{t} &= \subst{e}{x}{t}\ (\subst{u}{x}{t})
      \parTo \subst{(\subst{e'}{x}{t'})}{y}{(\subst{u'}{x}{t'})}
      = \subst{(\subst{e'}{y}{u'})}{x}{t'}
    \end{align*}
    \item When the last rule applied is $(\parTo_\partial)$, that is:
    $s = \D(e) \cdot u$, $s' = \lambda y . \diffS{e'}{x}{u'}$, with
    $e \parTo \lambda y . e', u \parTo u'$. As before, we apply the induction hypothesis and
    obtain:
    \begin{align*}
      \subst{s}{x}{t} &= D(\subst{e}{x}{t}) \cdot (\subst{u}{x}{t})
      \displaybreak[0]\\
      &\parTo \lambda y . \pa{\diffS{(\subst{e'}{x}{t'})}{y}{\subst{u'}{x}{t'}}}
      = \subst{\pa{\lambda y . \diffS{e'}{y}{u'}}}{x}{t'}
    \end{align*}
  \end{itemize}
  The corresponding cases for differential substitution are slightly more technically
  involved.
  \begin{itemize}
    \item When the last rule applied is $(\parTo_\beta)$, that is: $s = (e\ u),
    s' = \subst{e'}{y}{u'}$, with $e \parTo \lambda y . e', u \parTo u'$. By the
    induction hypothesis and applying the definition of differential
    substitution, we have $\diffS{e}{x}{t} \parTo \lambda y . \diffS{e'}{x}{t'}$
    and $\diffS{u}{x}{t} \parTo \diffS{u'}{x}{t'}$. By applying the previous
    proof we also obtain $\subst{u}{x}{x + \epsilon t} \parTo \subst{u'}{x}{x +
    \epsilon t'}$ hence\footnote{
    Observe that the reasoning here would not hold if we had opted to define
    parallel reduction in the ``standard'' way, as differential substitution may
    ``unfold'' an application into a differential application followed by a
    standard one.
    }:
    \begin{align*}
      \diffS{s}{x}{t} &= 
      \brak{\pa{\D(e) \cdot \pa{\diffS{u}{x}{t}}}\ u}
      + \brak{\diffS{e}{x}{t}\ (\subst{u}{x}{x + \epsilon t})}
      \displaybreak[0]\\
      &\parTo
      \subst{\brak{\diffS{e'}{y}{\diffS{u'}{x}{t'}}}}{y}{u'}
      + \subst{\pa{\diffS{e'}{x}{t'}}}{y}{(\subst{u'}{x}{x + \epsilon t'})}
    \end{align*}

    On the other hand, since $y$ is not free in either $u'$ or $t'$, applying
    Lemma~\ref{lem:differential-standard-substitution}, we obtain:
    \begin{align*}
      \diffS{s'}{x}{t'} &= \diffS{\pa{\subst{e'}{y}{u'}}}{x}{t'}\\
      &\diffEq 
        \subst{\pa{\diffS{e'}{x}{t'}}}{y}{(\subst{u'}{x}{x + \epsilon t'})} 
        + \subst{\pa{\diffS{e'}{y}{\diffS{u'}{x}{t'}}}}{y}{u'}
    \end{align*}
    \item When the last rule applied is $(\parTo_\partial)$, that is:
    $s = \D(\lambda y . e) \cdot u$, $s' = \lambda y . \diffS{e'}{x}{u'}$, with
    $e \parTo e', u \parTo u'$. As before, we apply the induction hypothesis and
    obtain:
    \begin{align*}
      \subst{s}{x}{t} &= D(\lambda y . \subst{e}{x}{t}) \cdot (\subst{u}{x}{t})\\
      &\parTo   \lambda y . \pa{\diffS{(\subst{e'}{x}{t'})}{y}{\subst{u'}{x}{t'}}}\\
      &= \subst{\pa{\lambda y . \diffS{e'}{y}{u'}}}{x}{t'}\qedhere
    \end{align*}
  \end{itemize}
\end{proof}

We first prove that parallel reduction has the diamond property when applied to
canonical terms, taking care that it holds up to differential equivalence (note
that, much like one-step reduction, the result of parallel-reducing a canonical
term need not be canonical). For this, we introduce the usual notion of a full
parallel reduct of a term.

\newcommand{\fpr}[0]{_\da}
\begin{defi}
  Given an unrestricted term $t$, its \textbf{full parallel reduct}
  $t\fpr$ is defined inductively by:
  \[\def\arraystretch{1.5}
  \begin{array}{rcl}
    x\fpr &\defeq& x \\
    (\epsilon t) \fpr &\defeq& \epsilon (t\fpr)\\
    (s + t)\fpr &\defeq& (s\fpr) + (t\fpr)\\
    0\fpr &\defeq& 0\\
    (\lambda x . t) \fpr &\defeq& \lambda x . (t\fpr)\\
    (s\ t)\fpr &\defeq& \begin{cases}
      \subst{e}{x}{t\fpr} &\text{ if $s\fpr = \lambda x . e$}\\
      (s\fpr)\ (t\fpr)&\text{ otherwise}
    \end{cases}\\
    (\D(s)\cdot t)\fpr &\defeq& \begin{cases}
      \lambda x . \diffS{e}{x}{t\fpr} &\text{ if $t\fpr = \lambda x . e$}\\
      D(s\fpr)\cdot (t\fpr) &\text{ otherwise}
    \end{cases}
  \end{array}
  \]
\end{defi}

\begin{lem}
  \label{lem:full-reduction-lambda}
  Whenever $s \parTo \lambda x . v$, then $s\fpr$ is of the form $\lambda x .
  w$, for some term $w$.
\end{lem}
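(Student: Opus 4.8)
The plan is to prove the lemma by structural induction on $s$ (equivalently, by induction on the derivation of $s \parTo \lambda x . v$), keeping track of which parallel-reduction rules can produce a term that is \emph{syntactically} an abstraction. Inspecting Figure~\ref{fig:par-reduction}, only $(\parTo_\lambda)$, $(\parTo_\partial)$ and $(\parTo_\beta)$ qualify: the conclusions of the remaining rules are a variable, $0$, an $\epsilon$-term, a sum, an application or a differential application, none of which is an abstraction, so those cases cannot arise. The case $(\parTo_\lambda)$ is immediate, since then $s = \lambda x . t$ and $s\fpr = \lambda x . t\fpr$ by definition. For $(\parTo_\partial)$ we have $s = \D(a) \cdot b$ with $a \parTo \lambda x . a'$ for some $a'$; the induction hypothesis applied to $a$ gives $a\fpr = \lambda x . e$ for some $e$, and then the defining clause for $(\D(a)\cdot b)\fpr$ in the abstraction case yields $s\fpr = \lambda x . \diffS{e}{x}{b\fpr}$, which is again an abstraction.

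The substantive case is $(\parTo_\beta)$, where $s = p\ q$ with $p \parTo \lambda z . p'$, $q \parTo q'$, and $\subst{p'}{z}{q'} = \lambda x . v$. The induction hypothesis applied to $p$ gives $p\fpr = \lambda z . w$ for some $w$, whence the defining clause for $(p\ q)\fpr$ yields $s\fpr = \subst{w}{z}{q\fpr}$, and it remains to see that this is an abstraction. From the definition of capture-avoiding substitution, $\subst{p'}{z}{q'}$ can be an abstraction only in two ways: either $p'$ is itself an abstraction $\lambda y . r$ (with $y \notin \{z\} \cup \text{FV}(q')$), or $p' = z$ and $q'$ is an abstraction. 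In the first situation we would need $w$ to be an abstraction; in the second we would need $q\fpr$ to be an abstraction --- which the induction hypothesis applied to $q$ provides, since $q \parTo q'$ with $q'$ an abstraction --- together with $w = z$.

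The main obstacle is the bookkeeping required to pin down $w$: the naive induction hypothesis only records that $p\fpr$ is \emph{some} abstraction $\lambda z . w$, not how $w$ is shaped. I expect to resolve this by strengthening the inductive statement so that it tracks the maximal leading spine of $\lambda$-abstractions of a reduct together with the syntactic kind of the head below it, i.e.: if $s \parTo \lambda x_1 . \cdots . \lambda x_n . h$ with $h$ not an abstraction, then $s\fpr = \lambda x_1 . \cdots . \lambda x_m . h'$ for some $m \geq n$, and when $h$ is a variable one moreover has $m = n$ and $h' = h$. The ``head is a variable'' clause is exactly what forces $w = z$ in the second situation above (there $p \parTo \lambda z . z$), while the ``$m \geq n$'' clause handles the first situation, since feeding the extra leading $\lambda$ coming from $p'$ back into the strengthened hypothesis makes $w$ an abstraction. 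This strengthened statement is itself proved by the same structural induction; its new content amounts to checking that substitution yields a bare variable only in the two degenerate ways identified above and can only lengthen the leading $\lambda$-spine. The earlier fact that a reduct of an abstraction is again an abstraction (with body reduced accordingly) and the substitution/parallel-reduction compatibility of Lemma~\ref{lem:par-to-substitution} are the main tools in dispatching the routine sub-cases.
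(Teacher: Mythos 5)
Your argument is correct, but it takes a genuinely different route from the paper's. The paper does not induct on the derivation at all: it considers, for a fixed $s$, the set of all derivations of parallel reducts of $s$, picks one maximising the number of outermost $\lambda$-abstractions, and argues by a rule-swapping step (replacing the last use of $\parTo_{\textbf{ap}}$ or $\parTo_\D$ by $\parTo_\beta$ or $\parTo_\partial$, which ``it is straightforward to see'' can only increase that number) that the maximum is attained exactly by $s\fpr$. You instead run a structural induction on the derivation, observe that the naive induction hypothesis is too weak in the $\parTo_\beta$ case, and strengthen it to track both the length of the leading $\lambda$-spine (monotone under full reduction) and, crucially, the case where the head below the spine is a bare variable (in which case spine and head are preserved exactly). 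That head-variable clause is precisely what disposes of the sub-case $p' = z$ with $q'$ an abstraction, and it is also what makes the ``$m \geq n$'' clause go through when $p'$ ends in $z$ under a nonempty spine. The trade-off: the paper's argument is shorter but its key step is asserted rather than proved, and justifying that the swap cannot shorten the spine runs into exactly the substitution analysis you make explicit (an abstraction can be created by substituting into a head variable, not only by exposing one already present in $p'$); your version is longer but self-contained, and the strengthened invariant is the right one --- the only points needing care in writing it up are that the $\parTo_\partial$ case requires the analogous analysis for differential substitution (where $\diffS{y}{z}{q'} = 0$ rather than $y$ when $y \neq z$, so the head changes but the spine length does not), and that the identification of bound-variable names in the spine should be read up to $\alpha$-equivalence.
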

\begin{proof}
  The proof follows by inspection of the parallel reduction rules. Consider a
  derivation of $s \parTo s'$, which will be of the form
  $\lambda x_1.\lambda x_2.\ldots \lambda x_n . t$ (with $n$ possibly equal to
  $0$). In general the amount of abstractions at the outermost level of the term
  depends on our choice of a derivation for $s \parTo s'$. Suppose then that we
  pick a derivation for which $n$ is maximal. If this derivation does not
  use the rules $\parTo_\textbf{ap}$ or $\parTo_\D$, then it is already
  the case that $s' = s\fpr$. On the other hand, if it contains either rule, it
  is straightforward to see that replacing the last application of
  $\parTo_\textbf{ap}$ or $\parTo_\D$ by $\parTo_\beta$ or $\parTo_\partial$
  respectively the resulting term has at least as many $\lambda$-abstractions at
  the outermost level as the previous one. Iterating this process, we obtain
  that the number of outermost abstractions is maximised precisely whenever $s'
  = s\fpr$.
\end{proof}

\begin{thm}
  \label{thm:full-reduction}
  For any unrestricted terms $s, s'$ such that $s \parTo s'$, there is an
  unrestricted term $w$ such that $s' \parTo w$ and $w \diffEq s\fpr$.
\end{thm}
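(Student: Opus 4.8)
The plan is a Tait--Martin-L\"of/Takahashi argument adapted to the presence of differential equivalence: I would prove the statement by induction on the derivation of $s \parTo s'$ (equivalently, by structural induction on $s$), showing that every parallel reduct of $s$ sits below the full parallel reduct $s\fpr$ — but only up to $\diffEq$, since differential substitution commutes with parallel reduction only up to differential equivalence (Lemma~\ref{lem:par-to-substitution}), and since $\diffEq$ can collapse a sum of $\lambda$-abstractions into a single one. The congruence cases ($s$ a variable, $0$, $\lambda x. t$, $\epsilon t$, or $s_1 + s_2$) are immediate: the induction hypothesis gives, for each immediate subterm, a reduct with $s_i' \parTo w_i \diffEq (s_i)\fpr$, and one assembles $w$ with the matching parallel-reduction rule, invoking contextuality of both $\parTo$ and $\diffEq$.

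For the head-redex rules $(\parTo_\beta)$ and $(\parTo_\partial)$ — take $s = s_1\ s_2$ with $s_1 \parTo \lambda x. s_1'$ and $s_2 \parTo s_2'$, so $s' = \subst{s_1'}{x}{s_2'}$ — Lemma~\ref{lem:full-reduction-lambda} gives $(s_1)\fpr = \lambda x. e$, so $s\fpr = \subst{e}{x}{(s_2)\fpr}$ fires the very same redex. The induction hypothesis on $s_1 \parTo \lambda x. s_1'$, together with the fact that a parallel reduct of a $\lambda$-abstraction is again one, yields $s_1' \parTo s_1''$ with $\lambda x. s_1'' \diffEq \lambda x. e$, hence $s_1'' \diffEq e$ (differential equivalence of abstractions descends to their bodies, by Theorem~\ref{thm:canonicity}); the induction hypothesis on $s_2$ gives $s_2' \parTo w_2 \diffEq (s_2)\fpr$; and Lemma~\ref{lem:par-to-substitution} supplies a $w$ with $s' = \subst{s_1'}{x}{s_2'} \parTo w \diffEq \subst{s_1''}{x}{w_2}$. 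Proposition~\ref{prop:substitution-differential} (and Proposition~\ref{prop:differential-substitution-differential} for the analogous step in the $(\parTo_\partial)$ case, where $s' = \lambda x. \diffS{s_1'}{x}{s_2'}$) then gives $\subst{s_1''}{x}{w_2} \diffEq \subst{e}{x}{(s_2)\fpr} = s\fpr$, closing the case.

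The delicate cases are the congruence rules over applications, $(\parTo_{\textbf{ap}})$ and $(\parTo_\D)$, say $s = s_1\ s_2 \parTo s_1'\ s_2'$. If $(s_1)\fpr$ is \emph{not} syntactically a $\lambda$-abstraction, then $s\fpr = (s_1)\fpr\ (s_2)\fpr$ and one takes $w = w_1\ w_2$ with $w_i$ from the induction hypothesis, again by contextuality of $\diffEq$. The obstruction is the subcase where $s_1$ is not itself a $\lambda$-abstraction but $(s_1)\fpr = \lambda x. e$ is: now $s\fpr = \subst{e}{x}{(s_2)\fpr}$ fires a \emph{freshly exposed} redex, and since $\diffEq$ does not contain $\beta$ one cannot match it merely by reducing $s_1'\ s_2'$ congruently — one must show that $s_1'$ \emph{itself} parallel-reduces in a single step to a genuinely syntactic $\lambda$-abstraction with body $\diffEq e$. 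I would isolate this as an auxiliary lemma, proved by a separate induction on the structure of $s_1$: if $s_1 \parTo s_1'$ and $(s_1)\fpr$ is syntactically $\lambda x. e$, then $s_1' \parTo \lambda x. v$ for some $v \diffEq e$. Its inductive step when $s_1$ is an application or a differential application is exactly where the work lies: one applies the auxiliary lemma recursively to the operator part to obtain a \emph{syntactic} $\lambda$-abstraction among its reducts, then fires the corresponding $(\parTo_\beta)$ or $(\parTo_\partial)$ rule to re-expose an abstraction at top level, while keeping bodies matched only up to $\diffEq$. With this lemma in hand the problematic subcase is dispatched just like the head-redex cases (via $(\parTo_\beta)$/$(\parTo_\partial)$, Lemma~\ref{lem:par-to-substitution}, and the two substitution-respects-$\diffEq$ propositions); $(\parTo_\D)$ is handled symmetrically.

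I expect the main obstacle to be precisely this auxiliary lemma, and within it the bookkeeping that ensures the $\lambda$-abstraction one exposes is \emph{syntactic} — so that the reduction rules actually apply — even though differential equivalence cheerfully hides $\lambda$-abstractions inside sums and infinitesimal extensions. The tool that keeps this under control is the canonical-form characterisation: by Lemma~\ref{lem:canonical-lambda} any term $\diffEq$ to a $\lambda$-abstraction has a canonical form which is a sum of $\epsilon$-scaled $\lambda$-abstractions, so the relevant abstractions can always be brought to the surface, and one only has to check that this costs no parallel-reduction steps beyond the single step already being performed — which is essentially the content of Theorem~\ref{thm:parallel-reduction-canonical}, namely that canonicalisation only ever duplicates redexes in parallel or into a parallel-reducible series.
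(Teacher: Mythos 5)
Your induction and your handling of the head-redex rules $(\parTo_\beta)$ and $(\parTo_\partial)$ coincide with the paper's proof: it likewise uses Lemma~\ref{lem:full-reduction-lambda} to see that $t\fpr$ is an abstraction $\lambda x. v_t$, descends the equivalence $\lambda x. w_t \diffEq \lambda x. v_t$ to the bodies, and closes with Lemma~\ref{lem:par-to-substitution} together with the fact that both substitutions respect $\diffEq$. Where you go beyond the paper is in the congruence cases $(\parTo_{\textbf{ap}})$ and $(\parTo_\D)$: the paper dismisses these as ``straightforward'', but you are right that they are not, because $\fpr$ is a Gross--Knuth-style development that fires a redex whenever $(s_1)\fpr$ --- not $s_1$ itself --- is an abstraction, so $s\fpr = \subst{e}{x}{(s_2)\fpr}$ cannot be matched by the congruent reduct $w_1\ w_2$ ($\diffEq$ contains no $\beta$). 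Diagnosing this, and seeing that one must show $s_1'$ itself parallel-reduces in one step to a \emph{syntactic} abstraction, is exactly right.

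However, your auxiliary lemma as sketched does not close. In its inductive step for $s_1 = t\ u$ (or a differential application) the recursive call gives $t' \parTo \lambda y. d''$ with $d''$ only $\diffEq$ to the body $d$ of $t\fpr$; firing $(\parTo_\beta)$ then lands on $\subst{d''}{y}{w_u}$, and since $\diffEq$ does not preserve the property of being syntactically a $\lambda$-abstraction (e.g. $d''$ may be $(\lambda x. e_0) + 0$ while $d = \lambda x. e_0$), this reduct need not expose an abstraction even when $\subst{d}{y}{u\fpr}$ does. The appeal to Lemma~\ref{lem:canonical-lambda} and Theorem~\ref{thm:parallel-reduction-canonical} cannot repair this: canonicalisation is a $\diffEq$-rewrite, not a parallel-reduction step, and in the statement being proved the $\diffEq$ is available only \emph{after} the single step $s' \parTo w$, so it cannot be used to surface a hidden abstraction before the rule is fired. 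The repair is to strengthen the induction (of the auxiliary lemma, or of the theorem itself) so that it tracks the entire outermost spine of $\lambda$-abstractions of $s\fpr$ --- precisely the quantity that the paper's own proof of Lemma~\ref{lem:full-reduction-lambda} maximises --- and to check that this syntactic spine is preserved by the substitutions performed in the $\beta$ and $\partial$ cases. With that strengthening your argument goes through; without it, the application-congruence case is not actually established, for you or for the paper.
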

\begin{proof}
  The proof follows by induction on the derivation of $s \parTo s'$. Most cases
  are straightforward, and the rest follow as a corollary of
  Lemma~\ref{lem:par-to-substitution}, as we now show.
  \begin{itemize}
    \item The last applied rule is $\parTo_\beta$, that is, $s = (t\ e)$, 
      $s' = \subst{t'}{x}{e'}$, with $t \parTo \lambda x . t', e \parTo e'$.
    
      By the induction hypothesis we have $e' \parTo w_e \diffEq e\fpr$ and
      $(\lambda x . t') \parTo (\lambda x . w_t) \diffEq t\fpr$. 
      By Lemma~\ref{lem:full-reduction-lambda}, it follows that $t\fpr$ has the
      form $\lambda x . v_t$, and since $(\lambda x . w_t) \diffEq (\lambda x .
      v_t)$ we also have $w_t \diffEq v_t$.
      Then: \[
        s' = \subst{t'}{x}{e'} \parTo \subst{w_t}{x}{w_e}
        \diffEq \subst{v_t}{x}{e\fpr} = (t\ e)\fpr
      \]

      \item The last applied rule is $\parTo_\partial$, that is,
      $s = (\D(t)\cdot u), s' = \lambda x . \diffS{t'}{x}{u'}$, with
      $t \parTo \lambda x . t', u \parTo u'$.

      By the induction hypothesis we have $u' \parTo w_u \diffEq u\fpr$ and $\lambda x . t'
      \parTo \lambda x . w_t \diffEq t\fpr$. Again we must have $t\fpr = \lambda
      x . v_t$ with $w_t \diffEq v_t$. Then: \[
        s' = \diffS{t'}{x}{u'} \parTo \diffS{w_t}{x}{w_u}
        \diffEq \diffS{v_t}{x}{u\fpr} = (\D(t) \cdot u)\fpr\qedhere
      \]
  \end{itemize}
\end{proof}

\begin{cor}
  Parallel reduction has the diamond property up to differential equivalence.
  That is to say, for any unrestricted term $t$ and terms $t_1, t_2$ such that $t
  \parTo t_1$ and $t \parTo t_2$, there are terms $u, v$ making the following
  diagram commute:

  \begin{center}
  \begin{tikzpicture}
    \node (a) at (0, 0) {$T$};
    \node (t1) at (-1, -1) {$t_1$};
    \node (t2) at (1, -1) {$t_2$};
    \node (u) at (-1, -2.5) {$u$};
    \node (v) at (1, -2.5) {$v$};
    \draw[white] (a) edge node[rotate=225,black]{$\parTo$} (t1);
    \draw[white] (a) edge node[rotate=-45,black]{$\parTo$} (t2);
    \draw[white] (t1) edge node[rotate=-90,black]{$\parTo$} (u);
    \draw[white] (t2) edge node[rotate=-90,black]{$\parTo$} (v);
    \draw[white] (u) edge node[black]{$\diffEq$} (v);
    \end{tikzpicture}
  \end{center}
\end{cor}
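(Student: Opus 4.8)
The plan is to derive this as an immediate consequence of Theorem~\ref{thm:full-reduction}, which already carries out all the real work: the full parallel reduct $t\fpr$ serves as a single canonical common target for \emph{every} one-step parallel reduction out of $t$. In the classical Tait/Martin-L\"of argument one would use the equality $t_1 \parTo t\fpr$, but here parallel reduction is only confluent up to $\diffEq$, so $t\fpr$ is reached only modulo differential equivalence; Theorem~\ref{thm:full-reduction} is precisely the statement that packages this.

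Concretely, first I would apply Theorem~\ref{thm:full-reduction} to the hypothesis $t \parTo t_1$ to obtain a term $u$ with $t_1 \parTo u$ and $u \diffEq t\fpr$. Then I would apply the same theorem to the hypothesis $t \parTo t_2$ to obtain a term $v$ with $t_2 \parTo v$ and $v \diffEq t\fpr$. Finally, chaining these two facts through symmetry and transitivity of $\diffEq$ gives $u \diffEq t\fpr \diffEq v$, hence $u \diffEq v$, which is exactly the bottom edge of the required square; the two vertical edges $t_1 \parTo u$ and $t_2 \parTo v$ and the two top edges are the data we started from. This closes the diagram.

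There is essentially no obstacle to overcome at this point — the corollary is a two-line inference from Theorem~\ref{thm:full-reduction}. The only subtlety worth making explicit is that $\diffEq$ is an equivalence relation, so that its symmetry and transitivity can be used to turn ``both $u$ and $v$ are $\diffEq$-equivalent to $t\fpr$'' into ``$u \diffEq v$''; this is what allows the otherwise non-commuting parallel reductions to be reconciled. (Combined with the earlier fact that $\wfParTo$ lies between $\ul\reducesTo$ and $\ul\reducesTo^*$, this diamond property up to $\diffEq$ is what will eventually yield confluence of $\ul\reducesTo$, but that step lies outside the present statement.)
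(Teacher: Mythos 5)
Your proposal is correct and matches the paper's (implicit) argument exactly: the corollary is stated in the paper without a separate proof precisely because it follows by applying Theorem~\ref{thm:full-reduction} to each of the two given reductions and then using symmetry and transitivity of $\diffEq$ to identify the two reducts with $t\fpr$. Nothing is missing.
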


\begin{lem}
  \label{lem:perm-eq-fpr}
  Given unrestricted terms $s \permEq s'$ which are permutatively equivalent,
  that is, which differ only up to a reordering of their additions and
  differential applications, their full parallel reducts are differentially
  equivalent.
\end{lem}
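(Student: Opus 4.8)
The plan is to prove, by well-founded induction on the size of $s$ (a quantity invariant under $\permEq$, since each of the three generating rules of $\permEq$ relates terms of equal size), a slightly strengthened statement: whenever $s \permEq s'$, not only is $s\fpr \diffEq s'\fpr$, but $s\fpr$ and $s'\fpr$ have the same outermost syntactic construct — in particular one is a literal $\lambda$-abstraction iff the other is, and then their bodies are again related in this strengthened way. This strengthening is what makes the induction work: the definition of $(\cdot)\fpr$ branches on whether a subterm's reduct is a \emph{literal} $\lambda$-abstraction, and $\diffEq$ by itself cannot see this distinction (e.g.\ $0 + \lambda x . t \diffEq \lambda x . t$ but only one side is literally an abstraction), whereas $\permEq$ — generated only by associativity and commutativity of $+$ and by commutativity of nested differential applications — preserves outermost syntactic shape, so the two branches of the definition are always taken in lockstep on the two sides. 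Alongside the main induction I would establish the routine inversion facts for $\permEq$: it preserves the outermost construct; $(p\ q) \permEq u$ forces $u = (p'\ q')$ with $p \permEq p'$ and $q \permEq q'$ (via Lemma~\ref{lem:contextual-subterm} and induction on derivations); two terms built from $+$ at their top layers are $\permEq$ iff their multisets of maximal non-$+$ subterms agree up to $\permEq$; and a maximal differential spine $\D^m(r)\cdot\vec u$ with $r$ not a differential application is $\permEq$ to a term iff that term is a spine $\D^m(r')\cdot\vec v$ with $r \permEq r'$ and $\vec v$ a $\permEq$-wise permutation of $\vec u$; together with two easy structural inductions showing that capture-avoiding substitution and differential substitution preserve the head-structure invariant.

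Variables and $0$ are rigid under $\permEq$, so those cases are trivial; for $\lambda x . t$ and $\epsilon t$ the claim follows from the induction hypothesis on the (strictly smaller) body and contextuality of $\diffEq$. For a term whose top layers are built from $+$: the operation $(\cdot)\fpr$ distributes over $+$ (its clause for $+$ has no side condition), so both reducts are $+$-trees over the reducts of the respective maximal non-$+$ subterms; these subterms are $\permEq$-matched and strictly smaller, so the induction hypothesis makes the matched reducts differentially equivalent, and associativity and commutativity of $+$ — both among the defining rules of $\diffEq^1$ in Figure~\ref{fig:differential-equivalence} — let us rearrange the two $+$-trees into a common shape.

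The two substantial cases are application and differential application. For $(p\ q)$, inversion gives $p \permEq p'$ and $q \permEq q'$, and by the strengthened hypothesis $p\fpr$ is a literal $\lambda$-abstraction iff $p'\fpr$ is. If not, both reducts are applications $p\fpr\ q\fpr$ and $p'\fpr\ q'\fpr$, and we conclude by contextuality of $\diffEq$. If so, $p\fpr = \lambda x . e$ and $p'\fpr = \lambda x . e'$ with $e \diffEq e'$ (and head-matched), and the reducts are $\subst{e}{x}{q\fpr}$ and $\subst{e'}{x}{q'\fpr}$; Proposition~\ref{prop:substitution-differential} gives $\subst{e}{x}{q\fpr} \diffEq \subst{e'}{x}{q'\fpr}$, and head-matching of the outcome follows from preservation of head-structure under substitution. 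For a differential application, normalise it to a maximal spine $\D^m(r)\cdot\vec u$; its partner is a spine $\D^m(r')\cdot\vec v$ with $r \permEq r'$ and $\vec v$ a $\permEq$-wise permutation of $\vec u$. If $r\fpr$ is not a $\lambda$-abstraction, both reducts are again spines, $\D^m(r\fpr)\cdot\vec u\fpr$ and $\D^m(r'\fpr)\cdot\vec v\fpr$, matched via congruence of $\diffEq$ together with iterated use of the rule $\D(\D(s)\cdot t)\cdot e \diffEq^1 \D(\D(s)\cdot e)\cdot t$ to absorb the permutation. If $r\fpr = \lambda x . e$ (hence $r'\fpr = \lambda x . e'$ with $e \diffEq e'$), both reducts are $\lambda$-abstractions whose bodies are iterated differential substitutions, $\frac{\partial^m e}{\partial x^m}(\vec u\fpr)$ and $\frac{\partial^m e'}{\partial x^m}(\vec v\fpr)$; these are reconciled by applying Proposition~\ref{prop:differential-substitution-differential} to replace $e$ by $e'$ and each argument's reduct by its partner, then Proposition~\ref{prop:differential-substitution-commute} repeatedly to permute the arguments into the order of $\vec v$ — the side condition that $x$ not occur free in the arguments holds because $x$ is a bound variable of the original term and parallel reduction introduces no free variables — and finishing with contextuality of $\diffEq$ under the enclosing $\lambda x$.

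The only real obstacle is precisely this branching in the definition of $(\cdot)\fpr$: the bare equivalence $s\fpr \diffEq s'\fpr$ is too weak to survive the application and differential-application cases, and the remedy is the head-structure invariant, which is available only because $\permEq$ is so much coarser than $\diffEq$ (indeed, the lemma fails for $\diffEq$ in place of $\permEq$, as $(\cdot)\fpr$ does not respect $\diffEq$). Everything else — the $\permEq$-inversion lemmas, the head-structure preservation properties, the combinatorics of the $+$-tree and differential-spine normalisations, and the iterated appeals to Propositions~\ref{prop:substitution-differential}, \ref{prop:differential-substitution-differential} and~\ref{prop:differential-substitution-commute} — is routine but tedious bookkeeping.
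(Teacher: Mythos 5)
Your proposal is correct and its heart coincides with the paper's: the only genuinely non-trivial generating case is the swap $\D(\D(s)\cdot t)\cdot e \permEq \D(\D(s)\cdot e)\cdot t$, and both you and the paper discharge it by reducing to Proposition~\ref{prop:differential-substitution-commute} applied to the iterated differential substitutions in the bodies of the resulting abstractions. Where you diverge is in how the induction is organised: the paper inducts on the derivation of $\permEq$ and dismisses everything except that one case as ``straightforward'', whereas you induct on term size with a strengthened invariant recording that the two full parallel reducts share their outermost syntactic shape (in particular, one is a literal $\lambda$-abstraction iff the other is), supported by inversion lemmas for $\permEq$. This strengthening addresses a real subtlety that the paper's phrasing glosses over: the clauses of $(\cdot)\fpr$ for application and differential application branch on whether a subterm's reduct is \emph{literally} an abstraction, and the bare conclusion $p\fpr \diffEq p'\fpr$ cannot distinguish $\lambda x.e$ from, say, $0 + \lambda x.e$, so a naive congruence step through an application context does not go through ($\diffEq$ contains no $\beta$-rule to reconcile $\subst{e}{x}{q\fpr}$ with $(0+\lambda x.e)\ q\fpr$). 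Your head-structure invariant, which holds precisely because $\permEq$ is generated only by shape-preserving rearrangements, is the right repair; the cost is the extra bookkeeping of $\permEq$-inversion, the multiset/spine normalisations, and the substitution-compatibility appeals to Propositions~\ref{prop:substitution-differential} and~\ref{prop:differential-substitution-differential}, all of which are routine. In short: same key lemma and same crux, but your version makes explicit (and correctly patches) the step the paper leaves implicit.
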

\begin{proof}
  The proof follows by straightforward induction on the proof of permutative
  equivalence. The only involved case is $s = \D (\D(\lambda x . t) \cdot u)
  \cdot v \permEq \D(\D(\lambda x . t) \cdot v)\cdot u = s'$. But in this case
  the result follows as a corollary of
  Proposition~\ref{prop:differential-substitution-commute}
  \begin{align*}
    s\fpr &= (\D (\D(\lambda x . t)\cdot u)\cdot v)\fpr
    \displaybreak[0]\\
    &=       \lambda x . \dd{t}{x}{u, v}
    \displaybreak[0]\\
    &\diffEq \lambda x . \dd{t}{x}{v}{u}\\
    &= s'\fpr\qedhere
  \end{align*}
\end{proof}

\begin{thm}
  The reduction relation $\wfParTo$ has the diamond property. That is, whenever $\ul
  s \wfParTo \ul u$ and $\ul s \wfParTo \ul v$ there is a term
  $\ul c$ such that $\ul u \wfParTo \ul c$ and $\ul v \wfParTo \ul c$.
\end{thm}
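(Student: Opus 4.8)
The plan is to reduce the statement to the already-established properties of parallel reduction on unrestricted terms, using the \emph{full parallel reduct} as the common confluent witness. Unfolding the definition of $\wfParTo$, from $\ul{s} \wfParTo \ul{u}$ and $\ul{s} \wfParTo \ul{v}$ one obtains canonical forms $C$ and $C'$ of $\ul{s}$ together with unrestricted terms $a, b$ such that $C \parTo a$ with $a \diffEq u$, and $C' \parTo b$ with $b \diffEq v$. The crucial subtlety here — and the main obstacle — is that $C$ and $C'$ need not be the \emph{same} term: canonical forms are unique only up to permutative equivalence, so in general $C \permEq C'$ but $C \ne C'$. Consequently, one cannot simply invoke the diamond property of $\parTo$ on $C, a, C', b$, since $a$ and $b$ descend from different roots.

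The way around this is to collapse everything onto full parallel reducts, which \emph{are} canonical up to differential equivalence. First I would apply Theorem~\ref{thm:full-reduction} to $C \parTo a$ and to $C' \parTo b$, obtaining terms $w, w'$ with $a \parTo w \diffEq C\fpr$ and $b \parTo w' \diffEq C'\fpr$. Then, since $C$ and $C'$ are both canonical with $C \diffEq C'$, Lemma~\ref{lem:can-permutative} gives $C \permEq C'$, so Lemma~\ref{lem:perm-eq-fpr} yields $C\fpr \diffEq C'\fpr$, whence $w \diffEq w'$. Thus $\ul{w} = \ul{w'}$, and this common well-formed term is the candidate for $\ul{c}$.

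It then remains to confirm $\ul{u} \wfParTo \ul{c}$ and $\ul{v} \wfParTo \ul{c}$ from the unrestricted reductions $a \parTo w$ and $b \parTo w'$. For the first, I would apply Theorem~\ref{thm:parallel-reduction-canonical} to $a \parTo w$ to get $\can{a} \parTo c'$ with $c' \diffEq w$; since $a \diffEq u$, the term $\can{a}$ is a canonical form of $\ul{u}$, and $c' \diffEq w$ places $c'$ in the class $\ul{c}$, so $\ul{u} \wfParTo \ul{c}$ by definition. The case $\ul{v} \wfParTo \ul{c}$ is symmetric, using $b \parTo w'$ together with $w' \diffEq w$. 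Apart from the observation in the first paragraph and the appeal to Lemma~\ref{lem:perm-eq-fpr}, the argument is essentially bookkeeping, translating between the equivalence-class-level relation $\wfParTo$ and the term-level relation $\parTo$ through canonicalization.
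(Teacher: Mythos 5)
Your proof is correct and follows essentially the same route as the paper's: both take the full parallel reducts of the (possibly distinct, but permutatively equivalent) canonical forms as the common witness, identified up to $\diffEq$ via Lemma~\ref{lem:perm-eq-fpr}. Your version merely spells out the final bookkeeping step (via Theorem~\ref{thm:full-reduction} and Theorem~\ref{thm:parallel-reduction-canonical}) that the paper leaves implicit.
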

\begin{proof}
  Consider a well-formed term $\ul{s}$, and suppose that $\ul{s} \wfParTo
  \ul{u}$ and $\ul{s} \wfParTo \ul{v}$. In particular, this means there are two
  canonical forms $\can{s}_1, \can{s}_2$ of $s$ such that $\can{s}_1 \parTo u$
  and $\can{s}_2 \parTo v$. These canonical forms $\can{s}_1, \can{s}_2$
  are equivalent up to permutative equivalence, and so their full parallel
  reducts are differentially equivalent as per Lemma~\ref{lem:perm-eq-fpr}.
  Denote their $\diffEq$-equivalence class by $\ul{c}$.
  Therefore since $\can{s}_1 \parTo \can{s}_{1\da} \diffEq c$ and $\can{s}_2 \parTo 
  \can{s}_{2\da} \diffEq c$ it follows that $\ul{u} \wfParTo \ul{c}$ and $\ul{v}
  \wfParTo \ul{c}$.
\end{proof}

\begin{cor}
  The reduction relation $\ul{\reducesTo}$ is confluent.
\end{cor}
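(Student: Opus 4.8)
The plan is to conclude by the standard Tait--Martin-L\"of argument, since all the hard work has now been carried out. Recall first that we have established the inclusions $\wfReducesTo \subseteq \wfParTo \subseteq \wfReducesTo^*$ on well-formed terms. Taking reflexive--transitive closures preserves these inclusions, and since the reflexive--transitive closure of any relation lying between $R$ and $R^*$ is exactly $R^*$, we obtain $\wfParTo^* = \wfReducesTo^*$. Thus it suffices to show that $\wfParTo^*$ has the diamond property.

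Next, the diamond property of $\wfParTo$ established in the preceding theorem lifts to its reflexive--transitive closure $\wfParTo^*$ by the usual diagram-tiling argument. One first proves the ``strip lemma'': if $\ul s \wfParTo \ul u$ and $\ul s \wfParTo^* \ul v$, then there is some $\ul c$ with $\ul u \wfParTo^* \ul c$ and $\ul v \wfParTo \ul c$; this follows by induction on the length of the reduction sequence $\ul s \wfParTo^* \ul v$, closing each successive single-step span using the single-step diamond property. Iterating the strip lemma along a reduction sequence $\ul s \wfParTo^* \ul u$ then closes an arbitrary span consisting of $\ul s \wfParTo^* \ul u$ and $\ul s \wfParTo^* \ul v$, which shows that $\wfParTo^*$ has the diamond property.

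Combining these two facts, $\wfReducesTo^* = \wfParTo^*$ has the diamond property, which is precisely the assertion that $\wfReducesTo$ is confluent. There is no genuine obstacle at this stage: every delicate point --- the interaction of canonicalization with reduction, the role of differential equivalence, and the non-commuting reduction diagram for iterated derivatives --- has already been absorbed into the proof that $\wfParTo$ enjoys the diamond property on well-formed terms (where, crucially, the quotient by $\diffEq$ has already been taken, so the diamond property holds on the nose rather than merely ``up to $\diffEq$''). Hence this corollary is a purely formal consequence of the previous theorem together with the sandwiching of $\wfParTo$ between $\wfReducesTo$ and $\wfReducesTo^*$.
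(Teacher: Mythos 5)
Your proposal is correct and is precisely the standard Tait/Martin-L\"of argument the paper intends: the paper leaves this corollary without an explicit proof, relying on exactly the sandwiching $\mathrel{\ul{\reducesTo}} \subseteq \wfParTo \subseteq \mathrel{\ul{\reducesTo}}^*$ and the lifting of the diamond property of $\wfParTo$ to $\wfParTo^* = \mathrel{\ul{\reducesTo}}^*$ via the strip lemma. Your observation that the delicate issues (canonical forms, $\diffEq$, non-commuting derivative permutations) are already absorbed into the preceding theorem, so the diamond property holds on the nose for well-formed terms, matches the paper's setup exactly.
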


\subsection{Encoding the Differential \texorpdfstring{$\lambda$}{Lambda}-Calculus}

It is immediately clear, from simply inspecting the operational semantics for
$\lambda_\epsilon$, that it is closely related to the differential
$\lambda$-calculus -- indeed, every Cartesian differential category is a
Cartesian difference category, and this connection should also be reflected in
the syntax.

As it turns out, there is a clean translation that embeds $\lambda_\epsilon$
into the differential $\lambda$-calculus, which proceeds by deleting every term
that contains an $\epsilon$. The intuition behind this scheme should be
apparent: every single differential substitution rule in $\lambda_\epsilon$ is
identical to the corresponding case for the differential $\lambda$-calculus,
once all the $\epsilon$ terms are cancelled out.

\begin{defi}
    \label{def:erasure}
  Given an unrestricted $\lambda_\epsilon$ term $t$, its
  \textbf{$\epsilon$-erasure} is the differential $\lambda$-term $\ceil{t}$ defined inductively
  according to the rules in Figure~\ref{fig:erasure} below.

  \begin{figure}[htbp]
  \[
  \begin{array}{rcl}
    \ceil{x}&\defeq&x\\
    \ceil{0}&\defeq&0\\
    \ceil{s+t}&\defeq&\ceil{s} + \ceil{t}\\
    \ceil{\epsilon t}&\defeq& 0\\
    \ceil{s\ t}&\defeq&\ceil{s}\ \ceil{t}\\
    \ceil{\D(s)\cdot t}&\defeq&\D(\ceil{s})\cdot \ceil{t}
  \end{array}{rcl}
  \]
  \caption{$\epsilon$-erasure of a term $t$}
  \label{fig:erasure}
\end{figure}
\end{defi}

\begin{prop}
  The erasure $\ceil{t}$ is invariant under differential equivalence. That is to
  say, whenever $t\diffEq t'$, it is the case that $\ceil{t} = \ceil{t'}$.
\end{prop}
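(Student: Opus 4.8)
The plan is to exploit that $\diffEq$ is, by construction, the \emph{least} equivalence relation on $\Lambda_\epsilon$ that is contextual (in the sense of Figure~\ref{fig:contextual-relation}) and contains the generating relation $\diffEq^1$ of Figure~\ref{fig:differential-equivalence}. Consequently it suffices to exhibit \emph{any} relation with these three closure properties that is contained in the kernel of $\ceil{-}$; the obvious candidate is that kernel itself, $R \defeq \{(t,t') : \ceil{t} = \ceil{t'}\}$, where ``$=$'' means equality in the differential $\lambda$-calculus (whose terms are finite formal sums, identified modulo the module structure together with the standard identities governing abstraction and linear/non-linear application). Once $\diffEq \subseteq R$ is established, that is exactly the claim.

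That $R$ is an equivalence relation is immediate. That $R$ is contextual follows at once from the clauses of Figure~\ref{fig:erasure}: $\ceil{-}$ commutes with every term former of $\lambda_\epsilon$ except $\epsilon(-)$, on which it is the constant $0$, so if the immediate subterms of two composite terms have equal erasures the composites do too (and the $\epsilon$-clause of contextuality holds vacuously, both sides being $0$). It then remains to check $\diffEq^1 \subseteq R$, i.e.\ that $\ceil{s} = \ceil{t}$ for each generating pair $s \diffEq^1 t$, and these fall into two kinds. Every generator in which $\epsilon$ occurs on either side becomes trivial after erasure, the $\epsilon$-subterm collapsing to $0$: for instance $\ceil{\epsilon(s+t)} = 0 = 0 + 0 = \ceil{\epsilon s + \epsilon t}$, $\ceil{\lambda x.\epsilon t} = \lambda x.0 = 0 = \ceil{\epsilon(\lambda x.t)}$, $\ceil{(\epsilon s)\ t} = 0\ \ceil{t} = 0 = \ceil{\epsilon(s\ t)}$, $\ceil{\D(\epsilon t)\cdot e} = \D(0)\cdot\ceil{e} = 0 = \ceil{\epsilon(\D(t)\cdot e)}$, $\ceil{\D(s)\cdot\epsilon t} = \D(\ceil{s})\cdot 0 = 0 = \ceil{\epsilon(\D(s)\cdot t)}$, both sides of the ``duplication of infinitesimals'' generator erase to $0$, and $\ceil{s\ (t+\epsilon e)} = \ceil{s}\,(\ceil{t}+0) = \ceil{s}\,\ceil{t} = (\ceil{s}\,\ceil{t}) + 0 = \ceil{(s\ t) + \epsilon((\D(s)\cdot e)\ t)}$. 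The remaining generators all map, applying $\ceil{-}$ componentwise, to identities already available in the differential $\lambda$-calculus: the commutative-monoid laws for $+$ and $0$; additivity of abstraction, of application in its function position, and of differential application in its first argument, together with the collapses $\lambda x.0 \diffEq^1 0$, $0\ s \diffEq^1 0$, $\D(0)\cdot e \diffEq^1 0$; linearity of differential application in its second argument, which covers $\D(s)\cdot 0 \diffEq^1 0$ and, after the $\epsilon$-summand has been erased, $\D(s)\cdot(t+e) \diffEq^1 \D(s)\cdot t + \D(s)\cdot e$; and the symmetry $\D(\D(s)\cdot t)\cdot e \diffEq^1 \D(\D(s)\cdot e)\cdot t$ of iterated linear application. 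Hence $\diffEq^1 \subseteq R$, and by minimality $\diffEq \subseteq R$.

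There is no genuine obstacle here, only bookkeeping; the one point that wants care is the target side. Because $\ceil{t}$ lives in the differential $\lambda$-calculus rather than being raw syntax, one must be sure that each non-vanishing generator of $\diffEq^1$ maps to one of that calculus's own equations --- namely commutativity, associativity and unit for $+$, additivity of the term formers in the relevant positions, linearity of the linear application $\D(-)\cdot(-)$ in its second argument, and the symmetry $\D(\D\,s\cdot t)\cdot e = \D(\D\,s\cdot e)\cdot t$. The enumeration above discharges precisely this; and if one prefers the differential $\lambda$-calculus presented by pure syntax together with an equational theory, the statement should be read as equality modulo that theory and the same verification applies unchanged.
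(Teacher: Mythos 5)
Your proof is correct and follows essentially the same route as the paper's: the paper likewise argues by inspecting each generating rule of Figure~\ref{fig:differential-equivalence} and observing that the erasures of both sides coincide, relying (as you carefully note) on the fact that the target differential $\lambda$-calculus already identifies terms up to its sum/linearity equations. Your version merely makes explicit the minimality argument via the kernel of $\ceil{-}$ and the case enumeration that the paper leaves implicit.
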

\begin{proof}
  Follows immediately from inspecting the differential equivalence rules
  in Figure~\ref{fig:differential-equivalence} and noticing that the erasure of
  both sides coincides.
  
  Note that the standard presentation of the differential
  $\lambda$-calculus does not distinguish between equivalent terms, so terms
  like $s + t$ and $t + s$ are not merely equivalent but in fact identical.
\end{proof}

\begin{prop}
  Erasure is compatible with standard and differential substitution. That is to
  say, for any terms $s, t$ and a variable $x$, we have:
  \[
    \ceil{\subst{s}{x}{t}} = \subst{\ceil{s}}{x}{\ceil{t}}\\
    \ceil{\diffS{s}{x}{t}} = \diffS{\ceil{s}}{x}{\ceil{t}}\\
  \]
\end{prop}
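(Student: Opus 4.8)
The plan is to prove the two identities in the order stated, by structural induction on $s$ in each case, using the first identity as a lemma inside the proof of the second.

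For $\ceil{\subst{s}{x}{t}} = \subst{\ceil{s}}{x}{\ceil{t}}$, almost every case is immediate: $\ceil{-}$ is a strict homomorphism for the constructors $x$, $0$, $(-)+(-)$, $(-\ -)$ and $\D(-)\cdot(-)$, and capture-avoiding substitution commutes with each of these, so the induction hypothesis closes the case. The variable cases are by definition. For $s = \lambda y . s'$ we use our standing convention $y \notin \text{FV}(t)$ together with the elementary fact that $\ceil{-}$ introduces no free variables (it only deletes subterms), so $y \notin \text{FV}(\ceil{t})$ and the substitution may be pushed past the binder on both sides. The only case needing a separate word is $s = \epsilon s'$: then $\ceil{\subst{\epsilon s'}{x}{t}} = \ceil{\epsilon\,(\subst{s'}{x}{t})} = 0 = \subst{0}{x}{\ceil{t}} = \subst{\ceil{\epsilon s'}}{x}{\ceil{t}}$.

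For $\ceil{\diffS{s}{x}{t}} = \diffS{\ceil{s}}{x}{\ceil{t}}$ we again induct on $s$, using Figure~\ref{fig:differential-substitution} and Figure~\ref{fig:erasure}. The cases $s = x$, $s = y \neq x$, $s = 0$, $s = s_1 + s_2$ and $s = \lambda y . s'$ follow directly from the induction hypothesis, and $s = \epsilon s'$ is handled exactly as above, both sides being $0$. The content is in the two application cases. For $s = (s_1\ s_2)$ we erase the defining expression $\brak{\D(s_1)\cdot(\diffS{s_2}{x}{t})\ s_2} + \brak{\diffS{s_1}{x}{t}\ (\subst{s_2}{x}{(x+\epsilon t)})}$; distributing $\ceil{-}$ over the outer sum, the application and the differential application gives $\pa{\D(\ceil{s_1})\cdot\ceil{\diffS{s_2}{x}{t}}\ \ceil{s_2}} + \pa{\ceil{\diffS{s_1}{x}{t}}\ \ceil{\subst{s_2}{x}{(x+\epsilon t)}}}$. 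Now $\ceil{x+\epsilon t} = x + 0 = x$ (using, as remarked in the proof of the preceding proposition, that the differential $\lambda$-calculus identifies $x+0$ with $x$), so by the first identity $\ceil{\subst{s_2}{x}{(x+\epsilon t)}} = \subst{\ceil{s_2}}{x}{x} = \ceil{s_2}$; applying the induction hypothesis to $\diffS{s_1}{x}{t}$ and $\diffS{s_2}{x}{t}$ then yields exactly the defining expression of $\diffS{(\ceil{s_1}\ \ceil{s_2})}{x}{\ceil{t}} = \diffS{\ceil{(s_1\ s_2)}}{x}{\ceil{t}}$ in the differential $\lambda$-calculus. The case $s = \D(s_1)\cdot s_2$ is analogous: the $\lambda_\epsilon$ clause carries the extra summand $\epsilon\,\D(\D(s_1)\cdot s_2)\cdot(\diffS{s_2}{x}{t})$, which $\ceil{-}$ sends to $0$, while the remaining two summands erase — via the first identity applied to $\subst{s_2}{x}{(x+\epsilon t)}$ and the induction hypothesis — to $\D(\ceil{s_1})\cdot(\diffS{\ceil{s_2}}{x}{\ceil{t}}) + \D(\diffS{\ceil{s_1}}{x}{\ceil{t}})\cdot\ceil{s_2}$, which is precisely $\diffS{(\D(\ceil{s_1})\cdot\ceil{s_2})}{x}{\ceil{t}}$.

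I do not anticipate a genuine obstacle here: the statement is the formal shadow of the remark preceding it, namely that each differential-substitution clause of $\lambda_\epsilon$ collapses onto the corresponding clause of the differential $\lambda$-calculus once the $\epsilon$-decorated subterms are cancelled. The only points that demand care are bookkeeping ones — that $\ceil{-}$ creates no free variables (so the $\lambda$ cases go through), that the purely infinitesimal summands in the application and differential-application clauses vanish under $\ceil{-}$, and that $\subst{s_2}{x}{(x+\epsilon t)}$ erases to $\ceil{s_2}$ because $\ceil{x+\epsilon t} = x$.
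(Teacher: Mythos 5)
The paper states this proposition without proof, so there is nothing to compare against; your argument supplies exactly the routine structural induction that the paper's surrounding remarks ("every differential substitution rule in $\lambda_\epsilon$ is identical to the corresponding case for the differential $\lambda$-calculus once all the $\epsilon$ terms are cancelled out") implicitly rely on, and it is correct. The two points that actually need care --- that $\ceil{\subst{s_2}{x}{(x+\epsilon t)}}$ collapses to $\ceil{s_2}$ because $x + 0$ and $x$ are identified in the target calculus, and that the extra summand $\epsilon\,\D(\D(s_1)\cdot s_2)\cdot(\diffS{s_2}{x}{t})$ in the differential-application clause is erased to $0$ --- are both handled properly.
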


\begin{cor}
  Whenever $s \reducesTo s'$, then $\ceil{s} \reducesTo^* \ceil{s'}$.
\end{cor}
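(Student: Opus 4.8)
The plan is to prove this by induction on the derivation witnessing $s \reducesTo s'$, exploiting the fact that $\reducesTo$ on $\lambda_\epsilon$ is the least contextual relation generated by the two redex rules of Figure~\ref{fig:reduction}. The single indispensable ingredient is the preceding proposition, which states that $\epsilon$-erasure commutes with both capture-avoiding and differential substitution; everything else is bookkeeping about how $\ceil{-}$ interacts with the remaining term constructors.

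For the base cases I would treat the two redexes directly. If $s = (\lambda x . t)\ u$ and $s' = \subst{t}{x}{u}$, then $\ceil{s} = (\lambda x . \ceil{t})\ \ceil{u}$, which $\beta$-reduces in the differential $\lambda$-calculus to $\subst{\ceil{t}}{x}{\ceil{u}}$, and by the preceding proposition this is exactly $\ceil{\subst{t}{x}{u}} = \ceil{s'}$. Similarly, if $s = \D(\lambda x . t) \cdot u$ and $s' = \lambda x . \diffS{t}{x}{u}$, then $\ceil{s} = \D(\lambda x . \ceil{t}) \cdot \ceil{u}$ reduces in one $\partial$-step to $\lambda x . \diffS{\ceil{t}}{x}{\ceil{u}}$, which coincides with $\ceil{\lambda x . \diffS{t}{x}{u}} = \ceil{s'}$ again by compatibility of erasure with differential substitution.

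For the contextual cases I would use that $\reducesTo^*$ is itself contextual in the differential $\lambda$-calculus. When $s$ reduces inside an abstraction, an application, a differential application, or a sum, erasure commutes with the enclosing constructor (e.g. $\ceil{\lambda x . t} = \lambda x . \ceil{t}$, $\ceil{s\ t} = \ceil{s}\ \ceil{t}$, $\ceil{\D(s)\cdot t} = \D(\ceil{s})\cdot\ceil{t}$, $\ceil{s+t} = \ceil{s}+\ceil{t}$), so the induction hypothesis $\ceil{t} \reducesTo^* \ceil{t'}$ lifts verbatim to the surrounding term. The one case worth isolating is reduction under an infinitesimal extension, $s = \epsilon t \reducesTo \epsilon t' = s'$: here $\ceil{s} = 0 = \ceil{s'}$, so the claim holds trivially in zero steps — and this is precisely why the statement must use the reflexive--transitive closure $\reducesTo^*$ rather than $\reducesTo^+$.

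I do not expect a genuine obstacle: the result is a routine corollary of the preceding proposition. The only points demanding a moment's attention are making explicit the (evidently intended but unlisted) erasure clause $\ceil{\lambda x . t} \defeq \lambda x . \ceil{t}$, invoking contextuality of $\reducesTo^*$ on the target side, and noting that the $\epsilon$-case is what forces many-step rather than one-step reduction on the image.
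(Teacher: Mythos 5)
Your proof is correct and is exactly the argument the paper leaves implicit: the corollary is stated without proof as an immediate consequence of the preceding proposition on erasure commuting with both substitutions, and your induction on the reduction derivation (redex cases via that proposition, contextual cases via contextuality of $\reducesTo^*$, with the $\epsilon$-case collapsing to zero steps and thus forcing $\reducesTo^*$) is the intended routine verification. Your observation that the clause $\ceil{\lambda x . t} \defeq \lambda x . \ceil{t}$ is missing from Figure~\ref{fig:erasure} but evidently intended is also accurate.
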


These results form the syntactic obverse to the purely semantic fact that every Cartesian
differential category is (trivially) a Cartesian difference category (see 
\cite[Section 4.3]{alvarez2020extended} for a proof): the former exhibits the differential
$\lambda$-calculus as an instance of $\lambda_\epsilon$ where the $\epsilon$
operator is ``crossed out'', whereas the later shows that every Cartesian
differential category can be understood as a ``degenerate'' Cartesian difference
category, in the same sense that the corresponding infinitesimal extension is
just the zero map.

\section{Simple Types for \texorpdfstring{$\lambda_\epsilon$}{Lambda-Epsilon}}
\label{sec:simple-types}

Much like the differential $\lambda$-calculus, $\lambda_\epsilon$ can be endowed
with a system of simple types, built from a set of basic types using the usual
function type constructor. 

\begin{defi} The set of \textbf{types} and \textbf{contexts} of the
  $\lambda_\epsilon$-calculus is given by the following inductive definition:
  \[
    \begin{array}{rccl}
      \text{Types: } &\sigma, \tau& \defeq & \mathbf{t} \alter \sigma \Ra \tau\\
      \text{Contexts: } &\Gamma&\defeq& \emptyset \alter \Gamma, x : \tau
    \end{array}
  \]
  assuming a countably infinite set of basic types $\mathbf{t}, \mathbf{s}
  \ldots$ is given. 
\end{defi}

The typing rules for the $\lambda_\epsilon$-calculus are given in
Figure~\ref{fig:typing-rules} below, and should not be in the least surprising,
as they are identical to the typing rules for the differential
$\lambda$-calculus, with the addition of a typing rule for the infinitesimal
extension of a term. As one would expect, our type system  enjoys all the
``usual'' structural properties and their proofs follow by straightforward
induction on the typing derivation. Note, however, that all of these typing
rules operate on unrestricted terms, rather than on well-formed terms, for
reasons that we will clarify later.

\begin{figure}[ht]
  \begin{center}
  \bpt{
    \AxiomC{}
    \UnaryInfC{$\Gamma, x : \tau \ts x : \tau$}
  }
  \bpt{
    \AxiomC{$\Gamma \ts s : \tau \Ra \sigma$}
    \AxiomC{$\Gamma \ts t : \tau$}
    \BinaryInfC{$\Gamma \ts (s\ t) : \sigma$}
  }
  \bpt{
    \AxiomC{$\Gamma, x : \tau \ts t : \sigma$}
    \UnaryInfC{$\Gamma \ts \lambda x . t : \sigma \Ra \tau$}
  }
  \end{center}
  \begin{center}
  \bpt{
    \AxiomC{}
    \UnaryInfC{$\Gamma \ts 0 : \tau$}
  }
  \bpt{
    \AxiomC{$\Gamma \ts s : \tau$}
    \AxiomC{$\Gamma \ts t : \tau$}
    \BinaryInfC{$\Gamma \ts s + t : \tau$}
  }
  \bpt{
    \AxiomC{$\Gamma \ts t : \tau$}
    \UnaryInfC{$\Gamma \ts \epsilon t : \tau$}
  }
  \end{center}
  \begin{center}
  \bpt{
    \AxiomC{$\Gamma \ts s : \tau \Ra \sigma$}
    \AxiomC{$\Gamma \ts t : \tau$}
    \BinaryInfC{$\Gamma \ts \D (s) \cdot t : \tau \Ra \sigma$}
  }
  \end{center}
  \caption{Simple types for $\lambda_\epsilon$}
  \label{fig:typing-rules}
\end{figure}

According to the above rules, typing derivations are
\emph{invertible}, that is to say, whenever $\Gamma \ts t : \tau$ and $t$ is of
the form $s + e$, then it must be the case that $\Gamma \ts s : \tau$ and
$\Gamma \ts e : \tau$, and so on. One property that fails to hold is uniqueness
of typings: indeed the term $0$ admits any type, as do terms such as $0 + 0$ or
$(\lambda x . 0)\ y$.

The following ``standard'' properties also hold, and can be proven by
straightforward induction on the relevant typing derivation.

\begin{prop}[Weakening]
  Whenever $\Gamma \ts t : \tau$, then for any context $\Sigma$ which is
  disjoint with $\Gamma$ it is also the case that $\Gamma, \Sigma \ts t : \tau$.
\end{prop}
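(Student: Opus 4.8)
The plan is to proceed by induction on the derivation of $\Gamma \ts t : \tau$, following the typing rules in Figure~\ref{fig:typing-rules}. Every rule except the axiom is closed under simultaneously adding context entries to all premises and the conclusion, so most cases are immediate. For instance, if the last rule applied is the application rule, with premises $\Gamma \ts s : \tau' \Ra \tau$ and $\Gamma \ts t' : \tau'$ (so $t = (s\ t')$), then the induction hypothesis gives $\Gamma, \Sigma \ts s : \tau' \Ra \tau$ and $\Gamma, \Sigma \ts t' : \tau'$, and re-applying the same rule yields $\Gamma, \Sigma \ts (s\ t') : \tau$. The cases for $0$, for $s + t'$, for $\epsilon t'$, and for $\D(s) \cdot t'$ are handled in exactly the same way. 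The axiom case $\Gamma', x : \tau \ts x : \tau$ holds because $x$ still occurs in the extended context $\Gamma', x : \tau, \Sigma$.

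The only case requiring a little care is the abstraction rule, where $t = \lambda x . t'$ is derived from a premise of the form $\Gamma, x : \tau' \ts t' : \tau''$. One cannot apply the induction hypothesis to this premise directly with the extension $\Sigma$, because $x$ may already occur in $\Sigma$, in which case $\Gamma, x : \tau', \Sigma$ fails to be a well-formed (repetition-free) context. The remedy is to use the convention established earlier in the paper: we work up to $\alpha$-equivalence and assume all bound variables of a term differ from its free variables, so we may rename $x$ to a fresh variable $y$ occurring neither in $\Gamma$, nor in $\Sigma$, nor in $t'$, giving $\lambda x . t' = \lambda y . \subst{t'}{x}{y}$ together with a derivation of $\Gamma, y : \tau' \ts \subst{t'}{x}{y} : \tau''$. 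Since $y$ is disjoint from $\Sigma$, the induction hypothesis applies and yields $\Gamma, y : \tau', \Sigma \ts \subst{t'}{x}{y} : \tau''$; after reordering context entries we obtain $\Gamma, \Sigma, y : \tau' \ts \subst{t'}{x}{y} : \tau''$, and one application of the abstraction rule gives $\Gamma, \Sigma \ts \lambda y . \subst{t'}{x}{y} : \tau' \Ra \tau''$, i.e.\ $\Gamma, \Sigma \ts \lambda x . t' : \tau$.

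The main (indeed, only) obstacle is thus the bound-variable bookkeeping in the abstraction case; everything else is a mechanical re-application of the typing rules. The argument also uses implicitly the trivial auxiliary fact that the order of entries in a context is irrelevant to derivability (so that $\Gamma, y : \tau', \Sigma$ and $\Gamma, \Sigma, y : \tau'$ are interchangeable), which is itself established by an easy induction on typing derivations. With these ingredients in place, the induction goes through and proves the proposition.
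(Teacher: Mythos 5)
Your proof is correct and follows exactly the route the paper intends: the paper simply states that weakening ``can be proven by straightforward induction on the relevant typing derivation,'' and your case analysis (including the $\alpha$-renaming bookkeeping in the abstraction case) is the standard way to carry that induction out.
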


\begin{prop}[Substitution]
  Whenever $\Gamma, x : \tau \ts s : \sigma$ and $\Gamma \ts t : \tau$, we have:
  \begin{enumerate}[(i)]
    \item $\Gamma \ts \subst{s}{x}{t} : \sigma$
    \item $\Gamma, x : \tau \ts \diffS{s}{x}{t} : \sigma$
  \end{enumerate}
\end{prop}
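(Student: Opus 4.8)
The plan is to prove (i) and (ii) simultaneously by induction on the structure of $s$ (equivalently, on its typing derivation), using invertibility of the typing rules so that in each case the types of the immediate subterms of $s$ can be read off directly. For the abstraction case to go through I prove the slightly more liberal statement in which the distinguished hypothesis $x : \tau$ is allowed to occur anywhere in the context (the typing rules are insensitive to context order, so exchange is immediate), and I freely invoke Weakening. I also adopt the standing convention that bound variables are chosen distinct from $x$ and from $\text{FV}(t)$.

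Part (i) is the usual Substitution Lemma. For the variable case, if $s = x$ then inversion forces $\sigma = \tau$ and $\subst{x}{x}{t} = t$, so $\Gamma \vdash t : \tau$ concludes; if $s = y \neq x$ then $\subst{y}{x}{t} = y$ and $y : \sigma \in \Gamma$. For $s = \lambda y . s'$, invert to obtain $\Gamma, x : \tau, y : \rho \vdash s' : \sigma'$ with $\sigma = \rho \Ra \sigma'$, weaken to $\Gamma, y : \rho \vdash t : \tau$, apply the induction hypothesis, and re-abstract. The cases $s = 0$, $s = \epsilon s'$, $s = s_1 + s_2$, $s = (s_1\ s_2)$ and $s = \D(s_1) \cdot s_2$ are immediate: invert, apply the induction hypothesis to the subterms, and reassemble with the same typing rule, using that the $\D$-rule gives $\D(s_1) \cdot s_2 : \rho \Ra \sigma$ whenever $s_1 : \rho \Ra \sigma$ and $s_2 : \rho$, and that application eliminates the arrow.

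Part (ii) relies on the observation that $x$ does not occur in the domain of $\Gamma$, hence $x \notin \text{FV}(t)$, so $\diffS{s}{x}{t}$ is well-defined. The variable cases are $\diffS{x}{x}{t} = t$ (of type $\tau = \sigma$ in $\Gamma, x : \tau$ by Weakening) and $\diffS{y}{x}{t} = 0$ (typable at any type); the abstraction case mirrors part (i), except that $x : \tau$ is retained in the context after re-abstracting; the $+$ and $\epsilon$ cases follow directly from the induction hypothesis. The two substantive cases are application and differential application, whose defining clauses contain the subterm $\subst{s_2}{x}{x + \epsilon t}$. To type it I use the following strengthening of (i), which follows from (i) itself by writing $\subst{e}{x}{r} = \subst{\pa{\subst{e}{x}{y}}}{y}{r}$ for a fresh $y$ and applying (i) twice (with ambient contexts $\Gamma, y : \tau$ and $\Gamma, x : \tau$): if $\Gamma, x : \tau \vdash e : \rho$ and $\Gamma, x : \tau \vdash r : \tau$ then $\Gamma, x : \tau \vdash \subst{e}{x}{r} : \rho$. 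Granting this, for $\diffS{(s_1\ s_2)}{x}{t} = \brak{\D(s_1) \cdot \pa{\diffS{s_2}{x}{t}}\ s_2} + \brak{\diffS{s_1}{x}{t}\ \pa{\subst{s_2}{x}{x + \epsilon t}}}$ one writes $s_1 : \rho \Ra \sigma$ and $s_2 : \rho$ from inversion, so the induction hypothesis gives $\diffS{s_1}{x}{t} : \rho \Ra \sigma$ and $\diffS{s_2}{x}{t} : \rho$, whence $\D(s_1) \cdot \pa{\diffS{s_2}{x}{t}} : \rho \Ra \sigma$, while $\subst{s_2}{x}{x + \epsilon t} : \rho$ by the strengthening above (since $x + \epsilon t : \tau$ in $\Gamma, x : \tau$); both summands then have type $\sigma$. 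The differential-application case is analogous, with the type $\sigma$ of $\D(s_1) \cdot s_2$ itself an arrow $\rho \Ra \sigma'$, the extra summand $\epsilon \pa{\D(\D(s_1) \cdot s_2) \cdot \pa{\diffS{s_2}{x}{t}}}$ being typed by the $\D$-rule followed by the $\epsilon$-rule.

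The only real obstacle is the bookkeeping in the application and differential-application cases of (ii): establishing the strengthened substitution lemma, so that $\subst{s_2}{x}{x + \epsilon t}$ typechecks in the context $\Gamma, x : \tau$ that still mentions $x$, and then checking that every domain/codomain annotation lines up across the repeated uses of the $\D$-rule and the application rule. Every other case is routine structural induction.
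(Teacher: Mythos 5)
Your proposal is correct and follows the same route the paper intends: the paper dispatches this proposition with the one-line remark that it follows "by straightforward induction on the relevant typing derivation," and your argument is exactly that induction, carried out in detail. You also correctly identify and resolve the one genuinely non-routine point — typing the subterm $\subst{s_2}{x}{x + \epsilon t}$ in a context that still contains $x : \tau$, via the renaming-based strengthening of part (i) — which the paper leaves implicit.
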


\begin{thm}[Subject reduction]
  \label{thm:subject-reduction}
  Whenever $\Gamma \ts t : \tau$ and $t \reducesTo t'$ then $\Gamma \ts t' : \tau$.
\end{thm}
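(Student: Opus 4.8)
The plan is to argue by induction on how the reduction $t \reducesTo t'$ is derived, i.e. on the structure of the proof that $(t,t')$ lies in the least contextual relation generated by the two head-reduction rules of Figure~\ref{fig:reduction}. This splits into five congruence cases — reduction inside a $\lambda$-abstraction, an $\epsilon$, a sum, an application, or a differential application — and the two base cases ($\beta$ and $\partial$). The congruence cases are all dispatched the same routine way: since the typing rules are invertible, a derivation of $\Gamma \ts t : \tau$ with $t$ of the corresponding shape exposes typing derivations of the immediate subterms (in the context extended by the bound variable, in the $\lambda$-case); the induction hypothesis replaces the derivation of the reduced subterm by one of the same type; and reapplying the original typing rule rebuilds a derivation of $\Gamma \ts t' : \tau$. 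So all the content is in the two base cases, where I will lean on the Substitution proposition.

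For the $\beta$-case, $(\lambda x . u)\ s \reducesTo \subst{u}{x}{s}$: from $\Gamma \ts (\lambda x . u)\ s : \sigma$, invertibility of the application rule gives $\Gamma \ts \lambda x . u : \tau \Ra \sigma$ and $\Gamma \ts s : \tau$ for some $\tau$, and invertibility of the abstraction rule gives $\Gamma, x : \tau \ts u : \sigma$. Part~(i) of the Substitution proposition, applied to these two judgements, yields precisely $\Gamma \ts \subst{u}{x}{s} : \sigma$.

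For the $\partial$-case, $\D(\lambda x . u)\cdot s \reducesTo \lambda x . \diffS{u}{x}{s}$: from $\Gamma \ts \D(\lambda x . u)\cdot s : \tau$, invertibility of the differential-application rule forces $\tau$ to be a function type, say $\tau = \rho \Ra \sigma$, with $\Gamma \ts \lambda x . u : \rho \Ra \sigma$ and $\Gamma \ts s : \rho$, and inverting the abstraction rule gives $\Gamma, x : \rho \ts u : \sigma$. Since $x$ occurs bound in $\D(\lambda x . u)\cdot s$, the variable convention lets us take $x \notin \text{FV}(s)$ and $x$ not among the variables of $\Gamma$, so the differential substitution $\diffS{u}{x}{s}$ is well-formed and the context $\Gamma, x : \rho$ is sensible. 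Now part~(ii) of the Substitution proposition, applied to $\Gamma, x : \rho \ts u : \sigma$ and $\Gamma \ts s : \rho$, gives $\Gamma, x : \rho \ts \diffS{u}{x}{s} : \sigma$, and one final application of the abstraction rule produces $\Gamma \ts \lambda x . \diffS{u}{x}{s} : \rho \Ra \sigma = \tau$, as required.

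I expect the $\partial$-case to be the only step requiring any thought, and even there the difficulty is pure bookkeeping rather than anything deep: unlike ordinary substitution, differential substitution does not consume the variable $x$, so $x$ must be retained in the context when typing the body $\diffS{u}{x}{s}$ and then reintroduced by the closing abstraction; and one must not forget to invoke $x \notin \text{FV}(s)$ so that the differential substitution appearing in the reduct is even defined. No genuinely hard step arises — the theorem is essentially a consequence of invertibility of typing together with the Substitution proposition.
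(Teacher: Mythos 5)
Your proposal is correct and matches what the paper intends: the paper gives no explicit proof, stating only that subject reduction follows by "straightforward induction" from the structural properties, and your argument is exactly that standard induction (on the reduction derivation rather than the typing derivation, an inessential choice), dispatching the congruence cases by invertibility and the two head-reduction cases by parts (i) and (ii) of the Substitution proposition. The only point worth noting is that you correctly flag the side condition $x \notin \text{FV}(s)$ needed for the $\partial$-case, which the paper handles silently via its bound-variable convention.
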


Since we have defined well-formed terms as equivalence classes of unrestricted
terms, we might ask if typing is compatible with this equivalence relation. The
answer is unfortunately no, that is to say, there are ill-typed terms that are
differentially equivalent to well-typed terms. In particular, the term $(0\ t)$
is differentially equivalent to the term $0$, but while the later is trivially
well-typed, the former will not be typable for many choices of $t$ (for example,
whenever $t = (x\ x)$). A weaker version of this property does hold, however,
that makes use of canonicity.

\begin{prop}
  \label{prop:typing-canonical}
  Whenever $\Gamma \ts t : \tau$, then $\Gamma \ts \can{t} : \tau$, and
  furthermore whenever $\Gamma \ts \can{t} : \tau$ then every canonical form of
  $t$ admits the same type. 
\end{prop}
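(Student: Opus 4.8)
The plan is to treat the two halves of the statement separately. For the first half, that $\Gamma \ts t : \tau$ implies $\Gamma \ts \can{t} : \tau$, I would argue by structural induction on $t$, following the recursive definition of $\can{\cdot}$ and making essential use of the invertibility of the typing rules. The key point is that, although differential equivalence does \emph{not} preserve typing in general (as witnessed by $0\ t \diffEq 0$ with $t$ ill-typed), the canonicalisation procedure only ever orients the problematic equations in the type-preserving direction: it discards subterms (rewriting $0\ t$ to $0$, or $\lambda x . 0$ to $0$), rearranges additions and infinitesimal extensions, and the one genuinely expansive equation $\D(s)\cdot(t+e) \diffEq \D(s)\cdot t + \D(s)\cdot e + \epsilon(\D(\D(s)\cdot t)\cdot e)$ produces a right-hand side that is typable whenever the left-hand side is, since $\D(s)\cdot t$ always has the same type as $s$. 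So at each node of the induction, invertibility extracts typing judgements for the immediate subterms, the induction hypothesis canonicalises them while preserving their types, and it only remains to check that the combinators used to reassemble the canonical form all preserve typing.

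This last point I would isolate as a short family of auxiliary lemmas, each proved by a routine structural induction mirroring the definitions: if $S, T$ are canonical with $\Gamma \ts S : \tau$ and $\Gamma \ts T : \tau$, then $\Gamma \ts S \cansum T : \tau$; if $\Gamma \ts T : \tau$ then $\Gamma \ts \epsilon^* T : \tau$; if $\Gamma \ts S : \sigma \Ra \tau$ and $\Gamma \ts t^\b : \sigma$ then $\Gamma \ts \D^*(S) \cdot t^\b : \tau$; if $\Gamma \ts s^\b : \sigma \Ra \tau$ and $\Gamma \ts T : \sigma$ then $\Gamma \ts \reg{s^\b, T} : \sigma \Ra \tau$ (unfolding the recursion on $T$ and invoking the previous two claims at each layer); and the analogous statements for $\mathbf{ap}$, $\mathbf{pri}$, $\mathbf{tan}$ and the $\lambda^*$-abstraction used in the $\lambda$-case. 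Granting these, the only non-immediate cases of the induction — conventional and differential application — go through by simply chasing the definitions of $\can{s\ t}$ and $\can{\D(s)\cdot t}$ and reading off the types.

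For the second half, suppose $\Gamma \ts \can{t} : \tau$ and let $S$ be any canonical form of $t$, i.e.\ any canonical term with $S \diffEq t$. Since $\can{t} \diffEq t$ by Theorem~\ref{thm:canonical-form-differential}, we have $S \diffEq \can{t}$, and as both terms are canonical, Lemma~\ref{lem:can-permutative} yields $S \permEq \can{t}$. It therefore suffices to show that permutative equivalence preserves typing in both directions, which I would prove by induction on the derivation of $S \permEq \can{t}$: associativity and commutativity of $+$ are immediate from invertibility, and the only mildly interesting generator, $\D(\D(s)\cdot t)\cdot e \permEq \D(\D(s)\cdot e)\cdot t$, is handled by noting that invertibility forces either side to require exactly $\Gamma \ts s : \alpha \Ra \beta$, $\Gamma \ts t : \alpha$ and $\Gamma \ts e : \alpha$, and to carry the type $\alpha \Ra \beta$; contextuality then propagates this to all of $\permEq$. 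Applying this to $S \permEq \can{t}$ gives $\Gamma \ts S : \tau$, as required.

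The main obstacle is the bookkeeping in the first half: checking that $\reg{\cdot,\cdot}$ and the application cases preserve typing requires carefully threading the types through the nested recursions — regularising $\D(s^\b)\cdot T$ builds up iterated differential applications $\D(\D(\cdots)\cdot\cdots)\cdot\cdots$, and one must verify that every layer stays well typed — and one has to be scrupulous that no step of the algorithm ever silently uses the type-unsound direction of an equivalence. Everything else, including the $\permEq$ argument and all additive cases, is routine.
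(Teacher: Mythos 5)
Your proof is correct and follows essentially the same route as the paper's, which argues by induction on the typing derivation and observes that every operation used in canonicalisation ($\cansum$, $\epsilon^*$, $\D^*$, $\reg{\cdot,\cdot}$, $\textbf{ap}$, $\textbf{pri}$, $\textbf{tan}$) respects the typing rules. Your handling of the second conjunct via Theorem~\ref{thm:canonical-form-differential}, Lemma~\ref{lem:can-permutative} and type-preservation of $\permEq$ spells out a step the paper leaves implicit, but it is the intended argument.
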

\begin{proof}
  The proof proceeds by induction on the typing derivation, by noting that every
  operation involved in canonicalization respects the typing rules. 
\end{proof}

\begin{rem}
  The above issue could have been entirely avoided by circumventing the untyped
  calculus altogether and instead defining and operating on well-typed (unrestricted) 
  terms directly. We have preferred to work out the untyped case first for two reasons:
  first, to mimic the development of the differential $\lambda$-calculus. Second, since
  differentiation of control and fixpoint operators is suspect (in that there is not
  an ``obvious'' choice of a derivative for them), we hope that working in an untyped
  calculus featuring Church encodings and a Y combinator can illustrate what the
  ``natural'' choice for their derivatives should be.
\end{rem}

Before stating a progress theorem for $\lambda_\epsilon$, we must point out one
small subtlety, as the definition of reduction of unrestricted terms depends on
the particular representation chosen for the term. For example, the terms 
$((\lambda x . x) + 0)\ 0$ and $(\lambda x . x)\ 0$ are equivalent, but the first
one contains no $\beta$-redexes, whereas the second one reduces to $0$ in one
step. We can prove that progress holds for canonical terms, however, as those
are ``maximally reducible''.

\begin{defi}
  A canonical term $T$ is a \textbf{canonical value} whenever it is of the form
  \[ T = \sum_{i=1}^i \varepsilon^{k_i}(\lambda x_i . t_i)
  \]
\end{defi}
\begin{thm}[Progress]
  Whenever a canonical term $T$ admits a typing derivation $\ts T :
  \tau$, then either $T$ is a canonical value or there is some term $t'$ with $T
  \reducesTo t'$.
\end{thm}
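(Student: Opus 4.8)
The plan is to run the standard progress argument, adapted to the fact that canonical terms --- and the basic terms occurring in them --- are syntactically restricted. The real content will be packaged into an auxiliary lemma: \emph{if $s^\b \in \base$ is a basic term with $\ts s^\b : \tau$ in the empty context, then either $s^\b$ is a $\lambda$-abstraction, or there is a term $e$ with $s^\b \reducesTo e$.} Granting this, a canonical term $T = \sum_{i=1}^n \epsilon^{k_i} t^\b_i$ is handled by inspecting its summands.

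I would prove the auxiliary lemma by structural induction on $s^\b$, using invertibility of the typing rules at each step; the grammar of basic terms leaves exactly four cases. A bare variable cannot be typed in the empty context, so that case is vacuous. If $s^\b = \lambda x . t^\b$ there is nothing to prove. If $s^\b = (u^\b\ t^*)$, invertibility yields $\ts u^\b : \rho \Ra \tau$ for some $\rho$, so by the induction hypothesis $u^\b$ is either a $\lambda$-abstraction --- in which case $(u^\b\ t^*)$ is a $\beta$-redex and reduces by $\reducesTo_\beta$, and this remains valid even when $t^* = 0$ --- or $u^\b \reducesTo u'$, whence $s^\b = (u^\b\ t^*) \reducesTo (u'\ t^*)$ because $\reducesTo$ is contextual. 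The case $s^\b = \D(u^\b) \cdot t^\b$ is identical \emph{mutatis mutandis}, using $\reducesTo_\partial$ (after $\alpha$-renaming $x$ out of $\text{FV}(t^\b)$ if necessary) when $u^\b$ is an abstraction, and contextuality otherwise.

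The theorem then follows quickly. Writing $T = \sum_{i=1}^n \epsilon^{k_i} t^\b_i$ and repeatedly inverting the typing derivation $\ts T : \tau$ through the $+$ and $\epsilon$ constructors gives $\ts t^\b_i : \tau_i$ for every $i$. If each $t^\b_i$ is a $\lambda$-abstraction, then $T$ has precisely the shape required of a canonical value (this subsumes $n = 0$, i.e. $T = 0$). Otherwise choose $j$ with $t^\b_j$ not an abstraction; the auxiliary lemma gives $t^\b_j \reducesTo e$, and since $\epsilon^{k_j} t^\b_j$ occurs as a subterm of $T$ and $\reducesTo$ is contextual, Lemma~\ref{lem:contextual-subterm} yields a term $t'$ --- namely $T$ with that occurrence replaced by $\epsilon^{k_j} e$ --- with $T \reducesTo t'$, which is what we need.

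The argument carries no conceptual difficulty; the one point that genuinely requires care is that canonical and basic terms are not arbitrary $\Lambda_\epsilon$-terms --- arguments of ordinary applications must be additive and arguments of differential applications must be basic --- so the induction must be threaded through basic terms exactly as above rather than through $\Lambda_\epsilon$ at large, and one must confirm that $\beta$- and $\partial$-redexes survive inside these restricted shapes (in particular that $(\lambda x . t)\ 0$ is still a redex, and that the side condition on $\reducesTo_\partial$ is met by $\alpha$-conversion).
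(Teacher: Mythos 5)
Your proof is correct and follows the same overall strategy as the paper's: induct on the structure of the canonical term, reduce the problem to analysing a single basic summand, and observe that a closed well-typed basic term is either an abstraction or contains a redex. The difference is in how that last step is discharged. The paper simply asserts, in the case $T = \epsilon^k s^\b$, that ``$s^\b$ has the form $\lambda x.e^\b$ or $\D(e^\b)\cdot u^\b$'' and then treats only those two subcases (handling the differential application by applying the outer induction hypothesis to its body, viewed as a canonical term). This case analysis is incomplete: a closed well-typed basic term can also be an ordinary application $(u^\b\ t^*)$ --- for instance $(\lambda x.x)\ 0$ is canonical, closed, typable, not a value, and must be shown to reduce. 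Your auxiliary lemma, proved by structural induction on basic terms with all four grammar cases (variable vacuous in the empty context, abstraction immediate, application and differential application via the induction hypothesis on the head plus contextuality of $\reducesTo$), covers exactly this gap, and your attention to the side conditions (that $(\lambda x.t)\ 0$ is still a $\beta$-redex, and that the freshness condition on $\reducesTo_\partial$ is met by $\alpha$-conversion) is warranted. So your argument is the same in spirit but is the more complete of the two.
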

\begin{proof}
  The proof proceeds by induction on the structure of $T$.
  \begin{itemize}
    \item When $T = 0$ then $T$ is trivially a canonical value.
    \item When $T = \epsilon^k s^\b$, then $s^\b$ has the form $\lambda x .
    e^\b$ or $\D(e^\b) \cdot u^\b$. In the first case $T$ is already a 
    canonical value. In the second case, note that the term $e^\b$ is itself a
    canonical term and a strict subterm of $T$. By inversion of the typing
    rules, we have that $\ts e^\b : \tau$ (note that the type of a differential
    application is the same as the type of its body, unlike the case of standard
    application). Hence either $e^\b reduces$ (and therefore so does $T$),
    or it is a canonical value, i.e. $e^\b$ is of the form $\lambda x . w^\b$.
    But in the last case then $T = \epsilon^k \D(\lambda x . w^\b) \cdot s^\b$
    and therefore $T \reducesTo \epsilon^k \lambda x . \diffS{w^\b}{x}{s^\b}$.
    \item When $T = T_1 + T_2$, then either both $T_1, T_2$ are canonical values,
    and then so is $T$, or one of $T_1, T_2$ reduces, in which case so does $T$.
    \qedhere
  \end{itemize}
\end{proof}

\begin{defi}
  We extend typing judgements to well-formed terms by setting $\Gamma \mathrel{\ul{\ts}} \ul{t} 
  : \tau$ whenever $\Gamma \ts \can{t} : \tau$.
\end{defi}

\begin{cor}[Subject reduction for well-formed terms]
  \label{cor:well-formed-subject-reduction}
  Whenever $\Gamma \wfTs \ul{t} : \tau$ and $\ul{t} \wfReducesTo
  \ul{t'}$, then $\Gamma \wfTs \ul{t'} : \tau$.
\end{cor}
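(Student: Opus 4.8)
The plan is to reduce the statement to the subject reduction theorem for unrestricted terms (Theorem~\ref{thm:subject-reduction}) and bridge the gap between well-formed terms and their representatives using compatibility of typing with canonicalization (Proposition~\ref{prop:typing-canonical}). First I would unfold the two hypotheses. The assumption $\Gamma \wfTs \ul{t} : \tau$ means by definition that $\Gamma \ts \can{t} : \tau$, and then by the second half of Proposition~\ref{prop:typing-canonical} that $\Gamma \ts C : \tau$ holds for \emph{every} canonical form $C$ of $\ul{t}$. The assumption $\ul{t} \wfReducesTo \ul{t'}$ unfolds to the existence of a canonical form $\can{t}$ of $\ul{t}$ together with an unrestricted term $s''$ such that $\can{t} \reducesTo s''$ and $s'' \diffEq t'$.

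Next I would apply ordinary subject reduction. The canonical form $\can{t}$ witnessing the reduction is typable with type $\tau$ (this is exactly where the ``every canonical form'' strengthening of Proposition~\ref{prop:typing-canonical} is needed, since $\wfReducesTo$ only supplies \emph{some} suitable canonical form of the source), and $\can{t} \reducesTo s''$ is a one-step reduction of unrestricted terms, so Theorem~\ref{thm:subject-reduction} yields $\Gamma \ts s'' : \tau$. Finally I would transport this typing along $\diffEq$ back to $\ul{t'}$: since $s'' \diffEq t'$, any canonical form of $\ul{t'}$ is a canonical term differentially equivalent to $s''$, i.e.\ a canonical form of $s''$, and applying Proposition~\ref{prop:typing-canonical} to $\Gamma \ts s'' : \tau$ shows that every such canonical form admits type $\tau$. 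Hence $\Gamma \ts \can{t'} : \tau$, which is precisely $\Gamma \wfTs \ul{t'} : \tau$.

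The only real subtlety—hardly an obstacle—is keeping the quantifier structure straight: $\wfReducesTo$ is witnessed by \emph{some} canonical form of the source, whereas $\wfTs$ is phrased in terms of \emph{the} canonical form, and the two are reconciled solely because Proposition~\ref{prop:typing-canonical} guarantees that all canonical forms of a typable term share its type (equivalently, permutative equivalence preserves typing among canonical terms, cf.\ Theorem~\ref{thm:canonicity}). Beyond composing these earlier results no new combinatorial or syntactic argument is required.
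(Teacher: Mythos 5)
Your proposal is correct and follows essentially the same route as the paper's proof: unfold $\wfReducesTo$ to get a witnessing canonical form, type it via Proposition~\ref{prop:typing-canonical}, apply Theorem~\ref{thm:subject-reduction} to the one-step reduction of unrestricted terms, and transport the resulting typing back through canonicalization to $\ul{t'}$. Your explicit attention to the ``some canonical form versus every canonical form'' quantifier mismatch is a point the paper's own proof glosses over, but it is the same argument.
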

\begin{proof}
  Since $\ul{t} \wfReducesTo \ul{t'}$, there is some canonical form $T =
  \can{\ul{t}}$ such that $T \reducesTo t'' \diffEq t'$, and furthermore $\Gamma
  \ts T : \tau$. By definition of 
  Theorem~\ref{thm:subject-reduction}, we have that $\Gamma \ts t'' : \tau$ and
  therefore by Proposition~\ref{prop:typing-canonical} $\Gamma \ts \can{t''} :
  \tau$, from which it follows that $\Gamma \wfTs \ul{t'} : \tau$.
\end{proof}

\begin{cor}[Progress for well-formed terms]
  \label{cor:well-formed-progress}
  Whenever $\Gamma \wfTs \ul{t} : \tau$ then either $\ul{t} \wfReducesTo
  \ul{t'}$ or every canonical form $\can{\ul{t}}$ is a canonical value. 
\end{cor}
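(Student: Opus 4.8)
The plan is to reduce the statement to the Progress theorem for canonical terms, using Proposition~\ref{prop:typing-canonical} to pass from well-formed terms to their canonical representatives and Lemma~\ref{lem:can-permutative} to transfer the conclusion back to \emph{all} canonical forms of $\ul{t}$.

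First I would unfold the definition of $\wfTs$: from $\Gamma \wfTs \ul{t} : \tau$ we obtain a canonical form $T = \can{t}$ of $\ul{t}$ with $\Gamma \ts T : \tau$, and by Proposition~\ref{prop:typing-canonical} in fact every canonical form of $\ul{t}$ admits the type $\tau$. Applying the Progress theorem to the canonical term $T$ gives a dichotomy. If $T \reducesTo t'$ for some unrestricted term $t'$, then, taking $t'' = t'$ and using reflexivity of $\diffEq$ (so that $t'' \diffEq t'$), the definition of $\wfReducesTo$ immediately yields $\ul{t} \wfReducesTo \ul{t'}$, which is the first alternative. Otherwise $T$ is a canonical value, and it remains to show that \emph{every} canonical form of $\ul{t}$ is a canonical value.

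For this last step I would observe that any canonical form $T'$ of $\ul{t}$ satisfies $T' \diffEq t \diffEq T$ (the second equivalence by Theorem~\ref{thm:canonical-form-differential}), so by Lemma~\ref{lem:can-permutative} we have $T' \permEq T$. It therefore suffices to check that being a canonical value is invariant under permutative equivalence between canonical terms. This follows by a routine induction on the derivation of $\permEq$: the generating rules are associativity and commutativity of $+$, which only reorder and re-bracket the summands of a term of the form $\sum_i \epsilon^{k_i}(\lambda x_i . t_i)$ — producing another sum of basic terms of exactly the same shape — and the commutation $\D(\D(s)\cdot u)\cdot v \permEq \D(\D(s)\cdot v)\cdot u$, which touches neither an $\epsilon$-prefix nor an outermost $\lambda$ of any summand of a canonical value; the contextual closure preserves the property since it can only rewrite within the bodies $t_i$.

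The only genuine content beyond bookkeeping is this last invariance check, and even that is immediate once one notes that $\permEq$ contains no rule acting on an $\epsilon$ or a $\lambda$ at the head of a term, so on a canonical value its sole effect is to permute the summands and rewrite inside the $t_i$; the remainder is a direct combination of the Progress theorem, Proposition~\ref{prop:typing-canonical}, Lemma~\ref{lem:can-permutative}, and the definitions of $\wfTs$ and $\wfReducesTo$. I do not anticipate any real obstacle here.
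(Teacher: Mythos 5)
Your proposal is correct and is exactly the argument the paper intends: the corollary is left unproved as an immediate consequence of the Progress theorem, Proposition~\ref{prop:typing-canonical} and the definitions of $\wfTs$ and $\wfReducesTo$, and your one piece of genuine added content — that being a canonical value is invariant under $\permEq$, which is needed to pass from \emph{one} canonical form to \emph{every} canonical form since canonical forms are unique only up to permutative equivalence — is checked correctly. The only caveat, inherited from the paper's own statement rather than introduced by you, is that the Progress theorem is proved only for the empty context, so the appeal to it (and hence the corollary itself) really only makes sense for closed terms: an open term such as a bare variable $x$ is canonical, well-typed, normal, and not a canonical value.
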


\subsection{Strong Normalisation}

With our typing rules in place, we set out to show that $\lambda_\epsilon$ is
strongly normalising. Our proof follows the structure of Ehrhard and Regnier's
\cite{ehrhard2003differential} and Vaux's\cite{vaux2006lambda}, which use an
adaptation of the well-known argument by reducibility candidates. Our proof will
be somewhat simpler, however, due to two main reasons: first, we are not
concerning ourselves with terms with coefficients on some general rig; and
second, we have defined unrestricted and canonical terms as inductive types, and
so we can freely use induction on the syntax of our terms. We will need some
auxiliary results, which we prove now.

\begin{lem}
  Given an unrestricted term $t$, there are only finitely many terms $t'$ such that
  $t \reducesTo t'$.
\end{lem}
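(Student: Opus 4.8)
The plan is to prove this by structural induction on the unrestricted term $t$, showing that the set $\{t' \mid t \reducesTo t'\}$ is finite. The induction hypothesis gives finiteness of the one-step reducts of every strict subterm, and since one-step reduction is \emph{contextual}, any reduction $t \reducesTo t'$ either takes place entirely inside an immediate subterm of $t$ or (when $t$ is itself a redex) is a head reduction. First I would handle the contextual cases: if $t = \lambda x . s$, then every reduct is of the form $\lambda x . s'$ with $s \reducesTo s'$, so the reducts of $t$ are in bijection with a subset of the (finite, by IH) reducts of $s$; similarly for $t = \epsilon s$. If $t = s\,e$, a reduct is either $s'\,e$ with $s \reducesTo s'$, or $s\,e'$ with $e \reducesTo e'$, or — in the case $s = \lambda x . u$ — the single head $\beta$-reduct $\subst{u}{x}{e}$; in all cases the set of reducts injects into the disjoint union of two finite sets plus at most one extra element, hence is finite. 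The cases $t = s + e$ and $t = \D(s)\cdot e$ are handled identically, the latter with the head $\partial$-reduct $\lambda x . \diffS{u}{x}{e}$ when $s = \lambda x . u$. The base cases $t = x$ and $t = 0$ have no reducts at all.

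The only genuine subtlety is that a head redex and a contextual reduction can coexist: e.g. $t = (\lambda x . u)\,e$ both head-reduces to $\subst{u}{x}{e}$ and contextually reduces to $(\lambda x . u')\,e$ or $(\lambda x . u)\,e'$. This is not an obstacle — it simply means the bound on the number of reducts of $t$ is the sum of the bounds for its subterms plus one, rather than being attributed to a single case. I would phrase the argument so that, for each syntactic form, the set of one-step reducts is exhibited as contained in an explicitly described finite set built from the reduct-sets of the immediate subterms (which are finite by the inductive hypothesis) together with at most one head-reduction outcome. Since a finite union of finite sets is finite, and adjoining one element preserves finiteness, the conclusion follows.

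The main point to be careful about — more a matter of bookkeeping than of difficulty — is that reduction is defined on \emph{unrestricted} terms up to $\alpha$-equivalence, so one should check that $\subst{u}{x}{e}$ and $\diffS{u}{x}{e}$ are well-defined single terms (differential substitution here produces a genuine term, not an equivalence class, since it is defined syntactically in Figure~\ref{fig:differential-substitution}), so each head redex contributes exactly one reduct. No appeal to differential equivalence, canonicalization, or typing is needed; this is purely a statement about the syntactic one-step relation $\reducesTo$ on $\Lambda_\epsilon$, and the induction goes through directly because $\Lambda_\epsilon$ is an inductively defined set of terms.
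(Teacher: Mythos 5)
Your proof is correct and is essentially the fully spelled-out version of the paper's own (one-sentence) argument, which simply observes that reduction is contextual and a term contains only finitely many redex positions. The structural induction you describe, with at most one head reduct per syntactic form plus the finitely many reducts inherited from immediate subterms, is exactly how that observation is made precise.
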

\begin{proof}
  Since our reduction relation is defined by simple induction on the syntax of
  $t$, it suffices to observe that any term $t$ will only contain a finite number
  of applications where reduction may take place.
\end{proof}

\begin{lem}
  Given a well-formed term $\ul{t}$, there are only finitely many canonical
  terms $T$ such that $T \diffEq \ul{t}$.
\end{lem}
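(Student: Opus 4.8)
The plan is to reduce the claim to the finiteness of a single permutative-equivalence class, and then to the elementary fact that there are only finitely many terms of a bounded size over a fixed finite alphabet of variables.

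First I would observe that $\ul{t}$ always has at least one canonical form: choosing any representative $t \in \ul{t}$, the term $\can{t}$ is canonical and, by Theorem~\ref{thm:canonical-form-differential}, satisfies $\can{t} \diffEq t$, so $\can{t} \in \ul{t}$. Now suppose $T$ is any canonical term with $T \diffEq \ul{t}$; then $T \diffEq \can{t}$, and since both $T$ and $\can{t}$ are canonical, Lemma~\ref{lem:can-permutative} yields $T \permEq \can{t}$. Hence the set of canonical terms differentially equivalent to $\ul{t}$ is contained in the permutative-equivalence class of $\can{t}$, and it suffices to show that this class is finite.

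Next I would note that each of the generating clauses of $\permEq$ in Figure~\ref{fig:permutative-equivalence} relates two terms with the same number of syntactic symbols, each of which contains exactly the same occurrences of each subterm metavariable: associativity and commutativity of $+$ merely rearrange subterms, and the symmetry rule $\D(\D(s)\cdot t)\cdot e \permEq \D(\D(s)\cdot e)\cdot t$ swaps two subterms without changing the symbol count. Since $\permEq$ is the contextual, reflexive, transitive and symmetric closure of these clauses, and each of those closure operations preserves ``same size, same variables'', every term permutatively equivalent to $\can{t}$ has exactly the size of $\can{t}$ and is built from only the finitely many variables occurring in $\can{t}$ (together with the fixed symbols $0$, $\epsilon$, $\lambda$, application and $\D$). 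As there are only finitely many elements of $\Lambda_\epsilon$ of a fixed size over a fixed finite set of variables, the permutative-equivalence class of $\can{t}$ is finite, and a fortiori so is its subset of canonical members, which is precisely $\set{T \in \canonical \mid T \diffEq \ul{t}}$.

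The only mildly delicate point is the first step, namely making sure one is genuinely confined to a single permutative class rather than a union of several; this is exactly what Lemma~\ref{lem:can-permutative} provides, given that $\ul{t}$ possesses a canonical representative. Everything after that is routine combinatorics.
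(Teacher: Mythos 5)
Your proof is correct and takes essentially the same route as the paper's: reduce to the observation that all canonical forms of $\ul{t}$ lie in a single permutative-equivalence class (the paper invokes Theorem~\ref{thm:canonicity}, you invoke its key ingredient Lemma~\ref{lem:can-permutative} directly) and then appeal to the finiteness of that class. The only difference is that you spell out the size-preservation argument for why a permutative-equivalence class is finite, which the paper merely asserts.
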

\begin{proof}
  By Theorem~\ref{thm:canonicity}, we know that any two canonical forms for
  $\ul{t}$ must be permutatively equivalent. But any term has a finite number of
  permutative equivalence classes, hence a term $\ul{t}$ only has finitely many
  canonical forms.
\end{proof}
\DeclarePairedDelimiter\abs{\lvert}{\rvert}%

As a corollary of the two previous results, whenever a well-formed term $\ul{t}$
is strongly normalising, by K\"onig's lemma there is a longest sequence of
well-formed terms $\ul{t} = \ul{t_1}, \ul{t_2}, \ldots, \ul{t_n}$ such that
$\can{t_i} \reducesTo t'_i \diffEq t_{i+1}$. We write $\abs{\ul{t}}$ to indicate
the length of this sequence. The following result is then immediate.

\begin{lem}
  Whenever $\ul{t}$ is strongly normalising and $\ul{t} \reducesTo \ul{t'}$, we
  have $\abs{\ul{t}} > \abs{\ul{t'}}$.
\end{lem}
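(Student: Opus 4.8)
The plan is to show that every maximal $\wfReducesTo$-reduction sequence out of $\ul{t'}$ can be extended by one step at the front using the hypothesis $\ul{t} \wfReducesTo \ul{t'}$, so that the maximal length of a reduction sequence out of $\ul{t}$ strictly exceeds that of $\ul{t'}$.

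First I would check that $\ul{t'}$ is itself strongly normalising, so that $\abs{\ul{t'}}$ makes sense: any infinite $\wfReducesTo$-sequence starting from $\ul{t'}$ would, after prepending the step $\ul{t} \wfReducesTo \ul{t'}$, give an infinite sequence from $\ul{t}$, contradicting that $\ul{t}$ is strongly normalising. Together with the two preceding finiteness lemmas (finitely many one-step reducts of an unrestricted term; finitely many canonical forms of a well-formed term) and K\"onig's lemma, this guarantees that $\abs{\ul{t'}}$ is a well-defined natural number, just as for $\abs{\ul{t}}$.

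Next, let $\ul{t'} = \ul{u_1} \wfReducesTo \ul{u_2} \wfReducesTo \cdots \wfReducesTo \ul{u_m}$ be a sequence of maximal length, so $\abs{\ul{t'}} = m$; by definition of $\wfReducesTo$, each step here unfolds as $\can{u_i} \reducesTo u_i' \diffEq u_{i+1}$ for a suitable canonical form of $\ul{u_i}$. Since $\ul{t} \wfReducesTo \ul{t'}$ — that is, $\can{t} \reducesTo t'' \diffEq t'$ for some canonical form of $\ul{t}$ — the sequence $\ul{t} \wfReducesTo \ul{u_1} \wfReducesTo \cdots \wfReducesTo \ul{u_m}$ again satisfies the defining condition and is a reduction sequence out of $\ul{t}$ with $m+1$ terms. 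Hence $\abs{\ul{t}} \geq m+1 > m = \abs{\ul{t'}}$, as required.

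There is no genuine obstacle here; the only points needing (mild) care are the well-definedness of $\abs{-}$ — which is exactly the content of the corollary stated immediately before this lemma — and the trivial bookkeeping that prepending a $\wfReducesTo$-step to a maximal sequence again yields a valid reduction sequence. Both follow directly from the definitions.
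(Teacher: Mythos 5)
Your proof is correct and is exactly the argument the paper has in mind: the paper simply declares the lemma ``immediate'' from the preceding finiteness results and K\"onig's lemma, and your prepending-a-step argument (together with the observation that $\ul{t'}$ inherits strong normalisation, so $\abs{\ul{t'}}$ is well defined) is the standard way to spell that out.
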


\begin{lem}
  \label{lem:sum-strongly-normalising}
  A term $\ul{s + t}$ is strongly normalising if and only if
  $\ul{s}, \ul{t}$ are strongly normalising.
\end{lem}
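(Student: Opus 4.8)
The plan is to prove the two implications separately; the left-to-right direction is easy, and the right-to-left direction is the substantive one. For ($\Rightarrow$) I would argue by contraposition: if $\ul{s}$ is not strongly normalising there is an infinite sequence $\ul{s} = \ul{s_1} \wfReducesTo \ul{s_2} \wfReducesTo \cdots$, and by Proposition~\ref{prop:red-wf-can} each step lifts to $\ul{s_i + t} \wfReducesTo \ul{s_{i+1} + t}$, so $\ul{s + t}$ is not strongly normalising either; the case where $\ul{t}$ fails to be strongly normalising is symmetric, using $\ul{s+t} = \ul{t+s}$ (since $s + t \diffEq t + s$).

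For ($\Leftarrow$), assume $\ul{s}, \ul{t}$ are strongly normalising; I would bound the length of every reduction sequence out of $\ul{s+t}$ by $\abs{\ul{s}} + \abs{\ul{t}}$. The key lemma is: whenever $\ul{p+q} \wfReducesTo \ul{w}$, then $\ul{w} = \ul{p'+q'}$ where either $\ul{p}\wfReducesTo\ul{p'}$ and $\ul{q}=\ul{q'}$, or $\ul{p}=\ul{p'}$ and $\ul{q}\wfReducesTo\ul{q'}$. Granting this, a straightforward induction along any reduction sequence from $\ul{s+t}$ writes its $i$-th term as $\ul{s_i+t_i}$ with $\ul{s}\wfReducesTo^*\ul{s_i}$ and $\ul{t}\wfReducesTo^*\ul{t_i}$; both $\ul{s_i}, \ul{t_i}$ are strongly normalising (being reducts of strongly normalising terms), so $\abs{\ul{s_i}}, \abs{\ul{t_i}}$ are defined, and by the preceding lemma exactly one of them strictly decreases at each step while the other is unchanged. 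Hence $\abs{\ul{s_i}}+\abs{\ul{t_i}}$ strictly decreases, bounding the length of the sequence.

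To prove the key lemma I would unfold $\wfReducesTo$: there is a canonical form $C$ of $\ul{p+q}$ and a one-step contraction $C \reducesTo C'$ with $C' \diffEq w$. Since $C$ is a right-nested sum of basic summands $\boxsum_k \epsilon^{a_k} c^\b_k$, the contracted redex lies inside a single basic summand $c^\b_k$ — all other positions of $C$ are either the whole term or a tail, each headed by $+$, which is never a redex. By Lemma~\ref{lem:can-permutative}, $C \permEq \can{p}\cansum\can{q}$, so the summands of $C$ are, up to permutative equivalence, the disjoint union of the summands of $\can{p}$ and those of $\can{q}$; this lets me two-colour each summand of $C$ according to its origin. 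Grouping the $p$-coloured summands (in the order they occur in $C$) into a sub-sum $B$ and the $q$-coloured ones into $R$ yields $B \permEq \can{p}$ and $R \permEq \can{q}$, so $B, R$ are themselves canonical forms of $\ul{p}, \ul{q}$. Assuming $c^\b_k$ is $p$-coloured (the other case symmetric), the same contraction inside $B$ gives $B \reducesTo B'$, hence $\ul{p}\wfReducesTo\ul{B'}$; and since $C'$ differs from the interleaving of the summands of $B'$ and of $R$ only by reassociation and commutation of $+$, we get $C' \diffEq B' + R \diffEq B' + q$ (using $S\cansum T \diffEq S+T$ from Lemma~\ref{lem:utils-differential}), so $\ul{w} = \ul{B' + q}$ with $\ul{q}$ unchanged, as required.

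The main obstacle is exactly this last argument: because canonical forms are unique only up to permutative equivalence, the canonical form $C$ witnessing $\ul{p+q}\wfReducesTo\ul{w}$ may interleave the summands coming from $p$ and from $q$ in an arbitrary order (and permute nested differential applications within the basic summands), so one cannot simply read $C$ off as $\can{p}\cansum\can{q}$; the two-colouring, together with the observation that any reordering of a canonical term is again a canonical form of the same well-formed term, is what keeps the decomposition well-defined regardless of which representative the reduction happened to pick.
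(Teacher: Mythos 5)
Your proof is correct and follows essentially the same route as the paper: reductions of $\ul{s}$ or $\ul{t}$ lift to reductions of $\ul{s+t}$ (giving the forward direction), and any one-step reduction of $\ul{s+t}$ acts on exactly one summand's canonical form, so the converse follows by induction on $\abs{\ul{s}}+\abs{\ul{t}}$. The differences are presentational: you argue the forward direction by contraposition via Proposition~\ref{prop:red-wf-can} rather than by induction on $\abs{\ul{s+t}}$, and your two-colouring argument spells out the decomposition of an arbitrary canonical form of $\ul{s+t}$ into canonical forms of $\ul{s}$ and $\ul{t}$ — a step the paper asserts with only the phrase ``up to commutativity of addition''.
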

\begin{proof}
  The proof in the first direction proceeds by induction on $\abs{s + t}$.

  Suppose $\ul{s + t}$ is strongly normalising.  and suppose $S, T$
  are canonical forms for $\ul{s}, \ul{t}$ respectively. Then $S + T$ is
  a canonical form for $\ul{s + t}$ (up to associativity of addition), and
  any sequence of reductions 

  If $\abs{\ul{s + t}} = 0$ it follows that its canonical form $S +
  T$ does not reduce, and hence neither do the separate $S + T$, thus
  $\ul{s}, \ul{t}$ are normal forms.

  On the other hand, suppose $S \reducesTo s', T \reducesTo t'$ then
  $\ul{s + t} \wfReducesTo \ul{s' + t'}$ which is therefore
  strongly normalising, with $\abs{\ul{s' + t'}} < \abs{t}$. Hence, by
  induction, $\ul{s'}$ and $\ul{t'}$ are strongly normalising for any
  choices of canonical forms $S, T$ and reducts $s', t'$.

  In the opposite direction, the proof follows similarly by induction on
  $\abs{s} + \abs{t}$. The base case is equally trivial. For the inductive step,
  since any canonical form for $\ul{s + t}$ is (up to commutativity of addition)
  of the form $S + T$, with $S, T$ being canonical forms for $\ul{s}, \ul{t}$
  respectively, it follows that if $\ul{s + t} \wfReducesTo \ul{e}$. Without
  loss of generality, we assume that $S \wfReducesTo S'$ with $\can{\ul{e}} = S'
  + T$. Then $\ul{s} \wfReducesTo \ul{S'}$ and $\ul{e} = \ul{S' + t}$. Now
  $\abs{\ul{S'}} < \abs{\ul{s}}$, and so we apply the induction hypothesis and
  obtain that $\ul{e}$ must be strongly normalising, and therefore so is $\ul{s
  + t}$.
\end{proof}

\begin{lem}
  \label{lem:epsilon-strongly-normalising}
  A term $\ul{\epsilon s}$ is strongly normalising if and only if $\ul{s}$ is.
\end{lem}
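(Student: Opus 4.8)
The plan is to mirror the proof of Lemma~\ref{lem:sum-strongly-normalising}, the only genuinely new point being the way the infinitesimal extension interacts with canonicalisation. The key observation is that $\can{\epsilon s} = \epsilon^{*}\can{s}$ has exactly the same basic subterms as $\can{s}$: writing $\can{s} = \sum_{i} \epsilon^{k_i} s^{\b}_i$, the definition of $\epsilon^{*}$ gives $\can{\epsilon s} = \sum_{i}\epsilon^{k'_i} s^{\b}_i$ where each $k'_i$ is $k_i$ or $k_i+1$ — a leading $\epsilon$ is either inserted or, in the special second-order-derivative clause, refused, but the bodies $s^{\b}_i$ are never touched. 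Since a redex (a subterm of the form $(\lambda x . t)\,s$ or $\D(\lambda x . t)\cdot s$) can only occur within one of the basic bodies $s^{\b}_i$ — the top-level sum structure and the leading powers of $\epsilon$ carry none — the redex occurrences of $\can{\epsilon s}$ and of $\can{s}$ are in obvious bijection.

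First I would set up the appropriate invariant. Call two well-formed terms $\ul a, \ul b$ \emph{$\epsilon$-similar} when they admit canonical forms $\sum_i \epsilon^{m_i} c^{\b}_i$ and $\sum_i \epsilon^{n_i} c^{\b}_i$ with the same sequence of basic bodies $c^{\b}_i$, so that $\ul s$ and $\ul{\epsilon s}$ are always $\epsilon$-similar. Then I would check that $\epsilon$-similarity is preserved by $\wfReducesTo$ on both sides: if $\ul a, \ul b$ are $\epsilon$-similar and $\ul a \wfReducesTo \ul{a'}$ by contracting some redex within a body $c^{\b}_p$, then contracting the identical redex within the identical body $c^{\b}_p$ of a suitable canonical form of $\ul b$ gives $\ul b \wfReducesTo \ul{b'}$ with $\ul{a'}, \ul{b'}$ again $\epsilon$-similar. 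Here Lemma~\ref{lem:reduction-canonical-additive} does the bookkeeping: re-canonicalising the reduct only rearranges summands and adjusts leading powers of $\epsilon$, whereas the new basic bodies are, in both cases, exactly the basic subterms of $\can{r_p}$, where $c^{\b}_p \reducesTo r_p$ is the common contraction.

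With this in hand both implications follow at once and symmetrically: an infinite $\wfReducesTo$-sequence out of $\ul s$ (resp. out of $\ul{\epsilon s}$) is transported step by step, using the invariant, into an infinite $\wfReducesTo$-sequence out of $\ul{\epsilon s}$ (resp. out of $\ul s$) — at every stage the invariant guarantees that the redex being simulated genuinely occurs — so one of the two terms is strongly normalising exactly when the other is. Alternatively, one can argue by induction on $\abs{\cdot}$ as in the proof of Lemma~\ref{lem:sum-strongly-normalising}, provided the induction hypothesis is strengthened to range over $\epsilon$-similar terms rather than over $\ul{\epsilon(\cdot)}$ directly.

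The main obstacle is precisely the special clause $\epsilon^{*}(\epsilon\,\D(\D(e)\cdot u)\cdot v) = \epsilon\,\D(\D(e)\cdot u)\cdot v$ of $\epsilon^{*}$, which makes the naive statement ``$\ul s \wfReducesTo \ul{s'}$ implies $\ul{\epsilon s}\wfReducesTo \ul{\epsilon s'}$'' fail: contracting a redex inside such a summand destroys its second-order-derivative shape, so the power of $\epsilon$ appearing in $\can{\epsilon s'}$ can strictly exceed the one actually produced by contracting the same redex in $\can{\epsilon s}$. One therefore cannot chase a literal correspondence of reducts; the real content of the proof is the (routine but fiddly) verification that this mismatch always stays confined to the leading powers of $\epsilon$ — which carry no redexes — and is hence irrelevant to strong normalisation, which is exactly what phrasing everything through $\epsilon$-similarity buys.
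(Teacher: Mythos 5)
The paper states this lemma without proof, implicitly leaving it as the analogue of Lemma~\ref{lem:sum-strongly-normalising}, and your argument is a correct way of filling that in: since $\epsilon^*$ only adjusts the leading powers of $\epsilon$ of a canonical form and never alters its basic bodies (the one non-uniform clause, which refuses to prepend an $\epsilon$ to $\epsilon\,\D(\D(e)\cdot u)\cdot v$, still leaves the body untouched), the redexes of $\can{\epsilon s}$ and $\can{s}$ are in bijection and your $\epsilon$-similarity invariant is preserved under $\wfReducesTo$. This matches the intended argument, and your explicit handling of why the naive step-by-step correspondence $\ul{s}\wfReducesTo\ul{s'}\Rightarrow\ul{\epsilon s}\wfReducesTo\ul{\epsilon s'}$ needs the weaker invariant is a genuine point the paper glosses over.
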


\newcommand{\rc}[1]{\mathcal{R}_{#1}}

\begin{defi}
  For every
  type $\tau$ we introduce a set $\rc{\tau}$ of well-formed terms of type
  $\tau$. We do so by induction on $\tau$.
  \begin{itemize}
    \item Whenever $\tau = \mathbf{t}$ is a primitive type, $\ul{s} \in
    \rc{\mathbf{t}}$ if and only if $\ul{s}$ is strongly normalising.
    \item Whenever $\tau = \sigma_1 \Ra \sigma_2$, $\ul{s} \in \rc{\sigma_1 \Ra
    \sigma_2}$ if and only if for any additive term $\ul{t^*}
    \in \rc{\sigma_1}$ and for any sequence
    $\ul{v^\b_1}, \ldots, \ul{v^\b_n}$ of 
    basic terms $\ul{v^\b_i} \in \rc{\sigma_1}$ 
    of length $n \geq 0$
    we have $\ul{\pa{\D^n(s) \cdot (v^\b_1, \ldots, v^\b_i)}\ t^*} \in \rc{\sigma_2}$ 
  \end{itemize}

  If $\ul{t} \in \rc{\tau}$ we will often just say that $\ul{t}$
  is \textbf{reducible} if the choice of $\tau$ is clear from the context.
\end{defi}

\begin{lem}
  \label{lem:rc-renaming}
  Whenever $\ul{t} \in \rc{\tau}$, then for any two distinct variables $x, y$
  the renaming $\ul{\subst{t}{x}{y}}$ is also in $\rc{\tau}$.
\end{lem}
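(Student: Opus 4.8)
The plan is to argue by induction on the structure of the type $\tau$. The one auxiliary fact I would isolate first is that capture-avoiding renaming of one variable by another commutes with canonicalisation, with differential equivalence (Proposition~\ref{prop:substitution-differential}, which is also what makes $\ul{\subst{t}{x}{y}}$ well defined in the first place), and with one-step reduction; each of these is a routine structural induction from the definitions of $\can{-}$ and $\reducesTo$. With this in hand the base case is immediate: when $\tau = \mathbf{t}$ a term lies in $\rc{\mathbf{t}}$ exactly when it is strongly normalising, and since renaming commutes with $\can{-}$ and with $\reducesTo$, any infinite $\wfReducesTo$-sequence out of $\ul{\subst{t}{x}{y}}$ transports back to one out of $\ul{t}$; hence $\ul{t}$ strongly normalising implies $\ul{\subst{t}{x}{y}}$ strongly normalising.

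For the inductive step take $\tau = \sigma_1 \Ra \sigma_2$ and assume the statement for $\sigma_1$ and $\sigma_2$. Given $\ul{s} \in \rc{\sigma_1 \Ra \sigma_2}$, I fix an arbitrary additive $\ul{t^*} \in \rc{\sigma_1}$ and an arbitrary sequence of basic terms $\ul{v^\b_1}, \ldots, \ul{v^\b_n} \in \rc{\sigma_1}$, and must place $\ul{(\D^n(\subst{s}{x}{y}) \cdot (v^\b_1, \ldots, v^\b_n))\ t^*}$ in $\rc{\sigma_2}$. The obstacle is that $x$ (and $y$) may occur free in the $v^\b_i$ and in $t^*$, so this target term is \emph{not} literally a renaming of $(\D^n(s) \cdot (v^\b_1, \ldots, v^\b_n))\ t^*$. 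I would sidestep this by choosing a variable $z$ that is fresh for $s, t^*, v^\b_1, \ldots, v^\b_n$ and distinct from $x$ and $y$, and setting $q^* \defeq \subst{t^*}{x}{z}$ and $p^\b_i \defeq \subst{v^\b_i}{x}{z}$. The inductive hypothesis at $\sigma_1$ gives $\ul{q^*}, \ul{p^\b_i} \in \rc{\sigma_1}$, and these stay additive and basic respectively since renaming preserves syntactic shape; applying $\ul{s} \in \rc{\sigma_1 \Ra \sigma_2}$ to them yields $\ul{(\D^n(s) \cdot (p^\b_1, \ldots, p^\b_n))\ q^*} \in \rc{\sigma_2}$. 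Now I apply the inductive hypothesis at $\sigma_2$ twice, first with the renaming $x \mapsto y$ and then with $z \mapsto x$: because $z$ does not occur in $s$, the first renaming only retargets the free occurrences of $x$ inside $s$ (the $p^\b_i$ and $q^*$ have no free $x$), turning the term into $(\D^n(\subst{s}{x}{y}) \cdot (p^\b_1, \ldots, p^\b_n))\ q^*$; and the second renaming round-trips each $p^\b_i$ back to $v^\b_i$ and $q^*$ back to $t^*$ (the round trip being exact since $z$ was chosen fresh) while leaving $\subst{s}{x}{y}$ untouched, so the result is exactly $\ul{(\D^n(\subst{s}{x}{y}) \cdot (v^\b_1, \ldots, v^\b_n))\ t^*}$. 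Since $t^*$ and the $v^\b_i$ were arbitrary, this establishes $\ul{\subst{s}{x}{y}} \in \rc{\sigma_1 \Ra \sigma_2}$.

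I expect the only real difficulty to be bookkeeping: checking that the detour through $z$ reconstructs precisely the desired term (which rests on the standard substitution lemmas and on the freshness of $z$), and pinning down the background fact that renaming commutes with canonicalisation and reduction, which is what the base case and the well-definedness of the above constructions on well-formed terms rely on. Beyond that, everything is a direct unfolding of the definition of $\rc{-}$.
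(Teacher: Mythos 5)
Your proposal is correct and follows the same route as the paper, whose entire proof is the single line ``Straightforward induction on $\tau$''; you are simply supplying the details that the authors left implicit. Your fresh-variable detour to handle free occurrences of $x$ in the test terms $t^*, v^\b_i$ addresses a real subtlety in the inductive step that the paper's one-line proof glosses over, and the auxiliary facts you isolate (renaming commutes with canonicalisation, differential equivalence, and reduction) are exactly what the base case needs.
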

\begin{proof}
  Straightforward induction on $\tau$.
\end{proof}

\begin{lem}
  \label{lem:rc-strongly-normalising}
  Whenever $\ul{t} \in \rc{\tau}$, then $\ul{t}$ is strongly normalising.
\end{lem}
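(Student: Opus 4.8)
The plan is to prove this by induction on the type $\tau$, in the usual reducibility-candidates style but kept entirely at the level of well-formed terms and the lifted one-step reduction $\wfReducesTo$. Since the inductive step needs at least one reducible term of the domain type with which to ``probe'' a reducible function, I would carry along, as an auxiliary invariant of the same induction, the fact that $\ul 0 \in \rc{\tau}$ for every type $\tau$.

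I would dispatch the auxiliary invariant first. For a base type $\mathbf t$, $\ul 0$ is a normal form, hence strongly normalising, hence in $\rc{\mathbf t}$ by definition. For $\tau = \sigma_1 \Ra \sigma_2$, given any additive $\ul{t^*} \in \rc{\sigma_1}$ and basic terms $\ul{v^\b_1}, \ldots, \ul{v^\b_n} \in \rc{\sigma_1}$, the unrestricted term $\pa{\D^n(0) \cdot (v^\b_1, \ldots, v^\b_n)}\ t^*$ is differentially equivalent to $0$ (repeatedly using $\D(0) \cdot e \diffEq 0$, and then $0\ e \diffEq 0$), so it represents the well-formed term $\ul 0$, which is in $\rc{\sigma_2}$ by the induction hypothesis at the smaller type $\sigma_2$; therefore $\ul 0 \in \rc{\sigma_1 \Ra \sigma_2}$.

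For the main statement, the base case $\tau = \mathbf t$ is immediate from the definition of $\rc{\mathbf t}$. For $\tau = \sigma_1 \Ra \sigma_2$, I would take $\ul s \in \rc{\sigma_1 \Ra \sigma_2}$ and instantiate the defining clause with $n = 0$ and with the additive probe $\ul 0 \in \rc{\sigma_1}$ (available by the auxiliary invariant at $\sigma_1$); this yields $\ul{s\ 0} \in \rc{\sigma_2}$, which is strongly normalising by the induction hypothesis. Then I would invoke Proposition~\ref{prop:red-wf-can}: every step $\ul s \wfReducesTo \ul{s'}$ induces $\ul{s\ 0} \wfReducesTo^+ \ul{s'\ 0}$, so an infinite $\wfReducesTo$-sequence out of $\ul s$ would produce an infinite reduction sequence out of $\ul{s\ 0}$, contradicting its strong normalisation. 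Hence $\ul s$ is strongly normalising.

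The only mildly delicate points are bookkeeping: ensuring a reducible witness of the domain type is in hand (which is precisely what forces the ``$\ul 0 \in \rc\tau$'' clause to be threaded through the induction) and citing the correct clause of Proposition~\ref{prop:red-wf-can} to transport reductions from $\ul s$ up to the application $\ul{s\ 0}$. Neither is a real obstacle — this is essentially the textbook ``CR1'' step, made easier here because terms are genuine inductive data and the reduction relation is finitely branching.
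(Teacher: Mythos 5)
Your proposal is correct and follows essentially the same route as the paper: induction on the type, probing a reducible function with a reducible argument, observing that the application is strongly normalising by the induction hypothesis, and transporting any infinite reduction sequence from $\ul{s}$ into one from $\ul{s\ 0}$. The one refinement is that you explicitly thread the invariant $\ul{0} \in \rc{\tau}$ through the induction to guarantee a probe term exists, a point the paper's proof leaves implicit (it simply picks some $\ul{u} \in \rc{\sigma_1}$ and argues directly on canonical forms rather than via Proposition~\ref{prop:red-wf-can}); this is a legitimate and welcome tightening rather than a different argument.
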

\begin{proof}
  By induction on $\tau$. When $\tau$ is a primitive type, the result follows
  trivially. Let $\tau = \sigma_1 \Ra \sigma_2$ and $\ul{t} \in
  \rc{\tau}$. By the induction hypothesis we know that for all $\ul{u} \in
  \rc{\sigma_2}$ the application $\ul{t\ u}$ is strongly normalising.

  Now suppose some canonical form of $\ul{t}$ reduces, that is, $T \reducesTo
  t'$. Since $\can{t\ u^*} = \ap{\can{t}}{\pri{u^*}} = \ap{T}{{\can{u^*}}}$, it
  follows that $\can{t\ u} \reducesTo^+ \ap{t'}{\pri{u}}$. Hence if there were
  any infinite sequence of reductions starting from $\ul{t}$, so would there be
  an infinite sequence of reductions starting from $\ul{t\ u}$. Since $\ul{t\ u}$
  is strongly normalising, this must be impossible and so $\ul{t}$ must be
  strongly normalising as well.
\end{proof}

\begin{lem}
  \label{lem:rc-sum}
  Whenever $\ul{s}, \ul{t} \in \rc{\tau}$, then both $\ul{s + t},
  \ul{\varepsilon{s}}$ are in $\rc{\tau}$. Conversely, whenever $\ul{s + t}$
  is in $\rc{\tau}$ then so are $\ul{s}, \ul{t}$.
\end{lem}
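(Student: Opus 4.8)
The plan is to prove both directions simultaneously by induction on the type $\tau$, following the shape of the definition of the reducibility candidates $\rc{-}$. For the base case, $\tau = \mathbf{t}$ primitive, reducibility \emph{is} strong normalisation, so there is nothing to do beyond quoting Lemma~\ref{lem:sum-strongly-normalising} (which gives both that $\ul{s+t}$ is strongly normalising when $\ul s,\ul t$ are, and the converse) and Lemma~\ref{lem:epsilon-strongly-normalising} (for $\ul{\varepsilon s}$). I should also note at the outset that the membership questions are well posed: if $\ul s,\ul t$ have type $\tau$ then so do $\ul{s+t}$ and $\ul{\varepsilon s}$, directly from the typing rules for $+$ and $\varepsilon$ together with Proposition~\ref{prop:typing-canonical}.

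For the inductive step, $\tau = \sigma_1 \Ra \sigma_2$, the key preliminary is a purely equational observation: for any unrestricted terms $r, r'$, any basic terms $v^\b_1,\ldots,v^\b_n$ and any additive term $u^*$, one has
\[
  \ul{\pa{\D^n(r+r')\cdot(v^\b_1,\ldots,v^\b_n)}\ u^*}
  = \ul{\pa{\D^n(r)\cdot(v^\b_1,\ldots,v^\b_n)}\ u^*} + \ul{\pa{\D^n(r')\cdot(v^\b_1,\ldots,v^\b_n)}\ u^*}
\]
and likewise $\ul{\pa{\D^n(\varepsilon r)\cdot(v^\b_1,\ldots,v^\b_n)}\ u^*} = \ul{\varepsilon\pa{\pa{\D^n(r)\cdot(v^\b_1,\ldots,v^\b_n)}\ u^*}}$. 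Both follow by a short induction on $n$ from the differential-equivalence rules $\D(s+t)\cdot e \diffEq (\D(s)\cdot e)+(\D(t)\cdot e)$, $\D(\varepsilon t)\cdot e \diffEq \varepsilon(\D(t)\cdot e)$, $(s+t)\ e \diffEq (s\ e)+(t\ e)$ and $(\varepsilon s)\ t \diffEq \varepsilon(s\ t)$, using crucially that $\rc{\sigma_2}$ is a set of \emph{well-formed} terms, so $\diffEq$ may be applied freely.

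Granting this, the forward direction is a direct unfolding. If $\ul s,\ul t \in \rc{\sigma_1\Ra\sigma_2}$ then for every additive $\ul{u^*}\in\rc{\sigma_1}$ and every sequence of basic $\ul{v^\b_i}\in\rc{\sigma_1}$ both $\ul{\pa{\D^n(s)\cdot(v^\b_1,\ldots,v^\b_n)}\ u^*}$ and $\ul{\pa{\D^n(t)\cdot(v^\b_1,\ldots,v^\b_n)}\ u^*}$ lie in $\rc{\sigma_2}$; by the inductive hypothesis at $\sigma_2$ their sum lies in $\rc{\sigma_2}$, and by the equational observation that sum \emph{is} $\ul{\pa{\D^n(s+t)\cdot(v^\b_1,\ldots,v^\b_n)}\ u^*}$, so $\ul{s+t}\in\rc{\sigma_1\Ra\sigma_2}$. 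For $\ul{\varepsilon s}$ the same observation identifies $\ul{\pa{\D^n(\varepsilon s)\cdot(v^\b_1,\ldots,v^\b_n)}\ u^*}$ with $\ul{\varepsilon\pa{\pa{\D^n(s)\cdot(v^\b_1,\ldots,v^\b_n)}\ u^*}}$, which is in $\rc{\sigma_2}$ by the inductive hypothesis applied to $\ul{\pa{\D^n(s)\cdot(v^\b_1,\ldots,v^\b_n)}\ u^*}\in\rc{\sigma_2}$. Conversely, if $\ul{s+t}\in\rc{\sigma_1\Ra\sigma_2}$, then the equational observation rewrites $\ul{\pa{\D^n(s+t)\cdot(v^\b_1,\ldots,v^\b_n)}\ u^*}\in\rc{\sigma_2}$ as a sum of the two terms above, and the converse half of the inductive hypothesis at $\sigma_2$ extracts each summand, yielding $\ul s,\ul t\in\rc{\sigma_1\Ra\sigma_2}$.

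I expect the only real (and entirely routine) obstacle to be the bookkeeping in the equational observation — confirming that pushing $\D^n$ past a sum or an $\varepsilon$ and then applying the outer argument $u^*$ commutes through $\diffEq$ for every $n$, and that this survives passage to well-formed terms — rather than anything conceptual. Everything else is one unfolding of the definition of $\rc{\sigma_1\Ra\sigma_2}$ together with two uses of the inductive hypothesis at $\sigma_2$, one for closure under $+$ and $\varepsilon$ and one for the converse extraction.
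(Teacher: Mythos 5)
Your proposal is correct and follows essentially the same route as the paper: the base case is dispatched by the strong-normalisation lemmas, and the arrow case reduces to the fact that $+$ and $\varepsilon$ distribute through (differential) application up to $\diffEq$, after which the induction hypothesis at $\sigma_2$ does the work in both directions. The only cosmetic difference is that you justify the distribution step directly from the $\diffEq^1$ rules with an explicit induction on $n$ for $\D^n$, whereas the paper argues via canonical forms and leaves the iterated-derivative case to ``the same reasoning''; your version is, if anything, slightly more explicit.
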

\begin{proof}
  When $\tau$ is a primitive type the proof is a straightforward corollary of 
  Lemmas~\ref{lem:sum-strongly-normalising}
  and~\ref{lem:epsilon-strongly-normalising}.

  When $\tau = \sigma_1 \Ra \sigma_2$, consider an additive term $\ul{u^*} \in
  \rc{\sigma_1}$. We ask whether the application $\ul{(s + t)\ u^*}$ is in
  $\rc{\sigma_2}$. But note that $\can{(s + t)\ u^*}$ is equal to $\can{s\ u^*} +
  \can{t\ u^*}$ (modulo commutativity of the sum) and therefore $\ul{(s + t)\ u^*}
  = \ul{s\ u^*} + \ul{t\ u^*}$. Since $\ul{s\ u^*}$ and $\ul{t\ u^*}$ are both in
  $\rc{\sigma_2}$, it follows by the induction hypothesis that so is $\ul{(s +
  t)\ u^*}$. The same reasoning shows that $\ul{(\D(s + t) \cdot v^\b)\ u^*}$ is
  in $\rc{\sigma_2}$.
  
  The proof for $\varepsilon$ follows by a simpler but otherwise identical
  procedure.

  On the opposite direction, consider a sum $\ul{s + t} \in \rc{\tau}$. When
  $\tau$ is a primitive type then $\ul{s + t}$ is strongly normalising and
  therefore so are $\ul{s}, \ul{t}$, hence they are regular. On the other hand,
  if $\tau = \sigma_1 \Ra \sigma_2$, we have that for any $e^* \in
  \rc{\sigma_1}$ the reducible $\ul{(s + t)\ e^*}$ is equal to $\ul{(s\ e^*) +
  (t\ e^*)}$. By the induction hypothesis, both $(s\ e^*)$ and $(t\ e^*)$ are
  regular. A similar argument proves that differential applications of $s$ and
  $t$ are also regular, thus $s, t \in \rc{\sigma_1 \Ra \sigma_2}$.
\end{proof}

\begin{lem}
  \label{lem:rc-d}
  Whenever $\ul{s} \in \rc{\sigma \Ra \tau}$ and $\ul{t} \in \rc{\sigma}$ then
  $\ul{\D (s) \cdot t} \in \rc{\sigma \Ra \tau}$.
\end{lem}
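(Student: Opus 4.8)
The plan is to unfold the definition of $\rc{\sigma\Ra\tau}$ and reduce the general statement to the special case in which the argument of the differential application is a single basic reducible term. So fix $\ul{s}\in\rc{\sigma\Ra\tau}$ and $\ul{t}\in\rc{\sigma}$; to show $\ul{\D(s)\cdot t}\in\rc{\sigma\Ra\tau}$ I must verify that for every additive $\ul{u^*}\in\rc{\sigma}$ and every sequence $\ul{v^\b_1},\dots,\ul{v^\b_n}$ of basic terms in $\rc{\sigma}$ (with $n\geq 0$) the term $\ul{\pa{\D^n(\D(s)\cdot t)\cdot(v^\b_1,\dots,v^\b_n)}\ u^*}$ lies in $\rc{\tau}$. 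The syntactic fact I will lean on repeatedly is that the commutativity rule $\D(\D(r)\cdot a)\cdot b\diffEq^1\D(\D(r)\cdot b)\cdot a$ of Figure~\ref{fig:differential-equivalence}, applied iteratively, rearranges $\D^n(\D(s)\cdot t)\cdot(v^\b_1,\dots,v^\b_n)$ into $\D^{n+1}(s)\cdot(v^\b_1,\dots,v^\b_n,t)$ modulo $\diffEq$; this is available already on unrestricted terms, so no appeal to canonical forms is needed for it.

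First I would establish the special case as a sub-claim: if $\ul{r}\in\rc{\sigma\Ra\tau}$ and $\ul{v^\b}\in\rc{\sigma}$ is basic, then $\ul{\D(r)\cdot v^\b}\in\rc{\sigma\Ra\tau}$, i.e.\ $\rc{\sigma\Ra\tau}$ is closed under differentiation along basic reducible terms. This is immediate from the definition: given additive $\ul{u^*}\in\rc{\sigma}$ and a basic sequence $\ul{w^\b_1},\dots,\ul{w^\b_n}\in\rc{\sigma}$, the rearrangement above yields $\ul{\pa{\D^n(\D(r)\cdot v^\b)\cdot(w^\b_1,\dots,w^\b_n)}\ u^*}=\ul{\pa{\D^{n+1}(r)\cdot(w^\b_1,\dots,w^\b_n,v^\b)}\ u^*}$, and the right-hand side is in $\rc{\tau}$ because $\ul{r}\in\rc{\sigma\Ra\tau}$, applied to the basic sequence $(w^\b_1,\dots,w^\b_n,v^\b)$ of length $n+1$ and the additive argument $u^*$. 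I will also use that $\rc{\sigma\Ra\tau}$ is closed under $\epsilon$ (part of Lemma~\ref{lem:rc-sum}) and satisfies the converse implication $\ul{\epsilon e}\in\rc{\sigma\Ra\tau}\Rightarrow\ul{e}\in\rc{\sigma\Ra\tau}$ (a routine variant of the converse half of Lemma~\ref{lem:rc-sum}, proved exactly like the $+$ case using the rules $(\epsilon e)\ a\diffEq^1\epsilon(e\ a)$ and $\D(\epsilon e)\cdot a\diffEq^1\epsilon(\D(e)\cdot a)$).

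For the general statement I would induct on the number $m$ of summands of the canonical form $\can{t}=\sum_{i=1}^m\epsilon^{k_i}t^\b_i$ of $\ul{t}$, proving the statement ``for every $\ul{s}\in\rc{\sigma\Ra\tau}$, $\ul{\D(s)\cdot t}\in\rc{\sigma\Ra\tau}$'' at each $m$ (so $s$ is quantified over all of $\rc{\sigma\Ra\tau}$ at every level of the induction). From $\ul{t}\in\rc{\sigma}$, the converse half of Lemma~\ref{lem:rc-sum}, and the $\epsilon$-removal fact, each $\ul{t^\b_i}\in\rc{\sigma}$ and, writing $T'$ for the tail $\sum_{i=2}^m\epsilon^{k_i}t^\b_i$ (itself canonical with $m-1$ summands), also $\ul{T'}\in\rc{\sigma}$. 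The base cases are easy: for $m=0$, $\D(s)\cdot t\diffEq 0$ and $\ul{0}\in\rc{\sigma\Ra\tau}$ (a trivial induction on the type); for $m=1$, $\D(s)\cdot t\diffEq\epsilon^{k_1}(\D(s)\cdot t^\b_1)$, which is in $\rc{\sigma\Ra\tau}$ by the sub-claim and closure under $\epsilon$. For the step $m\geq 2$, the regularity rule $\D(s)\cdot(a+b)\diffEq^1\D(s)\cdot a+\D(s)\cdot b+\epsilon(\D(\D(s)\cdot a)\cdot b)$ with $a=\epsilon^{k_1}t^\b_1$, $b=T'$ exhibits $\ul{\D(s)\cdot t}$ as a sum of three reducible terms: $\ul{\D(s)\cdot(\epsilon^{k_1}t^\b_1)}=\ul{\epsilon^{k_1}(\D(s)\cdot t^\b_1)}$, handled by the sub-claim and $\epsilon$-closure; $\ul{\D(s)\cdot T'}$, handled by the induction hypothesis since $\can{T'}$ has $m-1$ summands; and $\ul{\epsilon(\D(\D(s)\cdot(\epsilon^{k_1}t^\b_1))\cdot T')}$, handled by applying the induction hypothesis with the reducible term $\ul{\D(s)\cdot(\epsilon^{k_1}t^\b_1)}$ in the role of $s$ (again $m-1$ summands), followed by $\epsilon$-closure. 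Closure of $\rc{\sigma\Ra\tau}$ under sums (Lemma~\ref{lem:rc-sum}) then concludes the step, and with it the proof.

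The main obstacle is exactly this inductive step. In the differential $\lambda$-calculus the differential operator is additive in its argument, so $\D(s)\cdot(a+b)=\D(s)\cdot a+\D(s)\cdot b$ and the argument is almost structural; here the non-additivity forces the cross-term $\epsilon(\D(\D(s)\cdot a)\cdot b)$, which is again a differential application. What rescues the induction is that in this cross-term the inner differential application still has $s$ (or a reducible derivative of it) in its first slot, while the outer argument $T'$ has strictly fewer canonical summands than $t$; so measuring the induction by the number of summands of the canonical form of the \emph{second} argument, with the first argument quantified away, makes everything well-founded. The only other points requiring care are the bookkeeping around canonical forms (every manipulation above is through $\diffEq$, under which $\rc{\sigma\Ra\tau}$ is closed since it is a set of well-formed terms) and the small $\epsilon$-removal variant of Lemma~\ref{lem:rc-sum}.
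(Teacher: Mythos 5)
Your proof is correct and follows essentially the same route as the paper's: induction on the number of summands of $\can{t}$, with the one-summand case handled directly from the definition of $\rc{\sigma\Ra\tau}$ (a differential application along a basic reducible term just lengthens the sequence of differential arguments) and the inductive step handled by the regularity rule, Lemma~\ref{lem:rc-sum}, and the induction hypothesis applied to the cross-term. The only differences are cosmetic: you treat the cross-term as $\epsilon(\D(\D(s)\cdot\epsilon^{k_1}t^\b_1)\cdot T')$ where the paper uses the commuted form $\epsilon(\D(\D(s)\cdot T')\cdot t^\b)$, and you are more explicit than the paper about the $\epsilon^{k_i}$ prefixes and the $\epsilon$-removal variant of Lemma~\ref{lem:rc-sum}.
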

\begin{proof}
  Pick a canonical form $T$ of $t$. The proof proceeds by induction on the
  number of summands of $T$. If $T = 0$ or $T = \epsilon^k t^\b$ the result
  follows directly by definition of $\rc{\sigma \Ra \tau}$.

  Now suppose $T = \epsilon^k t^\b + T'$. Then
  \begin{align*}
    \D(s) \cdot t &\diffEq \D(s) \cdot (t^\b + T')\\
    &\diffEq \D(s) \cdot t^\b + \D(s) \cdot T'
    + \varepsilon \D(\D(s) \cdot T') \cdot t^\b
  \end{align*}
  Now $\D(s) \cdot t^\b$ is evidently reducible (as $s$ is reducible by
  hypothesis and, by Lemma~\ref{lem:rc-sum}, $t^\b$ is also reducible), and by
  the induction hypothesis so is $\D(s) \cdot T'$, from which also follows that
  $\D(\D(s) \cdot T') \cdot t^\b$ is reducible as well. By
  Lemma~\ref{lem:rc-sum} it follows that $\ul{t}$ is reducible.
\end{proof}

\begin{cor}
  A well-formed term $\ul{t}$ is in $\rc{\tau}$ if and only if some canonical
  form $T = \can{\ul{t}}$ is of the form $\sum_{i=1}^n\varepsilon^{k_i}t^\b_i$
  with $\ul{t^\b_i} \in \rc{\tau}$ for each $1 \le i \le n$.
\end{cor}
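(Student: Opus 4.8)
The plan is to obtain the corollary directly from Lemma~\ref{lem:rc-sum} by taking a canonical term apart into (and, conversely, reassembling it from) its basic summands. First I would note that by Theorem~\ref{thm:canonical-form-differential} we have $t \diffEq \can{t}$, so $\ul{t} = \ul{T}$ for $T = \can{\ul{t}} = \sum_{i=1}^{n}\epsilon^{k_i}t^\b_i$; since canonical forms are unique up to permutative equivalence (and $\permEq\,\subseteq\,\diffEq$) it does not matter which canonical form is chosen, so ``some'' and ``every'' agree in the statement. Note also that, up to associativity of $+$, the term $T$ is literally the iterated sum $\epsilon^{k_1}t^\b_1 + (\epsilon^{k_2}t^\b_2 + (\cdots + \epsilon^{k_n}t^\b_n)\cdots)$, so the whole argument is bookkeeping over $n$ and over the exponents $k_i$.

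For the ``if'' direction I would assume each $\ul{t^\b_i} \in \rc{\tau}$, apply the $\varepsilon$-closure clause of Lemma~\ref{lem:rc-sum} $k_i$ times to get $\ul{\epsilon^{k_i}t^\b_i} \in \rc{\tau}$, and then apply the $+$-closure clause $n-1$ times along the iterated sum to conclude $\ul{t} = \ul{T} \in \rc{\tau}$. The degenerate case $n = 0$ (i.e. $T = 0$) only needs the auxiliary fact that $\ul{0} \in \rc{\tau}$ for every $\tau$, which is a one-line induction on $\tau$: at a primitive type $0$ is a normal form, and at $\sigma_1 \Ra \sigma_2$ one has $(\D^m(0)\cdot\vec{v})\,u^* \diffEq 0$, so the application condition reduces to $\ul{0} \in \rc{\sigma_2}$.

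For the ``only if'' direction I would assume $\ul{T} \in \rc{\tau}$ and apply the converse half of Lemma~\ref{lem:rc-sum} $n-1$ times along the iterated sum, obtaining $\ul{\epsilon^{k_i}t^\b_i} \in \rc{\tau}$ for each $i$; it then remains to peel off the $\varepsilon$'s, i.e. to establish $\ul{\epsilon s} \in \rc{\tau} \Rightarrow \ul{s} \in \rc{\tau}$ and iterate it $k_i$ times. This converse-$\varepsilon$ property is the one piece not already recorded in Lemma~\ref{lem:rc-sum}, so I would prove it by induction on $\tau$: at a primitive type it is exactly Lemma~\ref{lem:epsilon-strongly-normalising}, while at $\sigma_1 \Ra \sigma_2$ one uses the equivalences $\D(\epsilon e)\cdot e' \diffEq \epsilon(\D(e)\cdot e')$, additivity of $\D$ in its first argument, and $(\epsilon e)\,e' \diffEq \epsilon(e\,e')$ to rewrite $\pa{\D^m(\epsilon s)\cdot(v^\b_1,\ldots,v^\b_m)}\,u^*$ as $\epsilon\big(\pa{\D^m(s)\cdot(v^\b_1,\ldots,v^\b_m)}\,u^*\big)$, and then applies the inductive hypothesis at $\sigma_2$ to strip that outer $\varepsilon$ and conclude $\ul{s} \in \rc{\sigma_1 \Ra \sigma_2}$.

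The only genuinely new ingredient here is the converse-$\varepsilon$ lemma, so I expect the main (and still minor) obstacle to be justifying the rewrite $\pa{\D^m(\epsilon s)\cdot\vec v}\,u^* \diffEq \epsilon\big(\pa{\D^m(s)\cdot\vec v}\,u^*\big)$ carefully --- commuting a single $\varepsilon$ past an $m$-fold differential application and then past an ordinary application --- but this is a routine unfolding of the rules in Figure~\ref{fig:differential-equivalence}. Everything else is immediate from Lemma~\ref{lem:rc-sum}.
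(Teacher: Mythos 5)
Your proof is correct and is essentially the derivation the paper intends for this corollary (which it states without proof): decompose the canonical form into its summands and apply the closure properties of Lemma~\ref{lem:rc-sum} in both directions. You are also right that the converse $\varepsilon$-direction ($\ul{\epsilon s}\in\rc{\tau}\Rightarrow\ul{s}\in\rc{\tau}$) is not actually recorded in Lemma~\ref{lem:rc-sum}, and your supplementary induction on $\tau$ — using $\D(\epsilon s)\cdot e\diffEq\epsilon(\D(s)\cdot e)$ and $(\epsilon s)\ t\diffEq\epsilon(s\ t)$ to push the $\varepsilon$ outward and then invoking the inductive hypothesis at the codomain type — correctly fills that gap.
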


\begin{lem}
  Whenever $\ul{t} \in \rc{\tau}, \ul{t} \wfReducesTo^+ \ul{t'}$, then $\ul{t'}
  \in \rc{\tau}$.
\end{lem}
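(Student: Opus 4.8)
The plan is to prove the statement by induction on the type $\tau$, with Proposition~\ref{prop:red-wf-can} doing all the real work. \emph{Base case:} if $\tau = \mathbf{t}$ is primitive, then $\rc{\mathbf{t}}$ is by definition the set of strongly normalising well-formed terms, and the claim is immediate: if $\ul{t}$ is strongly normalising and $\ul{t} \wfReducesTo^+ \ul{t'}$, then any infinite $\wfReducesTo$-sequence out of $\ul{t'}$ could be prepended with $\ul{t} \wfReducesTo^+ \ul{t'}$ to give an infinite sequence out of $\ul{t}$, which is impossible; hence $\ul{t'}$ is strongly normalising, i.e.\ $\ul{t'} \in \rc{\mathbf{t}}$.

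\emph{Inductive step:} let $\tau = \sigma_1 \Ra \sigma_2$ and assume the lemma for $\sigma_2$. Take $\ul{t} \in \rc{\sigma_1 \Ra \sigma_2}$ with $\ul{t} \wfReducesTo^+ \ul{t'}$. To see $\ul{t'} \in \rc{\sigma_1 \Ra \sigma_2}$, fix an arbitrary additive $\ul{u^*} \in \rc{\sigma_1}$ and an arbitrary sequence of basic terms $\ul{v^\b_1}, \ldots, \ul{v^\b_n} \in \rc{\sigma_1}$ with $n \ge 0$; I must show $\ul{\pa{\D^n(t') \cdot (v^\b_1, \ldots, v^\b_n)}\ u^*} \in \rc{\sigma_2}$. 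First I push the reduction $\ul{t} \wfReducesTo^+ \ul{t'}$ through the evaluation context $\pa{\D^n(-) \cdot (v^\b_1, \ldots, v^\b_n)}\ u^*$: applying the third clause of Proposition~\ref{prop:red-wf-can} once for each $v^\b_i$ (each $v^\b_i$ is basic, hence the side condition of that clause is met) and chaining over the one-step reductions making up $\wfReducesTo^+$, I obtain $\ul{\D^n(t) \cdot (v^\b_1, \ldots, v^\b_n)} \wfReducesTo^+ \ul{\D^n(t') \cdot (v^\b_1, \ldots, v^\b_n)}$; then applying the second clause with the additive term $u^*$, again chaining over single steps, gives $\ul{\pa{\D^n(t) \cdot (v^\b_1, \ldots, v^\b_n)}\ u^*} \wfReducesTo^+ \ul{\pa{\D^n(t') \cdot (v^\b_1, \ldots, v^\b_n)}\ u^*}$ (when $n = 0$ only the second clause is needed). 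Since $\ul{t} \in \rc{\sigma_1 \Ra \sigma_2}$, the source of this reduction lies in $\rc{\sigma_2}$, so by the induction hypothesis at the smaller type $\sigma_2$ so does its target. As $\ul{u^*}$ and the $\ul{v^\b_i}$ were arbitrary, $\ul{t'} \in \rc{\sigma_1 \Ra \sigma_2}$.

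The only subtle point, and the one I expect to require care rather than ingenuity, is the bookkeeping in the inductive step: Proposition~\ref{prop:red-wf-can} is stated for a \emph{single} $\wfReducesTo$-step and for a \emph{single} layer of application or differential application, so it has to be iterated twice over — once along the chain of one-step reductions comprising $\ul{t} \wfReducesTo^+ \ul{t'}$, and once along the $n$ layers of $\D^n(-)$ — verifying at each stage that the relevant side condition (second argument additive, respectively basic) holds. Everything else is routine.
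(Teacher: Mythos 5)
Your proof is correct and follows essentially the same route as the paper's: induction on $\tau$, with the base case handled by strong normalisation and the inductive step by pushing the reduction through the context $\pa{\D^n(-) \cdot (v^\b_1, \ldots, v^\b_n)}\ u^*$ and invoking the induction hypothesis at $\sigma_2$. Your explicit appeal to Proposition~\ref{prop:red-wf-can} and the iteration bookkeeping only makes precise a step the paper states as "immediate".
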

\begin{proof}
  We proceed by induction on $\tau$.
  When $\tau$ is a primitive type, we have that $\ul{t}$ is strongly normalising
  and therefore so is $\ul{t'}$, hence $\ul{t'} \in \rc{\tau}$.

  When $\tau = \sigma_1 \Ra \sigma_2$, we pick some additive reducible term
  $\ul{e^*}$ and a sequence of reducible basic terms $\ul{u^\b_1}, \ldots,
  \ul{u^\b_k}$, and establish that $\pa{\D^k(t') \cdot(u^\b_1, \ldots,
  u^\b_k)}\ e^*$ is reducible. But this is immediate: since $\ul{t}$ reduces to
  $\ul{t'}$, then so does $\ul{\D^k(t) \cdot (u^\b_1, \ldots, u^\b_k)\ e^*}$
  reduce to $\ul{\D^k(t') \cdot (u^\b_1, \ldots, u^\b_k)\ e^*}$ and, by induction,
  $\ul{\D^k(t') \cdot(u^\b_1, \ldots, u^\b_k)\ e^*}$ is reducible.
\end{proof}

\begin{defi}
  A basic term $t^\b$ is \textbf{neutral} whenever it is not a
  $\lambda$-abstraction. In other words, a basic term is neutral whenever it is
  of the form $x, (s\ t)$ or $\D(s)\cdot u$.

  A canonical term $T$ is neutral whenever it is of the form
  $\sum_{i=1}^n\varepsilon^{k_i} s_i^\b$, where each of the $s_i^\b$ are
  neutral. In particular, $0$ is a neutral term.

  A well-formed term $\ul{t}$ is neutral whenever some canonical form (and
  therefore all of its canonical forms) is neutral.
\end{defi}

\begin{lem}
  \label{lem:rc-neutral}
  Whenever $\ul{t}$ is neutral and every $\ul{t'}$ such that $\ul{t}
  \wfReducesTo^+ \ul{t'}$ is in $\rc{\tau}$, then so is $\ul{t}$.
\end{lem}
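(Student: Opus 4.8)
The plan is to carry out the standard reducibility-candidates argument for the ``neutral'' clause (often called CR3), by induction on the type $\tau$.

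In the base case $\tau$ is a primitive type, so $\rc{\tau}$ is exactly the set of strongly normalising well-formed terms and it suffices to show that $\ul{t}$ is strongly normalising. If it were not, there would be an infinite reduction sequence $\ul{t} \wfReducesTo \ul{t_1} \wfReducesTo \ul{t_2} \wfReducesTo \cdots$; but then $\ul{t} \wfReducesTo^+ \ul{t_1}$, so by hypothesis $\ul{t_1} \in \rc{\tau}$, i.e.\ $\ul{t_1}$ is strongly normalising, contradicting the infinite tail. Neutrality is not used here, but the same observation shows in general that the hypotheses of the lemma force $\ul{t}$ itself to be strongly normalising, a fact I would reuse below.

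For the inductive step $\tau = \sigma_1 \Ra \sigma_2$, fix an additive $\ul{u^*} \in \rc{\sigma_1}$ and basic terms $\ul{v^\b_1}, \ldots, \ul{v^\b_n} \in \rc{\sigma_1}$; by Lemma~\ref{lem:rc-strongly-normalising} these, together with $\ul{t}$, are all strongly normalising. Writing $\ul{w} \defeq \ul{\pa{\D^n(t)\cdot(v^\b_1,\ldots,v^\b_n)}\ u^*}$, the goal is $\ul{w}\in\rc{\sigma_2}$, which I would prove by a secondary induction on $\abs{\ul{t}} + \abs{\ul{u^*}} + \sum_{i=1}^n \abs{\ul{v^\b_i}}$ (all finite by the preceding paragraph). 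First I would note that $\ul{w}$ is neutral: by the definition of $\can{\D(-)\cdot -}$ through regularization and $\D^*$, every canonical form of $\D^n(t)\cdot(v^\b_1,\ldots,v^\b_n)$ is a canonical sum of differential applications and ordinary applications, hence of neutral basic terms — where in the degenerate case $n = 0$ this is precisely the assumption that $\ul{t}$ is neutral — so $\ul{w}$, being a standard application of such a term, is again neutral. Invoking Lemma~\ref{lem:rc-neutral} itself at the strictly smaller type $\sigma_2$ (the type-induction hypothesis), it then suffices to check that every $\ul{w'}$ with $\ul{w}\wfReducesTo^+\ul{w'}$ lies in $\rc{\sigma_2}$; using the earlier closure of $\rc{\sigma_2}$ under reduction it is enough to treat one-step reducts $\ul{w}\wfReducesTo\ul{w'}$.

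Because $\ul{w}$ is neutral the contracted redex is not a top-level $\beta$- or $\partial$-redex, and since no basic summand of a canonical form of $\ul{t}$ is a $\lambda$-abstraction, canonicalising $\pa{\D^n(t)\cdot\vec v}\ u^*$ creates no redex ``at a junction''; so the redex lies inside a copy of a canonical form of $\ul{t}$, of some $\ul{v^\b_i}$, or of $\ul{u^*}$. When it lies inside a copy of $\ul{v^\b_i}$ or $\ul{u^*}$, the reduced component is still reducible (closure of $\rc{\sigma_1}$, re-decomposed into basic and additive summands via Lemmas~\ref{lem:rc-sum} and~\ref{lem:rc-d}) and the secondary measure has strictly decreased, so the secondary induction hypothesis applies; when it lies inside a copy of $\ul{t}$ we get $\ul{t}\wfReducesTo\ul{t'}$ with $\ul{t'}\in\rc{\tau}$ by the lemma's hypothesis, and $\ul{\pa{\D^n(t')\cdot\vec v}\ u^*}\in\rc{\sigma_2}$ by the definition of $\rc{\sigma_1\Ra\sigma_2}$. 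The main obstacle is exactly the bookkeeping compressed here: unlike the additive differential $\lambda$-calculus, regularising a differential application duplicates its functional argument, so a single reduction inside $\ul{t}$ (or $\ul{v^\b_i}$, $\ul{u^*}$) contracts only one of possibly exponentially many copies, and $\ul{w'}$ is not literally of the controlled shape $\ul{\pa{\D^n(t')\cdot\vec v}\ u^*}$. Making this rigorous requires a careful use of Proposition~\ref{prop:red-wf-can} together with Lemmas~\ref{lem:reduction-canonical-additive}, \ref{lem:reduction-cansum} and~\ref{lem:reduction-reg} to reduce the remaining copies in turn and land on a term of the controlled shape to which the appropriate induction hypothesis applies; everything else is a routine instance of the reducibility-candidates method.
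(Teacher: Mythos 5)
Your proof follows essentially the same route as the paper's: induction on the type, and for arrow types a secondary induction on the normalisation measure of the applied arguments, using neutrality to rule out a head redex and the lemma's hypothesis to dispatch reductions occurring inside $\ul{t}$. You are in fact somewhat more scrupulous than the paper, which silently elides the duplication-of-copies issue you flag at the end.
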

\begin{proof}
  When $\tau$ is a primitive type the proof is immediate, as $\ul{t'} \in
  \rc{\tau}$ implies $\ul{t'}$ is strongly normalising and therefore so is
  $\ul{t}$.

  When $\tau = \sigma_1 \Ra \sigma_2$, we show the reasoning for standard
  application first. We select arbitrary $\ul{e^*}, \ul{u^\b_1}, \ldots,
  \ul{u^\b_k} \in \rc{\sigma_1}$ and show that whenever $\pa{\D^k(t) \cdot
  (u^\b_1, \ldots, u^\b_k)}\ e^*$ reduces then it reduces to a reducible term
  (hence our desired result will follow by induction on $\tau$). 

  We prove this property by induction on $Q \defeq \abs{\ul{e^*}} +
  \abs{\ul{u^\b_1}} + \ldots + \abs{\ul{u^\b_k}}$ which is well-defined since,
  by hypothesis, all of the involved terms are strongly normalising. 
  
  When $Q = 0$ then all of our chosen terms are normal, and so, since $t$ is
  neutral, if $\ul{\pa{\D^k(t)\cdot(u^\b_1, \ldots, u^\b_k)}\ e^*}$ reduces it
  must be that $\ul{t} \wfReducesTo^+ \ul{t'}$. By hypothesis, $\ul{t'} \in
  \rc{\sigma_1 \Ra \sigma_2}$ and therefore $\ul{\pa{\D^k(t')\cdot(u^\b_1,
  \ldots, u^\b_k)}\ e^*}$ is reducible.

  When $Q > 0$, then a reduction may occur in $t$, in which case we apply the
  previous reasoning, or in one of the applied terms, in which case we apply the
  induction hypothesis on $Q$.
  induction hypothesis.
\end{proof}

\begin{lem}
  \label{lem:rc-lambda}
  If, for all $\ul{t^*} \in \rc{\sigma_1}$ where $x$ does not appear free, the
  term $\ul{\subst{s}{x}{t^*}}$ is in $\rc{\sigma_2}$ and, for all $\ul{u^\b}$
  where $x$ does not appear free, the term
  $\ul{\subst{\pa{\diffS{s}{x}{u^\b}}}{x}{t^*}}$ is in $\rc{\sigma_2}$, then the
  term $\ul{\lambda x . s}$ is in $\rc{\sigma_1 \Ra \sigma_2}$.
\end{lem}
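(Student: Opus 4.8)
The plan is to carry out the usual abstraction step of a reducibility‑candidates argument, adapted to the presence of differential application and of canonical forms. Since $\ul{\lambda x . s}$ is a canonical value, it lies in $\rc{\sigma_1 \Ra \sigma_2}$ exactly when, for every additive $\ul{t^*} \in \rc{\sigma_1}$ and every finite sequence of basic $\ul{v^\b_1}, \ldots, \ul{v^\b_n} \in \rc{\sigma_1}$ with $n \ge 0$, the term
\[
  \ul{T} \defeq \ul{\pa{\D^n(\lambda x . s) \cdot (v^\b_1, \ldots, v^\b_n)}\ t^*}
\]
lies in $\rc{\sigma_2}$. First I would record two preliminaries. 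Variables are reducible (each is a neutral basic term with no reducts, so Lemma~\ref{lem:rc-neutral} applies) and are in particular additive; instantiating hypothesis (i) with a fresh variable $y$ and invoking Lemma~\ref{lem:rc-strongly-normalising} shows $\ul{\subst{s}{x}{y}}$, hence $s$ itself up to renaming, is strongly normalising, and by Lemma~\ref{lem:rc-strongly-normalising} so are $\ul{t^*}$ and every $\ul{v^\b_i}$. Hence $\abs{\ul{s}}$, $\abs{\ul{t^*}}$ and the $\abs{\ul{v^\b_i}}$ are well-defined.

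The core argument is an induction on $\pa{\abs{\ul{s}},\, n,\, \sum_i \abs{\ul{v^\b_i}} + \abs{\ul{t^*}}}$ ordered lexicographically. The term $\ul{T}$ is neutral (an application), so by Lemma~\ref{lem:rc-neutral} it suffices to show every $\ul{T'}$ with $\ul{T} \wfReducesTo^+ \ul{T'}$ is reducible, and for that it is enough to inspect the one-step reducts and appeal to the induction hypothesis and to closure of $\rc{\sigma_2}$ under reduction (the lemma immediately preceding Lemma~\ref{lem:rc-neutral}). A reduction inside $s$ yields $\ul{\pa{\D^n(\lambda x . s') \cdot \vec{v}}\ t^*}$: since $\subst{s}{x}{t^*} \reducesTo^* \subst{s'}{x}{t^*}$ and $\diffS{s}{x}{u^\b} \reducesTo^* \diffS{s'}{x}{u^\b}$, both hypotheses transfer to $s'$, and as $\abs{\ul{s'}} < \abs{\ul{s}}$ the induction hypothesis (re-applying the lemma to $s'$) closes this case. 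A reduction inside some $\ul{v^\b_i}$ or inside $\ul{t^*}$ leaves a term of identical shape with the same $s$ and strictly smaller $\sum_i \abs{\ul{v^\b_i}} + \abs{\ul{t^*}}$, the reduct remaining reducible and basic, resp.\ additive, by closure under reduction and by Lemma~\ref{lem:reduction-canonical-additive}, so the inner induction hypothesis applies. When $n = 0$ the only head reduct is $\ul{\subst{s}{x}{t^*}}$, reducible by hypothesis (i). When $n \ge 1$ the only head redex is the innermost $\partial$-redex $\D(\lambda x . s) \cdot v^\b_1 \reducesTo \lambda x . \diffS{s}{x}{v^\b_1}$, producing $\ul{\pa{\D^{n-1}(\lambda x . \diffS{s}{x}{v^\b_1}) \cdot (v^\b_2, \ldots, v^\b_n)}\ t^*}$. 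All of these reducts are subject to canonical-form bookkeeping (replacing each reduct by its canonical form via Theorem~\ref{thm:reduction-canonical} and Lemma~\ref{lem:can-contextual}), which is routine.

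The step I expect to be the main obstacle is this last case: to conclude I must show $\ul{\lambda x . \diffS{s}{x}{v^\b_1}} \in \rc{\sigma_1 \Ra \sigma_2}$ (which then delivers the head reduct with the $n-1$ remaining directions $v^\b_2, \ldots, v^\b_n$), and for that I would verify the two hypotheses of the lemma for the body $\diffS{s}{x}{v^\b_1}$. Hypothesis (i) for $\diffS{s}{x}{v^\b_1}$ — that $\subst{(\diffS{s}{x}{v^\b_1})}{x}{t^*}$ is reducible — is precisely hypothesis (ii) for $s$ instantiated at $u^\b = v^\b_1$. Hypothesis (ii) for $\diffS{s}{x}{v^\b_1}$, however, forces control of a second, and by iteration an $n$-fold, differential substitution of $s$, which is not literally among the given hypotheses; to handle it I would use Lemmas~\ref{lem:differential-substitution-dapp} and~\ref{lem:differential-substitution-app} to rewrite any iterated differential substitution of the canonical form of $s$ as a canonical sum of terms $\epsilon^z \D^l(v) \cdot (w_1, \ldots, w_l)$ whose constituents $v, w_j$ are iterated differential substitutions of proper subterms of $s$, reconcile the standard substitution $\subst{(-)}{x}{t^*}$ with these via Theorem~\ref{thm:taylor}, and deduce reducibility of the whole sum from closure of reducibility under $+$ and $\epsilon$ (Lemma~\ref{lem:rc-sum}) and under $\D(-)\cdot(-)$ (Lemma~\ref{lem:rc-d}), the constituent subterm-substitutions being reducible by hypothesis. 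In practice the cleanest route is to strengthen the statement proved by induction so that it simultaneously asserts reducibility of all finitely-iterated differential substitutions of $s$ applied to reducible arguments and then standard-substituted — hypotheses (i) and (ii) being its $0$- and $1$-fold instances — with the induction on $\abs{\ul{s}}$ together with the structural recursion built into Lemmas~\ref{lem:differential-substitution-dapp}/\ref{lem:differential-substitution-app} keeping everything well-founded. Once this is in place, the head reduct is reducible and Lemma~\ref{lem:rc-neutral} finishes the proof.
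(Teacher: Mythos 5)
Your proposal follows the same overall strategy as the paper's proof: show $\ul{s}$ is reducible (hence strongly normalising) by instantiating hypothesis (i) at a fresh variable and appealing to closure under renaming (Lemma~\ref{lem:rc-renaming}), then establish reducibility of each application $\ul{\pa{\D^n(\lambda x . s)\cdot(v^\b_1,\ldots,v^\b_n)}\ t^*}$ via neutrality (Lemma~\ref{lem:rc-neutral}) and an induction on the lengths of the longest reduction sequences of the constituents, with the non-head reducts handled by that induction and the $\beta$-head reduct $\subst{s}{x}{t^*}$ handled by hypothesis (i). Where you diverge is instructive: the paper disposes of the case $n \geq 1$ with the single sentence ``by a similar argument \ldots using induction in $\abs{e^*} + \abs{u^\b_1} + \ldots + \abs{u^\b_k}$'', whereas you correctly observe that the head $\partial$-reduct $\ul{\pa{\D^{n-1}(\lambda x . \diffS{s}{x}{v^\b_1})\cdot(v^\b_2,\ldots,v^\b_n)}\ t^*}$ is \emph{not} covered by that measure (the body changes from $s$ to $\diffS{s}{x}{v^\b_1}$, whose reduction length need not be smaller), and that for $n \geq 2$ one ultimately needs reducibility of $\subst{\pa{\frac{\partial^k s}{\partial x^k}(v^\b_1,\ldots,v^\b_k)}}{x}{t^*}$ for $k \geq 2$, which the stated hypotheses (i) and (ii) only supply for $k \in \{0,1\}$. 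Your proposed repair --- strengthening the statement so that the hypotheses assert reducibility of all finitely-iterated differential substitutions of $s$, of which (i) and (ii) are the $0$- and $1$-fold instances, and discharging the head case by applying the strengthened lemma to $\diffS{s}{x}{v^\b_1}$ --- is the right fix, and it is consistent with how the lemma is actually invoked in the final normalisation theorem, whose induction hypothesis does furnish arbitrary iterated differential substitutions. In short: same method, but you have located a genuine gap in the paper's own argument (and in the literal statement of the lemma) and sketched a workable repair; the only loose end on your side is that the well-foundedness of the combined induction (outer on the iteration depth $n$ over the strengthened hypotheses, inner on reduction lengths) should be spelled out rather than delegated to the phrase ``keeping everything well-founded''.
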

\begin{proof}
  As a corollary of Lemma~\ref{lem:rc-sum}, it suffices to check the case
  when $s$ is some basic term $s = s^\b$. 
  
  Pick any variable $y \neq$ z. Since the variable $y$ itself is an additive
  term in $\rc{\sigma_1}$, by hypothesis we have $\ul{\subst{s}{x}{y}} \in
  \rc{\sigma_2}$. But since reducible terms are closed under renaming as per
  Lemma~\ref{lem:rc-renaming}, this means that so is $\ul{s}$.

  Now consider an arbitrary $\ul{e^*} \in \rc{\sigma_1}$. We show that the
  application $\ul{(\lambda x . s^\b)\ e^*}$ is in $\rc{\sigma_2}$. As it is a
  neutral term, by Lemma~\ref{lem:rc-neutral} it suffices to prove that every
  one-step reduct of $(\lambda x . s^\b)\ e^*$ is in $\rc{\sigma_2}$. We do so by
  induction on $\abs{s^\b} + \abs{e^*}$. The term $(\lambda x . s^\b)\ e^*$
  reduces to one of:
  \begin{itemize}
    \item $\subst{s^\b}{x}{e^*}$, which by hypothesis is a representative of a
    term in $\rc{\sigma_2}$.

    \item $(\lambda x . s')\ e^*$ with $s^\b \reducesTo s'$. Then $s' \in
    \rc{\sigma_2}$ and since $\abs{s'} < \abs{s^\b}$ we apply our induction on
    $\abs{s^\b}$ to obtain that $(\lambda x . s')\ e^* \in \rc{\sigma_2}$.

    \item $(\lambda x . s^\b)\ e'$ with $e^* \reducesTo e'$. Then $e' \in
    \rc{\sigma_1}$ and since $\abs{e'} < \abs{e^*}$ we apply our induction on
    $\abs{e^*}$ to obtain that $(\lambda x . s^\b)\ e' \in \rc{\sigma_2}$.
  \end{itemize}

  By a similar argument we can show that $\ul{
    \pa{\D^k(\lambda x . s^\b)\cdot(u^\b_1, \ldots, u^\b_k)}\ e^*
  }$ is reducible, applying Lemma~\ref{lem:rc-neutral} and using induction
  in $\abs{e^*} + \abs{u^\b_1} + \ldots + \abs{u^\b_k}$. 
\end{proof}

\begin{thm}
  Consider a well-formed term $\ul{t}$ which admits a typing of the form $x_1 :
  \sigma_1, \ldots, x_n : \sigma_n \wfTs \ul{t} : \tau$ and assume given the
  following data:
  \begin{itemize}
    \item A sequence of basic terms 
      $\ul{d^\b_1} \in \rc{\sigma_1}, \ldots, \ul{d^\b_n} \in \rc{\sigma_n}$.
    \item An arbitrary sequence of indices 
      $i_1, \ldots, i_k \in \{1, \ldots, n\}$ (possibly with repetitions).
    \item A sequence of additive terms 
      $\ul{s^*_1} \in \rc{\sigma_{i_1}}, \ldots, \ul{s^*_k} \in \rc{\sigma_{i_k}}$.
  \end{itemize}
  such that none of the variables $x_1, \ldots, x_i$ appear free in the $d^\b_i,
  s^*_i$. Then the term 
  \[
    \ul{t'} = \ul{
      \pa{
        \frac{\partial^k t}{\partial(x_{i_1}, \ldots, x_{i_k})}
        (d^\b_1, \ldots, d^\b_k)}
      \left[s^*_1, \ldots, s^*_n / x_1, \ldots, x_n\right]}
  \]
  is in $\rc{\tau}$.
\end{thm}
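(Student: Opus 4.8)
The plan is to prove the statement by induction on the typing derivation of $\ul{t}$ (equivalently, on the structure of an unrestricted representative $t$), in the style of the reducibility-candidates arguments of Ehrhard--Regnier and Vaux. Throughout the proof, write $\Theta(s)$ for the "decorated" term obtained from $s$ by first forming the iterated differential substitution $\frac{\partial^k s}{\partial(x_{i_1},\ldots,x_{i_k})}(d^\b_1,\ldots,d^\b_k)$ and then applying the simultaneous capture-avoiding substitution $\subst{-}{x_1,\ldots,x_n}{s^*_1,\ldots,s^*_n}$; the goal is then $\ul{\Theta(t)} \in \rc{\tau}$. The harmless cases are disposed of directly. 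For $t = 0$ we have $\Theta(0) = 0 \in \rc{\tau}$ (the empty sum, reducible by Lemma~\ref{lem:rc-sum}). For $t = x_j$, the nested differential substitution collapses, using $\diffS{x_j}{x_i}{d} \in \set{d, 0}$ together with Proposition~\ref{prop:differential-substitution-closed} for the remaining substitutions (since no $x_i$ occurs free in any $d^\b_\ell$), to one of $x_j$, some $d^\b_\ell$, or $0$; after the simultaneous substitution this becomes $s^*_j$, $d^\b_\ell$, or $0$, each reducible by hypothesis. For $t = s'+s''$ and $t = \epsilon s'$, both differential and standard substitution distribute, so $\Theta(t)$ equals $\Theta(s') + \Theta(s'')$ or $\epsilon\,\Theta(s')$; the induction hypothesis applied to $s', s''$ and Lemma~\ref{lem:rc-sum} then finish these cases.

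For $t = \lambda y . s$, where $\tau = \sigma_1 \Ra \sigma_2$ and the subderivation types $s$ in the context extended by $y : \sigma_1$, I would appeal to Lemma~\ref{lem:rc-lambda}. Fixing the abstracted $y$ (renamed so as to avoid all free variables in play), one must show that $\ul{\subst{\Theta'(s)}{y}{e^*}} \in \rc{\sigma_2}$ for every reducible additive $e^*$, and likewise with an extra differential substitution $\diffS{\Theta'(s)}{y}{u^\b}$ along a reducible basic $u^\b$, where $\Theta'$ is the decoration inherited from $t$. These are instances of the induction hypothesis for the subderivation of $s$, with the data enlarged by adjoining $y$ to the variable list, $e^*$ to the list of additive substituends, and optionally $u^\b$ to the differential-substitution data; Lemmas~\ref{lem:standard-differential-substitution} and~\ref{lem:differential-standard-substitution} are used to reorganise the resulting nest of substitutions into the precise shape demanded by Lemma~\ref{lem:rc-lambda}, and one first reduces to $s = s^\b$ basic via Lemma~\ref{lem:rc-sum} and the canonical decomposition, exactly as in the proof of Lemma~\ref{lem:rc-lambda} itself.

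The substantive cases are $t = (u\ v)$ and $t = \D(u)\cdot v$, where differential substitution does not simply commute with the head constructor but expands into a sum. After reducing to basic subterms using Lemma~\ref{lem:rc-sum} and the canonical form decomposition, Lemma~\ref{lem:differential-substitution-app} (resp.\ Lemma~\ref{lem:differential-substitution-dapp}) tells us that the iterated differential substitution of $(u^\b\ v^*)$ (resp.\ $\D(u^\b)\cdot v^\b$) is $\diffEq$-equivalent to a canonical sum of terms of the form $\epsilon^z\,\D^l(v')\cdot(w_1,\ldots,w_l)$ in which $v'$ and each $w_j$ are themselves iterated differential substitutions of the basic subterms along reorderings of subsequences of the $x_i, d^\b_i$. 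Applying $\subst{-}{x_1,\ldots,x_n}{s^*_1,\ldots,s^*_n}$ and commuting it inward via Lemma~\ref{lem:standard-differential-substitution}, each $v'$ and each $w_j$ becomes a decorated instance of a proper subterm of $t$, hence is reducible by the induction hypothesis for the corresponding subderivation. Iterated application of Lemma~\ref{lem:rc-d} then makes each summand $\epsilon^z\,\D^l(v')\cdot(w_1,\ldots,w_l)$ reducible, and Lemma~\ref{lem:rc-sum} (closure of $\rc{\tau}$ under $+$ and $\epsilon$) assembles the whole sum; working only up to $\diffEq$ is immaterial since $\rc{\tau}$ is a set of well-formed terms.

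The main obstacle I anticipate is exactly this last pair of cases: ensuring that the combinatorics of Lemmas~\ref{lem:differential-substitution-app}--\ref{lem:differential-substitution-dapp} are arranged so that each $v'$ and $w_j$ is manifestly a decoration of a genuine subterm of $t$ (so that the induction hypothesis is applicable), and correctly interleaving the iterated differential substitutions with the final simultaneous substitution via the commutation lemmas. Once the theorem is in place, strong normalisation of $\lambda_\epsilon$ follows routinely: specialise to $k = 0$ and $s^*_j = x_j$, which is reducible because a variable is neutral with no reducts (Lemma~\ref{lem:rc-neutral}), so $\ul{t} \in \rc{\tau}$, and then Lemma~\ref{lem:rc-strongly-normalising} yields that $\ul{t}$ is strongly normalising.
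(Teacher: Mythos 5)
Your proposal is correct and follows essentially the same route as the paper's proof: induction on the typing derivation, reduction to basic summands via Lemma~\ref{lem:rc-sum}, the same case analysis on the head constructor, and the same key ingredients (Lemmas~\ref{lem:differential-substitution-dapp}, \ref{lem:differential-substitution-app}, \ref{lem:rc-d}, and \ref{lem:rc-lambda} with closure under renaming for the abstraction case). The only differences are presentational: you spell out the $0$, sum, and $\epsilon$ cases and the substitution-commutation bookkeeping that the paper leaves implicit.
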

\begin{proof}
  \newcommand{\thediff}[1]{
      \pa{\partial^k t / \partial x_{i_1} \ldots \partial x_{i_k}}
      \pa{d^\b_1, \ldots, d^\b_k}
  }
  \newcommand{\thesubs}[1]{
    {#1}\brak{s^*_1, \ldots, s^*_n / x_1, \ldots, x_n}
  }
  \newcommand{\thesubss}[1]{
    \subst{#1}{\overline{x_i}}{\overline{s^*_i}}
  }
  \newcommand{\thesd  }[1]{
    \thesubs{\pa{\thediff{#1}}}
  }
  Throughout the proof we will write $\overline{x_i}, \overline{s^*_i},
  \overline{d^\b_i}$ as a shorthand for the corresponding sequences $x_1,
  \ldots, x_n$, etc.

  By definition of $\wfTs$, we know that there is some canonical form $T$ (in
  fact any canonical form) of $\ul{t}$ such that $x_1 : \sigma_1, \ldots, x_n :
  \sigma_n \ts T : \tau$. We prove our property holds by induction on this
  typing derivation. Furthermore, by Lemma~\ref{lem:rc-sum}, it suffices to
  consider the case when $T$ is in fact some basic term $t^\b$. 
  We proceed now by case analysis on the last rule of the typing derivation:

  \begin{itemize}
    \item $t^\b = x_i$ (and therefore $\tau = \sigma_i$)
    
      If the sequence of indices $i_1, \ldots, i_k$ is empty then the
      substitution $\ul{t'}$ is exactly equal to $\ul{s^*_i}$ and therefore
      $\ul{t'} \in \rc{\sigma_i}$.

      If the sequence of indices $i_1, \ldots, i_k$ is exactly the sequence
      containing only $i$ then since $x_i$ does not appear free in the
      substituted term $d^\b$ then $t'$ is differentially equivalent to
      $\pa{\diffS{x_i}{x_i}{d^\b_i}}$ and therefore $\ul{t'} = \ul{d^\b_i} \in
      \rc{\sigma_i}$.

      If the list of indices contains two or more indices, or does not contain
      $i$, then $t' \diffEq 0$ and therefore $\ul{t'}$ is trivially in $\in
      \rc{\sigma_i}$ (either the derivative)

    \item $t^\b = \D(s^\b) \cdot e^\b$
    
      Applying Lemma~\ref{lem:differential-substitution-dapp}, we know that
      the term \[ \thesd{t^\b} \] is equivalent to a sum of terms of the
      form
      \[
        \pa{\epsilon^z \D^l \pa{\thesubss{v}}
        \cdot \pa{\thesubss{w_1}, \ldots, \thesubss{w_l}}}
      \]
      Again by Lemmas~\ref{lem:rc-sum} and~\ref{lem:rc-d}, it suffices to show
      that each of the $\thesubss{v}$, $\thesubss{w_j}$ are reducible. But
      by Lemma~\ref{lem:differential-substitution-app}, we know that $v$ has
      the form
      \[
        \frac{\partial^p s^\b}{\partial (x_{j_1}, \ldots, x_{j_m})}\pa{
          d^\b_{j_1}, \ldots,
          d^\b_{j_m}
        }
      \]

      Since $s^\b$ is a subterm of $t^\b$ its typing derivation is therefore a
      sub-derivation of the one for $t^\b$. We apply the induction
      hypothesis, obtaining that $\ul{\thesubss{v}}$ is reducible
      (as each of the $d^\b_i$ are reducible). By a similar argument,
      each of the $\ul{\thesubss{w_j}}$ are reducible as well, and therefore
      so is $\ul{t^\b}$.

    \item $t^\b = (s^\b\ e^*)$
    
      Applying Lemma~\ref{lem:differential-substitution-dapp}, we know that
      the term \[ \thesd{t^\b} \] is equivalent to a sum of terms of the
      form
      \[
        \pa{\epsilon^z \D^l \pa{\thesubss{v}}
        \cdot \pa{\thesubss{w_1}, \ldots, \thesubss{w_l}}}\ \pa{\thesubss{e^*}}
      \]
      Again by Lemma~\ref{lem:rc-sum} it suffices to show that every such term
      is reducible. First, by Lemma~\ref{lem:differential-substitution-app}, we
      know that $v$ has the form
      \[
        \frac{\partial^p s^\b}{\partial (x_{j_1}, \ldots, x_{j_m})}\pa{
          d^\b_{j_1}, \ldots,
          d^\b_{j_m}
        }
      \]
      Since $s^\b$ is a subterm of $t^\b$ its typing derivation is therefore a
      sub-derivation of the one for $t^\b$. We apply the induction hypothesis,
      obtaining that $\ul{\thesubss{v}}$ is reducible (as each of the $d^\b_i$
      are reducible). By a similar argument, each of the $\ul{\thesubss{w_j}}$
      are reducible as well, as is $\ul{\thesubss{e^*}}$. Thus, by
      Lemma~\ref{lem:rc-d}, the entire differential application is reducible.
      
      But since $t^\b$ is an application of the reducible term
      \[ \epsilon^z \D^l \pa{\thesubss{v}} \cdot \pa{\thesubss{w_1}, \ldots,
      \thesubs{w_l}} \] to the reducible term $\thesubs{e^*}$, it follows then
      that $\ul{t^\b}$ is itself reducible.

    \item $t^\b = \lambda y . s^\b$
    
      Pick fresh variables $z_1, \ldots, z_n$. By the induction hypothesis, we
      know that both of the following substitutions are reducible:
      \begin{gather*}
        \subst{\pa{t^\b}}{x_1, \ldots, x_n, y}{z_1, \ldots, z_n, e^*}\\
        \subst{\pa{\diffS{t^\b}{y}{d^\b}}}{x_1, \ldots, x_n, y}{z_1, \ldots, z_n, e^*}
      \end{gather*}
      But since reducible terms are closed under renaming, then the terms
      $\subst{t^\b}{y}{e^*}$ and $\subst{\pa{\diffS{t^\b}{y}{d^\b}}}{y}{e^*}$
      are also reducible. Hence, by Lemma~\ref{lem:rc-lambda}, the term 
      $\lambda y . s^\b$ is reducible.\qedhere
  \end{itemize}
\end{proof}

\begin{cor}[Strong normalisation]
  Whenever a closed well-formed term is
  typable with type $\wfTs \ul{t} : \tau$, it is strongly normalising.
\end{cor}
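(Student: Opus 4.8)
The plan is to obtain this as an immediate corollary of the preceding reducibility theorem, specialised to the degenerate case in which there are no differential substitutions and no substituted arguments. Concretely, let $\ul{t}$ be closed with $\wfTs \ul{t} : \tau$, so the relevant context is empty. I would apply the theorem with $n = 0$ (the empty list of free variables $x_i$ and of basic terms $\ul{d^\b_i}$), with the empty sequence of indices $i_1, \ldots, i_k$ (hence $k = 0$), and with the empty sequence of additive terms $\ul{s^*_j}$. Since there are no indices, the iterated differential substitution $\partial^k t / \partial(x_{i_1}, \ldots, x_{i_k})(d^\b_1, \ldots, d^\b_k)$ is just $t$; and since there are no variables to substitute, the outer substitution $[s^*_1, \ldots, s^*_n / x_1, \ldots, x_n]$ is the identity. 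Thus the term $\ul{t'}$ produced by the theorem is exactly $\ul{t}$, so the theorem gives $\ul{t} \in \rc{\tau}$. By Lemma~\ref{lem:rc-strongly-normalising} every reducible term is strongly normalising, hence $\ul{t}$ is strongly normalising.

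For completeness one can note how the same argument handles open terms, even though the stated corollary only concerns closed ones: each free variable $\ul{x_i}$ of type $\sigma_i$ lies in $\rc{\sigma_i}$, since a variable is a neutral term with no one-step reducts, so Lemma~\ref{lem:rc-neutral} applies vacuously. Feeding the variables themselves (which are basic, hence in particular additive) as the data $\ul{d^\b_i}$ and as the arguments $\ul{s^*_i}$ (again with $k = 0$) makes the theorem's $\ul{t'}$ equal to $\ul{t}$, and one concludes as before, possibly after a renaming via Lemma~\ref{lem:rc-renaming} to avoid capture; but none of this is needed for the closed case.

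There is essentially no hard step here, as all the real work has been discharged in the reducibility theorem and in Lemma~\ref{lem:rc-strongly-normalising}. The only point deserving a moment of care is verifying that the empty instantiation of the theorem's hypotheses is admissible — the theorem explicitly allows $k \geq 0$ and an arbitrary (in particular empty) context — and that under this instantiation the displayed term $\ul{t'}$ is literally $\ul{t}$, so that the conclusion $\ul{t'} \in \rc{\tau}$ is the statement we want.
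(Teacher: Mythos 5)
Your proposal is correct and follows exactly the paper's own argument: instantiate the preceding reducibility theorem with the empty context and empty sequences so that $\ul{t'} = \ul{t}$, conclude $\ul{t} \in \rc{\tau}$, and finish by Lemma~\ref{lem:rc-strongly-normalising}. The paper states this more tersely, but the content is identical, and your explicit check that the empty instantiation is admissible is a reasonable elaboration rather than a deviation.
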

\begin{proof}
  By the previous result, we know that any such term satisfies $\ul{t} \in
  \rc{\tau}$, and therefore $\ul{t}$ is strongly normalising.
\end{proof}

\section{Semantics}
\label{sec:semantics}

It is a well-known result that the differential $\lambda$-calculus can be
soundly interpreted in any differential $\lambda$-category, that is to say, any
Cartesian differential category where differentiation ``commutes with''
abstraction (in the sense of \cite[Definition~4.4]{bucciarelli2010categorical}).

The exact same result holds for the difference $\lambda$-calculus and difference
$\lambda$-categories. In what follows we will consider a fixed difference
$\lambda$-category $\cat{C}$, and proceed to define interpretations for the
types, contexts and terms of the simply-typed $\lambda_\epsilon$-calculus.

\begin{defi}
  Given a $\mathbf{t}$-indexed family of objects $O_\mathbf{t}$, we define the
  interpretation $\sem{\tau}$ of a type $\tau$ by induction on its structure by
  setting $\sem{\mathbf{t}} \defeq O_\mathbf{t}$, $\sem{\sigma \Ra \tau} \defeq
  \sem{\sigma} \Ra \sem{\tau}$. We lift the interpretation of types to contexts
  in the usual way. Or, more formally, we have: $\sem{\cdot} \defeq \terminal$,
  $\sem{\Gamma, x : \tau} \defeq \sem{\Gamma} \times \sem{\tau}$.
\end{defi}

\begin{defi}
  Given a well-typed unrestricted $\lambda_\epsilon$-term
  $\Gamma \ts t : \tau$, we define its interpretation $\sem{t} : \sem{\Gamma}
  \ra \sem{\tau}$ inductively as in Figure~\ref{fig:semantics} below. When
  $\Gamma$ and $\tau$ are irrelevant or can be inferred from the context, we
  will simply write $\sem{t}$.
  \begin{figure}[ht]
  \[\def\arraystretch{1.3}
    \begin{array}{rccl}
      \sem{(x_i : \tau_i)_{i = 1}^n \ts x_k : \tau_k} 
      &\defeq& 
      \pi_2 \circ \pi_1^{n - k} 
      &: \prod_{i=1}^n\sem{\tau_i} \to \sem{\tau_k}
      \\
      \sem{\Gamma \ts 0 : \tau} &\defeq&
      0 
      &: \sem{\Gamma} \to \sem{\tau}
      \\
      \sem{\Gamma \ts s + t : \tau} &\defeq&
      \sem{s} + \sem{t} 
      &: \sem{\Gamma} \to \sem{\tau}
      \\
      \sem{\Gamma \ts \epsilon t : \tau} &\defeq&
      \epsilon \sem{t} 
      &: \sem{\Gamma} \to \sem{\tau}
      \\
      \sem{\Gamma \ts \lambda x . t : \sigma \Ra \tau} &\defeq&
      \Lambda \sem{t} 
      &:\sem{\Gamma} \to \sem{\sigma} \Ra \sem{\tau}
      \\
      \sem{\Gamma \ts (s\ t) : \tau} &\defeq&
      \A{ev} \circ \pair{\sem{s}}{\sem{t}} 
      &:\sem{\Gamma} \to \sem{\tau}
      \\
      \sem{\Gamma \ts \D(s) \cdot t : \sigma \Ra \tau} &\defeq&
      \Lambda(\dd{\Lambda^-(\sem{s})} \circ \pair{\Id{}}
      {\pair{0}{\sem{t} \circ \pi_1}})
      &: \sem{\Gamma} \to \sem{\tau}
    \end{array}
  \]
  \caption{Interpreting $\lambda_\epsilon$ in $\cat{C}$}
  \label{fig:semantics}
  \end{figure}
\end{defi}

\begin{lem}
  \[
    \dd{\Lambda^-f} \circ \pair{\Id{}}{\pair{0}{g \circ \pi_1}}
    = \dd{\A{ev}} \circ \pair{}{} \circ
  \]
\end{lem}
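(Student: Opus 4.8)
The plan is to reduce the identity to the chain rule for the difference combinator. Write $f : A \to (B \Ra C)$ and $g : A \to B$ for the data, so that $\Lambda^- f : A \times B \to C$ and the left-hand side is, in the notation of Definition~\ref{def:star-composition}, precisely the differential composition $(\Lambda^- f) \star g$. Since $\Lambda^- f = \A{ev} \circ \pair{f \circ \pi_1}{\pi_2}$, the statement is just a matter of differentiating this composite and precomposing with $\pair{\Id{}}{\pair{0}{g \circ \pi_1}}$.

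First I would apply the chain rule \ref{CdC5}, obtaining $\dd{\Lambda^- f} = \dd{\A{ev}} \circ \pair{\pair{f \circ \pi_1}{\pi_2} \circ \pi_1}{\dd{\pair{f \circ \pi_1}{\pi_2}}}$. Then I would expand the second component: the pairing rule \ref{CdC4} gives $\dd{\pair{f \circ \pi_1}{\pi_2}} = \pair{\dd{f \circ \pi_1}}{\dd{\pi_2}}$, the projection rules \ref{CdC3} give $\dd{\pi_2} = \pi_2 \circ \pi_2$ and $\dd{\pi_1} = \pi_1 \circ \pi_2$, and one more use of \ref{CdC5} gives $\dd{f \circ \pi_1} = \dd{f} \circ \pair{\pi_1 \circ \pi_1}{\pi_1 \circ \pi_2}$.

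Next I would precompose everything with the map $k = \pair{\Id{A \times B}}{\pair{0}{g \circ \pi_1}}$, using that $\pi_1 \circ k = \Id{A \times B}$ and $\pi_2 \circ k = \pair{0}{g \circ \pi_1}$. Pushing these through the nested projections, the pair fed to $\dd{f}$ reduces to $\pair{\pi_1}{0}$, whose zero second component makes $\dd{f}$ composed with it vanish by the second-argument-zero half of \ref{CdC2}; what survives is exactly $\dd{\A{ev}} \circ \pair{\pair{f \circ \pi_1}{\pi_2}}{\pair{0}{g \circ \pi_1}}$, the right-hand side.

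The only real obstacle is bookkeeping: one has to track the projections across the three nested layers of products — $(A \times B) \times (A \times B)$ on the outside and $(B \Ra C) \times B$ inside the evaluation map — and make sure the cancellation via \ref{CdC2} lands on the correct subterm. Beyond axioms \ref{CdC2}--\ref{CdC5} and the routine Cartesian left-additive identities for pairings and zero maps, nothing new is needed; the calculation is of the same flavour as the proof of Lemma~\ref{lem:lambda-d-ev}.
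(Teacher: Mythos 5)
There is nothing in the paper to compare your argument against: as printed, this lemma is an unfinished draft artifact --- its right-hand side is literally left blank ($\dd{\A{ev}} \circ \pair{}{} \circ$) and no proof is supplied. Your reconstruction of the intended statement is the natural one: with $f : A \to (B \Ra C)$ and $g : A \to B$, the left-hand side is the differential composition $(\Lambda^- f) \star g$ of Definition~\ref{def:star-composition}, and the identity should read
\[
  \dd{\Lambda^- f} \circ \pair{\Id{}}{\pair{0}{g \circ \pi_1}}
  \;=\;
  \dd{\A{ev}} \circ \pair{\pair{f \circ \pi_1}{\pi_2}}{\pair{0}{g \circ \pi_1}},
\]
which is exactly what is needed to connect the interpretation of $\D(s)\cdot t$ in Figure~\ref{fig:semantics} with the evaluation map, in the spirit of Lemma~\ref{lem:lambda-d-ev}. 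Your derivation of it is correct: writing $\Lambda^- f = \A{ev} \circ \pair{f \circ \pi_1}{\pi_2}$, axioms \CdCAx{3}--\CdCAx{5} give
\[
  \dd{\Lambda^- f}
  = \dd{\A{ev}} \circ \pair{\pair{f \circ \pi_1}{\pi_2} \circ \pi_1}
      {\pair{\dd{f} \circ \pair{\pi_{11}}{\pi_1 \circ \pi_2}}{\pi_2 \circ \pi_2}},
\]
and precomposition with $\pair{\Id{}}{\pair{0}{g \circ \pi_1}}$ sends the argument of $\dd{f}$ to $\pair{\pi_1}{0}$, which is killed by the second half of \CdCAx{2}, while the remaining components collapse to $\pair{f \circ \pi_1}{\pi_2}$ and $\pair{0}{g \circ \pi_1}$ as you claim. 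So the proposal is sound and complete as a proof of the repaired statement; the only caveat is that it establishes a plausible completion of a lemma the paper itself never finishes stating, so you should make the intended right-hand side explicit rather than inheriting the truncated one.
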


\newcommand{\semEq}[0]{\sim_{\llbracket\rrbracket}}
\begin{lem}
  Define the relation $\semEq \subseteq \Lambda_\epsilon$ by letting $s \semEq
  t$ whenever there exist $\Gamma, \tau$ such that
  $\sem{\Gamma \ts s : \tau} = \sem{\Gamma \ts t : \tau}$. Then the relation
  $\semEq{}$ is contextual.
\end{lem}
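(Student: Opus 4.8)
The plan is to check, one by one, the five closure conditions of Figure~\ref{fig:contextual-relation}, in each case reading the conclusion directly off the compositional clauses of Figure~\ref{fig:semantics}. The underlying observation is that for every term former the denotation of the compound term is obtained by applying a \emph{fixed} operation (pre/post-composition with structural maps, $\Lambda$, $\A{ev} \circ \pair{-}{-}$, the additive structure, $\epsilon$, or the expression defining $\sem{\D(s) \cdot t}$) to the denotations of the immediate subterms; hence equality of the denotations of the subterms is automatically preserved.

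Consider the unary rules first. If $t \semEq t'$ is witnessed by $(\Gamma, \tau)$, i.e.\ $\sem{\Gamma \ts t : \tau} = \sem{\Gamma \ts t' : \tau}$, then the same witness gives $\sem{\Gamma \ts \epsilon t : \tau} = \epsilon\,\sem{\Gamma \ts t : \tau} = \epsilon\,\sem{\Gamma \ts t' : \tau} = \sem{\Gamma \ts \epsilon t' : \tau}$, so $\epsilon t \semEq \epsilon t'$. For $\lambda x . t \semEq \lambda x . t'$ we first normalise the witnessing context so that $x$ occurs last: if $x$ already appears in $\Gamma$ this is an instance of exchange, and if it does not we pass to $\Gamma, x : \sigma$ using Weakening together with the routine semantic identity $\sem{\Gamma, x : \sigma \ts t : \tau} = \sem{\Gamma \ts t : \tau} \circ \pi_1$. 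Writing the resulting context as $\Gamma_0, x : \sigma$, we obtain $\sem{\Gamma_0 \ts \lambda x . t : \sigma \Ra \tau} = \Lambda(\sem{\Gamma_0, x : \sigma \ts t : \tau}) = \Lambda(\sem{\Gamma_0, x : \sigma \ts t' : \tau}) = \sem{\Gamma_0 \ts \lambda x . t' : \sigma \Ra \tau}$, so $(\Gamma_0, \sigma \Ra \tau)$ witnesses $\lambda x . t \semEq \lambda x . t'$.

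For the three binary rules the argument has the same shape once the two hypotheses have been brought into a common context with matching types: from $s \semEq s'$ witnessed by $(\Gamma_1, \tau_1)$ and $t \semEq t'$ witnessed by $(\Gamma_2, \tau_2)$, and using that a compound term lies in the domain of $\semEq$ only if it is typable, invertibility of the typing rules (noted after Figure~\ref{fig:typing-rules}) forces the types to fit together -- equal for a sum, arrow and domain for an (ordinary or differential) application -- so that there is a single context $\Gamma$ in which $s, s', t, t'$ are all typable in the shape required by the term former. Weakening and exchange transport the two denotational equalities into $\Gamma$, and then applying the relevant clause of Figure~\ref{fig:semantics} to equal arguments on each side yields the desired $\semEq$-equality of the compound terms.

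The main obstacle is not conceptual but is precisely this reconciliation step: the hypotheses each come with their \emph{own} existential witness, and to combine them one must know that the denotational equality $\sem{\Gamma_1 \ts s : \tau_1} = \sem{\Gamma_1 \ts s' : \tau_1}$ persists under every other typing of $s, s'$ compatible with the compound term -- equivalently, that if $\sem{-}$ identifies $s$ and $s'$ at one typing it identifies them at all of them. Establishing this, together with the accompanying (entirely routine) weakening and exchange lemmas for $\sem{-}$, is the only real work; none of it uses any Cartesian difference structure beyond the bare compositionality of the interpretation, so with these auxiliaries in hand the five cases close immediately.
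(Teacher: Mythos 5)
The paper states this lemma without proof, so there is nothing to compare against; on its own merits, your compositional skeleton (read each closure condition off the matching clause of Figure~\ref{fig:semantics}, using weakening/exchange to bring the context into the form $\Gamma_0, x : \sigma$ for the abstraction case) is the right one, and you have correctly put your finger on the delicate point: the two hypotheses of a binary rule come with independent existential witnesses. The gap is that the auxiliary claim you propose to bridge them with --- ``if $\sem{-}$ identifies $s$ and $s'$ at one typing it identifies them at all of them'' --- is false. Take $\cat{C} = \overline{\cat{Ab}}$ with two base types interpreted as $O_{\mathbf{t}} = \ZZ/2\ZZ$ and $O_{\mathbf{s}} = \ZZ$, and let $s = \lambda x . (x + x)$, $s' = \lambda x . 0$. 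At the typing $\ts s : \mathbf{t} \Ra \mathbf{t}$ both terms denote the zero endomorphism of $\ZZ/2\ZZ$ (doubling annihilates every element), so $s \semEq s'$ holds with this witness; at $\ts s : \mathbf{s} \Ra \mathbf{s}$ they denote doubling and zero on $\ZZ$, which are distinct points of the exponential. So denotational equality does not transfer between typings, and the ``only real work'' you defer cannot be carried out as described.

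The resolution is that the existential quantifier in the statement is not what is actually needed (nor, I think, what is meant). In the theorem that immediately follows, the lemma is used to conclude $\sem{\Gamma \ts s : \tau} = \sem{\Gamma \ts t : \tau}$ for the \emph{given} $\Gamma, \tau$ from $s \diffEq t$; this, and the fact that the same proof also needs $\semEq$ to be transitive (which again fails under the existential reading, for the same reason), forces the intended definition to be ``for every $\Gamma, \tau$ at which both terms are typable, the denotations agree''. Under that reading there are no witnesses to reconcile: one fixes an arbitrary common typing of the compound terms, invertibility of the typing rules (as noted after Figure~\ref{fig:typing-rules}) yields the corresponding common typings of the immediate subterms, the hypotheses apply at exactly those typings, and your five compositional case computations close the argument --- with the weakening and exchange identities you invoke being precisely the lemma the paper states next. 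I would restate the relation with the universal quantifier and delete the coherence step; as written, your proof does not go through.
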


\begin{thm}
  Whenever $s \diffEq t$ are equivalent unrestricted terms that admit typing
  derivations $\Gamma \ts s : \tau, \Gamma \ts t : \tau$, then their
  interpretations are identical, that is to say:
  \[
    \sem{\Gamma \ts s : \tau} = \sem{\Gamma \ts t : \tau}
  \]
\end{thm}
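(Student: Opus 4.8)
The plan is to reduce the statement, via canonical forms, to the much more
tractable permutative equivalence. We will establish two facts and combine them.

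\emph{Fact 1: the interpretation is invariant under canonicalization.}
Whenever $\Gamma \ts t : \tau$ we have $\sem{\Gamma \ts t : \tau} = \sem{\Gamma
\ts \can{t} : \tau}$; this is well posed because $\Gamma \ts \can{t} : \tau$ by
Proposition~\ref{prop:typing-canonical}, and in fact the same applies to every
intermediate term produced by the canonicalization algorithm (again by
Proposition~\ref{prop:typing-canonical} applied to subterms), so every
interpretation occurring in the argument is defined. The proof is by structural
induction on $t$, mirroring the case analysis of
Theorem~\ref{thm:canonical-form-differential} inside $\cat{C}$. The cases for
variables, $0$, sums, $\epsilon$ and $\lambda$-abstractions of basic terms are
immediate from the preceding lemma (contextuality of $\semEq$) together with the
fact that in $\cat{C}$ each hom-set is a commutative monoid and $\epsilon$,
$\Lambda$ are monoid homomorphisms, so that $\cansum$ is interpreted as $+$ and
$\epsilon^*$ as $\epsilon$ --- the exceptional clause of $\epsilon^*$ on terms of
the form $\epsilon\D(\D(e)\cdot u)\cdot v$ being justified by
Lemma~\ref{lem:d-epsilon}. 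The substantive cases are standard and differential
application: one unfolds the definitions of canonicalization, of the
regularization $\reg{s^\b, T}$, and of $\D^*$, $\textbf{ap}$, $\textbf{pri}$ and
$\textbf{tan}$, and matches them against the semantic clauses of
Figure~\ref{fig:semantics} using the axioms \ref{CdC0}--\ref{CdC7}, \ref{L1},
\ref{L2} and Lemmas~\ref{lem:d-epsilon}, \ref{lem:lambda-d-ev}
and~\ref{lem:lambda-star-ev}; this is the semantic counterpart of the syntactic
Lemmas~\ref{lem:utils-differential} and~\ref{lem:reg-differential}.

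\emph{Fact 2: permutatively equivalent canonical terms that are both typable at
$(\Gamma,\tau)$ have equal interpretations.} If $S \permEq T$ with $\Gamma \ts S
: \tau$ and $\Gamma \ts T : \tau$ then $\sem{\Gamma \ts S : \tau} = \sem{\Gamma
\ts T : \tau}$. This follows by induction on the derivation of $S \permEq T$: the
reflexive, symmetric and transitive closure steps use that equality of
$\cat{C}$-morphisms is an equivalence relation, and the contextual closure steps
use the preceding lemma, with invertibility of typing derivations keeping us at a
fixed type. For the three generating rules of
Figure~\ref{fig:permutative-equivalence}, associativity and commutativity of $+$
hold because $\homset{\cat{C}}{\sem{\Gamma}}{\sem{\tau}}$ is a commutative
monoid, while $\D(\D(s)\cdot t)\cdot e \permEq \D(\D(s)\cdot e)\cdot t$ becomes,
after unfolding $\sem{\D(-)\cdot-}$, an instance of \ref{CdC7} (with \ref{L1}).
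What makes this step clean --- and is the reason we route through $\permEq$ rather
than working with $\diffEq$ directly --- is that permutative equivalence preserves
typability: both sides of each generating rule are typable at $(\Gamma,\tau)$
whenever one of them is, so no ill-typed intermediate term is ever encountered
(in contrast with $\diffEq$, for which, as noted earlier, $0 \diffEq 0\ t$ with
$0\ t$ possibly ill-typed).

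\emph{Conclusion.} Given $s \diffEq t$ with $\Gamma \ts s : \tau$ and $\Gamma
\ts t : \tau$, Theorem~\ref{thm:canonicity} gives $\can{s} \permEq \can{t}$, and
Proposition~\ref{prop:typing-canonical} makes both $\can{s}$ and $\can{t}$
typable at $(\Gamma,\tau)$; Fact~2 then yields $\sem{\Gamma \ts \can{s} : \tau}
= \sem{\Gamma \ts \can{t} : \tau}$, and Fact~1 yields $\sem{\Gamma \ts s : \tau}
= \sem{\Gamma \ts \can{s} : \tau}$ and likewise for $t$, so the three equalities
chain to the desired identity. The main obstacle is the application case of
Fact~1: there one must reproduce semantically the ``regularity'' bookkeeping of
the canonicalization algorithm, and it is precisely here that
Lemma~\ref{lem:lambda-star-ev} and the interaction of the difference combinator
$\dd{-}$ with curry and uncurry through \ref{L1} carry the weight; everything
else is routine.
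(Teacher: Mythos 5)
Your proposal is correct in outline, but it takes a genuinely different route from the paper. The paper argues directly: it defines $s \semEq t$ to mean that the interpretations at some $(\Gamma,\tau)$ coincide, notes that $\semEq$ is a contextual equivalence relation, and then—since $\diffEq$ is the \emph{least} contextual equivalence relation containing $\diffEq^1$—reduces the theorem to checking $\sem{\LHS}=\sem{\RHS}$ for each generating rule of Figure~\ref{fig:differential-equivalence}, the derivative block being handled by an explicit computation of $\Lambda^-\sem{\D(\D(s)\cdot u)\cdot v}$ together with \ref{CdC2}, \ref{CdC6}, \ref{CdC7}, \ref{L1} and Lemma~\ref{lem:d-epsilon}. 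Your detour through canonical forms does not avoid those computations: validating the regularization $\reg{-,-}$ and the $\textbf{pri}$/$\textbf{tan}$ decomposition in your Fact~1 amounts to validating exactly the same generating rules semantically, so Fact~1 concentrates all the real work and is the thinnest part of your sketch (note also that the differential-application case needs an inner induction on the summands of $\can{t}$, as in Lemma~\ref{lem:reg-differential}, not plain structural induction on $t$). What your route buys is a cleaner treatment of typability: the relation $\semEq$ is not reflexive on untypable terms, and a chain of $\diffEq^1$-steps from $s$ to $t$ may pass through untypable terms such as $0\ e$, a point the paper's minimality argument quietly glosses over; by passing to $\can{s}\permEq\can{t}$ via Theorem~\ref{thm:canonicity} and observing that permutative equivalence preserves typability, you keep every term appearing in the argument well-typed. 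Both proofs are sound modulo the deferred calculations; yours is more careful on that point at the cost of extra bookkeeping around the canonicalization algorithm.
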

\begin{proof}
  The result can be equivalently stated as ``$\semEq$ contains $\diffEq$''.
  Since $\diffEq$ is defined as the least contextual equivalence relation that
  contains $\diffEq^1$, and $\semEq$ is both contextual and an equivalence
  relation, it suffices to prove that it contains $\diffEq^1$.

  We examine the rules in Figure~\ref{fig:differential-equivalence} and show
  that they all hold, by checking that each rule verifies $\sem{\LHS} =
  \sem{\RHS}$. The first and second blocks are trivial, as the rules correspond
  precisely to stating that $\cat{C}$ is a Cartesian closed left-additive
  category with an infinitesimal extension which is compatible with the
  Cartesian structure. The third block is a trivial consequence of \CdCAx{1} and
  so we will omit it as well, but we give explicit proofs for some
  of the properties of the fourth block.

  First we show a general equivalence between syntactic and semantic second
  derivatives which will simplify the task considerably.
  \begin{align*}
    &\hspace{-10pt}\Lambda^-(\sem{\D(\D(s) \cdot u) \cdot v})
    \\
    &= \dd{\Lambda^-(\sem{\D(s)\cdot u})} \circ \pair{\Id{}}{\pair{0}{\sem{v} \circ \pi_1}}
    \\
    &= 
    \dd[^2]{\Lambda^-(\sem{s})} \circ \pair{\Id{}}{\pair{0}{\sem{u} \circ \pi_1}}]
    \pair{\Id{}}{\pair{0}{\sem{v} \circ \pi_1}}
    \\
    &=
    \dd[^2]{\Lambda^-(\sem{s})}
    \circ 
    \pair{\pair{\Id{}}{\pair{0}{\sem{u} \circ \pi_1}} \circ \pi_1}
    {\dd{\pair{\Id{}}{\pair{0}{\sem{u}\circ \pi_1}}}}
    \circ
    \pair{\Id{}}{\pair{0}{\sem{v}\circ \pi_1}}
    \\
    &=
    \dd[^2]{\Lambda^-(\sem{s})}
    \\
    &\quad
    \circ 
    \pair
    {\pair{\Id{}}{\pair{0}{\sem{u} \circ \pi_1}}}
    {\pair{\dd{\Id{}} \circ \pair{\Id{}}{\pair{0}{\sem{v} \circ \pi_1}}}
          {\dd{\sem{v} \circ \pi_1} \circ \pair{\Id{}}{\pair{0}{\sem{v} \circ \pi_1}}}}
    \\
    &= 
    \dd[^2]{\Lambda^-(\sem{s})}
    \circ 
    \pair
    {\pair{\Id{}}{\pair{0}{\sem{u} \circ \pi_1}}}
    {\pair{\pair{0}{\sem{v} \circ \pi_1}}
          {\dd{\sem{v}} \circ \pair{\pi_1}{0}}}
    \\
    &= 
    \dd[^2]{\Lambda^-(\sem{s})}
    \circ 
    \pair
    {\pair{\Id{}}{\pair{0}{\sem{u} \circ \pi_1}}}
    {\pair{\pair{0}{\sem{v} \circ \pi_1}}{0}}
  \end{align*}

  With the above, most of the conditions become trivial. For example, we can
  prove that regularity of the syntactic derivative follows from semantic
  regularity (that is to say, \CdCAx{2}) with the following calculation:
  \begin{itemize}
    \item $\D(s) \cdot (u + v) \diffEq^1 \D(s) \cdot u + \D(s) \cdot v
    + \epsilon(\D(\D(s) \cdot u)\cdot v)$
    \begin{align*}
      \sem{\D(s) \cdot (u + v)} &=
      \Lambda(\dd{f} \circ \pair{\Id{}}{\pair{0}{\sem{u} \circ \pi_1 + \sem{v} \circ \pi_1}})
      \\
      &=
      \Lambda(\dd{f} \circ \pair{\Id{}}{\pair{0}{\sem{u} \circ \pi_1} + \pair{0}{\sem{v} \circ \pi_1}})\\
      &=
      \Lambda\big[
                (\dd{f} \circ \pair{\Id{}}{\pair{0}{\sem{u} \circ \pi_1}})\\
      &\quad
        + (\dd{f} \circ \pair{\Id{} + \epsilon(\pair{0}{\sem{u} \circ \pi_1})}{\pair{0}{\sem{v} \circ \pi_1}})
      \big]
      \\
      &=
      \Lambda\brak{
                \dd{f} \circ \pair{\Id{}}{\pair{0}{\sem{u} \circ \pi_1}} }
      \\&\quad
        + \Lambda \brak{(\dd{f} \circ \pair{\Id{}}{\pair{0}{\sem{v} \circ \pi_1}})}
      \\&\quad
        + \epsilon\Lambda\brak{\dd[^2]{f} \circ 
        \pair{\pair{\Id{}}{\pair{0}{\sem{v} \circ \pi_1}}}{\pair{\pair{0}{\sem{u} \circ \pi_1}}{0}}
        }
      \big]
      \\
      &=
      \Lambda\brak{
                \dd{f} \circ \pair{\Id{}}{\pair{0}{\sem{u} \circ \pi_1}} }
      \\&\quad
        + \Lambda \brak{(\dd{f} \circ \pair{\Id{}}{\pair{0}{\sem{v} \circ \pi_1}})}
      \\&\quad
        + \epsilon\Lambda\brak{\dd[^2]{f} \circ 
        \pair{\pair{\Id{}}{\pair{0}{\sem{u} \circ \pi_1}}}{\pair{\pair{0}{\sem{v} \circ \pi_1}}{0}}
        }
      \big]
      \\
      &= \sem{\D(s) \cdot u)} + \sem{\D(s) \cdot v} + \sem{\varepsilon(\D (\D(s) \cdot u) \cdot v)}
    \end{align*}
  \end{itemize}
  Most of the other conditions follow from similar arguments. For example,
  commutativity of the syntactic second derivative follows from commutativity of
  the semantic second derivative. The third and fifth conditions, dealing with
  infinitesimal extensions, may seem harder to prove, but they are both
  corollaries of axiom~\CdCAx{6}, as we showed in Lemma~\ref{lem:d-epsilon}.
  It remains to show
  that the syntactic derivative condition holds; this is not hard, but we do
  it explicitly as the derivative condition is such a central notion.

  \begin{itemize}
    \item $s\ (t + \varepsilon e) \diffEq^1 (s\ t) + \varepsilon((\D(s) \cdot
    e)\ t)$
    \begin{align*}
      &\hspace{-10pt}\sem{s\ (t + \varepsilon e)}\\
      &= \A{ev} \circ \pair{\sem{s}}{\sem{t} + \varepsilon\sem{e}}\\
      &= \Lambda^-\sem{s} \circ \pair{\Id{}}{\sem{t} + \varepsilon\sem{e}}\\
      &= \pa{\Lambda^-\sem{s} \circ \pair{\Id{}}{\sem{t}}}
        + \epsilon\pa{\dd{\Lambda^-\sem{s}} \circ 
        \pair{\pair{\Id{}}{\sem{t}}}{\pair{0}{\sem{e}}}}
      \\
      &= \sem{s\ t}
        + \epsilon\brak{
        \pa{\dd{\Lambda^-\sem{s}}
        \circ \pair{\Id{}}{\pair{0}{\sem{e} \circ \pi_1}}}
        \circ \pair{\Id{}}{\sem{t}}
      }
      \\
      &= \sem{s\ t}
        + \epsilon\brak{
          \A{ev} \circ \pair{
        \Lambda\pa{\dd{\Lambda^-\sem{s}}
        \circ \pair{\Id{}}{\pair{0}{\sem{e} \circ \pi_1}}}}
        {\sem{t}}
      }
      \\
      &= \sem{s\ t}
        + \epsilon\brak{
          \A{ev} \circ \pair{
            \sem{\D(s) \cdot e}
          }
        {\sem{t}}
      }
      \\
      &= \sem{s\ t} + \epsilon\sem{(\D(s) \cdot e)\ t}\qedhere
    \end{align*}
  \end{itemize}
\end{proof}

\begin{lem}
  Let $t$ be some unrestricted $\lambda_\epsilon$-term. The following properties
  hold:
  \begin{enumerate}[i.]
    \item If $\Gamma \ts t : \tau$ and $x$ does not appear in $\Gamma$ then 
    $\sem{\Gamma, x : \sigma \ts t : \tau} = \sem{\Gamma \ts t : \tau} \circ
    \pi_1$
    \item If $\Gamma, x : \sigma_1, y : \sigma_2 \ts t : \tau$ then
    $\sem{\Gamma, y:\sigma_2, x:\sigma_1 \ts t : \tau}
    = \sem{\Gamma, x : \sigma_1, y : \sigma_2 \ts t : \tau} \circ \A{sw}$
  \end{enumerate}
  The morphism $\A{sw}$ above is the obvious isomorphism between $(A \times B)
  \times C$ and $(A \times C) \times B$, which we can define explicitly by:
  \[
    \A{sw} \defeq \pair{\pair{\pi_{11}}{\pi_2}}{\pi_{21}}
    : (A \times B) \times C \ra (A \times C) \times B
  \]
\end{lem}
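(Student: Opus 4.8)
The plan is to prove clauses (i) and (ii) simultaneously by induction on $t$ (using that typing derivations are invertible, so this is the same as induction on the derivation), after first strengthening them to a single \emph{naturality in the context} statement. Call a morphism \emph{structural} if it is built from identities, the projections $\pi_1,\pi_2$ and pairing; this class contains the projection $\pi_1 : \sem{\Gamma}\times\sem{\sigma}\to\sem{\Gamma}$ of clause (i), the swap $\A{sw}=\pair{\pair{\pi_{11}}{\pi_2}}{\pi_{21}}$ of clause (ii), and also the weakenings and transpositions ``at an arbitrary position'' that one meets when descending under a $\lambda$. The generalised claim I would prove is: if $\Delta$ is obtained from $\Gamma$ by inserting fresh variables and/or permuting variables, realised by a structural morphism $r:\sem{\Delta}\to\sem{\Gamma}$, and $\Gamma\ts t:\tau$, then $\Delta\ts t:\tau$ (by the standard weakening and exchange properties of the type system) and $\sem{\Delta\ts t:\tau}=\sem{\Gamma\ts t:\tau}\circ r$. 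Clauses (i) and (ii) are the instances $r=\pi_1$ and $r=\A{sw}$.

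Two observations drive the induction. First, every structural $r$ is \emph{linear}, i.e.\ $\dd{r}=r\circ\pi_2$: a one-line sub-induction from \CdCAx{3}, \CdCAx{4} and \CdCAx{5} (for instance $\dd{\pair{f}{g}}=\pair{\dd{f}}{\dd{g}}=\pair{f\circ\pi_2}{g\circ\pi_2}=\pair{f}{g}\circ\pi_2$). Consequently, for structural $r$ and any $f$, the chain rule \CdCAx{5} gives $\dd{f\circ r}=\dd{f}\circ\pair{r\circ\pi_1}{\dd{r}}=\dd{f}\circ(r\times r)$. Second, all the constructs in Figure~\ref{fig:semantics} commute with precomposition: $0\circ r=0$ and $(f+g)\circ r=f\circ r+g\circ r$ (left additivity), $\pair{f}{g}\circ r=\pair{f\circ r}{g\circ r}$ (Cartesian structure), $\epsilon(f)\circ r=\epsilon(f\circ r)$ (compatibility of the infinitesimal extension with composition, see \cite{alvarez2020cartesian}), and $\Lambda(f)\circ r=\Lambda(f\circ(r\times\Id{}))$ (naturality of currying).

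With these in hand the induction is routine for every case except abstraction and differential application. The variable case is a direct computation with projections ($\sem{\Delta\ts x_k:\tau_k}=(\pi_2\circ\pi_1^{m-k})\circ r$, which is exactly $\sem{\Gamma\ts x_k:\tau_k}\circ r$ for the $r$ realising the renaming); the cases $0$, $s+t$, $\epsilon t$ and $(s\ t)$ follow immediately from the second observation and the induction hypothesis on the immediate subterms. For $\D(s)\cdot t$ one unfolds $\sem{\D(s)\cdot t}=\Lambda\!\big(\dd{\Lambda^{-}(\sem{s})}\circ\pair{\Id{}}{\pair{0}{\sem{t}\circ\pi_1}}\big)$, pushes $r$ inside $\Lambda$ by naturality of currying, rewrites $\Lambda^{-}(\sem{s})\circ(r\times\Id{})=\Lambda^{-}(\sem{s}\circ r)$, applies the chain-rule identity $\dd{\Lambda^{-}(\sem{s})\circ(r\times\Id{})}=\dd{\Lambda^{-}(\sem{s})}\circ\big((r\times\Id{})\times(r\times\Id{})\big)$, and then checks --- using linearity of $r\times\Id{}$ (so its derivative is a projection) and $r\circ 0=0$ --- that the tuple $\pair{\Id{}}{\pair{0}{\sem{t}\circ\pi_1}}$ slides past the rearrangement unchanged. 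Invoking the induction hypothesis on $s$ and $t$ then identifies the result with $\sem{\Delta\ts\D(s)\cdot t:\sigma\Ra\tau}$.

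The abstraction case is where the strengthening pays off: writing $\sem{\Gamma\ts\lambda x.t:\sigma\Ra\tau}=\Lambda(\sem{\Gamma,x:\sigma\ts t:\tau})$ and pushing $r$ through $\Lambda$ turns the goal into $\sem{\Delta,x:\sigma\ts t:\tau}=\sem{\Gamma,x:\sigma\ts t:\tau}\circ(r\times\Id{})$, and $r\times\Id{}$ is again structural, so this is precisely the induction hypothesis for the strictly smaller term $t$ --- now in a context where the inserted or transposed variables are no longer adjacent to the end, which is exactly why the generalised statement is needed (clauses (i) and (ii) on their own do not give back what their own $\lambda$-cases demand). I expect the main obstacle to be not any single case but the bookkeeping: tracking which structural morphism (and which left-associated product reassociation hidden inside ``$\times\Id{}$'') arises at each position of the context, and discharging the $\D$ reshuffle cleanly via the linearity lemma rather than by brute force.
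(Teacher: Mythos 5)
The paper states this lemma without proof (it is the analogue of a standard result for differential $\lambda$-categories, cf.\ \cite{bucciarelli2010categorical}), so there is no in-paper argument to compare against; judged on its own, your proposal is correct and is the argument one would expect. The strengthening to arbitrary structural context morphisms $r$ is exactly the right move --- clauses (i) and (ii) as stated are not closed under the $\lambda$-case --- and the two driving facts (structural maps are linear and additive, so $\dd{f\circ r}=\dd{f}\circ(r\times r)$ and $r\circ 0=0$; all the semantic clauses commute with precomposition) are precisely what makes the $\D(s)\cdot t$ and $\lambda$ cases go through.

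One small point deserves an explicit flag: the step $\epsilon(f)\circ r=\epsilon(f\circ r)$ is not derivable from the axioms as restated in this paper, which only require $\epsilon$ to be a monoid homomorphism compatible with pairing. It is, however, part of the definition of an infinitesimal extension in the cited source (compatibility with precomposition), so your appeal to \cite{alvarez2020cartesian} is the right justification; just make sure you invoke the precomposition form $\epsilon(g)\circ f=\epsilon(g\circ f)$ and not a postcomposition variant, which fails in general. With that noted, the remaining bookkeeping you describe (checking that the specific $r$ realising an insertion or transposition satisfies $\pi^{\Gamma}_{x}\circ r=\pi^{\Delta}_{x}$ for every variable, which is the whole content of the variable case) is routine.
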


\begin{lem}
  Let $\Gamma, x : \tau \ts s : \sigma$, with $s$ some unrestricted
  $\lambda_\epsilon$-term. Then:
  
  \begin{enumerate}[i.]
  \item Whenever $\Gamma, x : \tau \ts t : \tau$, then
  $\sem{\subst{s}{x}{t}}_\Gamma = \sem{s}_{\Gamma, x : \tau} \circ
  \pair{\pi_1}{\sem{t}_{\Gamma, x : \tau}}$
  
  \item Whenever $\Gamma \ts t : \tau$, then $\sem{\diffS{s}{x}{t}}_{\Gamma,
  x : \tau} =
  \dd{\sem{s}_{\Gamma, x : \tau}}
  \circ \pair{\Id{}}{\pair{0}{\sem{t}_\Gamma \circ \pi_1}}$. Or, using the
  notation in Definition~\ref{def:star-composition}, $\sem{\diffS{s}{x}{t}} =
  \sem{s} \star \sem{t}$.
  \end{enumerate}
\end{lem}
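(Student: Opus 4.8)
The plan is to prove clauses (i) and (ii) by structural induction on the unrestricted term $s$, establishing (i) in full first, since the clauses defining $\diffS{(e\ u)}{x}{t}$ and $\diffS{(\D(e)\cdot u)}{x}{t}$ refer to an \emph{ordinary} substitution $\subst{u}{x}{x + \epsilon t}$, so the proof of (ii) will invoke (i) as a black box. At each step invertibility of the typing rules recovers derivations for the immediate subterms of $s$, and since both substitution operations preserve typing, every interpretation occurring in the statement is defined. The purely structural clauses — $s$ a variable, $0$, a sum or an infinitesimal extension — are immediate: for (i) they unfold directly from the definition of $\sem{-}$ and the fact that the pairing $\pair{\pi_1}{\sem{t}}$ commutes through $+$, $0$, $\epsilon$ and $\A{ev}$; for (ii) they follow from \ref{CdC1} (additivity of $\dd{-}$ and $\dd{\epsilon f} = \epsilon\dd{f}$), while the variable clauses $\diffS{x}{x}{t} = t$ and $\diffS{y}{x}{t} = 0$ reduce to computing $\dd{-}$ of a projection, which the axioms \ref{CdC2}, \ref{CdC3} and the chain rule \ref{CdC5} pin down explicitly (as the appropriate projection, respectively as $0$), together with the weakening lemma to reindex $\sem{t}$.

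For part (i) only two families of cases carry content. For $s = \lambda y . e$ I would use naturality of currying, $\Lambda(h) \circ k = \Lambda\bigl(h \circ (k \times \Id{})\bigr)$, together with the weakening and exchange properties of $\sem{-}$ from the preceding lemma, to move the fresh binder $y$ past the pairing and reduce to the induction hypothesis for $e$. For $s = e\ u$ (resp. $s = \D(e)\cdot u$) one pushes $\pair{\pi_1}{\sem{t}}$ through $\A{ev}$ (resp. inside $\dd{\Lambda^-\sem{e}}$ via \ref{CdC5}), using the induction hypothesis for $e$ and $u$ and the observation that $\sem{u}\circ\pi_1$, pre-composed with the reindexing, is again $\sem{\subst{u}{x}{t}}\circ\pi_1$. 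These are the classical Cartesian-closed substitution-lemma manipulations, carrying along the extra $\pair{0}{-}$ coordinate.

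For part (ii) the delicate cases are the abstraction and the two applications. For $s = \lambda y . e$ the definition gives $\diffS{(\lambda y. e)}{x}{t} = \lambda y . \diffS{e}{x}{t}$, and the claim collapses, via axiom \ref{L1} (which identifies $\dd{\Lambda(-)}$ with the curried partial derivative in the first argument) and the exchange lemma, to the induction hypothesis for $e$. For $s = e\ u$ I would expand $\dd{\A{ev}\circ\pair{\sem{e}}{\sem{u}}}$ using Lemma~\ref{lem:lambda-d-ev} with $f = \Lambda^-\sem{e}$, pre-compose with $\pair{\Id{}}{\pair{0}{\sem{t}\circ\pi_1}}$, and recognise the two surviving summands as $\sem{(\D(e)\cdot(\diffS{u}{x}{t}))\ u}$ and $\sem{(\diffS{e}{x}{t})\ (\subst{u}{x}{x + \epsilon t})}$ — here part (i) supplies $\sem{\subst{u}{x}{x + \epsilon t}}$ and the induction hypothesis supplies $\sem{\diffS{e}{x}{t}}$ and $\sem{\diffS{u}{x}{t}}$. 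For $s = \D(e)\cdot u$ I would first compute $\dd{\sem{\D(e)\cdot u}}$ by combining \ref{L1} with the ``syntactic second derivative equals semantic second derivative'' identity already worked out in the proof above that $\diffEq \subseteq \semEq$ (where the $\dd[^2]{\Lambda^-\sem{e}}$ terms are simplified using \ref{CdC5}, \ref{CdC6} and \ref{CdC7}), then form the $\star$-composition $\sem{\D(e)\cdot u}\star\sem{t}$ of Definition~\ref{def:star-composition} and match the three resulting summands against the three-term definition of $\diffS{(\D(e)\cdot u)}{x}{t}$, calling on the induction hypothesis and part (i) for the pieces.

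The main obstacle is exactly this bookkeeping in the last two cases of (ii), above all $s = \D(e)\cdot u$, which simultaneously uses \ref{L1}, the higher-derivative axioms \ref{CdC5}--\ref{CdC7}, the weakening/exchange lemma and both inductive hypotheses; the syntactic clauses of differential substitution were reverse-engineered precisely so that these equalities become true, so there is no genuinely new idea, only the discipline of tracking which coordinate every $\epsilon$, every $\pair{0}{-}$ insertion and every (partial) derivative acts on. It is worth stressing that — unlike the term-level Taylor formula of Theorem~\ref{thm:taylor} — these identities hold \emph{on the nose}, not merely up to $\diffEq$, because both sides are honest $\cat{C}$-morphisms; this is what lets the plain structural induction go through, and, combined with the translation of $\beta$- and $\partial$-reduction, immediately yields that $\sem{-}$ is sound for reduction.
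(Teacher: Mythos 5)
Your proposal is correct and follows the same overall strategy as the paper: induction on the structure (equivalently, the typing derivation) of $s$, with part (i) established first so that it can be invoked for the $\subst{u}{x}{x+\epsilon t}$ occurrences inside the clauses of differential substitution, the trivial cases dispatched by additivity of $\dd{-}$ and \ref{CdC1}--\ref{CdC3}, and the abstraction case of (ii) handled by \ref{L1} together with the weakening/exchange lemma and the swap map. The one place where you diverge is in how the two hard cases of (ii) are discharged: the paper simply rewrites both sides in terms of the $\star$-composition of Definition~\ref{def:star-composition} and closes the application and differential-application cases by citing Lemma~\ref{lem:lambda-star-ev}(i.) and (ii.) respectively (whose proofs are deferred to the reference), whereas you propose to re-derive those identities on the spot from Lemma~\ref{lem:lambda-d-ev} and the axioms \ref{CdC5}--\ref{CdC7}. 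This is the same mathematical content approached from the other end --- your route is more self-contained but substantially more laborious, since Lemma~\ref{lem:lambda-star-ev}(ii.) in particular packages a nontrivial amount of second-derivative bookkeeping; the paper's route buys a short proof at the cost of an external dependency. Your closing observation that these identities hold on the nose (not merely up to $\diffEq$) matches the paper's treatment and is exactly why the plain induction works.
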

\begin{proof}
  The proof follows roughly the structure of
  \cite[Theorem~4.11]{bucciarelli2010categorical}, taking into account the
  differences in our notion of differential substitution. Note also that we
  prove substitution in the case that the variable $x$ is not free in $t$. This
  is because we require this (stronger) form of substitution to write
  $\subst{e}{x}{x + \varepsilon(t)}$ in some cases of differential substitution.
  
  Both properties will follow by induction on the typing derivation of $s$. The
  only non-trivial case for the first one is differential application. For this,
  we must show that $\sem{\D(\subst{s}{x}{t}) \cdot (\subst{u}{x}{t})}$ is equal
  to $\sem{\D(s) \cdot u} \circ \pair{\Id{}}{\sem{t}}$. Expanding the term we obtain:
  \begin{align*}
    \sem{\D(\subst{s}{x}{t}) \cdot (\subst{u}{x}{t})}
    &= \Lambda \Big(
      \Lambda^-(\sem{\subst{s}{x}{t}}) \star \sem{\subst{u}{x}{t}}
    \Big)
    \displaybreak[0]\\
    &= \Lambda \Big(\Lambda^-(\sem{s} \circ \pair{\pi_1}{\pair{0}{\sem{u}}})
     \star
      (\sem{u} \circ \pair{\pi_1}{\pair{0}{\sem{u}}})
    \Big)
  \end{align*}
  By Lemma~\ref{lem:lambda-star-ev}(iii.), the above expression can be written
  as:
  \[ \Lambda \Big( (\Lambda^-\sem{s}) \star \sem{u} \Big) \circ \pair{\pi_1}{\sem{t}} \]
  which concludes the proof.

  We show now the cases for differential substitution.
  \begin{itemize}
    \item $s = x$
      Then $\sem{s} = \pi_2$ and 
      \begin{align*}
        \hspace{-30pt}
        \dd{\sem{s}} \circ \pair{\Id{}}{\pair{0}{\sem{t} \circ \pi_1}}
        = \pi_{22}
        \circ \pair{\Id{}}{\pair{0}{\sem{t} \circ \pi_1}}
        = \sem{t} \circ \pi_1
        = \sem{\Gamma, x : \tau \ts t : \tau}
      \end{align*}

    \item $s = y \neq x$
    
      Then $\sem{s} = \pi_2 \circ \pi_1^n \circ \pi_1$ and
      \begin{align*}
        \hspace{-30pt}
        \dd{\sem{s}} \circ \pair{\Id{}}{\pair{0}{\sem{t} \circ \pi_1}}
        = \pi_{2} \circ \pi_1^n \circ \pi_1 \circ \pi_2
        \circ \pair{\Id{}}{\pair{0}{\sem{t} \circ \pi_1}}
        = 0
        = \sem{\Gamma, x : \tau \ts 0 : \tau}
      \end{align*}
    
    \item $s = \epsilon s_1$

      Then $\sem{s} = \epsilon\sem{s_1}$ and
      \begin{align*}
        \hspace{-30pt}
        \dd{\sem{s}} \circ \pair{\Id{}}{\pair{0}{\sem{t} \circ \pi_1}}
        = \epsilon (\dd{\sem{s_1}} \circ \pair{\Id{}}{\pair{0}{\sem{t} \circ \pi_1}})
        = \epsilon \sem{\diffS{s_1}{x}{t}}
        = \sem{\diffS{(\epsilon s_1)}{x}{t }}
      \end{align*}

    \item The case $s = \sum_{i=1}^n s_i$ follows by a similar argument as the
    previous one.
    
    \item $s = \lambda y . s_1 : \sigma_1 \Ra \sigma_2$

      Then $\Gamma, x : \tau, y : \sigma_1 \ts s_1 : \sigma_2$ and therefore,
      by the induction hypothesis, we know that:
      \begin{align*}
        \hspace{-30pt}
        \sem{\diffS{s_1}{x}{t}}_{\Gamma, x : \tau, y : \sigma_1}
        &=
        \sem{\diffS{s_1}{x}{t}}_{\Gamma, y : \sigma_1, x : \tau}
        \circ \A{sw}\\
        &=
        \dd{\sem{s_1}_{\Gamma, y : \sigma_1, x : \tau}}
        \circ \pair{\Id{}}
                   {\pair{0}{\sem{t}_{\Gamma, y:\sigma_1, x:\tau}}}
        \circ \A{sw}\\
        &=
        \dd{\sem{s_1}_{\Gamma, x : \sigma_1, y : \tau}}
        \circ (\A{sw} \times \A{sw})
        \circ \pair{\Id{}}
                   {\pair{0}{\sem{t}_{\Gamma} \circ \pi_{11}}}
        \circ \A{sw}
        \displaybreak[0]\\
        &=
        \dd{\sem{s_1}_{\Gamma, x : \sigma_1, y : \tau}}
        \circ \pair{\A{sw}}
                   {\pair{\pair{0}{\sem{t}_\Gamma \circ \pi_{11}}}{0}}
        \circ \A{sw}
        \displaybreak[0]\\
        &=
        \dd{\sem{s_1}_{\Gamma, x : \sigma_1, y : \tau}}
        \circ \pair{\Id{}}
                   {\pair{\pair{0}{\sem{t}_\Gamma \circ \pi_{11}}}{0}}
      \end{align*}
      Obtaining the final result is just a matter of applying this identity
      and writing $\dd{\Lambda(\sem{s_1)}}$ in terms of the swapping map $\A{sw}$
      as remarked in Definition~\ref{def:difference-lambda-category}.
      \begin{align*}
        \dd{\sem{s}} \mathrel{\circ}{}& \pair{\Id{}}{\pair{0}{\sem{t} \circ \pi_1}}
        =\dd{\Lambda(\sem{s_1})} \circ \pair{\Id{}}{\pair{0}{\sem{t} \circ \pi_1}}
        \displaybreak[0]\\
        &= \Lambda \Big[
          \dd{\sem{s_1}}
          \circ \pa{\Id{} \times \pair{\Id{}}{0}}
          \circ \A{sw}
        \Big]
        \circ \pair{\Id{}}{\pair{0}{\sem{t} \circ \pi_1}}
        \displaybreak[0]\\
        &= \Lambda \Big[
          \dd{\sem{s_1}}
          \circ \pa{\Id{} \times \pair{\Id{}}{0}}
          \circ \A{sw}
          \circ \pair{\pair{\Id{}}{\pair{0}{\sem{t} \circ \pi_1}} \circ \pi_1}{\pi_2}
        \Big]
        \displaybreak[0]\\
        &= \Lambda \Big[
          \dd{\sem{s_1}}
          \circ \pa{\Id{} \times \pair{\Id{}}{0}}
          \circ \A{sw}
          \circ \pair{\pair{\pi_1}{\pair{0}{\sem{t} \circ \pi_{11}}}}{\pi_2}
        \Big]
        \displaybreak[0]\\
        &= \Lambda \Big[
          \dd{\sem{s_1}} 
          \circ \pa{\Id{} \times \pair{\Id{}}{0}}
          \circ \pair{\Id{}}{\pair{0}{\sem{t} \circ \pi_{11}}}
        \Big]
        \displaybreak[0]\\
        &= \Lambda \Big[
          \dd{\sem{s_1}} 
          \circ \pair{\Id{}}{\pair{\pair{0}{\sem{t} \circ \pi_{11}}}{0}}
        \Big]
        \displaybreak[0]\\
        &= \Lambda \Big[
          \dd{\sem{s_1}} 
          \circ (\A{sw} \times \A{sw})
          \circ \pair{\Id{}}{\pair{0}{\sem{t} \circ \pi_{11}}}
        \Big]
        \displaybreak[0]\\
        &= \Lambda \Big[
          \dd{\sem{s_1}} 
          \circ \pair{\A{sw}}{\pair{\pair{0}{\sem{t} \circ \pi_{11}}}{0}}
        \Big]
        \displaybreak[0]\\
        &= \Lambda \Big[
          \dd{\sem{s_1}} 
          \circ \pair{\Id{}}{\pair{\pair{0}{\sem{t} \circ \pi_{11}}}{0}}
          \circ \A{sw}
        \Big]
        \displaybreak[0]\\
        &= \Lambda \sem{\Gamma, x : \tau, y : \sigma_1 \ts 
          \diffS{s_1}{x}{t} : \sigma_2
        }
        \displaybreak[0]\\
        &= \sem{\Gamma, x : \tau \ts 
          \lambda y . \diffS{s_1}{x}{t} : \sigma_1 \Ra \sigma_2
        }
      \end{align*}

      \newcommand{\theterm}[0]{\mathfrak{s}}
    \item $s = s_1\ e$
      To simplify the calculations, we write $\theterm$ for
      $\Lambda^-(\sem{s_1})$. 
      The result follows as
      a consequence of Lemma~\ref{lem:lambda-star-ev}[iii.].
      \begin{align*}
        \lefteqn{\sem{\diffS{(s_1\ e)}{x}{t}}}\\
        &= \sem{
            \D(s_1) \cdot \pa{\diffS{e}{x}{t}}\ e
        } + \sem{
          \pa{\diffS{s_1}{x}{t}}\ (\subst{e}{x}{x + \varepsilon t})
        }
        \displaybreak[0]\\
        &= \A{ev} \circ \pair{\Lambda\pa{
          \theterm \star (\sem{e} \star \sem{t})
        }}{\sem{e}}\\
        \continued + 
          \A{ev} \circ \pair{
            \sem{s_1} \star \sem{t}
          }{
            \sem{e} \circ \pair{\pi_1}{\pi_2 + \varepsilon(\sem{t})\circ \pi_1}
          }
        \displaybreak[0]\\
        &= (\A{ev} \circ \pair{\sem{s_1}}{\sem{e}}) \star \sem{t}
        \displaybreak[0]\\
        &= \sem{s_1\ e} \star \sem{t}
      \end{align*}

    \item $s = \D(s_1)\cdot u$

      We will again abbreviate $\Lambda^-(\sem{s_1})$ as $\theterm$ to make the
      subsequent calculations more readable.
      The result then follows
      by applying of Lemma~\ref{lem:lambda-star-ev}[ii.].
      \begin{align*}
        \lefteqn{\sem{\diffS{\D(s_1) \cdot u}{x}{t}}}\\
        &= \sem{
          \D(s_1) \cdot \diffS{u}{x}{t}
        }
        + \sem{
          \D\pa{\diffS{s_1}{x}{t}} \cdot (\subst{u}{x}{x + \epsilon(t)})
        }\\
        \continued + \epsilon\sem{
          \pa{\D(\D(s_1) \cdot u) \cdot \pa{\diffS{u}{x}{t}}}
        }
        \displaybreak[0]\\
        &= \Lambda\pa{
          \theterm \star \sem{\diffS{u}{x}{t}}
        }
        + \Lambda\pa{
          \pa{\Lambda^-\sem{\diffS{s_1}{x}{t}}}
          \star {\sem{\subst{u}{x}{x + \varepsilon(t)}}}
        }\\
        \continued + \varepsilon\Lambda\pa{
          \pa{\Lambda^-\sem{\D(s_1) \cdot u}} \star \sem{\diffS{u}{x}{t}}
        }
        \displaybreak[0]\\
        &= \Lambda\pa{
          \theterm \star (\sem{u} \star \sem{t})
        }
        + \Lambda\pa{
          \pa{\Lambda^-(\Lambda(\sem{s_1}) \star \sem{t})}
          \star {\sem{\subst{u}{x}{x + \varepsilon(t)}}}
        }\\
        \continued + \varepsilon\Lambda\pa{
          \pa{\Lambda^-(\theterm \star \sem{u})} \star (\sem{u} \star \sem{t})
        }
        \displaybreak[0]\\
        &= \Lambda\pa{
          \theterm \star (\sem{u} \star \sem{t})
        }
        + \Lambda\pa{
          \pa{\Lambda^-(\Lambda(\sem{s_1}) \star \sem{t})}
          \star \pa{{\sem{u} \circ \pair{\pi_1}{\pi_2 + \varepsilon(\sem{t}) \circ \pi_1}}}
        }\\
        \continued + \varepsilon\Lambda\pa{
          \pa{\Lambda^-(\theterm \star \sem{u})} \star (\sem{u} \star \sem{t})
        }
        \displaybreak[0]\\
        &= \pa{\Lambda\pa{\theterm \star \sem{u}}} \star \sem{t}
        \displaybreak[0]\\
        &= \pa{\Lambda\pa{\Lambda^-(\sem{s_1}) \star \sem{u}}} \star \sem{t}
        \displaybreak[0]\\
        &= (\sem{\D(s_1) \cdot u}) \star \sem{t}
      \end{align*}
  \end{itemize}
\end{proof}

\begin{defi}
  Given well-formed terms $\ul{s}, \ul{s'}$, we define the equivalence relation
  $\sim_{\beta\partial}$ as the least contextual equivalence relation that contains
  the one-step reduction relation $\wfReducesTo$.
\end{defi}

\begin{cor}
  The interpretation $\sem{\cdot}$ is \emph{sound}, that is to say, whenever
  $\ul{s} \sim_{\beta\partial} \ul{s'}$ then $\sem{s} = \sem{s'}$, independently of
  the choice of representatives $s, s'$.
\end{cor}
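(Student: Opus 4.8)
\subsection*{Proof of the soundness corollary}

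The plan is to reduce the statement to two facts already in hand: (a) differential equivalence is sound, i.e. the theorem above showing that if $\Gamma \ts s : \tau$, $\Gamma \ts t : \tau$ and $s \diffEq t$ then $\sem{\Gamma \ts s : \tau} = \sem{\Gamma \ts t : \tau}$; and (b) the substitution lemma, which says that $\sem{\subst{s}{x}{t}}$ is obtained from $\sem{s}$ by precomposition with $\pair{\pi_1}{\sem{t}}$ and that $\sem{\diffS{s}{x}{t}} = \sem{s} \star \sem{t}$. Together with contextuality of $\semEq$ (the lemma above) and the fact that $\can{}$ commutes with the term constructors (Lemma~\ref{lem:can-contextual}), these will give the result by the usual ``smallest relation'' argument.

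First I would record that it suffices to prove soundness of the \emph{one-step reduction $\reducesTo$ on typable unrestricted terms}: if $\Gamma \ts s : \tau$, $\Gamma \ts s' : \tau$ and $s \reducesTo s'$, then $\sem{\Gamma \ts s : \tau} = \sem{\Gamma \ts s' : \tau}$. Indeed, unfolding $\wfReducesTo$, an instance $\ul s \wfReducesTo \ul{s'}$ amounts to $\can s \reducesTo s''$ for some $s'' \diffEq s'$; since $s \diffEq \can s$ by Theorem~\ref{thm:canonical-form-differential} and $\diffEq$-equivalent typable terms have equal interpretation by (a), soundness of $\reducesTo$ yields $\sem s = \sem{\can s} = \sem{s''} = \sem{s'}$. (Typability of the intermediate terms is guaranteed by subject reduction for well-formed terms, Corollary~\ref{cor:well-formed-subject-reduction}, and Proposition~\ref{prop:typing-canonical}, invoking invertibility of the typing rules.) The same reasoning shows the interpretation of a typable well-formed term is independent of the chosen canonical form, since any two canonical forms are permutatively, hence differentially, equivalent.

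To prove soundness of $\reducesTo$ I would induct on its derivation. The two base cases are one-line computations. For $(\lambda x . t)\ e \reducesTo_\beta \subst{t}{x}{e}$ we have $\sem{(\lambda x . t)\ e} = \A{ev} \circ \pair{\Lambda\sem t}{\sem e} = \Lambda^-(\Lambda\sem t) \circ \pair{\Id{}}{\sem e}$, which is exactly $\sem{\subst{t}{x}{e}}$ by the substitution lemma. For $\D(\lambda x . t) \cdot e \reducesTo_\partial \lambda x . \diffS{t}{x}{e}$ we unfold the interpretation of differential application: $\sem{\D(\lambda x . t) \cdot e} = \Lambda\pa{\dd{\Lambda^-(\Lambda\sem t)} \circ \pair{\Id{}}{\pair{0}{\sem e \circ \pi_1}}} = \Lambda\pa{\dd{\sem t} \circ \pair{\Id{}}{\pair{0}{\sem e \circ \pi_1}}} = \Lambda\sem{\diffS{t}{x}{e}} = \sem{\lambda x . \diffS{t}{x}{e}}$, using the differential substitution clause of the substitution lemma. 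The inductive (contextual) cases are immediate from contextuality of $\semEq$ on typable terms, since reduction deep inside a term is reflected compositionally by the interpretation and, by invertibility of the typing rules, the relevant subterms inherit typings.

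Finally I would assemble the pieces: the relation on typable unrestricted terms given by ``$\sem{-} = \sem{-}$ at a common type'' is an equivalence relation, is contextual, contains $\diffEq$, and contains $\reducesTo$; descending it to well-formed terms via canonical forms (using Lemma~\ref{lem:can-contextual}) yields a contextual equivalence relation on typable well-formed terms containing $\wfReducesTo$, hence containing the least such relation $\sim_{\beta\partial}$. This is precisely the claim, the ``independently of the choice of representatives'' clause being subsumed by $\diffEq$-invariance of the interpretation. The only real obstacle is bookkeeping: the interpretation is defined only on typable terms, while $\sim_{\beta\partial}$ and $\diffEq$ can pass through terms whose typability needs to be checked; this is handled by Corollary~\ref{cor:well-formed-subject-reduction}, Proposition~\ref{prop:typing-canonical}, and invertibility of the typing rules, but it dictates that the statement be read over the class of typable terms. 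No computation beyond the two displayed base cases is required.
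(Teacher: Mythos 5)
Your proposal is correct and follows exactly the route the paper intends for this corollary (which it leaves implicit): soundness of the two one-step redexes via the substitution lemma, contextuality of $\semEq$ for reduction under a context, and $\diffEq$-invariance of the interpretation to pass through canonical forms and justify independence of representatives. The typability bookkeeping you flag is the only delicate point, and you handle it the same way the paper does, via subject reduction and Proposition~\ref{prop:typing-canonical}.
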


\newcommand\diflfunc{\mathsf{Dif{\lambda}\hbox{-}Func}}
\newcommand\modcat{\mathsf{Mod}_{\mathsf{Dif}\lambda}}

\begin{defi}
  Recall that a simply-typed theory is a collection of equational judgements
  of the form $\Gamma \ts s = t : \sigma$ where $\Gamma \ts s : \sigma$ and
  $\Gamma \ts t : \sigma$ are derivable. We say that a simply-typed theory is a
  \emph{difference $\lambda$-theory} if it is closed under all rules in the system
  $\lambda_\epsilon^{\times} \beta \eta \partial$ (comprising the contextual rules for all the constructs of the $\lambda_\epsilon$-calculus augmented by products, $\sim_\epsilon$ equivalence, and the surjective pairing, $\beta$, $\eta$ and $\partial$ laws, and last being the equational version of $\reducesTo_\partial$). 

  Given an interpretation $\sem{\cdot}_{\mathcal M}$ of $\lambda_\epsilon$ in
  $\cat{C}$, we say that ${\mathcal M} = \sem{\cdot}_{\mathcal M}$ is a \emph{model} of
  a difference $\lambda$-theory $\mathscr{T}$ if for every typed equational
  judgement $\Gamma \ts s = t : \sigma$ in $\mathscr{T}$, we have that
  $\sem{\Gamma \ts s : \sigma}_{\mathcal M}$ and $\sem{\Gamma \ts t : \sigma}_{\mathcal
  M}$ are the same morphism. 
  
  A \emph{model homomorphism} $h : {\mathcal M}
  \to {\mathcal N}$ is given by isomorphisms $h_{\mathbf{t}} : \sem{\mathbf{t}}_{\mathcal
  M} \to \sem{\mathbf{t}}_{\mathcal N}$ for each basic type $\mathbf{t}$, and
  $h_{\sigma \times \tau} \defeq h_{\sigma} \times h_{\tau}$, and
  \[
  h_{\sigma \Ra \tau} \defeq h^{-1}_\sigma \Ra h_\tau \defeq \Lambda(h_\tau \circ \A{ev} \circ (\Id{} \times h^{-1}_\sigma)). 
  \]

  We write $\modcat({\mathscr{T}}, \cat{C})$ for the category whose objects are
  all models of difference $\lambda$-theory $\mathscr{T}$ in a difference
  $\lambda$-category $\cat{C}$, and whose morphisms are model homomorphisms. 
\end{defi}

\begin{defi}
  Let $\cat C$ and $\cat D$ be difference $\lambda$-categories. We say that a
  functor $F : \cat C \to \cat D$ is a \emph{difference $\lambda$-functor} if
  $F$ preserves the following:
  \begin{itemize}
    \item additive structure: $F(f + g) = F(f) + F(g)$, and $F(0) = 0$
    \item infinitesimal extension: $F (\epsilon (f)) = \epsilon (F (f))$
    \item products via the isomorphism $\Phi \defeq \pair{F (\pi_1)}{F (\pi_2)}$
    \item exponentials via the isomorphism $\Psi \defeq \Lambda (F (\A{ev}) \circ \Phi)$
    \item difference combinator: $F (\dd{f}) = \dd{F(f)} \circ \Phi$.
  \end{itemize}
  We write $\diflfunc(\cat{C}, \cat{D})$ for the category of {difference
  $\lambda$-functors} $\cat{C} \to \cat{D}$ and natural isomorphisms. 
\end{defi}
 
\begin{defi}
  Given a difference $\lambda$-theory $\mathscr{T}$, we say that a category,
  denoted $\mathbf{Cl}({\mathscr{T}})$, is \emph{classifying} if there is a model
  of the theory in $\mathbf{Cl}({\mathscr{T}})$, and this model is ``generic'',
  meaning that for every differential $\lambda$-category $\cat{D}$, there is a
  natural equivalence
  \begin{equation}
  \diflfunc(\mathbf{Cl}({\mathscr{T}}), \cat{D}) 
  \simeq
  \modcat({\mathscr{T}}, \cat{D}).
  \label{eq:classifying}
  \end{equation}
\end{defi}

The classifying category (unique up to isomomrphism) is the ``smallest'' in the sense that given a model of the theory $\sem{\cdot}_{\cat{D}}$ in a difference $\lambda$-category $\cat{D}$,
there is a difference $\lambda$-functor $F : \mathbf{Cl}(\mathscr{T}) \to \cat{D}$ such that the interpretation $\sem{\cdot}_{\cat D}$ can be factored through the canonical interpretation in the classifying category, i.e., $\sem{\cdot}_{\cat D} = F \circ \sem{\cdot}_{\mathbf{Cl}(\mathscr{T})}$.

\begin{conjecture}[Completeness] 
  Every difference $\lambda$-theory $\mathscr{T}$
  has a classifying difference $\lambda$-category $\mathbf{Cl}({\mathscr{T}})$.
\end{conjecture}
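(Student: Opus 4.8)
The plan is to carry out the standard term-model construction, adapted to the difference setting. I would let the objects of $\mathbf{Cl}(\mathscr{T})$ be the types of the product-extended calculus $\lambda_\epsilon^{\times}$ (base types closed under $\Ra$, $\times$, together with a unit type), and a morphism $\sigma \to \tau$ be an equivalence class $[\, x : \sigma \wfTs t : \tau\,]$ of typable terms-in-context modulo provable equality in $\mathscr{T}$; note $\mathscr{T}$ contains $\diffEq$ since it is closed under $\sim_\epsilon$. Composition is capture-avoiding substitution, which descends to equivalence classes because substitution respects $\diffEq$ (Proposition~\ref{prop:substitution-differential}) and the congruence rules of $\mathscr{T}$, and the identity on $\sigma$ is $[\, x : \sigma \wfTs x : \sigma\,]$; as usual a context $x_1 : \sigma_1, \ldots, x_n : \sigma_n$ is identified with the product $\sigma_1 \times \cdots \times \sigma_n$. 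The $1$-categorical structure is then forced by the equational rules of $\lambda_\epsilon^{\times}\beta\eta\partial$: finite products and a terminal object from the product and unit types with surjective pairing; Cartesian closure from abstraction, application, $\beta$ and $\eta$; the commutative-monoid structure on hom-sets, and Cartesian left additivity, from the first two blocks of Figure~\ref{fig:differential-equivalence}, which make $+, 0$ pointwise; and an infinitesimal extension $\varepsilon(-)$ which, thanks to $\varepsilon(s+t) \diffEq \varepsilon s + \varepsilon t$, $\lambda x . \varepsilon t \diffEq \varepsilon(\lambda x . t)$ and $(\varepsilon s)\,t \diffEq \varepsilon(s\,t)$, is a monoid homomorphism compatible with the Cartesian structure.

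Next I would equip $\mathbf{Cl}(\mathscr{T})$ with a difference combinator by setting, for $f = [\, x : \sigma \wfTs t : \tau\,]$,
\[
  \dd{f} \defeq [\, x : \sigma,\ y : \sigma \wfTs \diffS{t}{x}{y} : \tau\,] : \sigma \times \sigma \to \tau ,
\]
with $y$ fresh; this is well-defined on equivalence classes by Propositions~\ref{prop:differential-substitution-differential} and~\ref{prop:substitution-differential}, and for terms related by $\beta\eta\partial$ rather than merely by $\diffEq$, by the interaction Lemmas~\ref{lem:standard-differential-substitution} and~\ref{lem:differential-standard-substitution} together with the $\partial$-rule. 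The bulk of the argument is then checking axioms $\CdCAx{0}$--$\CdCAx{7}$: axiom $\CdCAx{0}$ is precisely the syntactic Taylor formula, Theorem~\ref{thm:taylor}; $\CdCAx{2}$ is regularity of differential substitution, Theorem~\ref{thm:differential-substitution-regular}; $\CdCAx{7}$ is commutativity of iterated differential substitution, Proposition~\ref{prop:differential-substitution-commute}; $\CdCAx{1}$, $\CdCAx{3}$ and $\CdCAx{4}$ are immediate from the defining clauses of differential substitution and the additivity/$\varepsilon$ rules; and $\CdCAx{5}$ (the chain rule) and $\CdCAx{6}$ reduce, via Lemmas~\ref{lem:standard-differential-substitution} and~\ref{lem:differential-standard-substitution}, to identities between nested differential substitutions. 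Finally, the difference $\lambda$-category axioms hold: $\dlCAx{2}$ from $\lambda x . \varepsilon t \diffEq \varepsilon(\lambda x . t)$, and $\dlCAx{1}$ by transporting to the term level the computation already performed for $\sem{\lambda y . s}$ in the interpretation lemma for differential substitution.

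With $\mathbf{Cl}(\mathscr{T})$ in hand, I would exhibit the generic model $\sem{\cdot}_{\mathbf{Cl}(\mathscr{T})}$: send each base type to itself and take the obvious interpretation of Figure~\ref{fig:semantics}. An induction on typing derivations, using the substitution lemma for $\sem{\cdot}$ and the $\partial$-rule, shows that $\sem{\Gamma \ts t : \tau}_{\mathbf{Cl}(\mathscr{T})}$ is just the morphism $[\, t\,]$, so this interpretation is a model of $\mathscr{T}$ by the soundness corollary. For genericity, given a difference $\lambda$-category $\cat{D}$ I would build the two directions of~\eqref{eq:classifying}: a model $\mathcal{M} = \sem{\cdot}_{\mathcal{M}}$ yields a difference $\lambda$-functor $F_{\mathcal{M}} : \mathbf{Cl}(\mathscr{T}) \to \cat{D}$ sending $\mathbf{t} \mapsto \sem{\mathbf{t}}_{\mathcal{M}}$ and $[\, t\,] \mapsto \sem{t}_{\mathcal{M}}$---well-defined because $\mathcal{M}$ is a model, functorial by the substitution lemma, and preserving additive structure, infinitesimal extension, products, exponentials and difference combinator clause by clause---while a difference $\lambda$-functor $G$ yields the model $G \circ \sem{\cdot}_{\mathbf{Cl}(\mathscr{T})}$. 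These assignments are mutually inverse up to isomorphism because a difference $\lambda$-functor is determined, up to natural isomorphism, by its values on base types---everything else being generated from the preserved structure---exactly as a model is determined up to model homomorphism by its interpretation of base types; one then checks that natural isomorphisms of functors correspond to model homomorphisms and that the correspondence is natural in $\cat{D}$, which is a routine $2$-categorical diagram chase.

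The main obstacle is the verification of $\CdCAx{0}$--$\CdCAx{7}$ and $\dlCAx{1}$--$\dlCAx{2}$ for the syntactic difference combinator. In contrast with the differential $\lambda$-calculus, where the derivative is additive and these checks are short, here one must keep track of the regularity corrections and duplicated infinitesimals peculiar to the difference setting, so each axiom unfolds into a lengthy calculation about nested differential substitutions---essentially the syntactic shadows of the delicate Lemmas~\ref{lem:lambda-d-ev} and~\ref{lem:lambda-star-ev}, of Theorem~\ref{thm:differential-substitution-regular}, and of the interaction of Definition~\ref{def:star-composition} with abstraction. A secondary difficulty, running through the whole construction, is that typing is not invariant under $\diffEq$ (Proposition~\ref{prop:typing-canonical}): one must consistently present morphisms via canonical forms, or restrict the term model to canonical representatives, to be certain that the hom-sets and all the operations on them are well-defined. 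Once these two points are discharged, what remains is the standard classifying-category bookkeeping.
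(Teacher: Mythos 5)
The paper does not prove this statement: it is explicitly labelled a \emph{conjecture}, so there is no proof of record to compare yours against. Your proposal is the construction one would expect --- a term model with types as objects, $\mathscr{T}$-equivalence classes of terms-in-context as morphisms, and the difference combinator given by differential substitution, $\dd{f} \defeq [\, x : \sigma, y : \sigma \ts \diffS{t}{x}{y} : \tau\,]$ for $f = [\, x : \sigma \ts t : \tau\,]$ --- and the individual correspondences you cite (Theorem~\ref{thm:taylor} for \CdCAx{0}, Theorem~\ref{thm:differential-substitution-regular} for \CdCAx{2}, Proposition~\ref{prop:differential-substitution-commute} for \CdCAx{7}, Lemma~\ref{lem:differential-standard-substitution} for the chain rule) are the right ones. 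But the proposal defers exactly the content of the conjecture. Two gaps are concrete. First, the theory $\mathscr{T}$ lives over the product-extended system $\lambda_\epsilon^{\times}\beta\eta\partial$, which the paper never develops: there is no syntax, typing, substitution or differential substitution for pairs and projections, and every syntactic theorem you invoke --- Taylor's formula, regularity, commutation of iterated differential substitutions, and the canonical-form machinery underlying $\wfTs$ --- is proved only for the product-free calculus. Without redoing that development for $\lambda_\epsilon^{\times}$, your classifying category does not even have the finite products required of a Cartesian difference category, let alone a verified combinator on them.

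Second, the axiom checks you describe as ``immediate'' or as ``reducing to'' the interaction lemmas are precisely where the difficulty sits. For instance, \CdCAx{6} and \CdCAx{7} concern the second derivative $\dd[^2]{f}$ precomposed with four-component pairings in which one slot is $0$; translating these into statements about $\frac{\partial^2 t}{\partial x^2}(u, v)$ requires an explicit correspondence between iterated differential substitution and $\dd[^2]{-}$ that accounts for the regularity correction terms and the $\epsilon$-duplication rule, and Proposition~\ref{prop:differential-substitution-commute} alone does not deliver it. Similarly, \dlCAx{1} for the syntactic combinator is not obtained by ``transporting'' a semantic computation; it is a new syntactic identity about $\diffS{(\lambda y . s)}{x}{u}$ that must be proved directly, and the well-definedness worry you flag (typing is not invariant under $\diffEq$, Proposition~\ref{prop:typing-canonical}) infects every hom-set of the construction. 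Your closing paragraph correctly identifies these obstacles, but identifying them is not discharging them: as it stands this is a plausible programme for settling the conjecture, not a proof of it.
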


\section{Conclusions and Future Work}

We have defined here the difference $\lambda$-calculus, which generalises the
differential $\lambda$-calculus in exactly the same manner as Cartesian
difference categories generalise their differential counterpart. While this
calculus is of theoretical interest, it lacks most practical features, such as
iteration or conditionals, and it is not immediately obvious how to extend it
with these. It is not clear, for example, precisely when iteration combinators
are differentiable in the difference category sense.

The problem of iteration is closely related to integration, which is itself the
focus of current work on the differential side
\cite{cockett2019integral,lemay2019exponential}. Indeed, consider a hypothetical
extension of the difference $\lambda$-calculus equipped with a type of natural
numbers (with the identity as its corresponding infinitesimal extension, that is
to say, $\epsilon_\NN = \Id{\NN}$). How should an iteration operator
$\textbf{iter}$ be defined? The straightforward option would be to give it the
usual behavior, that is to say:
\[
\begin{array}{rcl}
  \textbf{iter}\ \A{Z}\ z\ s &\reducesTo & z\\
  \textbf{iter}\ (\A{S}\ n)\ z\ s &\reducesTo & s\ (\textbf{iter}\ n\ z\ s)
\end{array}
\]
These reduction rules entail that every object involved must be complete, that
is to say, for every $s, t : A$, there is some $u : A$ with $s \mathrel{+}{}
\epsilon(u) = t$ -- such an element is given by the term $((\D (\lambda n .
\textbf{iter}\ n\ s\ (\lambda x . t)) \cdot (\A{S}\ \A{Z}))\ \A{Z})$.

This would rule out a number of interesting models and so it seems
unsatisfactory. An alternative is to define the iteration operator by:
\[
\begin{array}{rcl}
  \textbf{iter}\ \A{Z}\ z\ s &\reducesTo & z\\
  \textbf{iter}\ (\A{S}\ n)\ z\ s &\reducesTo & (\textbf{iter}\ n\ z\ s) 
  + \epsilon(s\ (\textbf{iter}\ n\ z\ s))
\end{array}
\]
Fixed $z, s$, and defining the map $\mu(n) \defeq \textbf{iter}\ n\ z\ s$, its
derivative $\D[\mu](n, \A{S}\ \A{Z})$ is precisely $s(\mu(n))$. Or, in other
words, the function $\mu : \NN \to A$ is a ``curve'' which starts at $z$
and whose derivative at a given point $n$ is $s(\mu(n))$ -- this boils down to
stating that the curve $\mu$ is an integral curve for the vector field $s$
satisfying the initial condition $\mu(\A{Z}) = z$! Hence it may be possible to
understand iteration as a discrete counterpart of the Picard-Lindel\"of theorem,
which states that such integral curves always exist (locally).

It would be of great interest to extend $\lambda_\epsilon$ with an interation
operator and give its semantics in terms of differential (or difference)
equations. Studying recurrence equations using the language of differential
equations is a very useful tool in discrete analysis; for example, one can treat
the recursive definition of the Fibonacci sequence as a discrete ODE and use
differential equation methods to find a closed-form solution. We believe that in
a language which frames iteration in such terms may be amenable to optimisation
by similar analytic methods.

 \bibliographystyle{spmpsci}
 \bibliography{biblio}
\end{document}